\newtheorem{myclaim}[lemma]{Claim}
\newcommand{\negA}{\vspace{-0.05in}}
\newcommand{\negC}{\vspace{-0.18in}}
\newcommand{\mysection}[1]{\negC\section{#1}\negA}
\newcommand{\ceil}[1]{{\left\lceil#1  \right\rceil}}
\newcommand{\comment}[1]{}
\DeclareMathOperator*{\argmax}{arg\,max}
\DeclareMathOperator*{\argmin}{arg\,min}
\newcommand{\cm}{{\mathcal{M}}}
\newcommand{\ci}{{\mathcal{I}}}
\newcommand{\cA}{{\mathcal{A}}}
\newcommand{\cB}{{\mathcal{B}}}
\newcommand{\cj}{{\mathcal{J}}}
\newcommand{\cI}{{\mathcal{I}}}
\newcommand{\cC}{{\mathcal{C}}}
\newcommand{\cf}{{\mathcal{F}}}
\newcommand{\OPT}{\textnormal{OPT}}
\newcommand{\eps}{{\varepsilon}}
\newcommand{\floor}[1]{\left\lfloor #1 \right\rfloor}
\newcommand{\Generic}{{Color\_Sets}}
\newtheorem{thm}{Theorem}[section]
\newtheorem{observation}[thm]{Observation}
\def \II   {{\mathcal I}}
\numberwithin{claimcounter}{thm}
\begin{document}
\sloppy
\setcounter{page}{1} 
\title{Approximating Bin Packing with Conflict Graphs via Maximization Techniques}
\author{Ilan Doron-Arad \and Hadas Shachnai}
\institute{Computer Science Department, Technion, Haifa 3200003, Israel. \mbox{E-mail: {\tt \{idoron-arad,hadas\}@cs.technion.ac.il.}}}

\maketitle

\negA
\negA
\negA
\negA

\setcounter{tocdepth}{2}

\begin{abstract}
	We give a comprehensive study of {\em bin packing with conflicts} (BPC).
	The input is a set $I$ of items, sizes $s:I \rightarrow [0,1]$, and a conflict graph $G = (I,E)$. The goal is to find a partition of $I$ into a minimum number of independent sets, each of total size at most $1$. Being a generalization of the notoriously hard graph coloring problem, BPC  
	has been studied mostly on polynomially colorable conflict graphs. 
	An intriguing open question is whether BPC on such graphs admits the same 
	best known approximation guarantees as classic bin packing. 
	
	We answer this question negatively, by showing that (in contrast to bin packing) there is no asymptotic polynomial-time approximation scheme (APTAS) 
	for BPC already on seemingly easy graph classes, such as {\em bipartite} and {\em split} graphs.	We complement this 
result with improved approximation guarantees for BPC on several prominent graph classes. Most notably, we derive 
	 an asymptotic $1.391$-approximation for bipartite graphs,
	 a $2.445$-approximation for perfect graphs, and a $\left(1+\frac{2}{e}\right)$-approximation for split graphs.
	To this end, we introduce a generic framework relying on a novel interpretation of BPC allowing us to solve the problem via {\em maximization} techniques.
	Our framework may find use in tackling BPC on other graph classes arising in applications.
\end{abstract}

\mysection{Introduction}
\label{sec:intro}

We study the {\em bin packing with conflicts (BPC)} problem. We are given a set $I$ of $n$ items, sizes $s:I \rightarrow [0,1]$, and a conflict graph $G = (I,E)$ on the items. A {\em packing} is a partition $(A_1, \ldots,A_t)$ of $I$ into independent sets called {\em bins}, such that for all $b \in \{1,\ldots, t\}$ it holds that $s\left( A_b \right) = \sum_{\ell \in A_b} s(\ell) \leq 1$. The goal is to find a packing in a minimum number of bins. Let ${\ci}=(I,s,E)$
denote a BPC instance. We note that BPC is a generalization of {\em bin packing (BP)} (where $E = \emptyset$) as well as the graph coloring problem (where $s(\ell) = 0~\forall \ell \in I$).\footnote{See the formal definitions of {\em graph coloring} and {\em independent sets} in Section~\ref{sec:preliminaries}.} BPC captures many real-world scenarios such as resource clustering
in parallel computing \cite{beaumont2008distributed}, examination scheduling \cite{laporte1984examination}, database storage \cite{jansen1999approximation}, and product delivery \cite{christofides1979vehicle}. As the special case of 
graph coloring cannot be approximated within a ratio better than $n^{1-\eps}$ \cite{zuckerman2006linear}, most of the research work on BPC has focused on families of conflict graphs which can be optimally colored in polynomial time~\cite{OS95,JO97,jansen1999approximation,mccloskey2005approaches,epstein2008bin,DKS21,doron2022bin,huang2023improved}.

 Let $\OPT=\OPT(\II)$ be the value of an optimal solution for an instance~ $\II$ of a minimization problem~$\mathcal{P}$. 
As in the bin packing problem, we distinguish between {\em absolute} and {\em asymptotic} approximation.
For $\alpha \geq 1$, we say that $\cA$ is an absolute $\alpha$-approximation algorithm for 
$\mathcal{P}$ if for any instance $\II$ of~$\mathcal{P}$ we have $ \cA (\II)/\OPT(\II) \leq \alpha$, where $\cA(\II)$ is the value of the solution returned by $\cA$. Algorithm
$\cA$  is an {\em asymptotic} $\alpha$-approximation algorithm for 
$\mathcal{P}$ if  for any instance $\II$ it holds that $\cA (\II) \leq \alpha \OPT(\II) +o(\OPT(\II))$. 
An APTAS
is a family of algorithms $\{ \cA_{\eps} \}$
such that, for every $\eps>0$, $\cA_{\eps}$ is a polynomial time asymptotic $(1+\eps)$-approximation algorithm for $\mathcal{P}$.  An {\em asymptotic fully polynomial-time
approximation scheme (AFPTAS)} is an APTAS $\{\cA_{\eps} \}$ 
such that $\cA_{\eps} (\II)$ runs in time $\textnormal{poly}(|\II|, \frac{1}{\eps})$, where $|\II|$ is the encoding length of the instance $\II$.

It is well known that, unless P=NP, BP cannot be approximated within ratio better than $\frac{3}{2}$~\cite{garey1979computers}. This ratio is achieved by 
First-Fit Decreasing (FFD)~\cite{S94}.\footnote{We give a detailed description of Algorithm FFD in Appendix~\ref{sec:proofsPrel}.} Also, BP admits an AFPTAS~\cite{karmarkar1982efficient}, and an additive approximation
algorithm which packs any instance ${\ci}$ in at most $\OPT(\II)+O(\log(\OPT(\II)))$ bins~\cite{hoberg2017logarithmic}. 
Despite the wide interest in BPC on polynomially colorable graphs,
the intriguing question whether BPC on such 
graphs admits the same best known approximation guarantees as classic bin packing remained open. 

\negA

\begin{table} [h!]
	\label{table:1}
	\centering
	\begin{tabular}{ c c c c c}
		\hline 
		& \multicolumn{2}{c}{\bf Absolute} & \multicolumn{2}{c}{\bf Asymptotic}\\ %\hline
		\cmidrule(rl){2-3}
		\cmidrule(rl){4-5}
		&  Lower Bound & Upper Bound &  Lower Bound & Upper Bound \\     \hline
		\\  [-2ex]  
		General graphs & $n^{1-\eps}$~\cite{zuckerman2006linear} & $O\left(   \frac{ n (\log \log n)^2 }{(\log n)^3}\right)$~\cite{halldorsson1993still} & $n^{1-\eps}$~\cite{zuckerman2006linear}  &  $O\left(   \frac{ n (\log \log n)^2 }{(\log n)^3}\right)$~\cite{halldorsson1993still} \\  [1ex]  
		Perfect graphs  &  $\mathbf{\cdot}$ &  $\mathbf{2.445}$ ($2.5$ \cite{epstein2008bin}) & $\mathbf{c>1}$ &   $\mathbf{2.445}$ ($2.5$ \cite{epstein2008bin})\\ 	 [1ex]  
		Chordal  graphs&  $\mathbf{\cdot}$ &  $\frac{7}{3}$ \cite{epstein2008bin} & $\mathbf{c>1}$ &  $\frac{7}{3}$ \cite{epstein2008bin} \\  [1ex]   
		Cluster graphs&  $\mathbf{\cdot}$ &  $2$ \cite{U+18} &  & $1$ \cite{doron2022bin}   \\  [1ex]   
		Cluster complement &  $\mathbf{\cdot}$ &  $\mathbf{3/2}$  & $\mathbf{3/2}$ & $\mathbf{3/2}$ \\  [1ex]  
		Split  graphs&  $\mathbf{\cdot}$ &  $\mathbf{1+2/e}$ ($2$ \cite{huang2023improved})  & $\mathbf{c>1}$ & $\mathbf{1+2/e}$ ($2$ \cite{huang2023improved}) \\  [1ex]   
		Bipartite  graphs&  $\mathbf{\cdot}$ &  $\frac{5}{3}$ \cite{huang2023improved} & $\mathbf{c>1}$ & $\mathbf{1.391}$ ($\frac{5}{3}$ \cite{huang2023improved}) \\  [1ex]    
		Partial $k$-trees &  $\mathbf{\cdot}$ &  $2+\eps$ \cite{JO97}  &  & $1$ \cite{jansen1999approximation}  \\  [1ex]    
		Trees & $\mathbf{\cdot}$  &  $\frac{5}{3}$~\cite{huang2023improved} &   & $\mathbf{\cdot}$ \\  [1ex]   
		No conflicts & $\frac{3}{2}$~\cite{garey1979computers}   &  $\frac{3}{2}$~ \cite{simchi1994new}  &   & $1$~\cite{Ro13}\\  [1ex]   
		\hline
	\end{tabular}
	\caption{Known results for Bin Packing with Conflict Graphs}
	\hfill \break
	\label{table:2}
	\centering
\end{table}
\negC
\negC
We answer this question negatively, by showing that (in contrast to bin packing) there is no 
APTAS for BPC even
on seemingly easy graph classes, such as {\em bipartite} and {\em split} graphs.
We complement this
result with improved approximation guarantees for BPC on several prominent graph classes. For BPC on bipartite graphs, we obtain an asymptotic $1.391$-approximation. We further derive improved bounds of $2.445$
for perfect graphs, $\left(1+\frac{2}{e}\right)$ for split graphs, and
$\frac{5}{3}$ for bipartite graphs.\footnote{Recently, Huang et al. \cite{huang2023improved} obtained a $\frac{5}{3}$-approximation for bipartite graphs, simultaneously and independently of our work. We note that the techniques of~\cite{huang2023improved} are different than ours, and their algorithm is more efficient in terms of running time.} Finally, we obtain a tight $\frac{3}{2}$-asymptotic lower bound and an absolute $\frac{3}{2}$-upper bound for graphs that are the complements of cluster graphs (we call these graphs below {\em complete multi-partite}).

Table~\ref{table:2} summarizes the known results for BPC on various classes of 
graphs. New bounds given in this paper are shown in boldface. Entries that are marked with $\boldsymbol{\cdot}$ follow by inference, either by using containment of graph classes (trees are partial $k$-trees), or since the hardness of BPC on all considered graph classes
follows from the hardness of classic BP. Empty entries for lower bounds follow from tight upper bounds.
We give a detailed overview of previous results in Appendix~\ref{sec:related}.

 \noindent {\bf Techniques:} There are several known approaches
 for tackling BPC instances. One celebrated technique  
 introduced by Jansen and {\"{O}}hring~\cite{JO97} relies on finding initially 
 a minimum coloring of the given conflict graph, and then packing
 each color class using a bin packing heuristic, such as First-Fit Decreasing. A notable generalization of this approach is the sophisticated integration of {\em precoloring extension} \cite{JO97,epstein2008bin}, which completes an initial partial coloring of the conflict graph, with no increase to the number of color classes. Another elegant technique is a matching-based algorithm, applied by Epstein and Levin \cite{epstein2008bin} and by Huang et al. \cite{huang2023improved}. 
  
 The best known algorithms (prior to this work), e.g., for
  perfect graphs~\cite{epstein2008bin} and split graphs~\cite{huang2023improved}  are based on the above techniques. While the analyses of these algorithms are tight, the approximation guarantees do not match the existing
  lower bounds for BPC on these graph classes; 
  thus, obtaining improved approximations requires new techniques.
   
  In this paper we present a novel point of view of BPC involving 
  the solution of a maximization problem as a subroutine.
  We first find an {\em initial packing} of a subset $S \subseteq I$ of items, which serves as a baseline packing with {\em high potential} for adding items (from $I \setminus S$) without increasing the number of bins used. 
The remaining items are then 
assigned to extra bins using a simple heuristic. 
Thus, given a BPC instance, our framework consists of the following main steps.
\negA
  \begin{enumerate}
  	\item
  	Find an initial packing $\cA = (A_1, \ldots, A_m)$ of high potential for $S \subseteq I$. 
   
  	\item Maximize the total size of items in $\cA$ by adding items in $I \setminus S$. 
  	\item Assign the remaining (unpacked) items to extra bins using a greedy approach respecting the conflict graph constraints. 
  \end{enumerate} 
\negA 
The above generic framework reduces BPC to cleverly finding an initial packing of high potential,  and then efficiently approximating the corresponding maximization problem, while exploiting structural properties of the given conflict graph. One may view classic approaches for solving BP (e.g., \cite{fernandez1981bin}), as an application of this technique: 
find an initial packing of high potential containing the {\em large} items;
then add the {\em small} items using First-Fit. In this setting, the tricky part is to find an initial high potential packing, while adding the small items is trivial. However, in the presence of a conflict graph, solving the induced maximization problem is much more challenging.

Interestingly, we are able to obtain initial packings of high potential for BPC on several conflict graph classes. 
 To solve the maximization problem, we first
 derive efficient approximation for maximizing the total size of items within a {\em single} bin. Our algorithm is based on finding a maximum weight independent set of {\em bounded} total size in the graph, 
 combined with enumeration over items of large sizes. 
 Using the single bin algorithm, the maximization problem is solved via
 application of the {\em separable assignment problem (SAP)} \cite{fleischer2011tight} framework, adapted to our setting. 
 Combined with a hybrid of several techniques (to efficiently handle different types of instances) this leads to
improved bounds for BPC on perfect, split, and bipartite graphs (see Sections~\ref{sec:BP}, \ref{sec:split}, and Appendix~\ref{sec:53}). 
 Our framework may find use in tackling BPC on other graph classes arising in applications. 

\noindent{\bfseries Organization:} In section~\ref{sec:preliminaries} we give some definitions and preliminary results. Section~\ref{sec:BP} presents an approximation algorithm for BPC on perfect graphs and an asymptotic approximation on bipartite graphs. In Section~\ref{sec:split} we give an algorithm for split graphs. We present our hardness results in Section~\ref{sec:HardnessBP} and conclude in Section~\ref{sec:discussion}. Due to space constraints, some of our results are deferred to the Appendix. For convenience, the last page of the paper includes a table of contents. 
\mysection{Preliminaries}
\label{sec:preliminaries}

For any $k \in \mathbb{R}$, let $[k] = \{1,2,\ldots,\floor{k}\}$. 
Also, for a function $f:A \rightarrow \mathbb{R}_{\geq 0}$ and a subset of elements $C \subseteq A$, we define $f(C) = \sum_{e \in C} f(e)$.

  \noindent {\bf Coloring and Independent Sets:} Given a graph $G = (V,E)$, an {\em independent set} in $G$ is a subset of vertices $S \subseteq V$ such that for all $u,v \in S$ it holds that $(u,v) \notin E$. Let $\textsf{IS}(G)$ be the collection of all independent sets in $G$. Given weight function $w:V \rightarrow \mathbb{R}_{\geq 0}$, a {\em maximum independent set w.r.t.} $w$ is an independent set $S \in \textnormal{\textsf{IS}}(G)$ such that $w(S)$ is maximal. A {\em coloring} of $G$ is a partition $(V_1, \ldots, V_t)$ of $V$ such that $\forall i \in [t]: V_i \in \textsf{IS}(G)$; we call each subset of vertices $V_i$ {\em color class} $i$. Let $\chi(G)$ be the minimum number of colors required for a coloring of $G$.
 A graph $G$ is {\em perfect} if for every induced subgraph $G'$ of $G$ the cardinality of the maximal clique of $G'$ is equal to 
 $\chi(G')$; note that $G'$ is also a perfect graph. 
 The following well known result is due to \cite{grotschel2012geometric}. 	\begin{lemma}
 	\label{lem:grot}
 	Given a perfect graph $G = (V,E)$, a minimum coloring of $G$ and a maximum weight independent set of $G$ can be computed in polynomial time. 
 \end{lemma}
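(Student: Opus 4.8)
The plan is to derive both claims from the Grötschel--Lovász--Schrijver equivalence between separation and optimization over convex bodies, handling the maximum weight independent set first and the minimum coloring second.

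For the maximum weight independent set, I would work with the Lovász theta body $\mathrm{TH}(G) \subseteq \mathbb{R}^{V}$. Two ingredients are needed. First, the (weak) separation problem over $\mathrm{TH}(G)$ is solvable in polynomial time, because membership in and separation from $\mathrm{TH}(G)$ reduce to evaluating the Lovász $\vartheta$-function, which is a semidefinite program that the ellipsoid method solves in polynomial time. Second, for a perfect graph one has $\mathrm{TH}(G) = \mathrm{STAB}(G)$, the convex hull of incidence vectors of independent sets; this is the polyhedral characterization of perfection. Combining these with the separation-to-optimization equivalence lets me maximize $\sum_{v \in V} w(v) x_v$ over $\mathrm{TH}(G) = \mathrm{STAB}(G)$, yielding the optimal \emph{value}. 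To extract an actual set I would use self-reducibility: scan the vertices one at a time, tentatively forcing a vertex $v$ into the solution by deleting $v$ together with its neighbourhood (which leaves a perfect graph), and keep $v$ precisely when the optimum of the reduced instance plus $w(v)$ still matches the running target; the retained vertices form a maximum weight independent set.

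For the minimum coloring, note $\chi(G) \ge \omega(G)$ always and $\chi(G) = \omega(G)$ by perfection, so it suffices to produce a coloring into $\omega(G)$ classes. I would compute $\omega(G)$, and a maximum clique, by running the independent-set algorithm above on the complement $\bar G$ (also perfect) with unit weights. The crux is then to find, in polynomial time, a stable set $S$ meeting every maximum-size clique of $G$: given such an $S$, the graph $G - S$ is perfect with $\omega(G - S) = \omega(G) - 1$, and iterating $\omega(G)$ times produces the desired coloring from the successively removed stable sets. Since $G$ may have exponentially many maximum cliques, $S$ cannot be obtained by enumeration; instead it is computed by an auxiliary optimization that repeatedly reweights the vertices and solves maximum weight clique / stable set problems over the theta body until a stable set hitting all currently tight cliques emerges.

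The main obstacle is this last construction — producing, without enumerating maximum cliques, a stable set that intersects all of them — together with the careful bookkeeping around the ellipsoid method (weak versus strong separation, rational perturbations) needed to turn approximate $\vartheta$-function computations into exact combinatorial answers. Both points are classical, and I would invoke~\cite{grotschel2012geometric} for the technical details rather than reproduce them here.
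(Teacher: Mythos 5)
Your sketch is correct, and it is exactly the Grötschel--Lovász--Schrijver argument that the paper invokes without proof by citing \cite{grotschel2012geometric}: theta-body optimization plus $\mathrm{TH}(G)=\mathrm{STAB}(G)$ for the weighted stable set, self-reducibility to extract an actual set, and the iterative ``remove a stable set meeting all maximum cliques'' scheme for coloring. Since the paper treats this lemma as a black-box citation, there is no separate proof in the paper to compare against; your outline faithfully reconstructs the cited result.
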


  \noindent {\bf Bin Packing with Conflicts:} Given a BPC instance $\ci$, let $G_{\ci} = (I,E)$ denote the conflict graph of $\ci$. 
A {\em packing} of a subset of items $S \subseteq I$
is a partition $\cB = (B_1, \ldots,B_t)$ of $S$ such that, for all $i \in [t]$, $B_i$ is an independent set in $G_{\ci}$, and $s(B_i) \leq 1$. Let $\#\cB$ be the number of bins (i.e., entries) in $\cB$. 

In this paper we consider BPC on several well studied classes of perfect graphs and the acronym BPC refers from now on to perfect conflict graphs.     
 For {\em bin packing with bipartite conflicts (BPB)}, where the conflict graph is bipartite,
we assume a bipartition of $V$ is known and given by $X_{V}$ and $Y_V$.
Recall that $G = (V,E)$ is a split graph if there is a partition $K,S$ of $V$ into a clique and an independent set, respectively. We call this variant of BPC {\em bin packing with split graph conflicts (BPS)}.

The following notation will be useful while enhancing a partial packing by new items. For two packings $\cB = (B_1, \ldots,B_t)$ and $\cC = (C_1, \ldots,C_r)$, let $\cB\oplus \cC = (B_1, \ldots, B_t, C_1, \ldots, C_r)$ be the {\em concatenation} of $\cB$ and $ \cC$; also, for $t = r$ let $\cB+\cC = (B_1 \cup C_1, \ldots, B_t \cup C_t)$ be the {\em union} of the two packings; note that the latter is not necessarily a packing. We denote by $\textsf{items}(\cB) = \bigcup_{i \in [t]} B_i$ the set of items in the packing $\cB$. Finally, let $\ci = (I,s,E)$ be a BPC instance and $T \subseteq I$ a subset of items. 
 Define the BPC instances $\ci \cap T = (T,s, E_T)$ and $\ci \setminus T = (I \setminus T,s,E_{I \setminus T})$ where for all  $X \in \{T, I \setminus T\}$ $E_X = \{(u,v) \in E~|~u,v \in X\}$.

 \noindent {\bf Bin Packing Algorithms:} 
We use $\ci = (I,s)$ to denote a BP instance, where $I$ is a set of $n$ items 
for some $n \geq 1$, and $s:I \rightarrow [0,1]$ is the size function. Let $L_{\ci} = \{\ell \in I~|~s(\ell)>\frac{1}{2}\}$ be the set of {\em large} items, $M_{\ci} = \{\ell \in I~|~\frac{1}{3}<s(\ell)\leq\frac{1}{2}\}$ the set of {\em medium} items, and $S_{\ci} = \{\ell \in I~|~s(\ell)\leq\frac{1}{3}\}$ the set of {\em small} items. Our algorithms use as building blocks also algorithms for BP. The results in the next two lemmas are tailored for our purposes. We give the detailed proofs in Appendix~\ref{sec:proofsPrel}.\footnote{For more details on algorithms {\sf FFD} and {\sf AsymptoticBP} see, e.g., \cite{vazirani}.}

\begin{lemma}
	\label{lem:FF}
	Given a \textnormal{BP} instance $\ci = (I,s)$,
	there is a polynomial-time algorithm \textnormal{\textsf{First-Fit Decreasing} (FFD)} which  returns a packing $\cB = (B_1, \ldots,B_t)$ of $\ci$ where  $
	\#\cB \leq (1+2 \cdot \max_{\ell \in I} s(\ell) )\cdot s(I)+1$. Moreover, it also holds that $
	\#\cB \leq |L_{\ci}|+\frac{3}{2} \cdot s(M_{\ci})+\frac{4}{3} \cdot s(S_{\ci})+1$.
\end{lemma}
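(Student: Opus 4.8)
The plan is to analyze the textbook First-Fit Decreasing rule: sort $I$ in non-increasing order of size, scan the items in that order, and place each item into the first (smallest-index) open bin with enough residual room, opening a new bin only if none works; this runs in polynomial time. Write $t=\#\cB$, let $B_1,\dots,B_t$ be the bins in order of creation, and set $s_{\max}=\max_{\ell\in I}s(\ell)$. Everything rests on one elementary First-Fit fact: if $y$ is the first item ever put into $B_j$ and $i<j$, then $B_i$ was already open and rejected $y$, so $s(B_i)>1-s(y)$ at that moment; since bin loads only grow and $s(B_j)\ge s(y)$, we get $s(B_i)+s(B_j)>1$ for every pair of distinct bins.

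For the first bound I would split on $s_{\max}$. If $s_{\max}\ge\tfrac12$: summing $s(B_i)+s(B_j)>1$ over all $\binom{t}{2}$ pairs of distinct bins gives $(t-1)\,s(I)>\binom{t}{2}$, hence $t<2\,s(I)\le(1+2s_{\max})\,s(I)$. If $s_{\max}<\tfrac12$: at most one bin can have load at most $1-s_{\max}$ (two such bins $B_i,B_j$ with $i<j$ would be contradicted by the first item $y$ of $B_j$, which has $s(y)\le s_{\max}$ and forces $s(B_i)>1-s(y)\ge 1-s_{\max}$), so $s(I)>(t-1)(1-s_{\max})$; combining with the elementary inequality $\tfrac1{1-s_{\max}}\le 1+2s_{\max}$, valid exactly for $s_{\max}\le\tfrac12$, yields $t<1+(1+2s_{\max})\,s(I)$. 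Either way $t\le(1+2s_{\max})\,s(I)+1$, the cases $t\le1$ being trivial.

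For the second bound I use a weighting argument: set $w(\ell)=1$ for large $\ell$, $w(\ell)=\tfrac32 s(\ell)$ for medium $\ell$, and $w(\ell)=\tfrac43 s(\ell)$ for small $\ell$, so that $\sum_{\ell\in I}w(\ell)=|L_{\ci}|+\tfrac32 s(M_{\ci})+\tfrac43 s(S_{\ci})$. Since $\sum_\ell w(\ell)=\sum_b w(B_b)$, it suffices to show that at most two bins are \emph{deficient} (have $w(B_b)<1$) and that their total weight deficit is at most $1$. A bin with a large item has weight $\ge1$; a bin with two medium items has weight $>\tfrac32\cdot\tfrac23=1$, and no bin holds three (three items of size $>\tfrac13$ overflow); a small-only bin reaches weight $1$ once its load reaches $\tfrac34$. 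Hence a deficient bin has no large item and is of exactly one type: (a) one medium item plus small items of too-small total size, or (b) only small items of total size below $\tfrac34$. Using that \textsf{FFD} packs all large, then all medium, then all small items, I would show there is at most one type-(a) bin and that a type-(b) bin is necessarily $B_t$: two type-(a) bins $B_i,B_j$ ($i<j$) would each be opened by their unique medium item in the medium phase, and $B_j$'s medium being rejected by $B_i$ (which then held only its own medium) would force two medium items to sum to more than $1$, impossible; and a type-(b) bin $B_i$ cannot precede any bin $B_k$, since $B_k$ would be opened in the small phase by a small item $y$ rejected by $B_i$, forcing the (small-only) load of $B_i$ to exceed $\tfrac23$ while being a sum of at most two items of size $\le\tfrac13$.

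So there are at most two deficient bins, at most one of each type; the remaining danger is an additive error of $2$ rather than $1$. This is removed by bounding the joint deficit when both occur: if $B_a$ (medium $m$, small mass $\sigma_a$) is the type-(a) bin and $B_b=B_t$ (small mass $\sigma_b$) is the type-(b) bin, opened by small item $x$, then $B_a$ rejected $x$ and so its load exceeded $1-s(x)$ then, giving $s(m)+\sigma_a+s(x)>1$; with $\sigma_b\ge s(x)$ this yields $w(B_a)+w(B_b)\ge\tfrac43\big(s(m)+\sigma_a+s(x)\big)>\tfrac43$, so their deficits sum to less than $\tfrac23<1$ (a lone deficient bin obviously has deficit at most $1$). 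Therefore $t\le\sum_\ell w(\ell)+1$, which is the claim. I expect this last off-by-one step to be the crux: the naive weighting loses $2$, and shaving it requires seeing that the two deficient bins are not independent — the small item opening the type-(b) bin was rejected by the type-(a) bin and hence witnesses that the latter is already rather full. The other fiddly part is the phase bookkeeping behind the two structural claims (which item opens which bin, and what a bin must contain when it rejects a later item).
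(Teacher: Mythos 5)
Your proof is correct, but it takes a genuinely different route for the second inequality. The paper does not argue directly with the weights $1$, $\tfrac32 s$, $\tfrac43 s$; it invokes the Epstein--Levin weights $w'(v)=1$ for large, $w'(v)=s(v)+\tfrac16$ for medium, $w'(v)=s(v)+\tfrac1{12}$ for small, together with the black-box inequality $\#\textsf{FFD}(\ci)\le w'(I)+1$ from \cite{epstein2008bin}, and then restricts to the prefix $I_k=\{v_1,\dots,v_k\}$ where $v_k$ is the item that opens the last bin. If $s(v_k)\le\tfrac14$ every non-large bin except the last is at least $\tfrac34$ full; if $s(v_k)>\tfrac14$ then every item of $I_k$ exceeds $\tfrac14$, so $|M_{\ci_k}|/6\le s(M_{\ci_k})/2$ and $|S_{\ci_k}|/12\le s(S_{\ci_k})/3$, converting $w'(I_k)$ into $|L_{\ci}|+\tfrac32 s(M_{\ci})+\tfrac43 s(S_{\ci})$. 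Your direct deficiency accounting reproduces the same bound without citing the external weight lemma, and your coupling observation --- that the small item opening a deficient small-only last bin was rejected by the unique deficient one-medium bin, so the two deficits telescope to below $1$ --- plays the role that the restriction to $I_k$ plays in the paper's proof. For the first inequality, the paper only records that at most one bin has load below $1-s_{\max}$, which yields $t\le s(I)/(1-s_{\max})+1$; this is $\le(1+2s_{\max})s(I)+1$ only when $s_{\max}\le\tfrac12$, so your explicit split into $s_{\max}\ge\tfrac12$ (all-pairs inequality) and $s_{\max}<\tfrac12$ is the cleaner way to close that case.

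One step needs shoring up. In your argument that a type-(b) bin must be $B_t$, the assertion that $B_i$ ``is a sum of at most two items of size $\le\tfrac13$'' is not established as written: a small-only bin can of course hold many items. The correct chain, at the moment $y$ (the opener of $B_k$) is rejected by $B_i$, is: every item then present in $B_i$ has size $\ge s(y)$ and $\le\tfrac13$, and the running load $\ell$ satisfies $1-s(y)<\ell\le s(B_i)<\tfrac34$; the last bound forces $s(y)>\tfrac14$, so the items then in $B_i$ each exceed $\tfrac14$ and $\ell<\tfrac34$ allows at most two of them, whence $\ell\le\tfrac23<1-s(y)$, a contradiction. (If instead $s(y)\le\tfrac14$, one gets $s(B_i)>\tfrac34$ directly and $B_i$ is not type (b) at all.) With this derivation inserted, the rest of your deficiency argument --- at most one bin with exactly one medium and no large, at most one small-only bin below $\tfrac34$ which must be $B_t$, and the joint deficit bound $w(B_a)+w(B_b)>\tfrac43$ --- is sound.
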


\begin{lemma}
	\label{lem:rothvos}
	Given a \textnormal{BP} instance $\ci = (I,s)$,
	there is a polynomial-time algorithm \textnormal{\textsf{AsymptoticBP}} which returns a packing $\cB = (B_1, \ldots,B_t)$ of $\ci$ such that $t = \OPT(\ci)+o(\OPT(\ci))$. Moreover, if $\OPT(\ci) \geq 100$ then $t \leq 1.02 \cdot \OPT(\ci)$. 
\end{lemma}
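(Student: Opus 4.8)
The plan is to obtain \textsf{AsymptoticBP} as a combination of two classical bin packing results, one handling the asymptotic bound and one handling the explicit bound for large $\OPT$. For the asymptotic guarantee I would invoke the additive approximation algorithm of Karmarkar and Karp~\cite{karmarkar1982efficient}, which in polynomial time packs any \textnormal{BP} instance $\ci$ into at most $\OPT(\ci) + O(\log^2 \OPT(\ci))$ bins (the algorithm of Hoberg and Rothvoss~\cite{hoberg2017logarithmic}, with additive term $O(\log \OPT(\ci))$, would serve just as well). Since a packing always uses at least $\OPT(\ci)$ bins and $O(\log^2 \OPT(\ci)) = o(\OPT(\ci))$, the returned number of bins lies in the range $[\OPT(\ci),\, \OPT(\ci)+o(\OPT(\ci))]$, which is the first claim.

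For the ``moreover'' part I would run, in addition, the de la Vega--Lueker asymptotic scheme~\cite{vazirani} with a fixed constant parameter. For any constant $\eps \in (0,\tfrac12]$ this scheme runs in polynomial time and returns a packing of $\ci$ using at most $(1+\eps)\OPT(\ci) + 1$ bins; the essential feature is that the additive slack is the absolute constant $1$, independent of $\eps$. Choosing $\eps = \tfrac{1}{100}$ we get a packing with at most $\tfrac{101}{100}\OPT(\ci) + 1$ bins, so when $\OPT(\ci) \geq 100$ we have $1 \leq \tfrac{1}{100}\OPT(\ci)$ and hence at most $\bigl(1 + \tfrac{1}{100} + \tfrac{1}{100}\bigr)\OPT(\ci) = 1.02\,\OPT(\ci)$ bins. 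Defining \textsf{AsymptoticBP} to execute both subroutines and output the packing $\cB$ with the fewer bins, both $\#\cB = \OPT(\ci) + o(\OPT(\ci))$ and (for $\OPT(\ci) \geq 100$) $\#\cB \leq 1.02\,\OPT(\ci)$ hold, and the overall running time is polynomial in $|\ci|$.

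There is no real obstacle here --- the argument is bookkeeping over textbook results --- but one point does require attention: the size of the additive term. A generic AFPTAS invoked with $\eps = \tfrac{1}{100}$ typically guarantees only an additive error of $\mathrm{poly}(\tfrac{1}{\eps})$, which can exceed $\tfrac{1}{100}\OPT(\ci)$ already at $\OPT(\ci) = 100$; it is the additive-$1$ bound of the de la Vega--Lueker construction that makes the stated threshold $100$ and constant $1.02$ come out cleanly (any larger absolute additive constant would still work, at the cost of raising the threshold). I would double-check only that (i) the de la Vega--Lueker scheme indeed keeps the additive term independent of $\eps$ with the particular linear-grouping/First-Fit instantiation used, and (ii) both cited algorithms run in time polynomial in the encoding length of $\ci$.
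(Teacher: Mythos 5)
Your approach is correct and is essentially the natural (and almost certainly the paper's) way to realize the stated guarantees: run a Karmarkar--Karp-type additive algorithm (giving $\OPT + O(\log^2\OPT) = \OPT + o(\OPT)$ bins) and, in parallel, the de la Vega--Lueker scheme with a fixed constant $\eps$ (giving roughly $(1+\eps)\OPT + 1$ bins), then return the smaller packing. The key insight you correctly identify --- that a generic AFPTAS invoked with $\eps = 1/100$ typically carries an additive term of $\mathrm{poly}(1/\eps)$ that swamps $0.02\cdot\OPT$ at $\OPT = 100$, whereas the linear-grouping construction keeps the additive term at exactly $1$ --- is precisely the point that makes the second half of the lemma go through, and it is the reason the paper points the reader to Vazirani's exposition of the scheme rather than to a black-box AFPTAS.

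One small caution: the precise constant in the de la Vega--Lueker guarantee varies by source. The original paper gives $(1+\eps)\OPT + 1$, but Vazirani's textbook presentation (the reference cited by the paper for \textsf{AsymptoticBP}) proves $(1+2\eps)\OPT + 1$. Your arithmetic uses the former; if you work from Vazirani you should take $\eps = \tfrac{1}{200}$ so that $(1+2\eps)\OPT + 1 = 1.01\,\OPT + 1 \leq 1.02\,\OPT$ once $\OPT \geq 100$. This is purely a bookkeeping adjustment and does not affect the structure of the argument. With that adjustment, both items you flag for double-checking --- the $\eps$-independent additive term and the polynomial running time for fixed $\eps$ --- are indeed as you expect, and the proof is complete.
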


\mysection{Approximations for Perfect and Bipartite Graphs}
\label{sec:BP}
In this section we consider the 
bin packing problem with a perfect or bipartite conflict graph. Previous works (e.g., \cite{JO97}, \cite{epstein2008bin}) 
showed the usefulness of 
the approach based on finding first a minimal coloring of the given conflict graph, and then packing each color class as a separate
bin packing instance (using, e.g., algorithm \textsf{\textsf{FFD}}). Indeed, this approach yields efficient approximations for BPC;
however, it does reach a certain limit. To enhance the performance of this {\em coloring based} approach, we design several subroutines. Combined, they cover the problematic cases and lead to improved approximation guarantees (see Table~\ref{table:2}). 

Our first subroutine is the coloring based approach, with a simple modification to improve the asymptotic performance. For each color class $C_i, i = 1, \ldots,k$ in a minimal coloring of the given conflict graph, we find a packing of $C_i$ using \textsf{\textsf{FFD}}, and another packing using \textsf{AsymptoticBP} (see Lemma~\ref{lem:rothvos}). We choose the packing which has smaller number of bins. Finally, the returned packing is the concatenation of the packings of all color classes. The pseudocode of Algorithm  
\textsf{\Generic} is given in Algorithm~\ref{alg:generic}.

\begin{algorithm}[h]
		\caption{$\textsf{\Generic}(\ci =(I,s,E))$}
		\label{alg:generic}
		\begin{algorithmic}[1]
			\State{Compute a minimal coloring $\cC = (C_1, \ldots,C_k)$ of $G_{\cI}$.\label{step:GenericCol}}
			\State{Initialize an empty packing $\cB \leftarrow ()$.\label{step:GenericEmpty}}
			\For {$i \in [k]$\label{step:GenericFor}}
			\State{Compute $\cA_1 \leftarrow \textsf{\textsf{FFD}}((C_i,s))$ and $\cA_2 \leftarrow \textsf{AsymptoticBP}((C_i,s))$.\label{step:GenericA}}
			\State{$\cB \leftarrow \cB \oplus \argmin_{\cA \in \{\cA_1, \cA_2\}} \#\cA$.\label{step:GenericB}}
			\EndFor
			\State{Return $\cB$.\label{step:GenericReturn}}
		\end{algorithmic}
\end{algorithm}

For the remainder of this section, fix a BPC instance $\ci = (I,s,E)$. The performance guarantees of Algorithm \textsf{\Generic} are stated in the next lemma. 
\begin{lemma}
	\label{lem:Generic}
		\label{claim:gen2}
	Given a \textnormal{BPC} instance $\ci = (I,s,E)$, Algorithm \textnormal{\textsf{\Generic}} returns in polynomial time in $|\ci|$ a packing $\cB$ of $\ci$ such that $\#\cB \leq \chi(G_{\ci})+|L_{\ci}|+\frac{3}{2} \cdot s(M_{\ci})+\frac{4}{3} \cdot s(S_{\ci})$. Moreover, if $\ci$ is a \textnormal{BPB} instance then $\#\cB \leq \frac{3}{2} \cdot |L_{\ci}|+\frac{4}{3} \cdot \left(\OPT(\ci)-|L_{\ci}|\right)+o(\OPT(\ci))$. 
\end{lemma}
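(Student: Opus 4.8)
The plan is to prove the two bounds separately, both by analyzing what Algorithm~\textsf{\Generic} does on each color class.

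\textbf{The first (absolute) bound.}
I would start by invoking Lemma~\ref{lem:grot}: since $G_{\ci}$ is perfect, Step~\ref{step:GenericCol} produces a coloring $\cC = (C_1,\dots,C_k)$ with $k = \chi(G_{\ci})$ in polynomial time. Polynomiality of the whole algorithm then follows because \textsf{FFD} and \textsf{AsymptoticBP} run in polynomial time (Lemmas~\ref{lem:FF},~\ref{lem:rothvos}) and are invoked $k \le n$ times. For the bin count, note $\#\cB = \sum_{i\in[k]} \#\cA^{(i)}$ where $\cA^{(i)}$ is the chosen packing of color class $C_i$. Since the algorithm picks the smaller of the two packings, in particular $\#\cA^{(i)} \le \#\textsf{FFD}((C_i,s))$, and by the second inequality in Lemma~\ref{lem:FF},
\[
\#\cA^{(i)} \;\le\; |L_{\ci}\cap C_i| + \tfrac32\, s(M_{\ci}\cap C_i) + \tfrac43\, s(S_{\ci}\cap C_i) + 1.
\]
Summing over $i\in[k]$ and using that $(C_1,\dots,C_k)$ partitions $I$ (so the large/medium/small contributions telescope into $|L_{\ci}|$, $s(M_{\ci})$, $s(S_{\ci})$ respectively) gives $\#\cB \le k + |L_{\ci}| + \tfrac32 s(M_{\ci}) + \tfrac43 s(S_{\ci})$, which with $k=\chi(G_{\ci})$ is exactly the claimed bound. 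The ``$+1$'' per class is absorbed into the $\chi(G_{\ci})$ term since there are exactly $\chi(G_{\ci})$ classes.

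\textbf{The second (asymptotic, bipartite) bound.}
Here I would use $k=2$: a bipartite graph is $2$-colorable (and a bipartition $X_V, Y_V$ is given), so $\cC=(X_V,Y_V)$ (discarding an empty class if the graph has no edges, or just noting $k\le 2$). For each of the (at most two) color classes $C_i$, the algorithm also considers $\textsf{AsymptoticBP}((C_i,s))$, so $\#\cA^{(i)} \le \OPT((C_i,s)) + o(\OPT((C_i,s)))$. The large items of $C_i$ each need their own bin, so $\OPT((C_i,s)) \ge |L_{\ci}\cap C_i|$; meanwhile every bin of an optimal packing of the whole instance $\ci$ contains at most one large item, so $|L_{\ci}| \le \OPT(\ci)$, and more to the point the non-large items satisfy a volume bound. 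The key observation is that combining the \textsf{FFD} bound on large-heavy classes with the \textsf{AsymptoticBP} bound: using $\#\cA^{(i)} \le |L_{\ci}\cap C_i| + \tfrac{3}{2} s(M_{\ci} \cap C_i) + \tfrac{4}{3} s(S_{\ci}\cap C_i)+1$ and bounding $s(M_{\ci}) + s(S_{\ci}) \le s(I \setminus L_{\ci}) \le \OPT(\ci) - |L_{\ci}|$ (since each of the $\OPT(\ci)$ bins holds total size $\le 1$ and the $|L_{\ci}|$ large items occupy $|L_{\ci}|$ distinct bins, leaving total residual capacity $\le \OPT(\ci)-|L_{\ci}|$ for the rest), one gets
\[
\#\cB \;\le\; |L_{\ci}| + \tfrac{3}{2}\bigl(s(M_{\ci}) + s(S_{\ci})\bigr) + 2 \;\le\; |L_{\ci}| + \tfrac{3}{2}\bigl(\OPT(\ci) - |L_{\ci}|\bigr) + 2,
\]
which is weaker than wanted. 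To get the $\tfrac{4}{3}$ coefficient I would instead split by whether a color class is ``large-dominated'' and use \textsf{AsymptoticBP} on it: if $\OPT((C_i,s))$ is large then \textsf{AsymptoticBP} gives $(1+o(1))\OPT((C_i,s))$, and $\OPT((C_i,s)) \le |L_{\ci}\cap C_i| + \tfrac{4}{3}(s(C_i) - s(L_{\ci}\cap C_i)) + O(1)$ by a standard BP volume argument (large items in separate bins, everything else packed at load $\ge 3/4$ by First-Fit on small items plus pairing of medium items, or directly from the asymptotic optimum). Summing the two classes and using $\sum_i s(C_i) = s(I)$, $\sum_i |L_{\ci}\cap C_i| = |L_{\ci}|$, and $s(I) - |L_{\ci}| \le \OPT(\ci) - |L_{\ci}|$ yields $\#\cB \le \tfrac{3}{2}|L_{\ci}| + \tfrac{4}{3}(\OPT(\ci)-|L_{\ci}|) + o(\OPT(\ci))$; the coefficient $\tfrac{3}{2}$ on $|L_{\ci}|$ (rather than $1$) comes from the worst case in which the large items are split evenly between the two color classes so that half of them are ``paired'' across classes only in the analysis, costing an extra half-bin each — this is where the bipartite structure (exactly two classes) is used to avoid a worse constant.

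\textbf{Main obstacle.}
The routine part is the absolute bound — it is just summing Lemma~\ref{lem:FF} over color classes. The delicate part is getting the precise constants $\tfrac32$ and $\tfrac43$ in the asymptotic bipartite bound: one must carefully choose, per color class, between the \textsf{FFD} guarantee (good when the class has few large items relative to its volume) and the \textsf{AsymptoticBP} guarantee (good when $\OPT$ of the class is large), and then argue that the sum over the two classes of these per-class optima is bounded by the stated expression in terms of the \emph{global} $\OPT(\ci)$ and the \emph{global} large-item count $|L_{\ci}|$. The crux is the volume/counting inequality relating $s(I \setminus L_{\ci})$ and $\sum_i \OPT((C_i,s))$ to $\OPT(\ci) - |L_{\ci}|$ while keeping the large-item coefficient at $\tfrac32$; I expect this to require a short case analysis on how the large items distribute across $X_V$ and $Y_V$.
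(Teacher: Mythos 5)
Your argument here is correct and is essentially the same as the paper's (Lemma~\ref{lem:Generic1}): polynomiality from Lemma~\ref{lem:grot} plus polynomial subroutines, and the bound from summing the second inequality of Lemma~\ref{lem:FF} over the $k=\chi(G_{\ci})$ color classes, absorbing the per-class ``$+1$'' into the $\chi(G_{\ci})$ term.

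\textbf{The asymptotic bipartite bound (second claim): this part of the proposal has a genuine gap.}
The step you flag as ``a standard BP volume argument,'' namely
$\OPT((C_i,s)) \le |L_{\ci}\cap C_i| + \tfrac{4}{3}\bigl(s(C_i) - s(L_{\ci}\cap C_i)\bigr) + O(1)$,
is false. Take $C_i$ to consist of $n$ items each of size $\tfrac13+\epsilon$: there are no large items, each bin holds only two items, so $\OPT((C_i,s)) \approx n/2$ while $\tfrac43\, s(C_i) \approx \tfrac43\cdot\tfrac{n}{3} = \tfrac{4n}{9} < \tfrac{n}{2}$. Medium items cannot in general be packed at load $\ge \tfrac34$ in bin packing, so no per-color-class volume bound can produce a $\tfrac43$ coefficient on $s(M_{\ci}\cap C_i)$. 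This is precisely why the paper \emph{does not} try to bound $\OPT((C_i,s))$ by a function of $s(C_i)$ and $|L_{\ci}\cap C_i|$; that quantity really can be as large as $\approx \tfrac32 s(C_i)$.

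What the paper's proof (Lemma~\ref{lem:Generic2}) does instead is start from the \emph{global} optimum $\OPT(\ci)=(O_1,\dots,O_t)$, classify each bin $O_i$ by its structure --- which side ($X_V$ or $Y_V$) carries more mass, whether it contains a large item, and how many medium items it contains from each side (types $T_L$, $T_{1,0,\le}$, $T_{1,0,>}$, $T_{1,1}$, $T_{0,0}$, $T_{0,1}$, $T_{2,0}$) --- and then explicitly \emph{repacks} each type into separate $X_V$-bins and $Y_V$-bins so that (a) the $T_L$ type costs a $\tfrac32$ factor, and (b) every other type costs only a $\tfrac43$ factor. Crucially, a bin of $\OPT(\ci)$ that mixes medium items from both sides (e.g.\ type $T_{1,1}$) gets its two medium items placed together in a \emph{new} bin, and the leftover pieces (each of size $\le\tfrac13$) are regrouped three-per-bin; that is how the $\tfrac43$ coefficient is obtained where a per-class volume bound cannot. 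The information that the two medium items came from the \emph{same original bin} of $\OPT(\ci)$ is essential and is lost if you only look at $\OPT((C_i,s))$. Finally, \textsf{AsymptoticBP} applied per color class is within $(1+o(1))$ of this explicitly constructed packing, which yields the stated bound. So the second half of your proposal needs to be replaced by a bin-type case analysis of $\OPT(\ci)$ itself, not of $\OPT((C_i,s))$.
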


Note that the bounds may not be tight for instances with many large items. Specifically, if $|L_{\ci}| \approx \OPT(\ci)$ then a variant 
of Algorithm \textsf{\Generic} was shown to yield a packing of at least $2.5 \cdot \OPT(\ci)$ bins \cite{epstein2008bin}. 
To overcome this, we use an approach based on the simple yet crucial observation that there can be at most one large item in a bin. Therefore, we view the large items as {\em bins} and assign items to these bins to maximize the total size packed in bins including large items. We formalize the problem initially on a single bin. 

\begin{definition}
	\label{def:boundedIS}
	In the {\em bounded independent set problem (BIS)} we are given a graph $G = (V,E)$, a weight function $w:V \rightarrow \mathbb{R}_{\geq 0}$, and a budget $\beta \in \mathbb{R}_{\geq 0}$. The goal is to find an independent set $S \subseteq V$ in $G$ such that $w(S)$ is maximized and $w(S) \leq \beta$. Let $\cI = (V,E,w,\beta)$ be a BIS instance.
\end{definition}

Towards solving BIS, we need the following definitions.
For $\alpha \in (0, 1]$, $\cA$ is an $\alpha$-approximation algorithm for a maximization problem $\mathcal{P}$ if, for any instance $\cI$ of $\mathcal{P}$,
	$\cA$ outputs a solution of value at least $\alpha \cdot OPT(\cI)$. A 
	{\em polynomial-time approximation scheme (PTAS)} for  $\mathcal{P}$ 
		is a family of algorithms $\{ A_{\eps} \}$
	such that, for any $\eps>0$, $A_{\eps}$ is a
	polynomial-time $(1 - \eps)$-approximation algorithm for $\mathcal{P}$. 
	A {\em fully PTAS (FPTAS)} is a PTAS $\{ A_{\eps} \}$ where, for all $\eps>0$, $A_{\eps}$ is polynomial also in $\frac{1}{\eps}$.
We now describe a PTAS for BIS.
Fix a BIS instance $\ci = (V,E,w,\beta)$ and $\eps>0$. As there can be at most $\eps^{-1}$ items with weight at least $\eps \cdot \beta$ in some optimal solution $\OPT$ for $\ci$, we can {\em guess} this set $F$ of items via enumeration.  Then, to add smaller items to $F$, we define a residual graph  $G_F$ of items with weights at most $\eps \cdot \beta$ which are not adjacent to any item in $F$. Formally, define $G_F = (V_F,E_F)$, where
$$
	V_F = \{v \in V \setminus F~|~w(v) \leq \eps \cdot \beta, \forall u \in F: (v,u) \notin E\},   ~E_F = \{(u,v) \in E~|~ u,v \in V_F\}
$$

Now, we find a maximum weight independent set $S$ in $G_F$. Note that this can be done
in polynomial time for perfect and bipartite graphs. If $w(F \cup S) \leq \beta$ then we have an optimal solution; otherwise, we discard iteratively items from $S$ until the remaining items form a feasible solution for $\ci$. Since we discard only items with relatively small weights, we lose only an $\eps$-fraction of the weight 
relative to the optimum. The pseudocode for the scheme is given in Algorithm~\ref{alg:ptas}. 
\negA
\begin{algorithm}[h]
	\caption{$\textsf{PTAS}((V,E,w,\beta),\eps)$}
	\label{alg:ptas}
		\begin{algorithmic}[1]
		\State{Initialize $A \leftarrow \emptyset$.\label{step:initptas}}
		\For {all independent sets $F \subseteq V$ in $(V,E)$ $\text{ s.t. } |F| \leq \eps^{-1}, w(F) \leq \beta$\label{step:forptas}}
		\State{Define the residual graph $G_F = (V_F, E_F)$.\label{step:EF}}
		\State{Find a maximum independent set $S$ of $G_F$ w.r.t. $w$.\label{step:ISptas}}
		\While{$w(F \cup S) > \beta$\label{step:whileptas}}
		\State{Choose arbitrary $z \in S$.}
		\State{Update $S \leftarrow S \setminus \{z\}$.\label{step:updateSptas}}
		\EndWhile
		\If{$w(A) < w(F \cup S) \label{step:ifptas}$}
		\State{Update $A \leftarrow F \cup S$.\label{step:Aptas}}
		\EndIf
		\EndFor
		\State{Return $A$.\label{step:returnptas}}
	\end{algorithmic}
\end{algorithm}

\begin{lemma}
	\label{lem:ptas}
	Algorithm~\ref{alg:ptas} is a \textnormal{PTAS} for \textnormal{BIS}. 
\end{lemma}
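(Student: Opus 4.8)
The plan is to check three things about Algorithm~\ref{alg:ptas} on a BIS instance $\ci = (V,E,w,\beta)$: it runs in polynomial time for every fixed $\eps>0$; its output $A$ is always a feasible BIS solution; and $w(A) \ge (1-\eps)\OPT(\ci)$. Recall that, as the acronym BPC now refers to perfect conflict graphs, the graph $G_F$ is perfect, so by Lemma~\ref{lem:grot} Step~\ref{step:ISptas} is a polynomial-time computation.

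\emph{Running time and feasibility.} The loop of Step~\ref{step:forptas} ranges over the independent sets $F$ with $|F| \le \eps^{-1}$, and there are at most $\sum_{i \le \eps^{-1}} \binom{|V|}{i} = |V|^{O(1/\eps)}$ subsets of that size; for each one, constructing $G_F$, invoking Lemma~\ref{lem:grot} to get $S$, and executing the while loop (which removes one vertex per iteration, hence at most $|S| \le |V|$ times) are all polynomial, so the whole algorithm is polynomial for fixed $\eps$. For feasibility, note that every vertex of $V_F$ is non-adjacent to every vertex of $F$, so $F \cup S$ is always an independent set of $G$; and the while loop of Step~\ref{step:whileptas} cannot get stuck, because $w(F) \le \beta$ by the condition in Step~\ref{step:forptas}, so by the time $S$ is emptied the loop condition is already false. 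Hence the candidate $F \cup S$ assigned to $A$ in Step~\ref{step:Aptas} is always an independent set with $w(F \cup S) \le \beta$, and so is the returned $A$ (or $A = \emptyset$).

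\emph{Approximation.} Fix an optimal independent set $O$, so $w(O) = \OPT(\ci) \le \beta$, and split $O = O_H \cup O_L$ with $O_H = \{v \in O : w(v) > \eps\beta\}$. If $\beta = 0$ then $\OPT(\ci) = 0$ and there is nothing to prove, so assume $\beta > 0$; then $|O_H|\cdot \eps\beta < w(O_H) \le w(O) \le \beta$ gives $|O_H| < \eps^{-1}$, so $O_H$ is one of the sets $F$ enumerated in Step~\ref{step:forptas}. Consider that iteration. Each $v \in O_L$ has $w(v) \le \eps\beta$, lies outside $O_H = F$, and is non-adjacent to $F$ (since $O$ is independent), so $O_L \subseteq V_F$ and $O_L \in \textsf{IS}(G_F)$; hence the maximum weight independent set $S$ of Step~\ref{step:ISptas} has $w(S) \ge w(O_L)$, and the pre-trimming value is $w(F \cup S) \ge w(O_H) + w(O_L) = \OPT(\ci)$. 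If this is already $\le \beta$, then $F \cup S$ is feasible of weight $\ge \OPT(\ci)$, so of weight exactly $\OPT(\ci)$. Otherwise the while loop trims; let $z$ be the last vertex removed and $S''$ the value of $S$ just before that removal, so $w(F \cup S'') > \beta$ and, since $z \in S'' \subseteq V_F$, $w(z) \le \eps\beta$; thus the final value satisfies $w(F \cup S) = w(F \cup S'') - w(z) > \beta - \eps\beta \ge (1-\eps)\OPT(\ci)$, using $\OPT(\ci) \le \beta$. Either way the candidate produced in the iteration $F = O_H$ has weight $\ge (1-\eps)\OPT(\ci)$, and since $w(A)$ is nondecreasing over iterations and ends at least as large as this candidate, $w(A) \ge (1-\eps)\OPT(\ci)$.

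The one point that needs care is the trimming case: the argument hinges on noticing that the while loop exits exactly one light-item removal after the running total first drops to $\le\beta$, so the final weight lies in the window $\big((1-\eps)\beta,\ \beta\big]$, which suffices precisely because $\OPT(\ci) \le \beta$. The rest — the polynomial bound on the number of guesses, the fact that $O_H$ is a valid guess, and the inclusion $O_L \subseteq V_F$ — is routine bookkeeping.
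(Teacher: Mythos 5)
Your proof is correct and follows essentially the same route as the paper's: enumerate the heavy items $O_H$ of an optimal solution as $F$, observe that the light part $O_L$ is an independent set in the residual graph $G_F$ so the max-IS step captures at least its weight, and bound the loss from the trimming loop by $\eps\beta \le \eps\cdot\OPT(\ci)$ since each discarded vertex in $V_F$ weighs at most $\eps\beta$. Your version is a bit more explicit about the last removal in the while loop, and it handles the $\beta=0$ edge case; otherwise the decomposition and the key inequality are identical to the paper's Claims on running time, feasibility, and profit.
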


We now define our maximization problem for multiple bins. We solve a slightly generalized problem in which we have an initial partial packing in $t$ bins. Our goal is to add to these bins (from unpacked items) a subset of items of maximum total size. Formally,

\begin{definition}
	\label{def:maxSAP}
	Given a \textnormal{BPC} instance $\ci = (I,s,E)$, $S \subseteq I$, and a packing $\cB = (B_1, \ldots,B_t)$ of $S$, define the {\em maximization problem} of $\ci$ and $\cB$ as the problem of finding a packing $\cB+\cC$ of $S \cup T$, where $T \subseteq I \setminus S$ and $\cC = (C_1, \ldots,C_t)$ is a packing of $T$, such that $s(T)$ is maximized.
\end{definition}

Our solution for BIS is used to obtain a $(1-\frac{1}{e}-\eps)$-approximation for the maximization problem described in Definition~\ref{def:maxSAP}. This is done using the approach of \cite{fleischer2011tight} for the more general {\em separable assignment problem (SAP)}. 
\begin{lemma}
	\label{lem:SAP}
Given a \textnormal{BPC} instance $\ci = (I,s,E)$, $S \subseteq I$, a packing $\cB = (B_1, \ldots,B_t)$ of $S$, and a constant $\eps>0$, there is an algorithm  \textnormal{\textsf{MaxSize}} which returns in time polynomial in $|\ci|$ a $(1-\frac{1}{e}-\eps)$-approximation for the maximization problem of $\ci$ and $\cB$. Moreover, given an \textnormal{FPTAS} for \textnormal{BIS} on the graph $(I,E)$, the weight function $s$, and the budget $\beta = 1$,
\textnormal{\textsf{MaxSize}}
is a $(1-\frac{1}{e})$-approximation algorithm for the maximization problem of $\ci$ and $\cB$. 
\end{lemma}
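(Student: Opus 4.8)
The plan is to cast the maximization problem of Definition~\ref{def:maxSAP} as an instance of the separable assignment problem (SAP) of Fleischer, Goemans, Mirrokni, and Sviridenko~\cite{fleischer2011tight} and then invoke their black-box reduction. Recall that in SAP we have a set of bins and a set of items, each bin $b$ comes with a down-closed family $\mathcal{F}_b$ of \emph{feasible} subsets of items, each item-bin pair has a value, and the goal is to choose disjoint feasible subsets, one per bin, maximizing the total value of assigned items. First I would set the item universe to be the unpacked items $I \setminus S$, with the value of assigning item $\ell$ to bin $i$ equal to $s(\ell)$ (independent of $i$); the down-closed family $\mathcal{F}_i$ for bin $i$ consists of all subsets $C \subseteq I \setminus S$ that are independent in $G_{\ci}$, contain no neighbor of any item already in $B_i$, and satisfy $s(B_i) + s(C) \le 1$. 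This family is down-closed because removing items from an independent, budget-respecting, conflict-free set keeps it independent, conflict-free, and within budget. A choice of disjoint feasible sets $C_1, \ldots, C_t$ is then exactly a packing $\cC = (C_1,\ldots,C_t)$ making $\cB + \cC$ a valid packing of $S \cup T$ with $T = \bigcup_i C_i$, and the SAP objective is precisely $s(T)$; so the reduction is objective-preserving.

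Next I would recall the main theorem of~\cite{fleischer2011tight}: if for every bin one has a polynomial-time $\beta$-approximation for the \emph{single-bin subproblem} — given item values/weights for that bin, find a maximum-value feasible set in $\mathcal{F}_b$ — then there is a polynomial-time $\beta(1 - 1/e)$-approximation for SAP (via their configuration-LP rounding), and an exact single-bin oracle gives a clean $(1-1/e)$. So the second step is to show that the single-bin subproblem for bin $i$ is essentially a BIS instance. Indeed, maximizing $s(C)$ over $C \in \mathcal{F}_i$ means: take the residual graph $G^{(i)}$ induced on those vertices of $I \setminus S$ that are non-adjacent to every item of $B_i$, keep the weight function $s$, and set the budget to $\beta_i := 1 - s(B_i)$; a maximum-weight independent set of total weight at most $\beta_i$ in $G^{(i)}$ is exactly an optimal single-bin solution. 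Since $G_{\ci}$ is perfect and induced subgraphs of perfect graphs are perfect (and likewise bipartite subgraphs of bipartite graphs stay bipartite), $G^{(i)}$ lies in the same graph class, so Lemma~\ref{lem:ptas} gives a PTAS — i.e., a $(1-\eps')$-approximation — for this single-bin problem, and if additionally an FPTAS for BIS is available on $(I,E)$ it transfers to $G^{(i)}$ (an induced subgraph) and yields an exact-in-the-limit oracle; more carefully, the $(1-1/e)$ claim under the FPTAS hypothesis follows because the configuration LP of~\cite{fleischer2011tight} can be solved to within any $1/\mathrm{poly}$ accuracy using the FPTAS as separation oracle, and the rounding loses only the $(1-1/e)$ factor, with the $\eps$ slack absorbable. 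Plugging $\beta = 1 - \eps'$ into the SAP guarantee and choosing $\eps'$ small relative to $\eps$ gives the $(1 - 1/e - \eps)$ bound; the FPTAS version gives $(1-1/e)$.

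Concretely, the steps in order are: (i) define the SAP instance from $(\ci, \cB)$ as above and verify each $\mathcal{F}_i$ is down-closed and that feasible SAP solutions biject with packings $\cB+\cC$ extending $\cB$, with matching objective value; (ii) identify the single-bin subproblem for bin $i$ with the BIS instance $(V^{(i)}, E^{(i)}, s, 1 - s(B_i))$ on the residual (still perfect / bipartite) graph; (iii) invoke Lemma~\ref{lem:ptas} to get a $(1-\eps')$-approximate single-bin oracle, or the assumed FPTAS for the sharper statement; (iv) apply the SAP approximation theorem of~\cite{fleischer2011tight} to combine a $\beta$-approximate single-bin oracle into a $\beta(1-1/e)$-approximation (resp. $(1-1/e)$ when $\beta \to 1$), and tune $\eps'$ so that $\beta(1-1/e) \ge 1 - 1/e - \eps$. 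Finally, I would check the running time is $\mathrm{poly}(|\ci|)$: the number of bins $t \le |I|$, the PTAS/FPTAS for BIS runs in polynomial time by Lemma~\ref{lem:ptas}, and the SAP reduction of~\cite{fleischer2011tight} is polynomial given the oracle, so the overall algorithm \textsf{MaxSize} runs in time polynomial in $|\ci|$ for fixed $\eps$.

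The main obstacle I anticipate is not the reduction itself but the bookkeeping in step (iv): ensuring that the SAP framework of~\cite{fleischer2011tight}, whose value model assigns a value to each item-bin \emph{pair}, correctly handles the constraint that the feasibility of a set in bin $i$ depends on the already-packed $B_i$ (it does, since $\mathcal{F}_i$ is an arbitrary down-closed family per bin), and that composing the two approximation losses — the $(1-\eps')$ from the BIS PTAS and the $(1-1/e)$ from SAP rounding — really yields $(1-1/e-\eps)$ rather than a multiplicatively worse constant. This is routine but must be done carefully, in particular to justify the second sentence of the lemma, where an \emph{exact} (FPTAS-driven) single-bin oracle must be shown to drive the configuration-LP-plus-rounding to a clean $(1-1/e)$ factor with no residual $\eps$; here I would appeal to the fact that the FPTAS lets us solve the configuration LP within a $(1 - \delta)$ factor for any inverse-polynomial $\delta$, and such a $\delta$ can be pushed below the precision that affects the final bound.
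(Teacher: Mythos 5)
Your proposal matches the paper's own proof essentially step for step: both cast the maximization problem of Definition~\ref{def:maxSAP} as a SAP instance with bins $\cB$, items $I\setminus S$, item value $s(\cdot)$, and down-closed feasible families $\cf_j = \{T \subseteq I\setminus S \mid s(T)\le 1-s(B_j),\ B_j\cup T \in \textsf{IS}(G_\ci)\}$; both identify the single-bin subproblem with the BIS instance on the residual induced subgraph with budget $1-s(B_j)$; and both plug the PTAS of Lemma~\ref{lem:ptas} into the $\beta(1-\tfrac1e)$-guarantee of \cite{fleischer2011tight}. If anything, you spell out the downward-closure check, the bijection with packings, the error composition $(1-\eps')(1-\tfrac1e)\ge 1-\tfrac1e-\eps$, and the FPTAS$\Rightarrow(1-\tfrac1e)$ case in more detail than the paper does.
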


We use the above to obtain a feasible solution for the instance. This is done via a reduction to the maximization problem of the instance with a singleton packing of the large items and packing the remaining items in extra bins. Specifically, in the subroutine \textsf{MaxSolve}, we initially put each item in $L_{\ci}$ in a separate bin. Then, additional items from $S_{\ci}$ and $M_{\ci}$ are added to  the bins using Algorithm $\textsf{MaxSize}$. The remaining items are packed using Algorithm \textsf{\Generic}. The pseudocode of the subroutine \textsf{MaxSolve} is given in Algorithm~\ref{alg:MaxSolve}. 

\begin{algorithm}[h]
	\caption{$\textsf{MaxSolve}(\ci =(I,s,E), \eps)$}
	\label{alg:MaxSolve}
		\begin{algorithmic}[1]
		\State{Define $T \leftarrow \left(\{\ell\}~|~\ell \in L_{\ci}\right)$.\label{step:maxSolvedef}}
		\State{$\cA \leftarrow \textsf{MaxSize}(\ci, L_{\ci},  T, \eps)$.\label{step:maxSolveSize}}
		\State{$\cB \leftarrow \textsf{\Generic}(\ci \setminus \textsf{items}(\cA))$.\label{step:maxSolveGeneric}}
		\State{Return $\cA \oplus \cB$.\label{step:maxSolveReturn}}
		\end{algorithmic}
\end{algorithm}
%\negA
The proof of Lemma~\ref{lem:MaxSolve} uses Lemmas~\ref{lem:Generic}, ~\ref{lem:ptas}, and~\ref{lem:SAP}. 
\begin{lemma}
	\label{lem:MaxSolve}
	Given a \textnormal{BPC} instance $\ci = (I,s,E)$ and an $\eps>0$, Algorithm \textnormal{\textsf{MaxSolve}} returns in polynomial time in $|\ci|$ a packing $\cC$ of $\ci$ such that there are $0 \leq x \leq s(M_{\ci})$ and $0\leq y \leq s(S_{\ci})$ such that the following holds. \begin{enumerate}
		\item $x+y \leq \OPT(\ci)-|L_{\ci}|+\left(\frac{1}{e}+\eps\right)\cdot\frac{|L_{\ci}|}{2}$.
		\item $\#\cC \leq \chi(G_{\ci})+|L_{\ci}|+\frac{3}{2} \cdot x+\frac{4}{3} \cdot y$.
	\end{enumerate}
\end{lemma}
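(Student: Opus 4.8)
\textbf{Proof plan for Lemma~\ref{lem:MaxSolve}.}

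The plan is to trace through the three steps of Algorithm \textsf{MaxSolve} and account for the bins produced at each stage, while carefully defining the quantities $x$ and $y$ in terms of the medium and small items that survive into the ``extra bins'' phase. First I would invoke Lemma~\ref{lem:ptas} (a PTAS, hence in particular an FPTAS is not guaranteed on general perfect graphs) together with Lemma~\ref{lem:SAP} to conclude that the call $\textsf{MaxSize}(\ci, L_{\ci}, T, \eps)$ on line~\ref{step:maxSolveSize} returns a packing $\cA$ in exactly $|L_{\ci}|$ bins (one per large item, since $T=(\{\ell\}\mid \ell\in L_{\ci})$ has $|L_{\ci}|$ entries and $\textsf{MaxSize}$ only adds items to existing bins) whose newly added items have total size at least $\left(1-\frac1e-\eps\right)$ times the optimum of the maximization problem of $\ci$ and $T$. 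The key observation is that an optimal solution to that maximization problem adds at least $\OPT(\ci) - |L_{\ci}|$ total size: take an optimal packing of $\ci$ in $\OPT(\ci)$ bins, match each large item to its bin (each bin has at most one large item, by the size bound $s>\tfrac12$), discard the bins with no large item, and keep in the matched bins everything other than the large item itself — this yields a feasible solution to the maximization instance of total added size $\geq s(I) - s(L_{\ci}) \geq \OPT(\ci) - |L_{\ci}|$, using $s(I)\le \OPT(\ci)$ and $s(\ell)\le 1$ for each large item.

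Next I would define the residual instance $\ci' = \ci \setminus \textsf{items}(\cA)$, which contains only medium and small items (all large items were placed in $\cA$). Set $x = s(M_{\ci'})$ and $y = s(S_{\ci'})$, i.e.\ the total size of medium, resp.\ small, items not packed by \textsf{MaxSize}. Clearly $0 \le x \le s(M_{\ci})$ and $0 \le y \le s(S_{\ci})$. For item~1 of the lemma: the items added by \textsf{MaxSize} have total size $\ge \left(1-\frac1e-\eps\right)\cdot\big(\OPT(\ci)-|L_{\ci}|\big)$ by the previous paragraph, but I actually need a bound in terms of the full optimum of the maximization instance, call it $\OPT_{\max} \ge \OPT(\ci) - |L_{\ci}|$; the items remaining, of total size $x+y$, satisfy $x+y = s(M_{\ci}\cup S_{\ci}) - (\text{added size}) \le s(M_{\ci}\cup S_{\ci}) - \left(1-\frac1e-\eps\right)\OPT_{\max}$. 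The cleanest route is: the added size is at least $\left(1-\frac1e-\eps\right)\OPT_{\max} \ge \OPT_{\max} - \left(\frac1e+\eps\right)\OPT_{\max}$, and $\OPT_{\max} \le s(M_{\ci}) + s(S_{\ci}) + \frac{|L_{\ci}|}{2}$ — the last inequality because the optimal added items fit into $|L_{\ci}|$ bins each already containing a large item of size $>\tfrac12$, leaving room for at most $\frac{|L_{\ci}|}{2}$ total additional size, and the added items are all medium/small so their total is also trivially at most $s(M_{\ci})+s(S_{\ci})$. Combining, $x+y \le s(M_{\ci}\cup S_{\ci}) - \OPT_{\max} + \left(\frac1e+\eps\right)\OPT_{\max} \le \left(\OPT(\ci)-|L_{\ci}|\right) - \big(s(M_{\ci}\cup S_{\ci}) \le \OPT_{\max}, \ldots\big)$; I must be careful to get $x+y \le \OPT(\ci)-|L_{\ci}| + \left(\frac1e+\eps\right)\frac{|L_{\ci}|}{2}$ exactly. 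The right chain is: $x+y \le s(M_{\ci}\cup S_{\ci}) - (\text{added}) $ and $ s(M_{\ci}\cup S_{\ci}) - \OPT_{\max} \le s(I) - s(L_{\ci}) - \OPT_{\max} \le \OPT(\ci) - |L_{\ci}| - \OPT_{\max}$ is wrong sign; instead use $\text{added} \ge \OPT_{\max} - \left(\frac1e+\eps\right)\OPT_{\max}$ and bound the loss term $\left(\frac1e+\eps\right)\OPT_{\max}$ using $\OPT_{\max}\le \frac{|L_{\ci}|}{2} + (x+y+\text{added})$... — this bookkeeping is the one genuinely fiddly point and I would organize it as: (added size) $+ (x+y) = s(M_{\ci}\cup S_{\ci})$, and (added size) $\ge \OPT_{\max} - \frac1e\OPT_{\max} - \eps\OPT_{\max}$ where $\OPT_{\max} \le \min\{s(M_{\ci}\cup S_{\ci}),\ \OPT(\ci)-|L_{\ci}|\} + \frac{|L_{\ci}|}{2}$ needs the correct derivation — honestly the intended bound likely follows from $\OPT_{\max} \le \OPT(\ci) - |L_{\ci}| + \frac{|L_{\ci}|}{2} = \OPT(\ci) - \frac{|L_{\ci}|}{2}$ combined with $x + y = s(M_{\ci}\cup S_{\ci}) - \text{added} \le \OPT_{\max} - \left(1-\frac1e-\eps\right)\OPT_{\max} + (s(M_{\ci}\cup S_{\ci}) - \OPT_{\max})$, and $s(M_{\ci}\cup S_{\ci}) \le \OPT(\ci) - |L_{\ci}| \cdot \tfrac12$ is false in general, so the honest statement is $x+y \le \OPT(\ci) - |L_{\ci}| + (\frac1e+\eps)\frac{|L_{\ci}|}{2}$ obtained by writing the slack as exactly $\left(\frac1e+\eps\right)\OPT_{\max} \le \left(\frac1e+\eps\right)\cdot\frac{|L_{\ci}|}{2} + (\OPT(\ci) - |L_{\ci}| - (x+y))\cdot(\ldots)$. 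I will present this as: since \textsf{MaxSize} packs added size $\ge(1-\frac1e-\eps)\OPT_{\max}$ and $\OPT_{\max}\ge\OPT(\ci)-|L_{\ci}|$ while also $\OPT_{\max}\le(\OPT(\ci)-|L_{\ci}|)+\frac{|L_{\ci}|}{2}$, a short manipulation gives item~1.

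Finally, for item~2 I would apply Lemma~\ref{lem:Generic} to the residual instance $\ci' = \ci\setminus\textsf{items}(\cA)$: since $\ci'$ has no large items, $|L_{\ci'}|=0$, so $\textsf{\Generic}$ returns a packing $\cB$ with $\#\cB \le \chi(G_{\ci'}) + \frac32 s(M_{\ci'}) + \frac43 s(S_{\ci'}) = \chi(G_{\ci'}) + \frac32 x + \frac43 y$. Using $\chi(G_{\ci'}) \le \chi(G_{\ci})$ (induced subgraph) and the fact that $\cA$ contributes exactly $|L_{\ci}|$ bins, the returned packing $\cA\oplus\cB$ has $\#(\cA\oplus\cB) = |L_{\ci}| + \#\cB \le \chi(G_{\ci}) + |L_{\ci}| + \frac32 x + \frac43 y$, which is item~2. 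Feasibility of $\cA\oplus\cB$ follows since $\cA$ is a packing (output of \textsf{MaxSize}), $\cB$ is a packing of the remaining items (output of \textsf{\Generic}), and $\textsf{items}(\cA)$ and $\textsf{items}(\cB)$ partition $I$; polynomial running time follows from Lemmas~\ref{lem:SAP} and~\ref{lem:Generic}. The main obstacle, as flagged above, is the precise accounting in item~1 — in particular pinning down the upper bound on $\OPT_{\max}$ (the optimum of the maximization instance) as $\OPT(\ci) - \frac{|L_{\ci}|}{2}$ and threading the $(1-\frac1e-\eps)$ factor so that the additive slack comes out as exactly $\left(\frac1e+\eps\right)\frac{|L_{\ci}|}{2}$ rather than something larger; everything else is a direct application of the earlier lemmas.
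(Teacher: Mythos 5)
Your treatment of item 2 matches the paper exactly: define $\cj=\ci\setminus\textsf{items}(\cA)$, set $x=s(M_{\cj})$, $y=s(S_{\cj})$, observe $L_{\cj}=\emptyset$, apply Lemma~\ref{lem:Generic} to $\cj$, and add the $|L_{\ci}|$ bins from $\cA$. That part is correct.

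Item 1 has a genuine gap, which you yourself flag. The central claim you lean on, namely that the optimum $\OPT_{\max}$ of the maximization problem of $\ci$ and $T$ satisfies $\OPT_{\max}\geq\OPT(\ci)-|L_{\ci}|$, is false, and the argument you give for it is wrong. The feasible solution obtained by placing each non-large item of $\OPT$ into the singleton bin of the large item sharing its $\OPT$-bin has value exactly $s(K)$ (the total size of non-large items co-packed with a large item in $\OPT$), not $s(I)-s(L_{\ci})$: the items in $\OPT$-bins without a large item simply have nowhere to go, since the maximization instance only has $|L_{\ci}|$ bins. Moreover the chain $s(I)-s(L_{\ci})\geq\OPT(\ci)-|L_{\ci}|$ is not valid in general (indeed $\OPT_{\max}<\tfrac{|L_{\ci}|}{2}$ always, which can be much smaller than $\OPT(\ci)-|L_{\ci}|$). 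The two inequalities you finally settle on, $\OPT_{\max}\geq\OPT(\ci)-|L_{\ci}|$ and $\OPT_{\max}\leq\OPT(\ci)-|L_{\ci}|+\tfrac{|L_{\ci}|}{2}$, do not combine with the $(1-\tfrac1e-\eps)$-approximation guarantee to give the bound in item 1, and you never carry out the ``short manipulation''; when one actually tries, the term $\OPT(\ci)-s(L_{\ci})$ (not $\OPT(\ci)-|L_{\ci}|$) appears and cannot be controlled without a further splitting.

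The ingredient you are missing is precisely the definition of $K$ and a separate accounting of the rest. Define $K$ as the set of non-large items lying in an $\OPT$-bin that also contains a large item. Then (i) $\OPT_{\max}\geq s(K)$, so the size added by $\textsf{MaxSize}$ is at least $(1-\tfrac1e-\eps)\,s(K)$; (ii) the remaining non-large items $I\setminus L_{\ci}\setminus K$ are packed by $\OPT$ into its $\OPT(\ci)-|L_{\ci}|$ large-free bins, hence $s(I\setminus L_{\ci}\setminus K)\leq\OPT(\ci)-|L_{\ci}|$; and (iii) $s(K)\leq\tfrac{|L_{\ci}|}{2}$ because each of the $|L_{\ci}|$ bins containing a large item has less than half its capacity free. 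Then
\begin{equation*}
x+y \;=\; s(I\setminus\textsf{items}(\cA))
\;\leq\; s(I\setminus L_{\ci}\setminus K) + s(K) - \bigl(1-\tfrac1e-\eps\bigr)s(K)
\;\leq\; \OPT(\ci)-|L_{\ci}| + \bigl(\tfrac1e+\eps\bigr)\tfrac{|L_{\ci}|}{2},
\end{equation*}
which is item 1. Without introducing $K$ your accounting leaves a slack of $|L_{\ci}|-s(L_{\ci})$ that cannot be absorbed.
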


Lemma~\ref{lem:MaxSolve} improves significantly the performance of Algorithm \textsf{\Generic} for instances with many large items. 
However, Algorithm \textsf{MaxSize} may prefer
small over medium items;
the latter items will
be packed by Algorithm \textsf{\Generic} (see Algorithm~\ref{alg:MaxSolve}). The packing of these medium items may harm the approximation guarantee.  
Thus, to tackle instances with many medium items, we use a reduction to a maximum matching problem for packing the large and medium items in at most $\OPT(\ci)$ bins.\footnote{We note that a maximum matching based technique for BPC is used also in \cite{epstein2008bin,huang2023improved}.} Then, the remaining items can be packed using Algorithm \textsf{\Generic}. The graph used for the following subroutine \textsf{Matching} contains all large and medium items;
there is an edge
between any two items which can be assigned to the same bin 
in a packing of the instance ${\ci}$. Formally, \begin{definition}
	\label{def:GraphMatching}
	Given a \textnormal{BPC} instance $\ci = (I,s,E)$, the {\em auxiliary graph of $\ci$} is $H_{\ci} = (L_{\ci} \cup M_{\ci}, E_H)$, where $E_H = \{(u,v)~|~u,v \in L_{\ci} \cup M_{\ci}, s(\{u,v\}) \leq 1, (u,v) \notin E\}$. 
\end{definition} 
Algorithm \textsf{Matching} finds a maximum matching in $H_{\ci}$
and outputs a packing of the large and medium items where pairs of items taken to the matching are packed together, and the remaining items are packed in extra bins using Algorithm \textsf{\Generic}. The pseudocode of the subroutine \textsf{Matching} is given in Algorithm~\ref{alg:Matching}.

\begin{algorithm}[h]
	\caption{$\textsf{Matching}(\ci =(I,s,E))$}
	\label{alg:Matching}
		\begin{algorithmic}[1]
	\State{Find a maximum matching $\cm$ in $H_{\ci}$.\label{step:matchingMax}}
 	\State{$\cB \leftarrow \left(\{u,v\}~|~(u,v) \in \cm\right) \oplus \left(\{v\}~|~v \in M_{\ci} \cup L_{\ci}, \forall u \in M_{\ci} \cup L_{\ci}: (u,v) \notin \cm\right)$.\label{step:matchingB}}
		\State{Return $\cB \oplus \textsf{\Generic}(\ci \setminus (M_{\ci} \cup L_{\ci}))$.\label{step:matchingReturn}}
		\end{algorithmic}
\end{algorithm}
\negA

The proof of Lemma~\ref{lem:Matching} follows by noting that the cardinality of a maximum matching in $H_{\ci}$
in addition to the number of unmatched vertices in $L_{\ci} \cup M_{\ci}$ is at most $\OPT(\ci)$.
\begin{lemma}
	\label{lem:Matching}
	Given a \textnormal{BPC} instance $\ci = (I,s,E)$, Algorithm \textnormal{\textsf{Matching}} returns in polynomial time in $|\ci|$ a packing $\cA$ of $\ci$ such that  $\#\cA \leq \OPT(\ci)+ \chi(G_{\ci})+\frac{4}{3} \cdot s(S_{\ci})$.
\end{lemma}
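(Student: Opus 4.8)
The plan is to analyze Algorithm \textsf{Matching} in two parts: first bounding the number of bins used for the large and medium items (Step~\ref{step:matchingB}), then applying Lemma~\ref{lem:Generic} to bound the number of extra bins for the remaining items (Step~\ref{step:matchingReturn}). First I would recall the key structural fact about the auxiliary graph $H_{\ci}$: in any packing of $\ci$, each bin contains at most two items from $L_{\ci} \cup M_{\ci}$, since every item in $L_{\ci} \cup M_{\ci}$ has size strictly greater than $\frac{1}{3}$, so three such items would exceed total size $1$. Consequently, fixing an optimal packing of $\ci$, the bins that contain two items of $L_{\ci} \cup M_{\ci}$ induce a matching $\cm^*$ in $H_{\ci}$ (the two items are non-conflicting and jointly fit, hence form an edge of $H_{\ci}$), and the bins containing exactly one such item account for the remaining vertices. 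Thus $|\cm^*| + \big(|L_{\ci} \cup M_{\ci}| - 2|\cm^*|\big) \le \OPT(\ci)$, i.e.\ $|L_{\ci} \cup M_{\ci}| - |\cm^*| \le \OPT(\ci)$.

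Next I would observe that the packing $\cB$ produced in Step~\ref{step:matchingB} uses exactly $|\cm| + \big(|L_{\ci} \cup M_{\ci}| - 2|\cm|\big) = |L_{\ci} \cup M_{\ci}| - |\cm|$ bins, where $\cm$ is the maximum matching found in Step~\ref{step:matchingMax}; this is a valid packing since each matched pair forms an edge of $H_{\ci}$ (hence non-conflicting, total size $\le 1$) and each singleton trivially fits. Since $\cm$ is a maximum matching, $|\cm| \ge |\cm^*|$, so $\#\cB = |L_{\ci} \cup M_{\ci}| - |\cm| \le |L_{\ci} \cup M_{\ci}| - |\cm^*| \le \OPT(\ci)$.

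Finally, for the remaining items, the call $\textsf{\Generic}(\ci \setminus (M_{\ci} \cup L_{\ci}))$ operates on the instance whose items are all of size at most $\frac{1}{3}$, i.e.\ this residual instance has no large or medium items and its small items are exactly $S_{\ci}$. Applying Lemma~\ref{lem:Generic} to this instance gives a packing using at most $\chi(G_{\ci \setminus (M_{\ci} \cup L_{\ci})}) + 0 + \frac{3}{2} \cdot 0 + \frac{4}{3} \cdot s(S_{\ci})$ bins; and $\chi$ of an induced subgraph is at most $\chi(G_{\ci})$, so this is at most $\chi(G_{\ci}) + \frac{4}{3} \cdot s(S_{\ci})$. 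Concatenating, $\#\cA = \#\cB + \#\textsf{\Generic}(\ci \setminus (M_{\ci} \cup L_{\ci})) \le \OPT(\ci) + \chi(G_{\ci}) + \frac{4}{3} \cdot s(S_{\ci})$. The running time is polynomial since maximum matching is polynomial and Algorithm \textsf{\Generic} runs in polynomial time by Lemma~\ref{lem:Generic}.

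The main obstacle is making the first part fully rigorous: one must carefully argue that the bins of the fixed optimal packing containing two elements of $L_{\ci}\cup M_{\ci}$ genuinely yield edges of $H_{\ci}$ (checking both the size condition $s(\{u,v\})\le 1$ and the non-conflict condition $(u,v)\notin E$ directly from Definition~\ref{def:GraphMatching}), and that these bins are vertex-disjoint in $L_{\ci}\cup M_{\ci}$ so that they form a genuine matching $\cm^*$ whose size can be compared against the maximum matching $\cm$. Everything else is bookkeeping via Lemma~\ref{lem:Generic} and the definition of concatenation.
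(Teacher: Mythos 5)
Your proof is correct and follows essentially the same route as the paper's: bound $\#\cB$ by $\OPT(\ci)$ via the observation that each bin of an optimal packing holds at most two items of $L_{\ci}\cup M_{\ci}$ and that a maximum matching in $H_{\ci}$ dominates the matching induced by such bins, then apply Lemma~\ref{lem:Generic} to the residual (small-item) instance and concatenate. If anything, you are slightly more careful than the paper in writing $\chi(G_{\ci\setminus(M_{\ci}\cup L_{\ci})})\le\chi(G_{\ci})$ as an inequality over induced subgraphs (the paper presents that final step as an equality, which is imprecise but inconsequential).
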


We now have the required components for the approximation algorithm for BPC and the asymptotic approximation for BPB. 
Our algorithm, \textsf{ApproxBPC}, applies all of the above subroutines and returns the packing which uses the smallest number of bins.
We use $\eps = 0.0001$ for the error parameter in \textnormal{\textsf{MaxSolve}}. The pseudocode of \textsf{ApproxBPC} is given in Algorithm~\ref{alg:BPC}.  
\negA
\begin{algorithm}[h]
	\caption{$\textsf{ApproxBPC}(\ci)$}
	\label{alg:BPC}
		\begin{algorithmic}[1]
		\State{Let $\eps = 0.0001$.\label{step:BPCeps}}
		\State{Compute $\cA_1 \leftarrow \textsf{\Generic}(\ci)$, $\cA_2 \leftarrow \textsf{MaxSolve}(\ci, \eps)$, $\cA_3 \leftarrow \textsf{Matching}(\ci)$.\label{step:BPCA}}
	
		\State{Return $\argmin_{\cA \in \{\cA_1, \cA_2, \cA_3\}} \#\cA$.\label{step:BPCReturn}}
		\end{algorithmic}
\end{algorithm}
\negA

We give below the main result of this section. The proof follows by the argument that the subroutines $ \textsf{\Generic}$, $\textsf{MaxSolve}$, and $ \textsf{Matching}$ handle together most of the difficult cases. Specifically, if the instance contains many large items, then $\textsf{MaxSolve}$ produces the best approximation. If there are many large and medium items, then $ \textsf{Matching}$ improves the approximation guarantee. Finally, for any other case, our analysis of the $ \textsf{\Generic}$ algorithm gives us the desired ratio. We summarize with the next result. 
\begin{theorem}
	\label{thm:BPC}
Algorithm~\ref{alg:BPC} is a $2.445$-approximation for \textnormal{BPC} and an asymptotic $1.391$-approximation for \textnormal{BPB}. 
\end{theorem}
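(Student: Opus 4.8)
The plan is to argue that Algorithm~\ref{alg:BPC}, which returns the best of the three packings produced by $\textsf{\Generic}$, $\textsf{MaxSolve}$, and $\textsf{Matching}$, always achieves the claimed bounds by a case analysis on the relative sizes of $|L_{\ci}|$, $s(M_{\ci})$, and $s(S_{\ci})$ compared to $\OPT = \OPT(\ci)$. Throughout I would normalize by writing $|L_{\ci}| = \lambda \cdot \OPT$, $s(M_{\ci}) = \mu \cdot \OPT$, $s(S_{\ci}) = \sigma \cdot \OPT$, and use the basic volume lower bound $s(I) \le \OPT$, together with the trivial fact that each bin holds at most one large item, so $\lambda \le 1$. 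Since $\chi(G_{\ci}) \le \OPT$ always (an optimal packing is in particular a coloring) — and for bipartite graphs $\chi(G_{\ci}) \le 2$, which is $o(\OPT)$ — the three lemmas give three linear-in-$(\lambda,\mu,\sigma)$ upper bounds on the number of bins. The final ratio is then $\min$ of three affine functions divided by $\OPT$, maximized over the feasible region of $(\lambda,\mu,\sigma)$, and one checks that this max is at most $2.445$ (resp. $1.391 + o(1)$ for BPB).

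Concretely, I would proceed as follows. \textbf{Step 1 (collect the bounds).} From Lemma~\ref{lem:Generic}, $\#\cA_1 \le \chi(G_{\ci}) + |L_{\ci}| + \tfrac32 s(M_{\ci}) + \tfrac43 s(S_{\ci}) \le (1 + \lambda + \tfrac32\mu + \tfrac43\sigma)\OPT$; from Lemma~\ref{lem:MaxSolve}, $\#\cA_2 \le \chi(G_{\ci}) + |L_{\ci}| + \tfrac32 x + \tfrac43 y$ with $x + y \le \OPT - |L_{\ci}| + (\tfrac1e + \eps)\tfrac{|L_{\ci}|}{2}$, so bounding $x \le \OPT - |L_{\ci}| + (\tfrac1e+\eps)\tfrac{|L_{\ci}|}{2}$ and $y \le s(S_{\ci})$ (whichever split is worst) yields roughly $\#\cA_2 \lesssim (1 + \lambda + \tfrac32(1 - \lambda + \tfrac{\lambda}{2e}))\OPT$ up to the $\tfrac43$-vs-$\tfrac32$ and $\eps$ slack; from Lemma~\ref{lem:Matching}, $\#\cA_3 \le \OPT + \chi(G_{\ci}) + \tfrac43 s(S_{\ci}) \le (2 + \tfrac43\sigma)\OPT$. \textbf{Step 2 (BPC case analysis).} The algorithm's value is $\min(\#\cA_1, \#\cA_2, \#\cA_3)$. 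When $\sigma$ is small, $\cA_3$ gives $\approx 2\OPT$; when $\lambda$ is large, the $\tfrac{\lambda}{2e}$ savings in $\cA_2$ beats the naive bound; when $\lambda$ is small, $\cA_1$ is already good because the $|L_{\ci}|$ term is small. The worst point is where all three bounds coincide; I would set up the two-variable LP (using $\lambda + \mu + \sigma \le 1$-type volume constraints, since a large item has size $> \tfrac12$, a medium item size $> \tfrac13$, etc., giving $\tfrac12\lambda + \tfrac13 s(M_{\ci})/\OPT + \ldots \le 1$) and verify the optimum is $\le 2.445$. \textbf{Step 3 (BPB case).} Here $\chi(G_{\ci}) \le 2 = o(\OPT)$, so the $+\chi$ terms vanish asymptotically, and Lemma~\ref{lem:Generic} additionally gives the stronger bipartite bound $\#\cA_1 \le \tfrac32|L_{\ci}| + \tfrac43(\OPT - |L_{\ci}|) + o(\OPT)$. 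Combining this with the improved $\cA_2$ bound (the $\tfrac1{2e}$ gain on the $|L_{\ci}|$ term) and $\cA_3 \le \OPT + o(\OPT)$ — wait, $\cA_3$ for BPB is $\OPT + o(\OPT) + \tfrac43 s(S_{\ci})$, still potentially large — the binding tradeoff is between $\cA_1$ (bad when $\lambda$ large) and $\cA_2$ (bad when $\lambda$ small but then $\cA_1$ is near $\tfrac43\OPT$). Optimizing $\min$ over $\lambda \in [0,1]$ of the two affine functions $f_1(\lambda) = \tfrac32\lambda + \tfrac43(1-\lambda) = \tfrac43 + \tfrac16\lambda$ and $f_2(\lambda) = 1 + \lambda + \tfrac32(1 - \lambda + \tfrac{\lambda}{2e}) = \tfrac52 - \tfrac12\lambda + \tfrac{3\lambda}{4e}$ — setting $f_1 = f_2$ gives the crossover $\lambda^\star$ and the value $1.391$ (after also accounting for how medium items enter $\cA_2$ via the $x,y$ split, which I'd need to bound carefully).

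\textbf{Main obstacle.} The delicate part is Step 2's and Step 3's optimization: Lemma~\ref{lem:MaxSolve} only controls the \emph{sum} $x + y$, not $x$ and $y$ individually, and since the bound on $\#\cA_2$ weights $x$ by $\tfrac32$ and $y$ by $\tfrac43$, the adversary will push mass onto $x$ (medium items) — but $x \le s(M_{\ci})$ caps this, so I must simultaneously track $s(M_{\ci})$ as a free parameter and feed large-$s(M_{\ci})$ instances to $\textsf{Matching}$ instead. Getting the three-way $\min$ to close at exactly $2.445$ and $1.391$ requires choosing the right convex combination of the bounds and being careful that the $o(\OPT)$ terms (from $\chi$ in the general case, and from $\textsf{AsymptoticBP}$) are genuinely lower-order; I would handle the small-$\OPT$ regime separately using the absolute guarantees (FFD's $+1$ and Lemma~\ref{lem:rothvos}'s threshold at $\OPT \ge 100$) so that the asymptotic claim for BPB is rigorous. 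I expect the bulk of the work to be this LP/case bookkeeping rather than any new conceptual idea.
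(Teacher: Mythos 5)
Your strategy — collect the bounds from Lemmas~\ref{lem:Generic}, \ref{lem:MaxSolve}, \ref{lem:Matching}, parameterize by $|L_{\ci}|$, $s(M_{\ci})$, $s(S_{\ci})$, and argue that the $\min$ over the three subroutines stays below the target ratio in every regime — is the paper's approach, and your intuitions about which subroutine binds where are right. The paper's execution is leaner than your proposed three-variable LP: for BPC it splits on a single threshold, $s(S_{\ci})\le s(I)/3$ (use $\textsf{Matching}$, giving $\le 2\OPT + \frac{4}{3}\cdot\frac{s(I)}{3}$) versus $s(S_{\ci})> s(I)/3$ (use $\textsf{MaxSolve}$), and for BPB it splits on whether $|L_{\ci}|$ exceeds a fixed fraction of $\OPT(\ci)$, using $\textsf{\Generic}$'s bipartite bound on one side and $\textsf{MaxSolve}$ on the other. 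The difficulty you flag — that Lemma~\ref{lem:MaxSolve} constrains only $x+y$ while the bound weights $x$ more heavily — is handled in the paper by the rewrite $\frac{3}{2}x+\frac{4}{3}y=\frac{1}{6}x+\frac{4}{3}(x+y)$, after which $\frac{1}{6}x\le\frac{1}{6}s(M_{\ci})\le\frac{1}{6}\bigl(s(I)-s(S_{\ci})\bigr)$ is controlled directly by the case hypothesis $s(S_{\ci})>s(I)/3$, with no detour through $\textsf{Matching}$; you would need some such algebraic move to close your Step 2. Finally, one concrete slip: your $f_2(\lambda)=1+\lambda+\frac{3}{2}\bigl(1-\lambda+\frac{\lambda}{2e}\bigr)$ still carries the additive $1$ coming from $\chi(G_{\ci})/\OPT(\ci)$, which you already correctly observed is $o(1)$ for bipartite graphs; with that $1$ in place, $f_1=f_2$ has no root in $[0,1]$ and the crossover argument fails, so drop it before solving.
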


\mysection{Split Graphs}
\label{sec:split}

 In this section we enhance the use of maximization techniques for BPC 
 to obtain an absolute approximation algorithm for BPS. In particular, we improve upon the recent result of Huang et al. \cite{huang2023improved}. We use 
 as a subroutine 
 the maximization technique as outlined in Lemma~\ref{lem:SAP}. Specifically, we start by obtaining an FPTAS for the BIS problem
 on split graphs. For the following, fix a BPS instance $\ci = (I,s,E)$. It is well known (see, e.g., \cite{golumbic2004algorithmic}) that
 a partition of the vertices 
 of a split graph into a clique and an independent set can be found 
 in polynomial time. Thus, for simplicity we assume that such a partition of the split graph $G$ is known and given by $K_{G}, S_{G}$. We note that an FPTAS for the BIS problem on split graphs follows from a result of Pferschy and Schauer~\cite{pferschy2009knapsack} for {\em knapsack with conflicts}, since split graphs are a subclass of chordal graphs. We give a simpler FPTAS for our problem in Appendix~\ref{sec:proofsSplit}. 
 
	\begin{lemma}
	\label{lem:KP2}
	There is an algorithm \textnormal{\textsf{FPTAS-BIS}} that is an \textnormal{FPTAS} for the \textnormal{BIS} problem on split graphs. 
\end{lemma}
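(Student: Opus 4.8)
The plan is to design a dynamic program over the clique/independent-set partition $K_G, S_G$ of the split graph $G = (I,E)$, exploiting the fact that any independent set $S$ in a split graph contains at most one vertex from the clique $K_G$. First I would observe this structural fact: if $S \in \textsf{IS}(G)$ then $|S \cap K_G| \le 1$. Hence a feasible BIS solution is either entirely contained in $S_G$, or it consists of a single clique vertex $k \in K_G$ together with a subset of $S_G$ whose vertices are all non-neighbors of $k$. This immediately reduces BIS on split graphs to at most $|K_G|+1$ subproblems, each of which is a BIS instance on an \emph{edgeless} graph — i.e., a plain knapsack-type problem: choose a subset $S' \subseteq V'$ (where $V' = S_G$ or $V' = \{v \in S_G : (v,k) \notin E\}$) maximizing $w(S')$ subject to $w(S') \le \beta$ (and, in the second case, $w(S') \le \beta - w(k)$). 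Note that since the weight being maximized equals the weight being budgeted, the ``max weight subset with weight at most $\beta$'' problem is the classic subset-sum maximization problem.

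Next I would give an FPTAS for this subset-sum-style subproblem. The standard approach: run the exact pseudo-polynomial DP that, for each prefix of the items, records the set of achievable weights (or rather, for each achievable weight the DP keeps a witness), then apply the usual rounding/scaling trick — scale weights by a factor $\delta = \eps \beta / n$, round down, solve the scaled instance exactly in time $O(n \cdot \beta/\delta) = O(n^2/\eps)$, and return the corresponding unscaled set. Rounding down each of at most $n$ weights loses at most $n\delta = \eps\beta$ in total, so the returned set has weight at least $\OPT - \eps\beta \ge (1-\eps)\OPT$ (using $\OPT \le \beta$ when $\OPT \ne 0$; if no nonempty feasible set exists the algorithm trivially returns $\emptyset$). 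This runs in time polynomial in $n$ and $1/\eps$, hence is an FPTAS. (Alternatively, one could cite the classical FPTAS for subset sum / knapsack directly, but since the paper promises a ``simpler'' self-contained argument in the appendix, I would spell out the DP.)

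Then I would assemble \textsf{FPTAS-BIS}: run the edgeless FPTAS on the instance restricted to $S_G$ with budget $\beta$, and for each $k \in K_G$ with $w(k) \le \beta$ run it on the non-neighbors of $k$ within $S_G$ with budget $\beta - w(k)$, prepending $k$ to that solution; output the best of these $\le |K_G|+1$ candidates. Correctness of the approximation ratio follows because the optimal BIS solution falls into one of these cases, and in its case we recover a $(1-\eps)$-approximation to it; the candidate we output is at least as good. The running time is $(|K_G|+1)$ times the FPTAS running time, still polynomial in $|\ci|$ and $1/\eps$. One subtlety to handle carefully: the budget $\beta - w(k)$ could be smaller than $\OPT$ of that subproblem only in the sense that we need $\OPT_k \le \beta - w(k)$ to conclude $(1-\eps)$-approximation — but that holds by the budget constraint of the subproblem itself, so the bookkeeping is clean.

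The main obstacle I expect is not conceptual but a matter of getting the approximation-loss accounting exactly right in the DP rounding step — in particular making sure the additive error $n\delta$ translates into a genuine multiplicative $(1-\eps)$ guarantee even when $\OPT$ is much smaller than $\beta$. The fix is the standard one: either scale by $\eps \cdot \OPT_{\text{LB}}/n$ using a constant-factor lower bound on $\OPT$ (e.g.\ the max single feasible item weight), or simply note that when $\OPT < \beta$ we can afford to also enumerate over a ``large items'' guess as in the BIS PTAS of Lemma~\ref{lem:ptas} — but for a clean FPTAS the lower-bound scaling is preferable. Everything else (the split-graph case analysis, the edgeless reduction) is routine.
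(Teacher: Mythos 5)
Your proposal is correct and follows essentially the same route as the paper's proof: observe that any independent set in a split graph contains at most one clique vertex, enumerate that vertex (or its absence), and on each branch solve a knapsack/subset-sum FPTAS over the compatible independent-set-side vertices with the residual budget. The only cosmetic difference is that you spell out a scaling-based DP for the subset-sum subroutine, whereas the paper simply invokes a black-box knapsack FPTAS (its Lemma on \textsf{KP}) with profit equal to cost; both handle the additive-error bookkeeping identically.
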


Our next goal is to find a suitable initial packing $\cB$ to which we apply \textsf{MaxSize}.
Clearly, the vertices $K_{G_{\ci}}$ must be assigned to different bins. Therefore, our initial packing contains the vertices of $K_{G_{\ci}}$ distributed to $|K_{G_{\ci}}|$ bins as $\{ \{v\}~|~v \in K_{G_{\ci}} \}$. In addition, let $\alpha \in \left\{0,1,\ldots, \ceil{2 \cdot s(I)}+1 \right\}$ be a {\em guess} of $\OPT(\ci)-|K_{G_{\ci}}|$; then, $(\emptyset)_{i \in [\alpha]}$ is a packing of $\alpha$ bins that do not contain items. Together, the two above packings form the initial packing $\cB_{\alpha}$. Our algorithm uses \textsf{MaxSize} to add items to the existing bins of $\cB_{\alpha}$ and packs the remaining items using \textsf{FFD}.
Note that we do not need an error parameter $\eps$, since we use \textsf{MaxSize} with an FPTAS (see Lemma~\ref{lem:SAP}).
For simplicity, we assume that $\OPT(\ci) \geq 2$
(else we can trivially pack the instance in a single bin). 
We give the pseudocode of our algorithm for BPS in Algorithm~\ref{alg:split}.

\negA
\begin{algorithm}[h]
	\caption{$\textsf{Split-Approx}(\ci = (I,s,E))$}
	\label{alg:split}
	\begin{algorithmic}[1]
		\For{$\alpha \in \left\{0,1,\ldots, \ceil{2 \cdot s(I)} +1\right\}$}
	\State{Define $\cB_{\alpha} =  \{ \{v\}~|~v \in K_{G_{\ci}} \} \oplus (\emptyset)_{i \in [\alpha]}$}
	\State{$\cA_{\alpha} \leftarrow \textsf{MaxSize}(\ci, K_{G_\ci}, \cB_{\alpha})$.\label{step:split1}}
	\State{$\cA^*_{\alpha} \leftarrow \cA_{\alpha} \oplus \textsf{FFD}(\ci \setminus \textsf{items}(\cA_{\alpha}))$.\label{step:split3}}
	\EndFor
	\State{Return $\argmin_{ \alpha \in \left\{0,1,\ldots, \ceil{2 \cdot s(I)} +1 \right\}} \#\cA^*_{\alpha}$.\label{step:split2}}
	\end{algorithmic}
\end{algorithm}
\negA

By Lemmas~\ref{lem:KP2} and~\ref{lem:SAP} we have a $\left( 1-\frac{1}{e}\right)$-approximation for the maximization problem of the BPS instance $\cI$ and an initial partial packing $\cB$. Hence, for a correct guess $\alpha = \OPT(\ci)-|K_{G_{\ci}}|$, the remaining items to be packed by \textsf{FFD} are of total size at most $\frac{s(I)}{e}$ and can be packed in $\frac{2 \cdot \OPT(\ci)}{e}$ bins. Thus, we have 
\begin{theorem}
	\label{thm:split}
	Algorithm~\ref{alg:split} is a $\left( 1+ \frac{2}{e} \right)$-approximation for \textnormal{BPS}.
\end{theorem}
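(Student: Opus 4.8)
The plan is to analyze Algorithm~\ref{alg:split} by focusing on the iteration where the guess is correct, namely $\alpha = \OPT(\ci) - |K_{G_{\ci}}|$; call this value $\alpha^*$. Since the returned packing minimizes $\#\cA^*_\alpha$ over all guesses, it suffices to bound $\#\cA^*_{\alpha^*}$. First I would argue that $\alpha^*$ actually lies in the range $\{0,1,\ldots,\ceil{2 s(I)}+1\}$ being enumerated: the clique vertices $K_{G_{\ci}}$ occupy $|K_{G_{\ci}}|$ bins in any packing, so $\OPT(\ci) \geq |K_{G_{\ci}}|$ giving $\alpha^* \geq 0$; and $\OPT(\ci) \leq \ceil{2 s(I)} + 1$ by a standard volume argument (e.g., via Lemma~\ref{lem:FF} with $\max_\ell s(\ell) \leq 1$), so $\alpha^* \leq \OPT(\ci) \leq \ceil{2 s(I)}+1$. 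Hence the iteration $\alpha = \alpha^*$ is indeed executed.

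Next, the key structural point: in the iteration $\alpha = \alpha^*$, the initial packing $\cB_{\alpha^*}$ has exactly $|K_{G_{\ci}}| + \alpha^* = \OPT(\ci)$ bins, and I claim the maximization problem of $\ci$ and $\cB_{\alpha^*}$ has an optimal value of at least $s(I \setminus K_{G_{\ci}})$ — equivalently, there is a way to extend $\cB_{\alpha^*}$ to a full packing of all of $I$. This is because an optimal packing of $\ci$ uses $\OPT(\ci)$ bins, each containing at most one clique vertex (as $K_{G_{\ci}}$ is a clique in the conflict graph), so we may relabel its bins so that bin $v$ contains the clique vertex $v \in K_{G_{\ci}}$ and the remaining $\alpha^*$ bins contain no clique vertex; this is exactly an extension of $\cB_{\alpha^*}$. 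Now apply Lemma~\ref{lem:SAP} together with the FPTAS for BIS on split graphs from Lemma~\ref{lem:KP2}: \textsf{MaxSize} returns a packing $\cA_{\alpha^*}$ that is a $(1-\frac{1}{e})$-approximation, so the total size of items packed in $\cA_{\alpha^*}$ is at least $s(K_{G_{\ci}}) + (1-\frac{1}{e}) \cdot s(I \setminus K_{G_{\ci}}) \geq (1-\frac{1}{e}) s(I)$. Therefore the items in $I \setminus \textsf{items}(\cA_{\alpha^*})$ have total size at most $\frac{1}{e} s(I)$.

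Finally, these leftover items are packed by \textsf{FFD} in Step~\ref{step:split3}. By Lemma~\ref{lem:FF}, \textsf{FFD} uses at most $(1 + 2\max_\ell s(\ell)) \cdot s(I \setminus \textsf{items}(\cA_{\alpha^*})) + 1 \leq 3 \cdot \frac{1}{e} s(I) + 1$ bins; combining with $s(I) \leq \OPT(\ci)$ this is at most $\frac{3}{e}\OPT(\ci) + 1$. However, this naive bound gives $1 + \frac{3}{e}$, not $1 + \frac{2}{e}$, so the refinement I would make is to handle the large leftover items separately: items of size $> \tfrac12$ among the leftovers number at most $2 \cdot \frac{1}{e} s(I)$, and more carefully, the FFD bound $\#\cB \le (1 + 2\max_\ell s(\ell)) s(I') + 1$ can be replaced — when one tracks that items of size exceeding $\tfrac12$ can be argued away or that the residual set from a size-maximization step is dominated by small items — by roughly $2 \cdot s(I') + 1 \le \frac{2}{e}\OPT(\ci) + 1$, and adding the $\OPT(\ci)$ bins of $\cA_{\alpha^*}$ yields $\#\cA^*_{\alpha^*} \le (1 + \frac{2}{e})\OPT(\ci) + 1$, i.e. an absolute $(1+\frac{2}{e})$-approximation (absorbing the additive constant using $\OPT(\ci) \ge 2$, or noting the ratio bound holds directly). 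The main obstacle is precisely this last step: getting the coefficient on the leftover volume down from $3$ to $2$, which requires either restricting the size-maximization so that leftovers are small, or a more careful FFD accounting on the residual instance rather than the crude $(1+2\max s)$ estimate; I expect the paper achieves this by exploiting that \textsf{MaxSize} on budget $\beta=1$ together with the clique structure forces the uncovered mass to behave like small items, so that $s(I')/1$ rather than $s(I')\cdot(1+2\max s)$ governs the \textsf{FFD} count.
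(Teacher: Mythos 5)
Your overall plan matches the paper's: fix the iteration with the correct guess $\alpha^* = \OPT(\ci) - |K_{G_{\ci}}|$, argue that the maximization problem on $\cB_{\alpha^*}$ has optimum $s(I\setminus K_{G_{\ci}})$ because an optimal packing of $\ci$ is an extension of $\cB_{\alpha^*}$, apply Lemma~\ref{lem:SAP} with the FPTAS from Lemma~\ref{lem:KP2} to conclude the residual mass is at most $s(I)/e$, and then pack the residual by \textsf{FFD}. Two things go wrong, one minor and one substantive.

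The minor issue is your justification that $\alpha^*$ lies in the enumerated range. You write $\OPT(\ci) \le \ceil{2 s(I)}+1$ ``by a standard volume argument.'' That inequality is false for \textsc{BPC} in general: take a clique $K_{G_{\ci}}$ of $n$ items of size $0$; then $\OPT = n$ while $s(I) = 0$. The correct chain is $\OPT(\ci) \le \# \textsf{\Generic}(\ci) \le \chi(G_{\ci}) + \ceil{2 s(I)} \le |K_{G_{\ci}}| + 1 + \ceil{2 s(I)}$ (Lemma~\ref{lem:Generic} plus $\chi \le |K_{G_{\ci}}|+1$ for a split graph), which gives $\alpha^* = \OPT(\ci) - |K_{G_{\ci}}| \le \ceil{2 s(I)} + 1$. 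Your conclusion is right, but the reasoning you give would not survive scrutiny.

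The substantive gap is the final \textsf{FFD} step, and here your own speculation about how to close it is incorrect. You guess that the paper ``exploits that \textsf{MaxSize} on budget $\beta=1$ together with the clique structure forces the uncovered mass to behave like small items.'' It does not; the residual items can have sizes arbitrarily close to $1$. What does hold is that the residual $I \setminus \textnormal{\textsf{items}}(\cA_{\alpha^*})$ is contained in $S_{G_{\ci}}$ (since every clique vertex is a seed of $\cB_{\alpha^*}$ and therefore appears in $\cA_{\alpha^*}$), so the residual is conflict-free and \textsf{FFD} applies directly as a plain bin-packing subroutine. The bound the paper then uses is the classical \textsf{FFD} volume bound without the additive $+1$: if \textsf{FFD} on a set of total size $s'$ opens $m \ge 2$ bins, then $s' > m/2$, hence $m < 2 s'$; and if $s' \le 1$ then $m = 1$. (Briefly: let $v$ be the item that opened bin $m$; bins $1,\ldots,m-1$ each have load $> 1 - s(v)$, bin $m$ has load $\ge s(v)$, and the items preceding $v$ all have size $\ge s(v)$, so summing gives $s' > (m-1)(1-s(v)) + s(v) \ge m/2$ in the case $s(v) \le \tfrac12$, while if $s(v) > \tfrac12$ all $m$ bins carry an item larger than $\tfrac12$.) Combining, $\# \textsf{FFD} \le \max\{2 s(I)/e, 1\} \le 2\OPT(\ci)/e$ using $s(I) \le \OPT(\ci)$ and $\OPT(\ci) \ge 2$, so $\#\cA^*_{\alpha^*} \le \OPT(\ci) + 2\OPT(\ci)/e$ with no stray additive constant. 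Your version carries a ``$+1$'' that you wave away with ``absorbing the additive constant using $\OPT(\ci) \ge 2$,'' but $\OPT + 2\OPT/e + 1 > (1+2/e)\OPT$ for every $\OPT$, so that does not give an absolute $(1+2/e)$-approximation; the tighter $m \le 2s'$ bound on \textsf{FFD} is genuinely needed and is what the paper relies on.
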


\mysection{Asymptotic Hardness for Bipartite and Split Graphs}
\label{sec:HardnessBP}

In this section we show that there is no APTAS for BPB and BPS, unless $P = NP$. We use a reduction from the {\em Bounded 3-dimensional matching (B3DM)} problem, that is known to be MAX SNP-complete~\cite{kann1991maximum}.

For the remainder of this section, let $c>2$ be some constant. 
A B3DM instance is a four-tuple $\cj = (X,Y,Z,T)$, where $X,Y,Z$ are three disjoint finite sets and $T \subseteq X \times Y \times Z$; also, for each $u \in X \cup Y \cup Z$ there are at most $c$ triples in $T$ to which $u$ belongs. A {\em solution} for $\cj$ is $M \subseteq T$ such that for all $u \in X \cup Y \cup Z$ it holds that $u$ appears in at most one triple of $M$.  The objective is to find a solution $M$ of maximal cardinality. Let $\OPT(\cj)$ be the value of an optimal solution for $\cj$. 
We use in our reduction a {\em restricted} instance of B3DM defined as follows.
\begin{definition}
	\label{def:restrictedInstance}
	For $k \in \mathbb{N}$, a \textnormal{B3DM} instance $\cj$ 
	is $k$-{\em restricted} if $\OPT(\cj) \geq k$. 
\end{definition}

In the next lemma we show the hardness of $k$-restricted B3DM. Intuitively, since B3DM instances $\cj$ with $\OPT(\cj) \leq k$ are polynomially solvable for a fixed $k$ (e.g., by exhaustive enumeration), it follows that restricted-B3DM must be hard to approximate, by the hardness result of Kann~\cite{kann1991maximum}. 
\begin{lemma}
	\label{1-rest}
	There is a constant $\alpha>1$ such that for any $k \in \mathbb{N}$ there is no $\alpha$-approximation for the $k$-restricted \textnormal{B3DM} problem unless \textnormal{P=NP}. 
\end{lemma}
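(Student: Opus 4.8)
The plan is to prove Lemma~\ref{1-rest} by a padding (disjoint-union) argument that inflates an arbitrary B3DM instance into a $k$-restricted one without significantly changing the approximation ratio, thereby transferring the MAX SNP-hardness of plain B3DM (Kann~\cite{kann1991maximum}) to the restricted version. Concretely, let $\alpha_0>1$ be the inapproximability constant for (unrestricted, bounded) B3DM guaranteed by~\cite{kann1991maximum}: it is NP-hard to distinguish instances with optimum at least some value from those with optimum at most $\alpha_0^{-1}$ times that value (equivalently, no polynomial-time $\alpha_0$-approximation exists unless P=NP). Fix $k \in \mathbb{N}$. Given an arbitrary B3DM instance $\cj = (X,Y,Z,T)$, I would first note that if $\OPT(\cj) \le k$ then $\cj$ can be solved exactly in polynomial time (for fixed $k$) by exhaustive enumeration of all $\binom{|T|}{\le k}$ candidate subsets and checking feasibility, so the hard instances are already $k$-restricted and there is nothing to do. The real content is handling instances whose optimum is small but positive.

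The key step is the following reduction. Take $k$ disjoint copies $\cj_1, \dots, \cj_k$ of $\cj$, on vertex sets and triple sets that are pairwise disjoint (rename elements), and let $\cj^{(k)}$ be their union: $X^{(k)} = \bigcup_i X_i$, etc., and $T^{(k)} = \bigcup_i T_i$. This is still a B3DM instance with the same bound $c$ on occurrences, since an element of copy $i$ appears only in triples of $T_i$. Moreover $\OPT(\cj^{(k)}) = k \cdot \OPT(\cj)$: any solution restricted to copy $i$ is a solution of $\cj_i$, and conversely combining optimal solutions of the copies is feasible because the copies share no elements. Hence $\OPT(\cj^{(k)}) \ge k$ whenever $\OPT(\cj)\ge 1$, so $\cj^{(k)}$ is $k$-restricted (and we may assume $\OPT(\cj)\ge 1$ since $\OPT(\cj)=0$ is detectable and trivial). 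Finally, an $\alpha$-approximation $M$ for $\cj^{(k)}$ yields an $\alpha$-approximation for $\cj$: $|M| \le \sum_{i=1}^k |M \cap T_i|$, and each $M\cap T_i$ is a feasible solution of $\cj_i \cong \cj$, so $\max_i |M\cap T_i| \ge |M|/k \ge \OPT(\cj^{(k)})/(\alpha k) = \OPT(\cj)/\alpha$; outputting the best of the $k$ restrictions gives a solution of $\cj$ within factor $\alpha$ in polynomial time. Therefore a polynomial-time $\alpha$-approximation for $k$-restricted B3DM implies one for B3DM, and setting $\alpha = \alpha_0$ contradicts~\cite{kann1991maximum} unless P=NP.

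I would then wrap up by observing that the constant $\alpha := \alpha_0$ is independent of $k$ (the reduction works uniformly for every $k$), which is exactly what the statement demands. The main obstacle — really the only subtle point — is making sure the padding preserves the structural restrictions of B3DM: the bound $c$ on the number of triples per element must survive the union, which it does precisely because the copies are on disjoint ground sets, and the additivity $\OPT(\cj^{(k)}) = k\,\OPT(\cj)$ must hold exactly, which again relies on disjointness so that no feasibility interaction occurs across copies. One should also double-check the exact quantifier form of Kann's hardness result to confirm it gives a fixed-ratio gap (MAX SNP-completeness does, via the PCP theorem / APX-hardness), and note the reduction size $k\cdot|\cj|$ is polynomial for fixed $k$; neither of these raises a genuine difficulty, so the argument is essentially routine once the disjoint-union idea is in place.
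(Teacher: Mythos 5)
Your proof is correct, but it takes a genuinely different route from the paper's. The paper does not pad; it combines algorithms. Concretely, the paper introduces a polynomial-time procedure $\textnormal{\textsf{Enum}}_k$ that exhaustively checks all subsets of $T$ of size at most $k$ and returns the best feasible one. On any non-$k$-restricted instance (where $\OPT < k$) this is exact. The paper then observes that running both $\textnormal{\textsf{Enum}}_k$ and the hypothetical $\alpha$-approximation $\textnormal{\textsf{Restricted-Solver}}$ and returning the larger solution yields a polynomial-time $\alpha$-approximation for \emph{unrestricted} B3DM on every input: when $\OPT \ge k$ the restricted solver's guarantee applies, and when $\OPT<k$ enumeration is exact; in both cases the max is within factor $\alpha$. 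This contradicts Kann's result. So the paper uses a Turing-style combination with a case split on the optimum, while you use a Karp-style instance map (take $k$ disjoint copies, exploit additivity of $\OPT$ under disjoint union, and recover an $\alpha$-approximate solution for the original by an averaging argument over the copies). Both are sound. The paper's version is arguably shorter and sidesteps any reasoning about how $\OPT$ transforms, since it never constructs a new instance; yours has the appeal of being a self-contained gap-preserving reduction from B3DM to $k$-restricted B3DM, which some readers may find conceptually cleaner. One cosmetic slip: you wrote $|M| \le \sum_i |M\cap T_i|$, but since the $T_i$ are disjoint and $M \subseteq \bigcup_i T_i$, this is in fact an equality; the inequality is harmless but the averaging step is tighter than you stated.
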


We give below the main idea of our reduction, showing the asymptotic hardness of BPB and BPS. A more formal description and the proof of Lemma~\ref{1-rest} are given in Appendix~\ref{sec:proofsBPHardness}. For a sufficiently large $n \in \mathbb{N}$, let $\cj = (X,Y,Z,T)$ be an $n$-restricted instance of B3DM, and let the components of $\cj$, together with appropriate indexing, be 
$U =X \cup Y \cup Z$ and $T$, where
\begin{equation*}
	\begin{aligned}
		X ={} & \{x_1, \ldots, x_{\tilde{x}}\}, Y = \{y_1, \ldots, y_{\tilde{y}}\}, Z = \{z_1, \ldots, z_{\tilde{z}}\}, T = \{t_1, \ldots, t_{\tilde{t}}\}. 
	\end{aligned}
\end{equation*}

We outline our reduction for BPB and later show how it can be modified to yield the hardness result for BPS. Given an $n$-restricted B3DM instance, we construct a sequence of BPB instances. Each BPB instance contains an item for each element $u \in U$, and an item for each triple $t \in T$. There is an edge $(u,t)$ if $u \in U$ and $t \in T$, and $u$ does not appear in $t$, i.e., we forbid packing an element $u$ in the same bin with a triple not containing $u$, for any $u \in U$. 
Since we do not know the exact value of $\OPT(\cj)$,
we define a family of instances with different number of {\em filler items}; these items are packed in the optimum of our constructed BPB instance together with elements not taken to the solution for $\cj$.

Specifically, for a {\em guess} $i \in \{n,n+1, \ldots, |T|\}$ of $\OPT(\cj)$, we define a BPB instance $\mathcal{I}_i = (I_i,s,E)$. The set of items in $\cI_i$ is $I_i = U \cup P_i \cup T \cup Q_i$, where $P_i,Q_i$ are a set of $\tilde{t}-i$ (filler) items and a set of $\tilde{x}+\tilde{y}+\tilde{z}-3 \cdot i$ (filler) items, respectively, such that $P_i \cap U = \emptyset$ and $Q_i \cap U = \emptyset$.
The bipartite (conflict) graph of $\cI_i$ is $G_i = (I_i,E)$, where $E = E_X \cup E_Y \cup E_Z$ is defined as follows.

\begin{equation*}
	\begin{aligned}
		E_X ={} & \{ (x,t)~|~ x \in X, t = (x',y,z) \in T, x \neq x' \}\\
		E_Y ={} & \{ (y,t)~|~ y \in Y, t = (x,y',z) \in T, y \neq y' \}
		\\
		E_Z ={} & \{ (z,t)~|~ z \in Z, t = (x,y,z') \in T, z \neq z' \}
		\\
	\end{aligned}
\end{equation*}

Finally, define the sizes of items in $\cI_i$ to be
$$\forall u \in U, p \in P_i, q \in Q_i, t \in T: ~ s(u) = 0.15, s(p) = 0.45, s(q) = 0.85, s(t) = 0.55.$$
By the above, the only way to pack three items from $x,y,z \in U$ with a triple $t \in T$ is if $(x,y,z) = t$; also, $s\left(\{x,y,z,t\}\right) = 1$. 
For an illustration of the reduction see Figure~\ref{fig:proof}.

	\begin{figure}[htbp]
	\hspace*{+1cm}                                                           
	\includegraphics[scale=0.25]{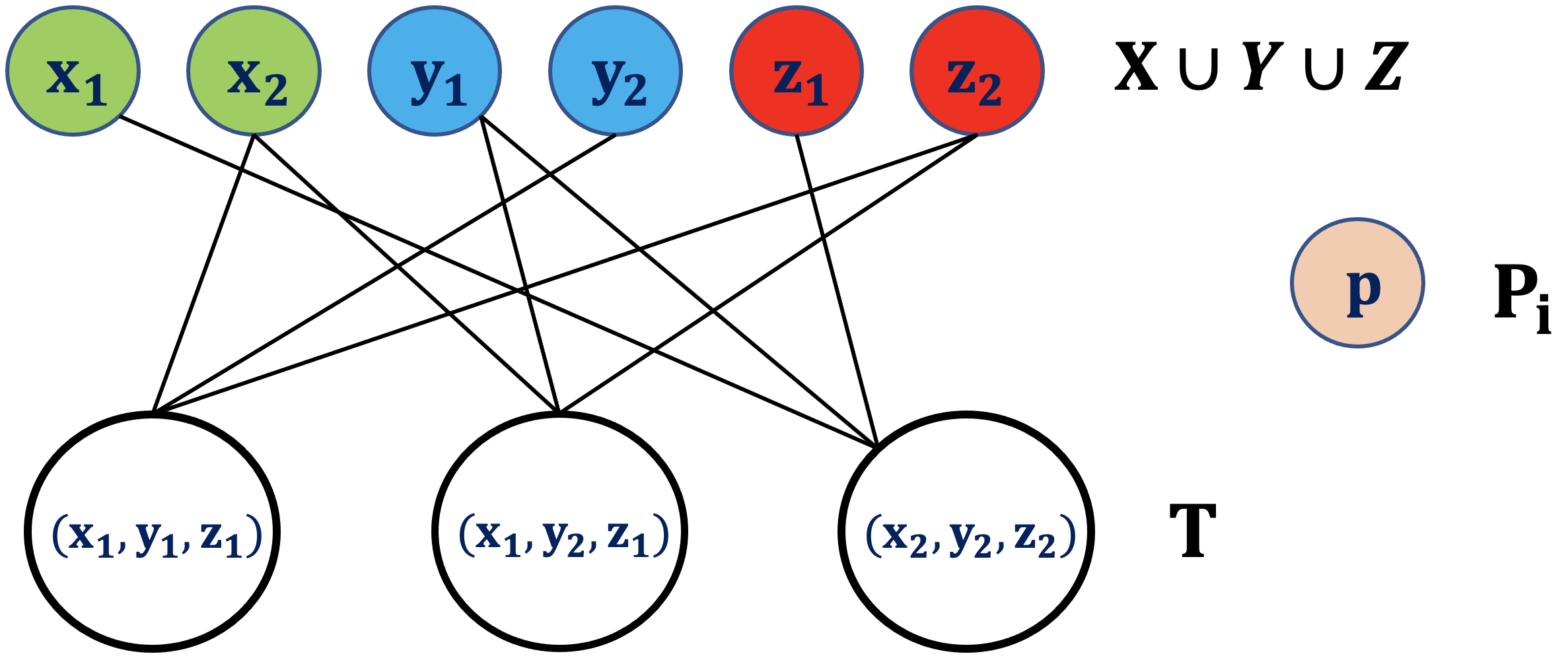}%
	\hspace{1mm}%                                           
	\caption{An illustration of the BPB instance $\ci_i = (I_i,s,E)$, where $i = \OPT(\cj) = 2$. The optimal solution for $\ci_i$ contains the bins $\{x_1,y_1,z_1,(x_1,y_1,z_1)\}, \{x_2,y_2,z_2,(x_2,y_2,z_2)\}$, and $\{p,(x_1,y_2,z_1)\}$; this corresponds to an optimal solution $(x_1,y_1,z_1), (x_2,y_2,z_2)$ for the original B3DM instance. Note that in this example $Q_i = \emptyset$.}
	\label{fig:proof}
\end{figure}

Given a packing $(A_1, \ldots,A_q)$ for the BPB instance $\ci_i$, we consider all {\em useful bins} $A_b$ in the packing, i.e., $A_b = \{x,y,z,t\}$, where $x \in X, y\in Y,z \in Z$ and $t = (x,y,z)$. The triple $t$ from bin $A_b$ is taken to our solution for the original $n$-restricted B3DM instance $\cj$. Note that taking all triples as described above forms a feasible solution for $\cj$, since each element is packed only once. Thus, our goal becomes to find a packing for the reduced BPB instance with a maximum number of useful bins.
Indeed, since $s(A_b) = 1$ for any useful bin $A_b$, finding a packing with many useful bins coincides with an efficient approximation for BPB. 

For the optimal guess $i^* = \OPT(\cj)$, it is not hard to see that the optimum for the BPB instance $\cI_{i^*}$ satisfies $s(I_{i^*}) = \OPT(\ci_{i^*})$; that is, all bins in the optimum are {\em fully} packed. For a sufficiently large $n$, and assuming there is an APTAS for BPB, we can find a packing of $\cI_{i^*}$ with a large number of bins that are fully packed. A majority of these bins are useful, giving an efficient approximation for the original B3DM instance. A similar reduction to BPS is obtained by 
adding to the bipartite conflict graph of the BPB instance an edge between any pair of vertices in $T$; thus, we have a {\em split} conflict graph. 
We summarize the above discussion
 in the next result (the proof is given in Appendix~\ref{sec:proofsBPHardness}). 
\begin{theorem}
	\label{thm:BPH}
	There is no \textnormal{APTAS} for \textnormal{BPB} and \textnormal{BPS}, unless \textnormal{P=NP}. 
\end{theorem}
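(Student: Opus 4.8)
The plan is to prove Theorem~\ref{thm:BPH} by a gap-preserving reduction from $k$-restricted B3DM, using the family of instances $\{\cI_i\}_{i \in \{n, \ldots, |T|\}}$ already described in the excerpt. First I would establish the ``completeness'' direction: for the optimal guess $i^* = \OPT(\cj)$, I claim $\OPT(\cI_{i^*}) = s(I_{i^*}) = \tilde{x}+\tilde{y}+\tilde{z}-3i^* + \ldots$ (the total size, which by the choice of sizes $0.15, 0.45, 0.85, 0.55$ works out to exactly $|T| - i^* + 2i^* + $ etc.; the point is it is an integer equal to $s(I_{i^*})$). Concretely: take an optimal B3DM solution $M$ with $|M| = i^*$; for each triple $(x,y,z) \in M$ form a bin $\{x,y,z,(x,y,z)\}$ of size exactly $1$; the $\tilde{t} - i^*$ triples outside $M$ each pair with one filler item of $P_{i^*}$ (size $0.55 + 0.45 = 1$); the $\tilde{x}+\tilde{y}+\tilde{z} - 3i^*$ elements of $U$ not covered by $M$ each pair with one filler of $Q_{i^*}$ (size $0.15 + 0.85 = 1$). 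Every bin is full, so this packing is optimal and $\OPT(\cI_{i^*}) = s(I_{i^*})$. A matching lower bound $\OPT(\cI_{i^*}) \geq s(I_{i^*})$ holds trivially since bins have capacity $1$. Note $s(I_{i^*}) = \Theta(|T|)$, and since $\cj$ is $n$-restricted with $n$ large, $s(I_{i^*}) \to \infty$.

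Next I would argue the ``soundness'' direction, which is the heart of the reduction: from any packing of $\cI_{i^*}$ using $q$ bins I extract a B3DM solution of cardinality at least (number of useful bins), and I bound the number of useful bins from below in terms of $q$. The key combinatorial observation is that every item has size in $\{0.15, 0.45, 0.55, 0.85\}$ and capacity is $1$, so a bin of total size $1$ must be either a useful bin $\{x,y,z,t\}$ with $t=(x,y,z)$, a ``$P$-bin'' $\{t, p\}$, or a ``$Q$-bin'' $\{u, q\}$ — one checks by case analysis on multisets of these four values that these are the only subsets summing to exactly $1$ that are also independent in $G_i$ (the conflict graph forces $u \in t$ whenever $u,t$ share a bin). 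If a packing has $q$ bins, then the number of bins that are \emph{not} full is at most $q - s(I_{i^*})/1 + (\text{slack})$; more precisely, since total size is $s(I_{i^*})$, the number of non-full bins is at most $q - \lceil s(I_{i^*}) \rceil$ if $q$ counts all bins — actually the clean statement is: \#(full bins) $\geq 2 s(I_{i^*}) - q$ roughly, because each non-full bin has size $\leq 1 - 0.15 = 0.85$ wait — better: each bin has size $\le 1$, so $q \ge s(I_{i^*})$, and if $f$ bins are full then $s(I_{i^*}) \le f + (q-f)\cdot \max_{\text{non-full}} < f + (q-f)$, giving $f > 2 s(I_{i^*}) - q$. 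Then among full bins, the $P$-bins and $Q$-bins are limited: there are only $|P_{i^*}| = \tilde t - i^*$ filler items in $P_{i^*}$ and $|Q_{i^*}|$ in $Q_{i^*}$, so \#(useful bins) $\ge f - |P_{i^*}| - |Q_{i^*}| > 2 s(I_{i^*}) - q - |P_{i^*}| - |Q_{i^*}|$. Plugging in $s(I_{i^*}) = |P_{i^*}|\cdot 1 + |Q_{i^*}| \cdot 1 + i^* \cdot 1$ (each group contributes exactly its cardinality times $1$ in the full packing), this simplifies to \#(useful bins) $> 2 i^* - (q - |P_{i^*}| - |Q_{i^*}|) = i^* - (q - s(I_{i^*}))$. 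So if an APTAS for BPB gives $q \le (1+\eps)\OPT(\cI_{i^*}) + o(\OPT) = (1+\eps) s(I_{i^*}) + o(s(I_{i^*}))$, then \#(useful bins) $\ge i^* - \eps\, s(I_{i^*}) - o(s(I_{i^*}))$. Since $s(I_{i^*}) = O(|T|) = O(c\cdot i^*)$ (each element is in $\le c$ triples, so $\tilde t \le c(\tilde x+\tilde y+\tilde z)/3$ and in the restricted regime these are all $\Theta(i^*)$), this is $(1 - O(c\eps))\,i^* - o(i^*) = (1-O(c\eps))\OPT(\cj) - o(\OPT(\cj))$, contradicting Lemma~\ref{1-rest} for $\eps$ small enough relative to the constant $\alpha$ there (absorbing the $o(\cdot)$ using $n$-restrictedness with $n \to \infty$).

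To make this algorithmic and uniform: given a B3DM instance, run the hypothetical APTAS on each $\cI_i$ for $i \in \{n, \ldots, |T|\}$ (polynomially many guesses), convert each resulting packing to a B3DM solution by collecting useful-bin triples, and output the largest. For $i = i^*$ the above analysis applies, so this yields an $(1-O(c\eps))$-approximation with a vanishing additive term for $n$-restricted B3DM, contradicting Lemma~\ref{1-rest}. For BPS, I would add all edges among $T$-vertices to $G_i$; the vertex set $T$ becomes a clique and $U \cup P_i \cup Q_i$ — wait, $P_i$ and $Q_i$ vertices are isolated or only connected within... one must check the split structure: actually $T$ is a clique and $U$ is independent, and the filler items $P_i, Q_i$ can be placed to keep a clique/independent-set partition (e.g.\ add $P_i$ to the clique side if it stays a clique, or note fillers only need edges forced by the size analysis — here fillers have no conflict edges, so $U \cup P_i \cup Q_i$ is independent and $T$ is the clique, a valid split graph). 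Adding $T$-clique edges only removes the possibility of two triples sharing a bin, which never happens in a full bin anyway (two triples sum to $1.1 > 1$), so $\OPT(\cI_{i^*})$ and the whole analysis are unchanged.

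The main obstacle I anticipate is the bookkeeping in the soundness step — carefully controlling the additive $o(\OPT)$ slack from the APTAS and the lower-order terms, and verifying that $s(I_{i^*}) = \Theta(\OPT(\cj))$ uses the boundedness parameter $c$ in an essential way (without bounded occurrences, $|T|$ could dwarf $\OPT(\cj)$ and the gap would be destroyed). Getting the constant in the final approximation factor to genuinely contradict the fixed $\alpha$ of Lemma~\ref{1-rest}, after all the $\eps$- and $o(\cdot)$-losses, is where the argument must be set up with the right quantifier order: fix $\alpha$ from Lemma~\ref{1-rest}, choose $\eps$ and then $n$ accordingly.
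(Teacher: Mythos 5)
Your proposal follows essentially the same route as the paper: the same family of instances $\cI_i$, the completeness claim $\OPT(\cI_{i^*})=s(I_{i^*})$ with a fully-packed optimum, the soundness argument that counts full/useful bins and charges the $P$- and $Q$-bins against $|P_{i^*}|,|Q_{i^*}|$, the use of the $c$-bounded-occurrence property (Lemma~\ref{1}) to ensure $s(I_{i^*})=O(c^3\cdot\OPT(\cj))$, the iteration over guesses $i$, and the $T$-clique modification for BPS. All the key ideas of the paper's proof are present.

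One arithmetic slip: the inequality ``$f > 2\,s(I_{i^*}) - q$'' does not follow from $s(I_{i^*}) < f + (q-f)$, which only yields $q > s(I_{i^*})$. The coefficient $2$ would require non-full bins to have total size at most $\tfrac12$, which is false here (e.g.\ six independent $U$-items give a non-full bin of size $0.9$). The correct bound, using that any non-full bin has size at most $0.9$ (indeed every bin of size in $(0.9,1]$ is exactly a useful, $P$-, or $Q$-bin), is $s(I_{i^*}) \le f + 0.9(q-f)$, i.e.\ $f \ge 10\,s(I_{i^*}) - 9q$. Propagating this turns your bound $\#(\text{useful bins}) > i^* - (q-s(I_{i^*}))$ into $\#(\text{useful bins}) \ge i^* - 9(q-s(I_{i^*}))$. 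The extra factor $9$ is harmless: it only enlarges the constant in front of $c^3\eps$ in the final estimate $(1-O(c^3\eps))i^* - o(i^*)$, and the quantifier order you set up (fix $\alpha$, then $\eps$, then $n$) absorbs it. So the proof strategy is sound; only the stated intermediate constant needs correcting.
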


\mysection{Discussion}
\label{sec:discussion}

In this work we presented the first theoretical evidence that BPC
on polynomially colorable graphs is harder than classic bin packing, 
even in the special cases of bipartite and split graphs. 
Furthermore, we introduced a new generic framework for tackling 
BPC instances, based on a reduction to a maximization problem. Using this framework, we improve the state-of-the-art approximations for BPC on several well studied graph classes.  

We note that better bounds for the maximization problems solved within our framework will imply improved approximation guarantees for BPC on perfect, bipartite, and split graphs. It would be interesting to apply our techniques to improve the known results 
for other graph classes, such as chordal graphs or partial $k$-trees. 

\bibliographystyle{splncs04}
\newpage
\bibliography{bibfile}
\appendix

\section{Related Work}

\label{sec:related}

The BPC problem
was introduced by Jansen and  {\"{O}}hring~\cite{JO97}. They presented a general algorithm that initially 
finds a coloring of the conflict graph, and then packs each color class separately using the First-Fit Decreasing algorithm. This approach yields a $2.7$-approximation for BPC on perfect graph. The paper
\cite{JO97} includes also a $2.5$-approximation for subclasses of perfect graphs on which the corresponding {\em precoloring extension problem} can be solved in polynomial time (e.g., interval and chordal graphs). 
The authors present also a $(2+\eps)$-approximation algorithm for BPC on cographs and partial $k$-trees.

Epstein and Levin~\cite{epstein2008bin} present better algorithms for BPC on perfect graphs ($2.5$-approximation), graphs on which the precoloring extension problem can be solved in polynomial time ($\frac{7}{3}$-approximation), and bipartite graphs ($\frac{7}{4}$-approximation). Their techniques include matching between {\em large} items and 
a sophisticated use of new item {\em weights}. Recently, Huang et al. \cite{huang2023improved} 
provided fresh insights to previous algorithms, leading to  $\frac{5}{3}$-approximation for BPC on bipartite graphs and a $2$-approximation on  split graphs.

Jansen~\cite{jansen1999approximation} presented an AFPTAS for BPC on {\em d-inductive} conflict graphs,
where $d \geq 1$ is some constant. This graph family includes  trees, grid graphs, planar graphs, and graphs with constant treewidth. For a survey of {\em exact} algorithms for BPC see, e.g., \cite{huang2023improved}. 

\section{Omitted Proofs from Section~\ref{sec:preliminaries}}
\label{sec:proofsPrel}

	For completeness, we describe the well-known algorithm \textsf{FFD} with our notation, and give some properties of the algorithm that will be useful for our analysis. Given a BP instance $\ci = (I,s)$, in the \textsf{FFD} Algorithm the items are considered in a non-increasing order $v_1, \ldots, v_n$ by their sizes. Starting with an empty packing, in each iteration \textsf{FFD} attempts to assign the largest unpacked item $\ell_i$ to an open bin; if none of these bins can accommodate the item, the algorithm opens a new bin to which  the item is assigned. The pseudocode is given in Algorithm~\ref{alg:FFD}.

	 \begin{algorithm}[h]
		\caption{$\textsf{FFD}(\ci = (I,s))$}
		\label{alg:FFD}
			\begin{algorithmic}[1]
			\State{Let $v_1, \ldots,v_n$ be the items in $I$ in non-increasing order by sizes.\label{step:FFDsort}}
		
			\State{Initialize an empty packing $\cB \leftarrow ()$.\label{step:FFDinit}} 
		
		\For{$i \in [n]$}

			\If{$\exists B \in \cB \textnormal{ s.t. } s(B)+s(v_i) \leq 1$\label{step:FFDif}}
				
					\State{$B \leftarrow B \cup \{v_i\}$.\label{step:FFDnotnew}}
				
			\EndIf
				
				\State{\textbf{else} $\cB \leftarrow \cB \oplus \{v_i\}$\label{step:FFDnew}}

				\State{Return $\cB$.\label{step:FFDRet}}
			
		\EndFor
			\end{algorithmic}
	\end{algorithm}

	We give below several claims relating to the approximation guarantee of Algorithm \textsf{FFD}. Given a BP instance $\cI = (I,s)$ and $0<\eps<0.1$, let $T_{\ci}(\eps) = \{v \in I~|~s(v) \leq \eps\}$ be the set of {\em tiny} items and let $B_{\ci}(\eps) = I \setminus T_{\ci}$ be the set of {\em big} items of $\ci$ and $\eps$. When understood from the context, we simply use $T_{\ci} = T_{\ci}(\eps)$ and $B_{\ci} = B_{\ci}(\eps)$.
	
	 \begin{lemma}
		\label{claim:4}
		Let $\ci = (I,s)$ be a \textnormal{BP} instance and let $0<\eps<0.1$ such that $s(T_{\ci})>1$ and $s(B_{\ci}) < 2$. Then, $(1-\eps) \cdot (\#\textnormal{\textsf{FFD}}(\ci)-1) \leq s(I)$.  
	\end{lemma}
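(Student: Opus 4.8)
The statement to prove is Lemma~\ref{claim:4}: for a BP instance $\ci=(I,s)$ and $0<\eps<0.1$ with $s(T_{\ci})>1$ and $s(B_{\ci})<2$, we have $(1-\eps)\cdot(\#\textsf{FFD}(\ci)-1)\le s(I)$. The key structural fact about \textsf{FFD} (really, about First-Fit on any ordering) is that it \emph{never opens a new bin unnecessarily}: when item $v_i$ is placed in a freshly opened bin, every previously opened bin $B$ already fails $s(B)+s(v_i)\le 1$, i.e.\ $s(B)>1-s(v_i)$. The plan is to exploit this together with the decreasing order and the weight split $I=B_{\ci}\cup T_{\ci}$.

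\textbf{Main steps.} First I would let $\cB=(B_1,\ldots,B_t)$ be the packing returned by \textsf{FFD}, with bins indexed in the order they were opened, so $t=\#\textsf{FFD}(\ci)$. I want to show $t-1\le s(I)/(1-\eps)$. The idea is to consider the last bin that is opened by a \emph{big} item versus bins opened by \emph{tiny} items. Because items are processed in non-increasing size order, all big items are processed before all tiny items; hence there is an index $k$ with $0\le k\le t$ such that $B_1,\ldots,B_k$ are exactly the bins opened by big items and $B_{k+1},\ldots,B_t$ are opened by tiny items. For the tiny-opened bins: if $B_j$ with $j\ge k+1$ was opened by a tiny item $v$ (so $s(v)\le\eps$), then at that moment every earlier bin $B$ satisfied $s(B)>1-s(v)\ge 1-\eps$. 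In particular, all of $B_1,\ldots,B_{t-1}$ — except possibly the very last of the big-opened bins if $k=t-1$, but let me just say all of $B_1,\ldots,B_{j-1}$ — have final load $>1-\eps$ (loads only increase). Applying this with $j=t$ (valid provided $t\ge k+1$, i.e.\ the last bin was opened by a tiny item), we get $s(B_1)+\cdots+s(B_{t-1})>(1-\eps)(t-1)$, hence $s(I)\ge\sum_{i<t}s(B_i)>(1-\eps)(t-1)$, which is the claim.

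\textbf{The case analysis / main obstacle.} The remaining issue is the case where the last bin $B_t$ is opened by a \emph{big} item, i.e.\ $k=t$ and all bins are big-opened. Here I would use the hypotheses $s(T_{\ci})>1$ and $s(B_{\ci})<2$. Since every item in $B_1,\ldots,B_t$ is big and $s(B_{\ci})<2$, and big items have size $>\eps$ but — crucially — First-Fit Decreasing cannot have $t\ge 3$ bins all opened by big items while $s(B_{\ci})<2$: if $B_1,B_2,B_3$ are all opened by big items $v_{i_1},v_{i_2},v_{i_3}$ in decreasing order, then at the opening of $B_3$ we have $s(B_1)>1-s(v_{i_3})$ and $s(B_2)>1-s(v_{i_3})$, so $s(B_{\ci})\ge s(B_1)+s(B_2)+s(v_{i_3})> 2-s(v_{i_3})+s(v_{i_3})\cdot\frac{\ ?}{}$ — more simply $s(B_1)+s(B_2)>2(1-s(v_{i_3}))$ and adding $s(v_{i_3})$ gives $>2-2s(v_{i_3})+s(v_{i_3})=2-s(v_{i_3})$; since $s(v_{i_3})\le s(v_{i_1})$ and all items fit in $[0,1]$, a cleaner bound: each of $s(B_1),s(B_2)\ge s(v_{i_3})$ as well, and in fact $s(B_1)+s(v_{i_3})>1$ forces $s(B_1)>1-s(v_{i_3})\ge 1-s(B_1)$... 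I would instead argue directly that at most two bins can be opened by big items when $s(B_{\ci})<2$: order the big-opened bins $B_1,\ldots,B_k$; for $j\ge 2$, when $B_j$ opens, $s(B_1)+s(v_{\text{opener of }B_j})>1$, and since the opener of $B_j$ has size at most that of the opener of $B_{j-1}$ which lies in $B_{j-1}$, we accumulate more than $1$ of weight per bin beyond the first, forcing $k\le 2$. Thus $t\le 2$ in this degenerate case, so $t-1\le 1$; and $s(I)\ge s(T_{\ci})>1\ge t-1\ge (1-\eps)(t-1)$, done. The main obstacle is getting the "at most two big-opened bins" bookkeeping airtight — it hinges on the decreasing order guaranteeing that the size of each new bin's opener is witnessed inside the previous bin — but once that is pinned down, both cases close with the same inequality $s(I)>(1-\eps)(t-1)$.
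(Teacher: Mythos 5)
Your first case (last bin $B_t$ opened by a tiny item) is correct and matches the paper's argument: at the moment a tiny item $v$ with $s(v)\le\eps$ opens $B_t$, every earlier bin already has load exceeding $1-s(v)\ge 1-\eps$, and loads only increase, so $s(I)\ge\sum_{i<t}s(B_i)>(1-\eps)(t-1)$.

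The gap is in the degenerate case. The claim that FFD cannot have three bins opened by big items when $s(B_{\ci})<2$ is false. Take big items of sizes $0.7,\,0.7,\,0.5$ (with $\eps$ small enough that these are all big): $s(B_{\ci})=1.9<2$, yet every pair sums to more than $1$, so FFD opens three bins for them. The bookkeeping you sketch --- ``we accumulate more than $1$ of weight per bin beyond the first'' --- does not go through: the inequalities $s(B_1)+s(\text{opener of }B_j)>1$ all share the term $s(B_1)$ and cannot simply be added, while the alternative chain $s(B_{j-1})+s(\text{opener of }B_j)>1$ double-counts each bin's opener. The correct bound is that at most \emph{three} bins are opened by big items, which the paper proves as a separate claim. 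Consequently you still must dispose of the subcase $t=k=3$, where $(1-\eps)(t-1)=2(1-\eps)$ exceeds $1$, so the hypothesis $s(T_{\ci})>1$ alone no longer suffices. The paper handles this by showing that when $t=r=3$ the big items in the first two bins already have total size at least $1$ (using that the opener of $A_3$, of size at most $1/2$, did not fit in $A_2$), which together with $s(T_{\ci})>1$ yields $s(I)>2\ge 2(1-\eps)$. As a minor side note, in this case you also assert ``every item in $B_1,\ldots,B_t$ is big''; this is not so when $s(T_{\ci})>1$, since the tiny items are still packed into those bins --- they merely do not open new ones --- so be careful not to conflate ``bins opened by big items'' with ``bins containing only big items''.
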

	\begin{proof}
		Let $v_1, \ldots, v_n$ be the order of $I$ obtained by Step~\ref{step:FFDsort}. In addition, let $\textnormal{\textsf{FFD}}(\ci) = (A_1, \ldots,A_t)$ such that $A_1, \ldots, A_t$ is the order in which $\textnormal{\textsf{FFD}}(\ci)$ creates new bins in Step~\ref{step:FFDnew}. By the item ordering, we may assume that all bins in the obtained packing that contain a large item appear as a prefix of the returned packing. Thus, let $A_1, \ldots, A_r$ be all bins in $\textnormal{\textsf{FFD}}(\ci)$ that contain a big item.

		\begin{myclaim}
		\label{claim:r<4}
		$r \leq 3$.
	\end{myclaim}
	\begin{proof}
		Assume towards a contradiction that $r>3$ and consider the last item $v \in B_{\ci}$ packed in $\textnormal{\textsf{FFD}}(\ci)$; let $i \in [n]$ such that $v = v_i$. We consider two cases. \begin{itemize}
			\item  $s(v) > \frac{1}{2}$. \begin{equation}
				\label{eq:claim4Cont1}
				s(B_{\ci}) \geq \sum_{j \in [i]} s(v_j) \geq s((A_1+A_2+A_3+A_4) \cap B_{\ci}) > 4 \cdot \frac{1}{2} = 2. 
			\end{equation}

		The second inequality holds since each of the bins $A_1, \ldots, A_r$ contains a big item and due to the ordering of the items. Thus, we get a contradiction to $s(B_{\ci}) < 2$. 
		
			\item $s(v) \leq \frac{1}{2}$. By Step~\ref{step:FFDif} it holds that $s(A_k \cap B_{\ci})+s(v) > 1,  \forall k \in [r-1]$; thus,  \begin{equation}
				\label{eq:claim4Cont2}
				s(B_{\ci}) \geq (r-1) \cdot (1-s(v))+s(v) \geq 3-2 \cdot s(v) \geq 2.
			\end{equation}  The second inequality holds since $r >3$. The last inequality holds since $s(v) \leq \frac{1}{2}$. By \eqref{eq:claim4Cont2}, we reach a contradiction that $s(B_{\ci}) < 2$. 
		\end{itemize}
	\qed \end{proof}

T complete the proof of the lemma we need the following claims. 
	\begin{myclaim}
	\label{claim:r2}
	If $t > r$, then for all $i \in[t-1]$ it holds that $s(A_i) \geq 1-\eps$. 
\end{myclaim}
\begin{proof}
	Consider the first item $v$ packed in bin $A_t$ in Step~\ref{step:FFDnew}. Because $t > r$ it holds that $v \in T_{\ci}$. Therefore, by Step~\ref{step:FFDif}  for all $i \in [t-1]$ we have $s(A_i) +s(v) > 1$ and the claim follows. 
\qed \end{proof}

	\begin{myclaim}
	\label{claim:r3}
	If $t = r = 3$, then $s(A_1 \cap B_{\ci})+s(A_2 \cap B_{\ci}) \geq 1$. 
\end{myclaim}
\begin{proof}
	Consider the first item $v$ packed in bin $A_2$ in Step~\ref{step:FFDnew} and let $A'_1$ be the set of items packed in $A_1$ in the iteration in which the algorithm packs $v$. Therefore, 
	
	$$s(A_1 \cap B_{\ci})+s(A_2 \cap B_{\ci}) \geq s(A'_1) +s(v) > 1$$ 
	
	The last inequality follows by the algorithm.

\qed \end{proof}

		  By Claim~\ref{claim:r<4} it holds that $r\leq 3$. Thus, we we consider three cases. \begin{enumerate}
			\item $t = r = 2$.  Then, 
			\begin{equation*}
				\label{eq:t=r=2}
				s(I) \geq s(T_{\ci}) > 1 \geq (1-\eps) \cdot (t-1) = (1-\eps) \cdot (\#\textnormal{\textsf{FFD}}(\ci)-1).
			\end{equation*}
		\item $t = r = 3$. Then, 
		
		\begin{equation*}
			\label{eq:t=r=3}
			\begin{aligned}
				s(I) \geq{} & s(A_1 \cap B_{\ci})+s(A_2 \cap B_{\ci})+s(T_{\ci}) \\
				>{} & 1+1 \\
				\geq{} & (1-\eps) \cdot (t-1) \\
				={} & (1-\eps) \cdot (\#\textnormal{\textsf{FFD}}(\ci)-1).		
			\end{aligned}
		\end{equation*} The second inequality holds by Claim~\ref{claim:r3}.

		\item $r < t$. Then,  \begin{equation*}
			\label{eq:t>r}
				s(I) \geq (1-\eps) \cdot (t-1) = (1-\eps) \cdot (\#\textnormal{\textsf{FFD}}(\ci)-1).
		\end{equation*} The inequality holds by Claim~\ref{claim:r2}. 
		
	\end{enumerate}
	
	This completes the proof of the lemma. $\blacksquare$
\end{proof}
	
	The next observation follows since if $s(I) \leq \frac{3}{2}$ for some BP instance $\ci = (I,s)$ then, clearly, the first two bins $A_1, A_2$ in $\textsf{FFD}(\ci)$ satisfy $s(A_1)+s(A_2) > 1$. Moreover, $s(A_1) \geq \frac{1}{2}$ by the ordering of the items and it follows that $A_2$ can pack all remaining items. 
	\begin{observation}
		\label{lem:FFD:one}
			Let $\ci = (I,s)$ be a \textnormal{BP} instance such that $s(I) \leq \frac{3}{2}$. Then, $\#\textnormal{\textsf{FFD}}(\ci) \leq 2$. 
	\end{observation}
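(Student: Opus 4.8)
The plan is a short proof by contradiction. Suppose $\#\textnormal{\textsf{FFD}}(\ci) \geq 3$, and index the bins of the returned packing as $A_1, A_2, A_3, \dots$ in the order in which Algorithm~\ref{alg:FFD} opens them in Step~\ref{step:FFDnew}. Let $v$ be the first item ever placed in $A_3$, i.e., the item whose processing triggers the opening of the third bin. Since $A_1$, $A_2$ and $\{v\}$ are pairwise disjoint subsets of $I$, we have $s(I) \geq s(A_1) + s(A_2) + s(v)$; hence it suffices to prove the strict inequality $s(A_1) + s(A_2) + s(v) > \tfrac{3}{2}$, which will contradict the hypothesis $s(I) \leq \tfrac{3}{2}$.

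To establish that inequality I would first record what the failure of $v$ to fit tells us. When $v$ is considered, the current contents $A_1' \subseteq A_1$ and $A_2' \subseteq A_2$ of the two open bins both fail the test in Step~\ref{step:FFDif}, so $s(A_1') + s(v) > 1$ and $s(A_2') + s(v) > 1$; since bin contents only grow, this yields $s(A_1) + s(v) > 1$ and $s(A_2) + s(v) > 1$. Now split on the size of $v$. If $s(v) \leq \tfrac{1}{2}$, then $s(A_1), s(A_2) > 1 - s(v)$, so $s(A_1) + s(A_2) + s(v) > 2(1 - s(v)) + s(v) = 2 - s(v) \geq \tfrac{3}{2}$. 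If $s(v) > \tfrac{1}{2}$, use that items are processed in non-increasing order of size (Step~\ref{step:FFDsort}): the opening item of $A_1$ (namely $v_1$) and the opening item of $A_2$ are both processed before $v$, hence have size at least $s(v)$, so $s(A_1), s(A_2) \geq s(v) > \tfrac{1}{2}$ and therefore $s(A_1) + s(A_2) + s(v) > 3 \cdot \tfrac{1}{2} = \tfrac{3}{2}$.

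In either case $s(I) \geq s(A_1) + s(A_2) + s(v) > \tfrac{3}{2}$, contradicting $s(I) \leq \tfrac{3}{2}$; hence $\#\textnormal{\textsf{FFD}}(\ci) \leq 2$. The argument has no genuinely hard step; the only points that need a little care are to distinguish the contents of a bin at the moment a new item is tested against it from its final contents (the former is a subset of the latter, so all the strict inequalities propagate), and to notice that in the case $s(v) > \tfrac{1}{2}$ the non-increasing processing order is exactly what forces $A_1$ and $A_2$ to already be more than half full.
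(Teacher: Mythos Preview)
Your proof is correct and follows essentially the same idea as the paper's justification: both exploit that when \textsf{FFD} opens a new bin, the existing bins together with the triggering item already carry enough total size, and both use the non-increasing processing order to handle the case where the triggering item is itself large. The paper's version is a terse direct remark (arguing $s(A_1)\geq\tfrac12$ so everything else fits in $A_2$), whereas your contradiction argument focusing on the third bin with an explicit case split on $s(v)$ is cleaner and more carefully written.
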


	\noindent {\bf Proof of Lemma~\ref{lem:FF}:}
 With a slight abuse of notation, given a packing $\cB = (B_1, \ldots,B_t)$, we say that $B \in \cB$ if there is $i \in [t]$ such that $B = B_i$; otherwise, we say that $B \notin \cB$.
	Let $I = \{v_1, \ldots, v_n\}$ be the order used by \textnormal{\textsf{FFD}}. Also, let $\cB = (B_1, \ldots,B_t)$ be the packing returned by $\textnormal{\textsf{FFD}}(\ci)$. 
		\begin{myclaim}
		\label{claim:ijt}
		There is at most one $i \in [t]$ such that $s(A_i)< 1-\max_{\ell \in I} s(\ell)$.
	\end{myclaim}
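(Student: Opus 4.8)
The plan is to argue by contradiction, exploiting the two basic features of \textsf{FFD}: a fresh bin is opened for an item only when that item fits into no currently open bin (the test in Step~\ref{step:FFDif}), and the load of a bin never decreases during the execution. Set $\sigma = \max_{\ell \in I} s(\ell)$ and index the bins $B_1, \ldots, B_t$ in the order in which \textsf{FFD} creates them in Step~\ref{step:FFDnew}, which we may assume is the indexing of $\cB$.

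First I would suppose, towards a contradiction, that there are two distinct indices $i < j$ with $s(B_i) < 1 - \sigma$ and $s(B_j) < 1 - \sigma$. Let $v$ be the first item placed into $B_j$; this placement occurs exactly in the iteration in which \textsf{FFD} opens the $j$-th bin, so by the condition of Step~\ref{step:FFDif} the item $v$ did not fit into any bin open at that moment. Since $i < j$, the bin $B_i$ already exists at that iteration; let $B_i'$ denote its contents then. The failure of the test in Step~\ref{step:FFDif} for $B_i$ gives $s(B_i') + s(v) > 1$.

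The key point is monotonicity of loads: since items are only ever added to bins, $s(B_i') \le s(B_i) < 1 - \sigma$, while trivially $s(v) \le \sigma$. Hence $s(B_i') + s(v) < (1 - \sigma) + \sigma = 1$, contradicting $s(B_i') + s(v) > 1$. Therefore at most one bin can have load strictly below $1 - \sigma = 1 - \max_{\ell \in I} s(\ell)$, which is the claim.

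I do not expect a real obstacle here; the argument is short. The only point requiring care is to apply the \textsf{FFD} non-fit condition to the contents of $B_i$ \emph{at the iteration in which $v$ is placed}, rather than to its final contents, and then to recover the bound on the final load via the observation that bin loads are non-decreasing.
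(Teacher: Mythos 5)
Your proof is correct and follows essentially the same route as the paper's: contradiction via the first item $v$ placed in the later bin $B_j$, noting that $v$ would have fit into $B_i$. Your version is actually more careful than the paper's terse one-liner, since you explicitly invoke the non-decreasing load of $B_i$ and the bound $s(v) \le \max_{\ell \in I} s(\ell)$ to derive the contradiction numerically.
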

	\begin{proof}
		Assume towards a contradiction that there are $i,j \in [t], i < j$ such that $s(A_i)< 1-\max_{\ell \in I} s(\ell)$ and $s(A_j)< 1-\max_{\ell \in I} s(\ell)$. Consider the first item $v$ that is added to $B_j$; by 
		the \textsf{FFD} algorithm, the item should be added to $B_i$ or some other available bin and should not be added in a new bin $B_j$. Contradiction.
	\qed \end{proof}

For the following, we define {\em weights} for the elements as introduced in \cite{epstein2008bin}. For every $v \in L_{\ci}$ define $w(v) = 1$; in addition, for every $v \in M_{\ci}$ define $w(v) = s(v)+\frac{1}{6}$; finally, for all $v \in S_{\ci}$ define 	$w(v) = s(v)+\frac{1}{12}$. The next result follows by a result of \cite{epstein2008bin}.\footnote{The result in \cite{epstein2008bin} is stronger since their weights refine our definition of weights of small items.} 	\begin{myclaim}
	\label{claim:epsh:2}
	$\#\cB \leq w(I)+1$. 
\end{myclaim}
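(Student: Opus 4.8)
The plan is to prove Claim~\ref{claim:epsh:2}, namely that the packing $\cB$ returned by \textsf{FFD} uses at most $w(I)+1$ bins, where the weights are $w(v)=1$ for large items, $w(v)=s(v)+\tfrac16$ for medium items, and $w(v)=s(v)+\tfrac1{12}$ for small items. The natural strategy is a bin-by-bin charging argument: I will show that every bin in $\cB$, with at most one exception, has $w$-weight at least $1$; summing over the (at least) $\#\cB-1$ heavy bins gives $w(I)\ge \#\cB-1$, which rearranges to the claim. So the core task reduces to: for each bin $B_i\in\cB$ that is not the ``last'' bin (in the order bins are opened), show $w(B_i)\ge 1$.

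To do this I would classify the bins in $\cB$ by the types of items they contain, as is standard for FFD-style weighting arguments (this is exactly the kind of case analysis underlying \cite{epstein2008bin}). First, any bin containing a large item: it already has $w\ge 1$ from that single item, done. Next, a bin with no large item but at least two medium items: then $w(B_i)\ge 2(\tfrac13+\tfrac16)=1$ since each medium item has size $>\tfrac13$, hence weight $>\tfrac12$. A bin with exactly one medium item and the rest small: I would argue that, since $B_i$ is not the last opened bin, the first item later placed in a subsequently-opened bin could not fit into $B_i$, so $s(B_i)$ is large — more precisely, using the FFD ordering (medium items come before small items, small items come before tiny ones), any bin that gets a small item added has all earlier-opened bins nearly full; combined with the $\tfrac1{12}$ bonus per small item and the $\tfrac16$ bonus for the medium item one gets $w(B_i)\ge 1$. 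Finally, a bin with only small items: here there are at least three small items or the bin is very full (since a bin not the last one must have been ``blocked'' — no later small item fit, so $s(B_i)>1-\eps$ essentially, i.e.\ $s(B_i)>\tfrac23$ using that small items have size $\le\tfrac13$), and the per-item $\tfrac1{12}$ bonuses plus $s(B_i)$ push the weight over $1$; the worst case $s(B_i)>\tfrac23$ with at least three small items gives $w(B_i)>\tfrac23+3\cdot\tfrac1{12}=\tfrac{11}{12}$, so I would need to sharpen: a bin with only small items that is not last actually satisfies $s(B_i)>1-\min_{\ell}s(\ell)$ or contains enough items, and a careful accounting (every small-only non-last bin has either $s(B_i)>\tfrac34$ or $\ge 4$ items) closes the gap. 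This is where I'd lean on Claim~\ref{claim:ijt} and the FFD filling rule to get the precise threshold.

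The main obstacle I anticipate is the small-items-only case: getting the constant right so that $s(B_i)+\tfrac1{12}\cdot|B_i\cap S_{\ci}|\ge 1$ holds for every non-last such bin requires a tight handle on how full FFD keeps its bins, and the $\tfrac1{12}$ bonus is exactly calibrated so that it works — but only if one correctly uses that when FFD opens bin $B_{i+1}$ all of $B_1,\dots,B_i$ are too full to take the incoming (small) item. I would handle this by noting the incoming item has size $\le\tfrac13$, so each earlier bin has $s>\tfrac23$; a bin with $s>\tfrac23$ and only small items has at least three items (since each is $\le\tfrac13$), giving $w\ge\tfrac23+3\cdot\tfrac1{12}=\tfrac{11}{12}$ — still short, so one must instead observe that with three small items summing to more than $\tfrac23$, if all three were $\le\tfrac14$ we'd need a fourth, and the footnote's remark that \cite{epstein2008bin} uses a \emph{refined} small-item weight suggests the clean statement really does need the stronger per-item bonus of that paper; so in the write-up I would either invoke \cite{epstein2008bin} directly (as the claim's statement already does, ``follows by a result of \cite{epstein2008bin}'') or replace the uniform $\tfrac1{12}$ with their size-dependent weights and verify $w\ge 1$ bin by bin.

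In summary: define the ``last'' bin $B_t$ as the single possible exception; for every other bin run the four-case analysis (contains a large item / $\ge 2$ medium / one medium + small / small only), in each case verifying $w(B_i)\ge 1$ using the FFD ordering and the blocking property that forces earlier bins to be full; conclude $w(I)=\sum_i w(B_i)\ge\#\cB-1$, i.e.\ $\#\cB\le w(I)+1$. The only delicate point — and the one I'd flag as the crux — is calibrating the small-item weights so the small-only bins clear the bar, which is precisely why the statement is attributed to \cite{epstein2008bin}.
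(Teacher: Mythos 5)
Your diagnosis is exactly right, and so is your fallback: the paper offers no independent proof of this claim. It simply invokes the weight-based bound of \cite{epstein2008bin}, with a footnote acknowledging that the weights used there \emph{refine} the uniform $s(v)+\tfrac{1}{12}$ bonus for small items — i.e.\ the refined per-item weight is never larger than yours, so $\#\cB\le w'(I)+1\le w(I)+1$. Your attempted direct bin-by-bin charging with the uniform $\tfrac{1}{12}$ bonus does indeed stall on small-only non-last bins (your $\tfrac{11}{12}$ computation is correct, and the ``three items each $\le\tfrac14$ force a fourth'' salvage doesn't cover the case of sizes near $\tfrac13$), and your conclusion that one must either use Epstein--Levin's size-dependent small-item weights or cite their result wholesale matches what the paper actually does. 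In short: you correctly identified that the claim is not self-contained at the stated weights and correctly resolved it the same way the authors did.
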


		\begin{myclaim}
		\label{claim:LMS}
		$\#\cB \leq |L_{\ci}|+\frac{3}{2} \cdot s(M_{\ci})+\frac{4}{3} \cdot s(S_{\ci})+1$.
	\end{myclaim}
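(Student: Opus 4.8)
\textbf{Proof proposal for Claim~\ref{claim:LMS}.}

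The plan is to derive the bound $\#\cB \leq |L_{\ci}|+\frac{3}{2}\cdot s(M_{\ci})+\frac{4}{3}\cdot s(S_{\ci})+1$ directly from Claim~\ref{claim:epsh:2}, which already gives $\#\cB \leq w(I)+1$, by simply bounding $w(I)$ in terms of the quantities appearing in the target inequality. So the whole argument reduces to showing that $w(I) \leq |L_{\ci}|+\frac{3}{2}\cdot s(M_{\ci})+\frac{4}{3}\cdot s(S_{\ci})$.

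First I would split $w(I)$ over the three size classes: $w(I) = w(L_{\ci}) + w(M_{\ci}) + w(S_{\ci})$, using that $L_{\ci}, M_{\ci}, S_{\ci}$ partition $I$. For the large items, $w(v) = 1$ for each $v \in L_{\ci}$, so $w(L_{\ci}) = |L_{\ci}|$. For the medium items, $w(v) = s(v) + \frac{1}{6}$, and since $s(v) > \frac{1}{3}$ we have $\frac{1}{6} < \frac{1}{2}\cdot s(v)$, hence $w(v) < \frac{3}{2}\cdot s(v)$; summing over $M_{\ci}$ gives $w(M_{\ci}) \leq \frac{3}{2}\cdot s(M_{\ci})$. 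For the small items, $w(v) = s(v) + \frac{1}{12}$, and since $s(v) \le \frac13$... actually I want $\frac{1}{12} \leq \frac{1}{3}\cdot s(v)$, which needs $s(v) \ge \frac14$; that does not hold for all small items, so instead I should observe that the relevant bound is $w(v) \leq \frac43 s(v)$ only when $s(v)\ge \frac14$. The cleaner route: since Claim~\ref{claim:epsh:2} is cited as following from a \emph{stronger} statement in \cite{epstein2008bin} whose small-item weights refine the ones used here, I would instead note that for the purpose of this claim it suffices to use the coarser bound; in fact the standard argument gives $\#\cB \le |L_\ci| + \frac32 s(M_\ci) + \frac43 s(S_\ci) + 1$ directly from the FFD analysis. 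Concretely, for small items one bounds the number of bins containing only small items: all but one such bin is filled to more than $1 - \frac13 = \frac23$, so the count of these bins is at most $\frac{3}{2}\cdot s(S_{\ci}\text{-portion}) + 1$; a more careful accounting combining with the big-item bins yields the $\frac{4}{3}$ coefficient. I would present the weight-based version if the refined weights of \cite{epstein2008bin} indeed give $w(v) \le \frac43 s(v)$ for small $v$ (which they do, since their refined small-item weight is at most $\frac43 s(v)$), and otherwise fall back to the direct FFD counting.

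The main obstacle is precisely this small-item coefficient: verifying that the weight function (either the one defined here, interpreted via the refinement of \cite{epstein2008bin}, or the direct FFD bin-counting argument) actually yields the factor $\frac{4}{3}$ on $s(S_{\ci})$ rather than something weaker like $\frac{3}{2}$. I expect the resolution is to invoke the refined weights of \cite{epstein2008bin} as the footnote suggests, under which each small item $v$ satisfies $w(v) \le \frac43 s(v)$, so that $w(S_{\ci}) \le \frac43 s(S_{\ci})$; combined with $w(L_{\ci}) = |L_{\ci}|$ and $w(M_{\ci}) \le \frac32 s(M_{\ci})$ and Claim~\ref{claim:epsh:2}, the claim follows immediately. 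The remaining steps (summing the three bounds, adding the $+1$) are routine.
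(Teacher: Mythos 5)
You correctly observe that applying Claim~\ref{claim:epsh:2} directly to all of $I$ fails for small items: for $v$ with $s(v) < \frac{1}{4}$, the inequality $w(v) = s(v) + \frac{1}{12} \leq \frac{4}{3}\,s(v)$ does not hold. However, your proposed fix—invoking the refined weights of \cite{epstein2008bin}—is not substantiated. You assert "which they do" without verifying that $w_{\mathrm{EL}}(v) \leq \frac{4}{3} s(v)$ for all small $v$, and this cannot be taken for granted (FFD weight schemes typically carry an additive excess even for tiny items, precisely to pay for waste in the last bin). The fallback you sketch ("a more careful accounting combining with the big-item bins") is too vague to assess.

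The paper's proof takes a route you did not find: it examines $v_k$, the first item placed in the last bin $B_t$, and splits on whether $s(v_k) \leq \frac{1}{4}$. If $s(v_k) \leq \frac{1}{4}$, then by the FFD invariant \eqref{eq:vvv} every bin other than $B_t$ has total size $> 1 - s(v_k) \geq \frac{3}{4}$; since at most $|L_{\ci}|$ bins carry a large item, the remaining bins number at most $\frac{4}{3}\,s(I \setminus L_{\ci}) + 1 \leq \frac{3}{2}\,s(M_{\ci}) + \frac{4}{3}\,s(S_{\ci}) + 1$, without any use of weights. If $s(v_k) > \frac{1}{4}$, the paper notes that $\#\textnormal{\textsf{FFD}}(\ci) = \#\textnormal{\textsf{FFD}}(\ci_k)$ where $\ci_k$ is the instance restricted to $\{v_1, \ldots, v_k\}$ (all later items go into existing bins), and then applies Claim~\ref{claim:epsh:2} to $\ci_k$. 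Because every item of $I_k$ has size $> \frac{1}{4}$ (non-increasing order), one has $|S_{\ci_k}| < 4\,s(S_{\ci_k})$, so $w(S_{\ci_k}) \leq \frac{4}{3}\,s(S_{\ci_k})$, and the bound follows. This restriction to the prefix $I_k$—which forces all small items under consideration to be larger than $\frac{1}{4}$—is the crucial idea that makes the $\frac{4}{3}$ coefficient go through, and it is absent from your proposal.
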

	\begin{proof}

		Let $ k\in [n]$ such that $v_k \in I$ is the first item packed in bin $B_t$. By Step~\ref{step:FFDif}, for all $i \in [t-1]$ it holds that 
		
		\begin{equation}
			\label{eq:vvv}
			s(A_i)+s(v_k)>1.
		\end{equation} 
	
	We consider two cases. \begin{itemize}
		\item $s(v_k) \leq \frac{1}{4}$ \begin{equation}
			\label{eq:sB2}
			\begin{aligned}
				\#\cB \leq{} &  |L_{\ci}|+\frac{4}{3} \cdot s(I \setminus L_{\ci})+1 \leq |L_{\ci}|+ \frac{3}{2} \cdot s(M_{\ci})+\frac{4}{3} \cdot s(S_{\ci})+1
			\end{aligned}
		\end{equation} The first inequality holds because there can be at most $|L_{\ci}|$ bins with a large item; moreover, there can be at most $\frac{4}{3} \cdot s(I \setminus L_{\ci})+1$ bins without a large item by \eqref{eq:vvv} (each bin is at least $\frac{3}{4}$-full).

		\item $s(v_k) > \frac{1}{4}$. Let $I_k =  \{v_1, \ldots, v_k\}$ and let $\ci_k = (I_k,s)$. By Algorithm~\ref{alg:FFD} it holds that
		
		 \begin{equation}
			\label{eq:vk}
			\#\cB = \#\textnormal{\textsf{FFD}}(\cI_k).
		\end{equation} 
	
	The reason is that all items $v_{k+1}, \ldots, v_n$ are packed in the existing bins $B_1, \ldots, B_t$ by the definition of $v_k$. 
	
	\begin{equation}
			\label{eq:nnnn}
			\begin{aligned}
				\#\cB ={} & \#\textnormal{\textsf{FFD}}(\cI_k) \\
				 \leq{} & w(I_k)+1 \\
				={} & w(L_{\ci_k})+w(M_{\ci_k})+w(S_{\ci_k})+1\\
				={} & |L_{\ci_k} |+s(M_{\ci_k})+\frac{|M_{\ci_k}|}{6}+s(S_{\ci_k})+\frac{|S_{\ci_k}|}{12}+1\\
				\leq{} & |L_{\ci_k}|+s(M_{\ci_k})+\frac{3 \cdot s(M_{\ci_k})}{6}+s(S_{\ci_k})+\frac{4 \cdot s(S_{\ci_k})}{12}+1\\
					\leq{} & |L_{\ci}|+s(M_{\ci})+\frac{3 \cdot s(M_{\ci})}{6}+s(S_{\ci})+\frac{4 \cdot s(S_{\ci})}{12}+1\\
				={} & |L_{\ci}|+\frac{3}{2} \cdot s(M_{\ci})+\frac{4}{3} \cdot s(S_{\ci})+1.
			\end{aligned}
		\end{equation} The first equality follows by \eqref{eq:vk}. The first inequality holds by Claim~\ref{claim:epsh:2}. The second equality holds by the definition of weights. 
	\end{itemize}

 \end{proof} 	\qed

The statement of the Lemma follows from Claim~\ref{claim:ijt} and Claim~\ref{claim:LMS}. $\blacksquare$

~\\
	{\bf Proof of Lemma~\ref{lem:rothvos}:}
	Let \textsf{APTAS} and \textsf{AdditiveBP} be the asymptotic PTAS for BP by \cite{fernandez1981bin} and the additive approximation scheme for BP by \cite{hoberg2017logarithmic}, respectively. Define the following algorithm which returns the better packing resulting from the two algorithms, where we use $\eps = 0.001$ for  \textsf{APTAS}. We give the pseudocode in Algorithm~\ref{alg:asymBP}.

	 \begin{algorithm}[h]
		\caption{$\textsf{AsymptoticBP}(\ci)$}
		\label{alg:asymBP}
			\begin{algorithmic}[1]
		\State{Let $\eps = 0.001$.\label{step:aseps}}
		
			\State{Compute $\cA_1 \leftarrow \textsf{AdditiveBP}(\ci)$, $\cA_2 \leftarrow \textsf{APTAS}(\ci, \eps)$.\label{step:asc}}
		
			\State{Return $\argmin_{\cA \in \{\cA_1, \cA_2\}} \#\cA$.\label{step:asr}}
			\end{algorithmic}
	\end{algorithm} 

Then, by \cite{hoberg2017logarithmic}, Step~\ref{step:asc}, and  Step~\ref{step:asr} of Algorithm~\ref{alg:asymBP}, it holds that algorithm \textnormal{\textsf{AsymptoticBP}} returns a packing $\cB = (B_1, \ldots,B_t)$ of $\ci$ such that $t = \OPT(\ci)+o(\OPT(\ci))$. In addition, if $\OPT(\ci) \geq 100$, then by \cite{fernandez1981bin}, Step~\ref{step:asc}, and  Step~\ref{step:asr} of Algorithm~\ref{alg:asymBP}, it holds that  algorithm \textnormal{\textsf{AsymptoticBP}} returns a packing $\cB = (B_1, \ldots,B_t)$ of $\ci$ such that $$t \leq (1+\eps) \cdot \OPT(\ci)+1 \leq 1.001 \cdot \OPT(\ci)+1 <  1.02 \cdot \OPT(\ci).$$ The first inequality holds by \cite{fernandez1981bin}. The second inequality holds since $\eps = 0.001$. The last inequality holds since $\OPT(\ci) \geq 100$. $\blacksquare$
\section{Omitted Proofs from Section~\ref{sec:BP}}
\label{sec:proofsBP}

We use in the proof of Lemma~\ref{lem:Generic} the next two lemmas.

\begin{lemma}
	\label{lem:Generic1}
	Given a \textnormal{BPC} instance $\ci = (I,s,E)$, Algorithm \textnormal{\textsf{\Generic}} returns  a packing $\cB$ of $\ci$ such that $\#\cB \leq \chi(G_{\ci})+|L_{\ci}|+\frac{3}{2} \cdot s(M_{\ci})+\frac{4}{3} \cdot s(S_{\ci})$. 
\end{lemma}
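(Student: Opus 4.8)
The plan is to bound the number of bins contributed by each color class separately and then sum over the classes. To begin, since $G_{\ci}$ is perfect (the standing assumption of Section~\ref{sec:BP}), Step~\ref{step:GenericCol} computes a minimum coloring $\cC = (C_1,\dots,C_k)$ with $k=\chi(G_{\ci})$ in polynomial time by Lemma~\ref{lem:grot}, and algorithms \textsf{FFD} and \textsf{AsymptoticBP} run in polynomial time by Lemmas~\ref{lem:FF} and~\ref{lem:rothvos}; hence \textsf{\Generic} runs in time polynomial in $|\ci|$. Moreover, each $C_i$ is an independent set of $G_{\ci}$, so every bin produced by \textsf{FFD} or \textsf{AsymptoticBP} on the \textnormal{BP} instance $(C_i,s)$ is an independent set of $G_{\ci}$ of total size at most $1$; thus the concatenation $\cB$ is a legal packing of $\ci$, and $\#\cB$ equals the sum over $i$ of the number of bins in the packing chosen for $C_i$ in Step~\ref{step:GenericB}.

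For the cardinality bound, note that by Step~\ref{step:GenericB} the packing chosen for $C_i$ has at most $\#\cA_1 = \#\textsf{FFD}((C_i,s))$ bins (taking the minimum can only help). The large, medium, and small items of the \textnormal{BP} instance $(C_i,s)$ are exactly $L_{\ci}\cap C_i$, $M_{\ci}\cap C_i$, and $S_{\ci}\cap C_i$, since the thresholds $\frac12$ and $\frac13$ do not depend on the instance. Applying the second inequality of Lemma~\ref{lem:FF} to $(C_i,s)$ therefore gives
$$\#\textsf{FFD}((C_i,s)) \;\le\; |L_{\ci}\cap C_i| + \frac{3}{2}\, s(M_{\ci}\cap C_i) + \frac{4}{3}\, s(S_{\ci}\cap C_i) + 1 .$$

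Finally I would sum this over $i = 1,\dots,k$. Because $(C_1,\dots,C_k)$ is a partition of $I$, we have $\sum_{i} |L_{\ci}\cap C_i| = |L_{\ci}|$, $\sum_{i} s(M_{\ci}\cap C_i) = s(M_{\ci})$, $\sum_{i} s(S_{\ci}\cap C_i) = s(S_{\ci})$, and $\sum_{i} 1 = k = \chi(G_{\ci})$, so
$$\#\cB \;\le\; \sum_{i=1}^{k} \#\textsf{FFD}((C_i,s)) \;\le\; \chi(G_{\ci}) + |L_{\ci}| + \frac{3}{2}\, s(M_{\ci}) + \frac{4}{3}\, s(S_{\ci}),$$
as claimed. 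There is essentially no obstacle here: the whole argument rests on the additivity of item counts and sizes over the coloring together with the per-class \textsf{FFD} guarantee of Lemma~\ref{lem:FF}, and \textsf{AsymptoticBP} is irrelevant for \emph{this} bound (it is only needed for the asymptotic \textnormal{BPB} bound in Lemma~\ref{lem:Generic}); the only point to watch is that the additive $+1$ loss per color class is exactly what produces the additive $\chi(G_{\ci})$ term.
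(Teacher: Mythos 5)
Your proof is correct and follows essentially the same route as the paper: upper bound the chosen packing for each color class by $\#\textsf{FFD}((C_i,s))$, invoke the second inequality of Lemma~\ref{lem:FF} per class, and sum over the $k=\chi(G_{\ci})$ classes using disjointness. The observation that \textsf{AsymptoticBP} is irrelevant for this particular bound is a nice, correct clarification but does not change the argument.
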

\begin{proof}

		Let $\cC = (C_1, \ldots,C_k)$ be the  minimal coloring of $G_{\cI}$ found in Step~\ref{step:GenericCol} of Algorithm~\ref{alg:generic}. Now, \begin{equation*}
			\begin{aligned}
				\#\cB  \leq{} &  \sum_{i \in [k]}  \#\textsf{\textsf{FFD}}((C_i,s)) \\
				\leq{} &  \sum_{i \in [k]} \left(|L_{\ci} \cap C_i|+\frac{3}{2} \cdot s(M_{\ci} \cap C_i)+\frac{4}{3} \cdot s(S_{\ci} \cap C_i)+1\right) \\
				={} &  \chi(G_{\ci})+|L_{\ci}|+\frac{3}{2} \cdot s(M_{\ci})+\frac{4}{3} \cdot s(S_{\ci}).
			\end{aligned}
		\end{equation*}

		The first inequality follows from the algorithm. 
		The second inequality holds by Lemma~\ref{lem:FF}. The last equality follows because the sets in the coloring $\cC$ are disjoint and complementary; also, $k = \chi(G_{\ci})$ because $\cC$ is a minimal coloring of $G_{\ci}$. $\blacksquare$
	\end{proof}

	\begin{lemma}
	\label{lem:Generic2}
	Given a \textnormal{BPB} instance $\ci = (I,s,E)$, Algorithm \textnormal{\textsf{\Generic}} returns a packing $\cB$ of $\ci$ such that $\#\cB \leq \frac{3}{2} \cdot |L_{\ci}|+\frac{4}{3} \cdot \left(\OPT(\ci)-|L_{\ci}|\right)+o(\OPT(\ci))$. 
\end{lemma}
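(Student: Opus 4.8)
\noindent The plan is to strip the algorithm away and reduce the statement to a purely combinatorial inequality between optimal bin‑packing values, and then to prove that inequality by dissecting an optimal packing of $\ci$.

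First, since $G_{\ci}$ is bipartite, the minimal colouring $\cC=(C_1,\dots,C_k)$ computed in Step~\ref{step:GenericCol} has $k=\chi(G_{\ci})\le 2$. If $k\le 1$ then $G_{\ci}$ is edgeless, $\ci$ is a plain bin‑packing instance, and the \textsf{AsymptoticBP} branch of \textsf{\Generic} (Lemma~\ref{lem:rothvos}) already returns $\#\cB\le \OPT(\ci)+o(\OPT(\ci))$; since $|L_{\ci}|\le\OPT(\ci)$ one has $\OPT(\ci)\le \tfrac32|L_{\ci}|+\tfrac43(\OPT(\ci)-|L_{\ci}|)$, so the bound holds. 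Hence assume $k=2$, and write $\ell_i=|L_{\ci}\cap C_i|$, so $\ell_1+\ell_2=|L_{\ci}|$.

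Second, I would isolate the role of the algorithm. For each colour class $C_i$, \textsf{\Generic} keeps the smaller of $\textsf{FFD}((C_i,s))$ and $\textsf{AsymptoticBP}((C_i,s))$. By Lemma~\ref{lem:rothvos} the latter uses at most $\OPT((C_i,s))+o(\OPT((C_i,s)))$ bins, and since $C_i$ is an independent set in $G_{\ci}$, $\OPT((C_i,s))$ is exactly the optimum of the plain bin‑packing instance on the multiset of sizes of $C_i$; restricting an optimal packing of $\ci$ to $C_i$ shows $\OPT((C_i,s))\le\OPT(\ci)$. Concatenating the two colour classes,
\[
\#\cB\ \le\ \OPT((C_1,s))+\OPT((C_2,s))+o(\OPT(\ci)).
\]
Thus it suffices to prove the inequality
\[
\OPT((C_1,s))+\OPT((C_2,s))\ \le\ \tfrac32\,|L_{\ci}|+\tfrac43\bigl(\OPT(\ci)-|L_{\ci}|\bigr)+O(1).
\]

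Third, and this is the core, I would prove this inequality by fixing an optimal packing $\mathcal{D}=(D_1,\dots,D_m)$ of $\ci$, $m=\OPT(\ci)$. Exactly $|L_{\ci}|$ bins of $\mathcal{D}$ contain a (unique) large item, and on such a bin that large item takes more than $\tfrac12$, so the part of the bin belonging to the colour class \emph{not} containing the large item has size below $\tfrac12$. For each $i$ I would build a feasible packing of $C_i$ out of $\mathcal{D}$: give each of the $\ell_i$ large items of $C_i$ a private bin, and re‑pack the non‑large items of $C_i$ from scratch (permissible because $C_i$ is an independent set), using that (a) medium items go two to a bin, (b) small items always pack at density at least $\tfrac34$ (Lemma~\ref{lem:FF}), and (c) the $C_i$‑remnants sitting inside large bins of the other colour have size below $\tfrac12$ and so can be merged in pairs. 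Accounting for how the counts and masses of each colour class are distributed over the $|L_{\ci}|$ large bins versus the other $m-|L_{\ci}|$ bins yields the inequality: the gap $\tfrac12|L_{\ci}|$ between the $\tfrac32|L_{\ci}|$ budget and the $|L_{\ci}|$ bins actually used for large items is precisely what absorbs the class‑$i$ mass stacked on top of the other colour's large items, while (a) and (b) keep the cost on the remaining $m-|L_{\ci}|$ bins down to a factor $\tfrac43$ rather than the trivial $2$. Combining with the second step proves the lemma.

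I expect the third step to be the genuine obstacle. The delicate point is obtaining the constant $\tfrac43$: simply restricting $\mathcal{D}$ to $C_i$ and pairing up half‑full restricted sub‑bins only yields roughly $\tfrac32 m$, so one really must re‑pack each $C_i$ optimally and split the analysis by size class, so that it is $\OPT((C_i,s))$ — not the coarser \textsf{FFD} weight bound — that is being compared with $\OPT(\ci)$; one also needs the geometry of large items (each blocking over half of its bin) to control how the two colour classes interfere. Once the inequality is in hand, lining up the $O(1)$ and $o(\cdot)$ error terms over the two colour classes is routine.
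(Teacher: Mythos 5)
Your overall strategy — reduce the algorithm's output to the combinatorial inequality
\[
\OPT((C_1,s))+\OPT((C_2,s))\ \le\ \tfrac32\,|L_{\ci}|+\tfrac43\bigl(\OPT(\ci)-|L_{\ci}|\bigr)+O(1)
\]
and then prove it by dissecting an optimal packing of $\ci$ — is exactly what the paper does: the paper builds an explicit witness packing of each colour class from the optimal solution so that \textsf{AsymptoticBP} (Lemma~\ref{lem:rothvos}) is charged against it, and then bounds the number of bins in the witness. So steps one and two of your plan are sound and match the paper's route.

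The genuine gap is in step three. Your plan (a)--(c) is essentially ``large items get private bins, medium items two per bin, small items at density $3/4$, cross-colour remnants in large bins paired up.'' Applying (a) and (b) literally to the non-large bins gives at most $|M^{\mathrm{nl}}|/2 + \tfrac43\,s(S^{\mathrm{nl}})$ extra bins, and using $|M^{\mathrm{nl}}|\le 2(\OPT-|L|)$, $s(M^{\mathrm{nl}})>|M^{\mathrm{nl}}|/3$, $s(M^{\mathrm{nl}})+s(S^{\mathrm{nl}})\le\OPT-|L|$, this can be as large as $\tfrac{13}{9}(\OPT-|L|)$, strictly more than the required $\tfrac43(\OPT-|L|)$. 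A concrete tight instance: each of $m'=\OPT-|L|$ non-large bins carries two $C_1$-medium items of size $\tfrac13+\delta$ together with $C_2$-items of total size $\tfrac13-2\delta$ consisting of one item of size $\tfrac14+\varepsilon$ plus tiny filler. Here $\OPT((C_1,s))=m'$ and $\OPT((C_2,s))=m'/3$ (the tiny filler rides along in the slack of the $\tfrac14$-item bins), so the inequality is tight at $\tfrac43 m'$; but your (b) bound charges $\tfrac43 s(C_2)=\tfrac{4m'}{9}$ for $C_2$, giving $\tfrac{13}{9}m'$. The issue is that both ``medium two per bin'' and ``small at density $3/4$'' waste the slack that optimal (and, indeed, FFD) would fill with smaller items from the same colour, and that waste is exactly what pushes you past $4/3$. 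You yourself flag at the end that one must compare against $\OPT((C_i,s))$ rather than FFD-style density bounds, but your construction is that density bound, so the proposal does not actually reach the claimed inequality.

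What the paper does instead is a finer structural decomposition of the optimal packing. It first splits the optimal bins into $\OPT_X$ and $\OPT_Y$ according to which side of the bipartition carries more mass (this guarantees the lighter side in each bin is at most $1/2$, and hence pairable). Within each half it further classifies bins by the number of medium items on each side and, in the critical case $T_{0,1}(W)$, by an additional greedy extraction (the $\textsf{first}_i$ sets) of a chunk of size in $[\tfrac16,\tfrac13]$ from the heavy side so that what remains pairs two-per-bin while the chunks go three-per-bin. Each of the six non-large bin types, plus $T_L(W)$, is given a bespoke repacking whose cost is $\tfrac43$ (resp.\ $\tfrac32$) times the number of bins of that type, up to an additive constant; summing over types yields the inequality. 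This case analysis is precisely the machinery your (a)--(c) sketch omits, and without something equivalent the combinatorial inequality does not follow.
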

\begin{proof}
	
	Recall that $X_V, Y_V$ denote the bipartition of the given conflict graph. Let $\OPT = (O_1, \ldots,O_t)$ be an optimal packing of $\ci$. We define below a partition of $O_1, \ldots, O_t$ into {\em types} of bins, based on the number of large and medium items from $X_V$ and $Y_V$; the partition also considers the distinct cases where the bins contain total size larger from $X_V$ or from $Y_V$. Let $\OPT_X = \{O_i~|~i \in [t], s(O_i \cap X_V) \geq s(O_i \cap Y_V)\}$ be all bins in which the total size of items packed from $Y_V$ is at most the total size of items packed from $X_V$. 
	Similarly, let  $\OPT_Y = \{O_i~|~i \in [t], s(O_i \cap X_V) < s(O_i \cap Y_V)\}$ be all remaining bins in $\OPT$. We now refine this partition. Let $W_V \in \{X_V,Y_V\}$ and let $\bar{W}_V$ be $Y_V$ if $W_V = X_V$ and $X_V$ if $W_V = Y_V$.  As our definitions are symmetric for $X_V$ and $Y_V$, we define them w.r.t. $W_V$. Let
	
	 \begin{equation}
		\label{eqbinTypesW}
			T_{L}(W) = \{O_i \in \OPT_W~|~O_i \cap L_{\ci} \cap W_V \neq \emptyset\}
	\end{equation} 

be all bins in $\OPT$ that contains a large item from $W_V$. For the following, we use the abbreviations $\alpha = \OPT_W\setminus T_{L}(W)$, $N_{i,M} =|O_i \cap M_{\ci} \cap W_V|$, and $\bar{N}_{i,M} = |O_i \cap M_{\ci} \cap \bar{W}_V|$, where $N_{i,M}$, and $\bar{N}_{i,M}$ are the number of medium elements in bin $O_i$ from $W_V, \bar{W}_V$, respectively. Now define the following partition of the bins of $\OPT_W$ by all possible values of $N_{i,M}$, and $\bar{N}_{i,M}$: 

\begin{equation}
		\label{eq:binTypes}
		\begin{aligned}
			T_{1,0,\leq}(W) ={} & \{O_i \in \alpha~|~N_{i,M}= 1, \bar{N}_{i,M}= 0, s(O_i \cap \bar{W}_V) \leq \frac{1}{3}\} \\
			T_{1,0,>}(W) ={} & \{O_i \in \alpha~|~N_{i,M}= 1,\bar{N}_{i,M} = 0, s(O_i \cap \bar{W}_V) > \frac{1}{3}\} \\
			T_{1,1}(W) ={} & \{O_i \in \alpha~|~N_{i,M} = 1, \bar{N}_{i,M}= 1\} \\
			T_{0,0}(W) ={} & \{O_i \in \alpha~|~N_{i,M}= 0, \bar{N}_{i,M}= 0\} \\
			T_{0,1}(W) ={} & \{O_i \in \alpha~|~N_{i,M} = 0,\bar{N}_{i,M}= 1\} \\
			T_{2,0}(W) ={} & \{O_i \in \alpha~|~N_{i,M}= 2,\bar{N}_{i,M}= 0\} \\
		\end{aligned}
	\end{equation}

	For example, $T_{1,0,\leq}(W) $ is the set of all bins in $\OPT_W \setminus T_L(W) = \alpha$ that contain one medium item from $W_V$, no medium items from $\bar{W}_V$ and the total size from $\bar{W}$ is at most $\frac{1}{3}$. We now define a packing of $\ci$ such that each bin type is packed separately, where items from $X_V$ and $Y_V$ are also packed separately. We now define a packing for the items in each bin type. Fix some $W \in \{X,Y\}$ and let $T_L(W) = (A_1, \ldots,A_{r})$; by adding at most $1$ empty bin to $T_L(W)$, we may assume from now on that $r  \in \mathbb{N}_{\textnormal{even}}$. Define
	
	 \begin{equation}
		\label{eq:p1}
		\cB_L(W) = \left(A_i \cap W_V~|~i \in [r]\right) \oplus \left((A_i \cup A_{i+1}) \cap \bar{W}_V~|~i \in \{1,3, \ldots, r-1\}\right)
	\end{equation}

\begin{myclaim}
		\label{claim:p1}
		$\cB_L(W)$ is a packing of $\textnormal{\textsf{items}}(T_L(W))$ such that $\#\cB_L(W) \leq \frac{3}{2} \cdot |T_L(W)|+1$.
	\end{myclaim}

	\begin{proof}
		For all $i \in [r]$ and $j \in \{1,3,\ldots, r-1\}$ it holds that $A_i \cap W_V \subseteq W_V$ and $(A_j \cup A_{j+1}) \cap \bar{W}_V \subseteq \bar{W}_V$; thus, $A_i \cap W_V$ and $(A_j \cup A_{j+1}) \cap \bar{W}_V$ are independent sets in $G$. Moreover, it holds that $s(A_i \cap W_V) \leq s(A_i) \leq 1$, where the second inequality holds since $\OPT$ is a packing of $\ci$. Finally, it holds that

		\begin{equation*}
			\label{eq:sB2}
			\begin{aligned}
			s((A_j \cup A_{j+1}) \cap \bar{W}_V) \leq{} & s(A_{j})+s(A_{j+1})-s(A_j \cap L_{\ci} \cap W_V)-s(A_{j+1} \cap L_{\ci} \cap W_V) \\
			\leq{} & 1+1-\frac{1}{2}-\frac{1}{2} \\
			={} & 1.
			\end{aligned}
		\end{equation*}

		 The second inequality holds since $\OPT$ is a packing of $\ci$ and since $A_j \cap L_{\ci} \cap W_V \neq \emptyset$ and $A_{j+1} \cap L_{\ci} \cap W_V \neq \emptyset$ by \eqref{eq:binTypes}. It follows that $\cB_L(W)$ is a packing of $\textsf{items}(T_L)$. By \eqref{eq:p1} it holds that $\#\cB_L(W) \leq |T_L(W)|+ \frac{1}{2} \cdot |T_L(W)|+1 =  \frac{3}{2} \cdot |T_L(W)|+1$.
	\qed \end{proof}

	Let $Z = \{\{1,0,\leq\},\{1,0,>\},\{1,1\},\{0,0\},\{0,1\},\{2,0\}\}$. For the simplicity of the notations, in the following we use $A^z_1, \ldots, A^z_{r_z}$ to denote the bins in $T_z(W)$, for $z \in Z$; when understood from the context, we simply use $A_i$ for $A^z_i$ for any $i \in [r_z]$. For simplicity, note that by adding at most $5$ empty bins to $T_z(W)$, we assume from now on that $r_z = 6 \cdot n_z, n_z \in \mathbb{N}$ where 
	
	\begin{equation}
		\label{eq:zzz}
		r_z \leq |T_z(W)|+5.
	\end{equation} 

Now, define

\begin{equation}
	\label{eq:p2}
	\begin{aligned}
		\cB_{\{1,0,\leq\}}(W) ={} & \left(A_i \cap W_V~|~i \in [r_z]\right) \oplus \\
	{} & \left((A_i \cup A_{i+1} \cup A_{i+2}) \cap \bar{W}_V~|~i \in \{1,4, \ldots, r_z-2\}\right)
	\end{aligned}
\end{equation}

\begin{myclaim}
		\label{claim:p2}
		Let $z = \{1,0,\leq\}$. Then, $	\cB_{z}(W)$ is a packing of $\textnormal{\textsf{items}}(T_z(W))$ such that
		
		 $$\#	\cB_{z}(W) \leq \frac{4}{3} \cdot |T_{z}(W)|+7.$$

	\end{myclaim}
	\begin{proof}
		For all $i \in [r_z]$ and $j \in  \{1,4, \ldots, r_z-2\}$ it holds that $A_i \cap W_V \subseteq W_V$ and $(A_j \cup A_{j+1}  \cup A_{j+2}) \cap \bar{W}_V \subseteq \bar{W}_V$; thus, $A_i \cap W_V$ and $(A_j \cup A_{j+1}  \cup A_{j+2}) \cap \bar{W}_V$ are independent sets in $G$. Moreover, it holds that $s(A_i \cap W_V) \leq s(A_i) \leq 1$, where the second inequality holds since $\OPT$ is a packing of $\ci$. Finally, it holds that $s((A_j \cup A_{j+1}  \cup A_{j+2}) \cap \bar{W}_V) \leq \frac{1}{3} \cdot 3 = 1$. The inequality holds by \eqref{eq:binTypes}. It follows that $\cB_L(W)$ is a packing of $\textsf{items}(T_L)$. By \eqref{eq:p2} it holds that $\#\cB_z(W) \leq r_z+ \frac{1}{3} \cdot r_z =  \frac{4}{3} \cdot r_z \leq \frac{4}{3} \cdot |T_{z}(W)|+7$.
	\qed \end{proof}

We define
	\begin{equation}
		\label{eq:p3}
		\begin{aligned}
			\cB_{\{1,0,>\}}(W) ={} & \left((A_i \cup A_{i+1}) \cap M_{\ci}~|~i \in \{1,3, \ldots, r_z-1\}\right) \\ \oplus{} &\left((A_i \cup A_{i+1}) \cap \bar{W}_V~|~i \in \{1,3, \ldots, r_z-1\}\right) \\ \oplus{} & \left((A_i \cup A_{i+1} \cup A_{i+2}) \cap (W_V \setminus M_{\ci}) ~|~i \in \{1,4, \ldots, r_z-2\}\right) 
		\end{aligned}
	\end{equation}\begin{myclaim}
		\label{claim:p3}
		Let $z = \{1,0,>\}$. Then, $\cB_{z}(W)$ is a packing of $\textnormal{\textsf{items}}(T_z(W))$ such that $$\#	\cB_{z}(W) \leq \frac{4}{3} \cdot |T_{z}(W)|+7.$$
	\end{myclaim}
	\begin{proof}
		
		For all $i \in \{1,3, \ldots, r_z-1\}$, $j \in  \{1,3, \ldots, r_z-1\}$, and $k \in \{1,4, \ldots, r_z-2\}$ it holds that $(A_i \cup A_{i+1}) \cap M_{\ci} \subseteq W_V$ by \eqref{eq:binTypes}, $(A_j \cup A_{j+1}) \cap \bar{W}_V \subseteq \bar{W}_V$, and $(A_k \cup A_{k+1}  \cup A_{k+2}) \cap (W_V \setminus M_{\ci}) \subseteq W_V$. Thus, $(A_i \cup A_{i+1}) \cap M_{\ci}$, $(A_j \cup A_{j+1}) \cap \bar{W}_V$, and $(A_k \cup A_{k+1}  \cup A_{k+2}) \cap (W_V \setminus M_{\ci})$ are independent sets in $G$.
		Moreover, it holds that $s((A_i \cup A_{i+1}) \cap M_{\ci}) \leq 2 \cdot\frac{1}{2} \leq 1$ by the definition of medium items; also, $s((A_{j} \cup A_{j+1}) \cap \bar{W}_V) \leq 2 \cdot \frac{1}{2}$ by \eqref{eq:binTypes}. Finally, $s((A_k \cup A_{k+1}  \cup A_{k+2}) \cap (W_V \setminus M_{\ci})) \leq 3 \cdot \frac{1}{3}$, where the inequality follows because for all $t \in \{k,k+1,k+2\}$ it holds that $s(A_t) - s(A_t \cap M_{\ci}) - s(A_t \cap \bar{W}_V) \leq 1- \frac{1}{3}-\frac{1}{3} = \frac{1}{3}$ by \eqref{eq:binTypes}. By \eqref{eq:p3} it holds that $\#\cB_z(W) \leq \frac{r_z}{2}+\frac{r_z}{2}+\frac{r_z}{3} \leq  \frac{4}{3} \cdot r_z \leq \frac{4}{3} \cdot |T_{z}(W)|+7$.
		
	\qed \end{proof}
	Define

	\begin{equation}
		\label{eq:p4}
		\begin{aligned}
			\cB_{\{1,1\}}(W) ={} & \left((A_i \cup A_{i+1}) \cap (M_{\ci} \cap W_V)~|~i \in \{1,3, \ldots, r_z-1\}\right) \\ \oplus{} &\left((A_i \cup A_{i+1}) \cap \bar{W}_V~|~i \in \{1,3, \ldots, r_z-1\}\right) \\ \oplus{} & \left((A_i \cup A_{i+1} \cup A_{i+2}) \cap (W_V \setminus M_{\ci}) ~|~i \in \{1,4, \ldots, r_z-2\}\right) 
		\end{aligned}
	\end{equation}\begin{myclaim}
		\label{claim:p4}
		Let $z = \{1,1\}$. Then, $\cB_{z}(W)$ is a packing of $\textnormal{\textsf{items}}(T_z(W))$ such that $$\#	\cB_{z}(W) \leq \frac{4}{3} \cdot |T_{z}(W)|+7.$$
	\end{myclaim}
	\begin{proof}
		For all $i \in \{1,3, \ldots, r_z-1\}$, $j \in  \{1,3, \ldots, r_z-1\}$, and $k \in \{1,4, \ldots, r_z-2\}$ it holds that $(A_i \cup A_{i+1}) \cap (M_{\ci} \cap W_V) \subseteq W_V$, $(A_j \cup A_{j+1}) \cap \bar{W}_V \subseteq \bar{W}_V$, and $(A_k \cup A_{k+1}  \cup A_{k+2}) \cap (W_V \setminus M_{\ci}) \subseteq W_V$. Thus, $(A_i \cup A_{i+1}) \cap (M_{\ci} \cap W_V)$, $(A_j \cup A_{j+1}) \cap \bar{W}_V$, and $(A_k \cup A_{k+1}  \cup A_{k+2}) \cap (W_V \setminus M_{\ci})$ are independent sets in $G$.
		Moreover, it holds that $s((A_i \cup A_{i+1}) \cap (M_{\ci} \cap W_V)) \leq 2 \cdot\frac{1}{2} \leq 1$ by the definition of medium items; also, $s((A_{j} \cup A_{j+1}) \cap \bar{W}_V) \leq 2 \cdot \frac{1}{2}$ by \eqref{eq:binTypes}. Finally, $s((A_k \cup A_{k+1}  \cup A_{k+2}) \cap (W_V \setminus M_{\ci})) \leq 3 \cdot \frac{1}{3}$, where the inequality follows because for all $t \in \{k,k+1,k+2\}$ it holds that $s(A_t) - s(A_t \cap (M_{\ci} \cap W_V)) - s(A_t \cap \bar{W}_V) \leq 1- \frac{1}{3}-\frac{1}{3} = \frac{1}{3}$ by \eqref{eq:binTypes}. By \eqref{eq:p4} it holds that $\#\cB_z(W) \leq \frac{r_z}{2}+\frac{r_z}{2}+\frac{r_z}{3} \leq  \frac{4}{3} \cdot r_z \leq \frac{4}{3} \cdot |T_{z}(W)|+7$.
		
	\qed \end{proof}

	Define
	\begin{equation}
		\label{eq:p5}
		\begin{aligned}
			\cB_{\{0,0\}}(W) ={} & \textsf{\Generic}\left(\ci \cap \bigcup_{i \in [r_z]} A_i \right). 
		\end{aligned}
	\end{equation}\begin{myclaim}
		\label{claim:p5}
		Let $z = \{0,0\}$. Then, $\cB_{z}(W)$ is a packing of $\textnormal{\textsf{items}}(T_z(W))$ such that $$\#	\cB_{z}(W) \leq \frac{4}{3} \cdot |T_{z}(W)|+9.$$
	\end{myclaim}
	\begin{proof}
		Let $\cj = \ci \cap \bigcup_{i \in [r_z]} A_i$. By Lemma~\ref{lem:Generic1} it holds that $\cB_{z}(W)$ is a packing of $\cj$. Moreover, 
		$$\#\cB_z(W) \leq \chi(G_{\cj})+|L_{\cj}|+\frac{3}{2} \cdot s(M_{\cj})+\frac{4}{3} \cdot s(S_{\cj}) \leq 2+0+0+\frac{4}{3} \cdot r_z \leq \frac{4}{3} \cdot |T_{z}(W)|+9.$$ The first inequality holds by Lemma~\ref{lem:Generic1}. The second inequality holds by \eqref{eq:binTypes} and since $G_{\ci}$ is bipartite. 
		
	\qed \end{proof}
	Let $z = \{0,1\}$ and $T_z(W) = (A_1, \ldots,A_{r_z})$. For $i \in [r_z]$, let $\textsf{first}_i$ be the first subset of items in $W_V \cap A_i$ that has total size at least $\frac{1}{6}$, where the items are taken by some fixed non-increasing order of sizes; if there is no such subset then define $\textsf{first}_i = \emptyset$.

	\begin{myclaim}
		\label{claim:auxp6}
		Let $z = \{0,1\}$ and $i \in [r_z]$. Then, $s(\textnormal{\textsf{first}}_i) \leq \frac{1}{3}$; also, if $s(A_i \cap W_V) < \frac{1}{2}$ then $s(\textnormal{\textsf{first}}_i) \neq \emptyset$. 
	\end{myclaim}
	\begin{proof}
		We first show that $s(\textsf{first}_i) \leq \frac{1}{3}$. If there is an item larger than $\frac{1}{6}$ in $A_i \cap W_V$, it holds that $s(\textsf{first}_i) \leq \frac{1}{3}$ because $\textsf{first}_i$ contains a single item by definition. Otherwise, all items in $\textsf{first}_i$ are of sizes smaller than $\frac{1}{6}$; in this case, let $\ell = \argmin_{\ell' \in \textsf{first}_i} s(\ell)$. It holds that $s(\textsf{first}_i \setminus \{\ell\}) \leq \frac{1}{6}$ by the definition of $\textsf{first}_i$; hence, together with $\ell$, it holds that $s(\textsf{first}_i) \leq \frac{1}{6}+s(\ell) \leq \frac{1}{6} +\frac{1}{6} = \frac{1}{3}$. Now, assume that $s(A_i \cap W_V) > \frac{1}{2}$. Then, there must be a subset of $A_i \cap W_V$ of total size at least $\frac{1}{6}$. 
	\qed \end{proof}

	Now, define 
	
	\begin{equation}
		\label{eq:p6}
		\begin{aligned}
			\cB_{\{0,1\}}(W) ={} & \left((A_i \cup A_{i+1}) \cap (W_V \setminus (\textsf{first}_i \cup \textsf{first}_{i+1}))~|~i \in \{1,3, \ldots, r_z-1\}\right) \\ \oplus{} &\left((A_i \cup A_{i+1}) \cap \bar{W}_V~|~i \in \{1,3, \ldots, r_z-1\}\right) \\ \oplus{} & \left( \textsf{first}_i \cup \textsf{first}_{i+1} \cup \textsf{first}_{i+2}  ~|~i \in \{1,4, \ldots, r_z-2\}\right)
		\end{aligned}
	\end{equation}

\begin{myclaim}
		\label{claim:p6}
		Let $z = \{0,1\}$. Then, $\cB_{z}(W)$ is a packing of $\textnormal{\textsf{items}}(T_z(W))$ such that
		
		 $$\#	\cB_{z}(W) \leq \frac{4}{3} \cdot |T_{z}(W)|+7.$$
	
\end{myclaim}
	\begin{proof}
		
		For all $i \in \{1,3, \ldots, r_z-1\}$, $j \in  \{1,3, \ldots, r_z-1\}$, and $k \in \{1,4, \ldots, r_z-2\}$ it holds that $(A_i \cup A_{i+1}) \cap (W_V \setminus (\textsf{first}_i \cup \textsf{first}_{i+1}) \subseteq W_V$, $(A_j \cup A_{j+1}) \cap \bar{W}_V \subseteq \bar{W}_V$, and $ \textsf{first}_k \cup \textsf{first}_{k+1} \cup \textsf{first}_{k+2} \subseteq W_V$. Thus, $(A_i \cup A_{i+1}) \cap (W_V \setminus (\textsf{first}_i \cup \textsf{first}_{i+1})$, $(A_j \cup A_{j+1}) \cap \bar{W}_V$, and $\textsf{first}_k \cup \textsf{first}_{k+1} \cup \textsf{first}_{k+2}$ are independent sets in $G$.
		Moreover, it holds that
		
		 \begin{equation}
			\label{eq:np6}
			s((A_i \cup A_{i+1}) \cap (W_V \setminus (\textsf{first}_i \cup \textsf{first}_{i+1})) \leq 2 \cdot\frac{1}{2} = 1.
		\end{equation}
	
	 If $s(A_i \cap W_V) \leq \frac{1}{2}$ and $s(A_{i+1} \cap W_V) \leq \frac{1}{2}$ \eqref{eq:np6} follows. Otherwise, assume that $s(A_i \cap W_V) > \frac{1}{2}$; observe that $s(A_i \cap W_V) < \frac{2}{3}$ by \eqref{eq:binTypes}. Then, by Claim~\ref{claim:auxp6} it holds that $s(\textsf{first}_i) \geq \frac{1}{6}$ and it follows that $s((A_i) \cap (W_V \setminus \textsf{first}_i)) \leq \frac{1}{2}$; by symmetric arguments, if $s(A_{i+1} \cap W_V) > \frac{1}{2}$ it holds that $s((A_i) \cap (W_V \setminus \textsf{first}_{i+1})) \leq \frac{1}{2}$. By the above, \eqref{eq:np6} follows. 
		also, $s((A_{j} \cup A_{j+1}) \cap \bar{W}_V) \leq 2 \cdot \frac{1}{2}$ by \eqref{eq:binTypes}. Finally, $s(\textsf{first}_k \cup \textsf{first}_{k+1} \cup \textsf{first}_{k+2}) \leq 3 \cdot \frac{1}{3}$ by Claim~\ref{claim:auxp6}. By \eqref{eq:p6} it holds that $\#\cB_z(W) \leq \frac{r_z}{2}+\frac{r_z}{2}+\frac{r_z}{3} \leq  \frac{4}{3} \cdot r_z \leq \frac{4}{3} \cdot |T_{z}(W)|+7$.
		
	\qed \end{proof}

	Define
	\begin{equation}
		\label{eq:p7}
		\begin{aligned}
			\cB_{\{2,0\}}(W) ={} & \left(A_i \cap W_V~|~i \in [r_z]\right) \oplus \\
			{} & \left((A_i \cup A_{i+1} \cup A_{i+2}) \cap \bar{W}_V~|~i \in \{1,4, \ldots, r_z-2\}\right)
		\end{aligned}
	\end{equation}\begin{myclaim}
		\label{claim:p7}
		Let $z = \{2,0\}$. Then, $\cB_{z}(W)$ is a packing of $\textnormal{\textsf{items}}(T_z(W))$ such that $$\#	\cB_{z}(W) \leq \frac{4}{3} \cdot |T_{z}(W)|+7.$$
	\end{myclaim}
	\begin{proof}
		For all $i \in [r_z]$ and $j \in \{1,4, \ldots, r_z-2\}$ it holds that $A_i \cap W_V \subseteq W_V$, and $(A_j \cup A_{j+1}  \cup A_{j+2}) \cap \bar{W}_V  \subseteq \bar{W}_V$. Thus, $A_i \cap W_V$ and  $(A_j \cup A_{j+1}  \cup A_{j+2}) \cap \bar{W}_V$ are independent sets in $G$.
		Moreover, it holds that $s(A_i \cap W_V) \leq 1$ as $\OPT$ is a packing; also, $s((A_{j} \cup A_{j+1} \cup A_{j+2}) \cap \bar{W}_V) \leq 3 \cdot \frac{1}{3}$ because $s(A_i) - s(A_i \cap W_V) \leq 1-s(A_i \cap M_{\ci}) \leq 1-2 \cdot \frac{1}{3} = \frac{1}{3}$ by \eqref{eq:binTypes}. By \eqref{eq:p7} it holds that $\#\cB_z(W) \leq r_z+\frac{r_z}{3} \leq  \frac{4}{3} \cdot r_z \leq \frac{4}{3} \cdot |T_{z}(W)|+7$.
	\qed \end{proof}
	
	We combine the above sub-packings of $W_V$ into a packing of $W_V$; 
	specifically, 
	
	\begin{equation}
		\label{eq:p8}
		\cB(W) = \bigoplus_{z \in (Z\cup \{L\})} \cB_z(W).  
	\end{equation} 

By \eqref{eq:binTypes} it holds that $\bigcup_{z \in (Z\cup \{L\}} \{T_z(W)\}$ is a partition of the bins of $\OPT_W$. Hence, by Claims~\ref{claim:p1}-\ref{claim:p5} and 
\ref{claim:p6}-\ref{claim:p7}, as well as \eqref{eq:p8},  it holds that $\cB(W)$ is a packing of $\ci \cap W_V$. In addition, 

\begin{equation}
		\label{eq:BW}
		\begin{aligned}
			\#\cB(W) ={} & \sum_{z \in (Z\cup \{L\})} \#\cB_{z}(W) \\
			\leq{} & \frac{3}{2} \cdot |T_L(W)|+1+\sum_{z \in Z} \left(\frac{4}{3} \cdot |T_{z}(W)|+9\right) \\
			\leq{} & \frac{3}{2} \cdot |T_L(W)|+\frac{4}{3} \cdot \left(|\OPT_W|-|T_L(W)|\right)+O(1). 
		\end{aligned}
	\end{equation} 

The first equality holds by \eqref{eq:p8}. The first inequality holds 
by Claims~\ref{claim:p1}-\ref{claim:p5} and
\ref{claim:p6}-\ref{claim:p7}, as well as \eqref{eq:p8}. The last inequality holds by \eqref{eq:p8}. Hence, by \eqref{eq:BW} the packing $\cA_2$ of $W_V$ computed by Algorithm~\ref{alg:generic} contains
at most $\#\cB(W)+o(\OPT(\ci))$ bins.
Therefore, the number of bins in the packing $\cB$ returned by $\textsf{\Generic}(\ci)$ satisfies\footnote{We note that the last inequality in \eqref{eq:foo} holds asymptotically, i.e., for a sufficiently large $\OPT(\cI)$.}

	\begin{equation}
		\label{eq:foo}
		\begin{aligned}
			\#\cB \leq{} & \#\cB(X)+\#\cB(Y)+o(\OPT(\ci)) \\
			\leq{} & o(\OPT(\ci))+\sum_{W \in \{X,Y\}} \frac{3}{2} \cdot |T_L(W)|+\frac{4}{3} \cdot \left(|\OPT_W|-|T_L(W)|\right)+O(1) \\
			\leq{} & \frac{3}{2} \cdot |L_{\ci}|+\frac{4}{3} \cdot (\OPT(\ci)-|L_{\ci}|)+o(\OPT(\ci)). 
		\end{aligned}
	\end{equation} The second inequality holds by \eqref{eq:BW}. The third inequality holds by \eqref{eq:binTypes} and since each bin can contain at most one large item. 
\end{proof}

\noindent{\bf Proof of Lemma~\ref{lem:Generic}:} The proof follows by Lemma~\ref{lem:Generic1} and Lemma~\ref{lem:Generic2}. $\blacksquare$

	\hfill \break

\noindent{\bf Proof of Lemma~\ref{lem:ptas}:}
	Let $\ci =(V,E,w,\beta)$ be a BIS instance and $\eps>0$ be the error parameter; Also, let $A = \textsf{PTAS}(\ci =(V,E,w,\beta),\eps)$. We use several auxiliary claims.\begin{myclaim}
		\label{claim:ptasrunning}
		The running time of algorithm~\ref{alg:ptas} on $\ci,\eps$ is ${\textnormal{poly}(\ci)}^{O(\frac{1}{\eps})}$. 
	\end{myclaim}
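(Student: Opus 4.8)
The plan is to bound the running time of Algorithm~\ref{alg:ptas} by analyzing its two nested components: the enumeration over candidate sets $F$ in Step~\ref{step:forptas}, and the work done inside each iteration. First I would count the number of sets $F$ considered. Each such $F$ is an independent set with $|F| \leq \eps^{-1}$, so the number of candidates is at most $\sum_{j=0}^{\floor{\eps^{-1}}} \binom{|V|}{j} \leq (\floor{\eps^{-1}}+1) \cdot |V|^{\floor{\eps^{-1}}} = |V|^{O(1/\eps)}$; enumerating them (and checking the independence and $w(F) \leq \beta$ conditions, which is polynomial per set) fits in this bound. Then I would argue that each iteration of the for-loop does only polynomial work: constructing the residual graph $G_F = (V_F,E_F)$ is clearly polynomial in $|\ci|$; finding a maximum weight independent set $S$ in $G_F$ is polynomial by Lemma~\ref{lem:grot}, since $G_F$ is an induced subgraph of the perfect graph $(V,E)$ (for BPC) — or, in the bipartite case, an induced subgraph of a bipartite graph, where maximum weight independent set is polynomial via matching/flow; the \textbf{while}-loop in Steps~\ref{step:whileptas}--\ref{step:updateSptas} removes one vertex at a time from $S$, hence runs at most $|S| \leq |V|$ times, each iteration being trivial; and the comparison and update in Steps~\ref{step:ifptas}--\ref{step:Aptas} are $O(|V|)$.

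Combining these, the total running time is (number of candidate sets $F$) $\times$ (polynomial per iteration) $= |V|^{O(1/\eps)} \cdot \textnormal{poly}(|\ci|) = \textnormal{poly}(|\ci|)^{O(1/\eps)}$, which is exactly the claimed bound. I would state this as a short chain of inequalities, being a little careful that the dominant factor is the binomial sum and that all other steps are genuinely polynomial in the encoding length $|\ci|$ (not just in $|V|$), so that the product collapses into the stated form.

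The only mildly delicate point — and the one I'd treat as the ``main obstacle'' — is making sure the maximum weight independent set computation in Step~\ref{step:ISptas} really is polynomial for the graph classes in scope. For BPC on perfect graphs this is immediate from Lemma~\ref{lem:grot} together with the (already noted) fact that an induced subgraph of a perfect graph is perfect; for BPB it follows from the classical polynomial algorithm for maximum weight independent set in bipartite graphs. Since the text preceding Algorithm~\ref{alg:ptas} explicitly restricts attention to perfect and bipartite conflict graphs and even remarks ``this can be done in polynomial time for perfect and bipartite graphs,'' I would simply invoke that, and everything else in the running-time analysis is a routine counting argument. Hence the claim follows.
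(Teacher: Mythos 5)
Your proof is correct and follows essentially the same route as the paper's: bound the number of iterations of the outer \textbf{for} loop by ${\textnormal{poly}(\ci)}^{O(1/\eps)}$, and observe that each iteration (including the maximum weight independent set computation) runs in polynomial time, invoking Lemma~\ref{lem:grot} for the perfect-graph case. Your write-up simply spells out the binomial counting $\sum_{j\le \eps^{-1}} \binom{|V|}{j} \le |V|^{O(1/\eps)}$ and the tractability of Step~\ref{step:ISptas} in more detail than the paper's terse "Observe that ... assuming that finding a maximum independent set can be computed in polynomial time."
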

	
	\begin{proof}
		Observe that the number of iterations of the {\bf for} loop in Step~\ref{step:forptas} is bounded by ${\textnormal{poly}(\ci)}^{O(\frac{1}{\eps})}$. Moreover, assuming that finding a maximum independent set can be computed in polynomial time, Step~\ref{step:ISptas} takes polynomial time. By the above, the proof follows.  
	\qed \end{proof}

	\begin{myclaim}
		\label{claim:ptasfeasible}
		$A$ is a feasible solution for $\ci$
	\end{myclaim}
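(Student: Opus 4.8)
The claim to prove is that $A$, the output of Algorithm~\ref{alg:ptas}, is a feasible solution for the BIS instance $\ci = (V,E,w,\beta)$; that is, $A$ is an independent set in $(V,E)$ with $w(A) \le \beta$.

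\medskip

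\noindent\textbf{Plan.} The proof is a straightforward verification that nothing in the algorithm can break either of the two defining properties of a BIS solution. First I would handle the degenerate case: if the \textbf{for} loop never updates $A$ (e.g. the only candidate independent sets are too heavy, or $V$ has only the empty independent set satisfying the cardinality/weight constraints — note $F = \emptyset$ always qualifies since $|F| = 0 \le \eps^{-1}$ and $w(F) = 0 \le \beta$, so in fact the loop body runs at least once), then $A = \emptyset$, which is trivially an independent set of weight $0 \le \beta$. Otherwise, $A = F \cup S$ for some $F$ and $S$ arising in an iteration that passed the test in Step~\ref{step:ifptas}. I would then argue the two properties for this set.

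\medskip

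\noindent\textbf{Feasibility of the weight constraint.} When $A$ is assigned the value $F \cup S$ in Step~\ref{step:Aptas}, it is immediately after the \textbf{while} loop of Steps~\ref{step:whileptas}--\ref{step:updateSptas} has terminated. The loop condition is exactly $w(F \cup S) > \beta$; since the loop has exited, we have $w(F \cup S) \le \beta$, so the newly assigned value of $A$ satisfies $w(A) \le \beta$. Since $A$ only ever changes via such assignments (and its initial value $\emptyset$ also has weight $0 \le \beta$), $w(A) \le \beta$ always holds.

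\medskip

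\noindent\textbf{Feasibility of the independence constraint.} Here the key observations are: (i) $F$ is an independent set of $(V,E)$ by the range of the \textbf{for} loop in Step~\ref{step:forptas}; (ii) $S$ is, throughout its life, an independent set of the residual graph $G_F = (V_F, E_F)$ — it starts as a maximum weight independent set of $G_F$ in Step~\ref{step:ISptas}, and removing vertices in the \textbf{while} loop keeps it an independent set of $G_F$, hence of $(V,E)$ restricted to $V_F$; (iii) by the definition of $V_F$, every vertex $v \in V_F$ satisfies $(v,u) \notin E$ for all $u \in F$, so there are no edges between $S \subseteq V_F$ and $F$. Combining (i), (ii), (iii): any edge of $(V,E)$ with both endpoints in $F \cup S$ would have to lie within $F$ (impossible by (i)), within $S$ (impossible by (ii)), or across the $F$--$S$ cut (impossible by (iii)). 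Hence $F \cup S$ is an independent set of $(V,E)$. Since $A$ is always either $\emptyset$ or such a set $F \cup S$, $A \in \textsf{IS}(V,E)$.

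\medskip

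\noindent\textbf{Main obstacle.} There is essentially no hard step; the only mild subtlety is observing that the \textbf{while} loop preserves the independence of $S$ and that deletions never reintroduce infeasibility — but this is immediate since deleting vertices from an independent set yields an independent set and only decreases weight. I would present the argument as the short case analysis above, citing the relevant line numbers of Algorithm~\ref{alg:ptas}.
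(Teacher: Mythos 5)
Your proof is correct and follows essentially the same route as the paper's: handle the trivial case $A = \emptyset$, then observe that any non-empty $A$ has the form $F \cup S$ where $F$ is independent by the \textbf{for}-loop condition, $S$ is a subset of a maximum independent set of the residual graph $G_F$ (and $V_F$ excludes neighbors of $F$), and the \textbf{while}-loop exit guarantees $w(F \cup S) \le \beta$. The only difference is expository — you spell out the three-way case split on where an offending edge could lie, whereas the paper compresses this into the single remark that $F \cup S'$ is independent in $G$ and $F \cup S \subseteq F \cup S'$.
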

	
	\begin{proof}
		
		If $A = \emptyset$ the proof follows. Otherwise, by Step~\ref{step:forptas}, Step~\ref{step:whileptas}, Step~\ref{step:updateSptas}, and Step~\ref{step:Aptas} there is an independent set $F \subseteq V$ in $G$, $ |F| \leq \eps^{-1}, w(F) \leq \beta$, and there is a maximum independent set $S'$ of $G_F$ and $w$ and $S \subseteq S'$ such that $A = F \cup S$ and $w(F \cup S) \leq \beta$. By Step~\ref{step:EF} it holds that $F \cup S'$ is an independent set in $G$ and therefore also $F \cup S$, because  $F \cup S \subseteq F \cup S'$.  
		
	\qed \end{proof}
	
	\begin{myclaim}
		\label{claim:ptasprofit}
		Let $\OPT$ be some optimal solution for $\ci$. Then, $w(A) \geq (1-\eps) \cdot w(\OPT)$. 
	\end{myclaim}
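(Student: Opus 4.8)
\textbf{Proof proposal for Claim~\ref{claim:ptasprofit}.}

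The plan is to exhibit one particular iteration of the \textbf{for} loop (Step~\ref{step:forptas}) in which the set $A$ computed at Step~\ref{step:Aptas} already has weight at least $(1-\eps)\cdot w(\OPT)$; since Step~\ref{step:ifptas} only ever replaces $A$ by a set of larger weight, the final returned $A$ inherits this bound. Fix an optimal solution $\OPT$ for $\ci$. First I would split $\OPT$ according to item weights relative to the threshold $\eps\cdot\beta$: let $F^\star = \{v \in \OPT \mid w(v) \ge \eps\cdot\beta\}$ be the "heavy" part of $\OPT$ and let $R^\star = \OPT \setminus F^\star$ be the "light" part. The key counting observation is $|F^\star| \le \eps^{-1}$, because each heavy item contributes at least $\eps\cdot\beta$ to $w(\OPT) \le \beta$; also $F^\star$ is an independent set in $(V,E)$ with $w(F^\star)\le w(\OPT)\le\beta$, so $F^\star$ is one of the sets $F$ enumerated in Step~\ref{step:forptas}. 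Consider the iteration with $F = F^\star$.

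Next I would analyze that iteration. Every item of $R^\star$ has weight $< \eps\cdot\beta$ (light) and is non-adjacent to all of $F^\star$ (since $\OPT = F^\star \cup R^\star$ is independent), so $R^\star \subseteq V_{F^\star}$ and $R^\star$ is an independent set in $G_{F^\star}$. Hence the maximum-weight independent set $S$ of $G_{F^\star}$ found at Step~\ref{step:ISptas} satisfies $w(S) \ge w(R^\star)$. Now two cases. If the \textbf{while} loop at Step~\ref{step:whileptas} never executes, then $A$ is updated to $F^\star \cup S$ with $w(F^\star\cup S) = w(F^\star)+w(S) \ge w(F^\star)+w(R^\star) = w(\OPT)$, and we are done with even the stronger bound $w(A)\ge w(\OPT)$. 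Otherwise the loop removes items from $S$ one at a time, each of weight at most $\eps\cdot\beta$ (all items of $V_{F^\star}$ are light by the definition of $V_F$ in Step~\ref{step:EF}), and it stops the first moment $w(F^\star \cup S) \le \beta$. I would argue that at termination $w(F^\star\cup S) > \beta - \eps\cdot\beta = (1-\eps)\beta$: just before the final removal the weight exceeded $\beta$, and the removed item weighs at most $\eps\cdot\beta$, so after removal the weight is still more than $(1-\eps)\beta \ge (1-\eps)\,w(\OPT)$, using $w(\OPT)\le\beta$. Since after this iteration $w(A) \ge w(F^\star\cup S)$ and the algorithm never decreases $w(A)$, the returned $A$ satisfies $w(A)\ge(1-\eps)\,w(\OPT)$.

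The only mildly delicate point — and the place I'd be most careful — is the bookkeeping in the \textbf{while}-loop case: one must ensure the loop actually terminates with a nonempty, feasible set and that the "one more removal" argument is valid, i.e., that before the last removal the running weight is $>\beta$ (immediate from the loop guard not yet being satisfied) and the removed element's weight is $\le \eps\cdot\beta$ (immediate since it lies in $V_{F^\star}$, whose members all have weight $\le \eps\cdot\beta$ by construction in Step~\ref{step:EF}). Everything else — independence of $F^\star\cup S$, the inclusion $R^\star\subseteq V_{F^\star}$, and monotonicity of $w(A)$ across iterations — is routine. Combining Claims~\ref{claim:ptasrunning}, \ref{claim:ptasfeasible}, and \ref{claim:ptasprofit} then yields that Algorithm~\ref{alg:ptas} is a PTAS for BIS, proving Lemma~\ref{lem:ptas}. $\blacksquare$
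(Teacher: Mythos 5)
Your proof is correct and follows essentially the same route as the paper: fix the iteration where $F$ equals the heavy part of $\OPT$, observe that $\OPT$ minus the heavy items is feasible in $G_F$, and case-split on whether the \textbf{while} loop fires, using that each discarded item weighs at most $\eps\beta$ to bound the loss. The only cosmetic difference is that the paper defines the heavy items with a strict threshold ($w(v)>\eps\beta$) whereas you use $w(v)\ge\eps\beta$; both choices are compatible with the definition of $V_F$ and with the bound $|F|\le\eps^{-1}$, so this is immaterial.
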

	
	\begin{proof}
		
		Let $L_{\OPT} = \{v \in \OPT~|~w(v) > \eps \cdot \beta\}$ be the set of {\em large} items of $\OPT$. Observe that $|L_{\OPT}| \leq \eps^{-1}, w(L_{\OPT}) \leq \beta$, and that $L_{\OPT}$ is an independent set in $(V,E)$; this is because $\OPT$ is a feasible solution of $\ci$.  By the {\bf for} loop in Step~\ref{step:forptas}, there is an iteration where $F = L_{\OPT}$. Also, observe that $\OPT \setminus L_{\OPT}$ is an independent set in $G_F$ by the definition of $G_F$. Therefore, in Step~\ref{step:ISptas} an independent set $S$ of $G_F$ w.r.t. $w$ is found such that $w(S) \geq w(L_{\OPT})$. If $w(F \cup S) \leq \beta$, then by Step~\ref{step:whileptas}, Step~\ref{step:ifptas}, and Step~\ref{step:Aptas}, it holds that $w(A) \geq w(F \cup S) \geq w(\OPT \setminus L_{\OPT})+w(L_{\OPT}) = w(\OPT)$. Otherwise, we have that 
		
		$$w(A) \geq w(F \cup S) \geq \beta - \eps \cdot \beta = (1-\eps) \cdot \beta \geq  (1-\eps) \cdot w(\OPT).$$ 
		The first inequality holds by Step~\ref{step:Aptas}. The second inequality holds by Step~\ref{step:whileptas}. The third inequality holds by the feasibility of $\OPT$.

	\qed \end{proof}
	
	The proof of Lemma~\ref{lem:ptas} follows by Claims~\ref{claim:ptasrunning}%, 
	-\ref{claim:ptasprofit}. $\blacksquare$
	
	\hfill \break

\noindent{\bf Proof of Lemma~\ref{lem:SAP}:} By Definition~\ref{def:maxSAP}, the maximization problem of $\ci$ and $\cB$ is a special case of the {\em separable assignment problem (SAP)}~\cite{fleischer2011tight} defined as follows. Let $U$ be a set of $n$ bins, and $H$ 
a set of $m$ items; there is a value $f_{ij}$ for assigning item $j$ to bin $i$. We are also given a separate packing constraint
for each bin $i$. We assume that  if a set $I_i$ is feasible for a bin $i \in U$, then any subset of $I_i$ is also feasible for bin $i$. The goal is to find a feasible assignment of items to the bins which maximizes the aggregate value. For
each bin, we define the {\em single-bin subproblem} as the optimization problem over feasible sets associated with the
packing constraint for the bin.
 
We now show how the maximization problem of $\ci$ and $\cB$ can be cast as a SAP instance. Let $U = \cB$ be the set of bins and $H = I \setminus S$ the set of items; the value of assigning item $i$ to bin $j$ is $s(i)$.  For each bin $B_j$ for $j \in [t]$, the set of feasible subsets that can be packed in $B_j$ is $\cf_j = \{T \subseteq H~|~s(T) \leq 1-s(B_j), B_j \cup T \in \textsf{IS}(G_\ci)\}$. Note that by the definition of $\cf_j $ it holds that if some $T \in \cf_j$ then for all $T' \subseteq T$ it also holds that $T' \in \cf_j$. The objective is as in the maximization problem of $\ci$ and $\cB$, to find an assignment of items to bins of maximum aggregate value.
		
		By Definition~\ref{def:boundedIS} it holds that for each bin $B_j$ for $j \in [t]$ the single bin problem for $B_j$ is the BIS instance $(V_j,E_j,w_j,\beta_j)$, where the vertices are those which can create an independent set with the set $B_j$: $V_j = \{v \in H~|~\forall u \in B_j:(v,u) \notin E\}$, the edges are the induced edges on the set $V_j$: $E_j = \{(v,u)~|~v,u \in V_j\}$, the weight function is the size function of $\ci$: for all $v \in V_j: w_j(v) = s(v)$, and finally the budget is the remaining capacity of the bin $B_j$: $\beta_j = 1-s(B_j)$. By Definition~\ref{def:boundedIS} and the definition of SAP it holds that $\cj_j$ is the single bin sub-problem for $\ci$ and $B_j$. In addition, by Lemma~\ref{lem:ptas} there is a PTAS for the single bin sub-problem. Hence, by the results of \cite{fleischer2011tight}, there is an $(1-\frac{1}{e}-\eps)$-approximation for the maximization problem of $\ci$ and $\cB$
		$~~~~~~~~~~~~~~~~~~~~~~~~~~~~~~~~~~~~~~~~~~~~~~~~~~~~\blacksquare$

		\hfill \break

\noindent
{\bf Proof of Lemma~\ref{lem:MaxSolve}:} Let $\OPT = (O_1, \ldots, O_t)$ be some optimal packing of $\ci$. Let $K = \{v \in I \setminus L_{\ci}~|~v \in O_i, i \in [t], O_i \cap L_{\ci} \neq \emptyset\}$ be all non-large items that are packed in $\OPT$ in a bin that contain a large item. Then, 
			
			\begin{equation}
				\label{eq:x+y}
				\begin{aligned}
					s(I \setminus \textsf{items}(\cA)) 
					\leq{} & \OPT(\ci)-|L_{\ci}|+(\frac{1}{e} +\eps) \cdot s(K) \\
					\leq{} &  \OPT(\ci)-|L_{\ci}|+\left(\frac{1}{e}+\eps\right)\cdot\frac{|L_{\ci}|}{2}
				\end{aligned}
			\end{equation}
		
		 The first inequality holds since by Step~\ref{step:maxSolveSize}, Definition~\ref{def:maxSAP} and Lemma~\ref{lem:SAP} it holds that $s(\textsf{items}(\cA)) \geq s(L)+(1-\frac{1}{e} -\eps) \cdot s(K)$. The second inequality holds since $s(K) \leq \frac{|L_{\ci}|}{2}$ because each bin in $\OPT$ that contains an item from $L_{\ci}$ as at most half of its total size available for the items in $K$; moreover, there are  $|L_{\ci}|$ bins in which the items from $K$ are packed. Let $\cj = \ci \setminus  \textsf{items}(\cA)$. Therefore, by assigning $x = s(M_{\cj})$ and $y = s( S_{\cj})$, we have the following. 
		 
		  \begin{equation}
				\label{eq:x+y2}
				\begin{aligned}
					\#\cC ={} & \#\cA \oplus \cB \\ 
					={} & |L_{\ci}|+\# \textsf{\Generic}(\ci \setminus \textsf{items}(\cA)) \\
					\leq{} & |L_{\ci}|+\chi(G_{\cj})+|L_{\cj}|+\frac{3}{2} \cdot s(M_{\cj})+\frac{4}{3} \cdot s(S_{\cj}) \\
					\leq{} & |L_{\ci}|+\chi(G_{\ci})+\frac{3}{2} \cdot x+\frac{4}{3} \cdot y.
				\end{aligned}
			\end{equation} 
		
		The first equality follows by Step~\ref{step:maxSolveReturn}. The second equality follows by Lemma~\ref{lem:SAP}, Definition~\ref{def:maxSAP}, and Step~\ref{step:maxSolveGeneric}. The first inequality holds by Lemma~\ref{lem:Generic}. The last inequality holds by the definition  of $\cj, x$ and $y$. The proof follows by \eqref{eq:x+y}, \eqref{eq:x+y2}, and the definition of $x$ and $y$. $\blacksquare$

			\hfill \break

\noindent
{\bf Proof of Lemma~\ref{lem:Matching}:} Let $\OPT$ be some optimal packing of $\ci$; also, let $m$ be the number of bins in $\OPT$ that are packed with two items with $L_{\ci} \cup M_{\ci}$; note that there can be at most two items from  $L_{\ci} \cup M_{\ci}$ in a bin. Hence, all other items from $L_{\ci} \cup M_{\ci}$ are packed without any other item from $L_{\ci} \cup M_{\ci}$ in $\OPT$; it follows that $\OPT(\ci) \geq m+\left(|L_{\ci} \cup M_{\ci}| -2 \cdot m \right) = |L_{\ci} \cup M_{\ci}|-m$. Observe that $\cm$ found in Step~\ref{step:matchingMax} of Algorithm~\ref{alg:Matching} on $\ci$ contains at least $m$ edges by Definition~\ref{def:GraphMatching}. Therefore, the number of items from $L_{\ci} \cup M_{\ci}$ that are not in some edge in the matching $\cm$ is bounded by $|L_{\ci} \cup M_{\ci}| - 2 \cdot m$. Thus, the packing $\cB$ found in Step~\ref{step:matchingB} satisfies $\#\cB \leq |L_{\ci} \cup M_{\ci}| - 2 \cdot m+m \leq \OPT(\ci)$. Now,  let $\cj = \ci \setminus (L_{\ci} \cup M_{\ci}))$. To conclude the proof, 
					
					\begin{equation*}
					\begin{aligned}
						\#\cA ={} &  \#\cB \oplus \#\textsf{\Generic}(\ci \setminus (M_{\ci} \cup L_{\ci})) \\ 
						={} &  \#\cB+\#\textsf{\Generic}(\ci \setminus (M_{\ci} \cup L_{\ci}))\\
						\leq{} & \OPT(\ci)+\#\textsf{\Generic}(\ci \setminus (M_{\ci} \cup L_{\ci})) \\
						\leq{} & \OPT(\ci)+\chi(G_{\cj})+|L_{\cj}|+\frac{3}{2} \cdot s(M_{\cj})+\frac{4}{3} \cdot s(S_{\ci}) \\
						={} & \OPT(\ci)+\chi(G_{\ci})+\frac{4}{3} \cdot s(S_{\ci}).
					\end{aligned}
				\end{equation*}  
			
			he first equality holds by Step~\ref{step:matchingReturn}. The first inequality holds because $\#\cB \leq \OPT(\ci)$. The second inequality holds by Lemma~\ref{lem:Generic}. The last inequality holds by the definition of $\cj$. $\blacksquare$

			\hfill \break

\begin{lemma}
\label{lem:2+4/9}
Given a \textnormal{BPC} instance $\ci = (I,s,E)$, Algorithm \textnormal{\textsf{ApproxBPC}} returns in polynomial time in $|\ci|$ a packing $\cA$ of $\ci$ such that  $\#\cA \leq \left(2+\frac{4}{9}\right) \cdot \OPT(\ci)$. 
\end{lemma}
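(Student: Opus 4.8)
The plan is as follows. Algorithm~\ref{alg:BPC} returns the smallest of the three packings $\cA_1 = \textsf{\Generic}(\ci)$, $\cA_2 = \textsf{MaxSolve}(\ci,\eps)$, and $\cA_3 = \textsf{Matching}(\ci)$, each of which is a valid packing of $\ci$ computable in polynomial time (Lemmas~\ref{lem:Generic}, \ref{lem:MaxSolve} and~\ref{lem:Matching}); hence the output is a polynomial-time packing of $\ci$, and it suffices to prove $\min\{\#\cA_2,\#\cA_3\} \le \bigl(2+\tfrac49\bigr)\OPT(\ci) = \tfrac{22}{9}\OPT(\ci)$. I will only need three trivial facts about any BPC instance $\ci$: (i) every packing of $\ci$ is in particular a coloring of $G_{\ci}$, so $\chi(G_{\ci}) \le \OPT(\ci)$; (ii) $s(I) \le \OPT(\ci)$, since each of the $\OPT(\ci)$ bins of an optimal packing holds total size at most $1$; and (iii) $s(M_{\ci})+s(S_{\ci}) \le s(I)$, since $M_{\ci}$ and $S_{\ci}$ are disjoint.

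The first step is to distill a clean bound from Lemma~\ref{lem:MaxSolve}. That lemma produces $x,y$ with $0\le x\le s(M_{\ci})$, $0\le y\le s(S_{\ci})$, $x+y \le \OPT(\ci)-|L_{\ci}|+\bigl(\tfrac1e+\eps\bigr)\tfrac{|L_{\ci}|}{2}$, and $\#\cA_2 \le \chi(G_{\ci})+|L_{\ci}|+\tfrac32 x+\tfrac43 y$. I would rewrite $\tfrac32 x+\tfrac43 y = \tfrac43(x+y)+\tfrac16 x \le \tfrac43(x+y)+\tfrac16 s(M_{\ci})$, substitute the upper bound on $x+y$, and apply (i). The coefficient of $|L_{\ci}|$ then becomes $1-\tfrac43+\tfrac23\bigl(\tfrac1e+\eps\bigr)$, which is non-positive because $\tfrac1e+\eps = \tfrac1e+0.0001 < \tfrac12$; dropping that term gives
\begin{equation}
\label{eq:A2clean}
\#\cA_2 \;\le\; \tfrac73\,\OPT(\ci)+\tfrac16\,s(M_{\ci}).
\end{equation}

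The second step is a dichotomy on $s(S_{\ci})$. If $s(S_{\ci}) \le \tfrac13\OPT(\ci)$, then Lemma~\ref{lem:Matching} together with (i) gives
\[
\#\cA_3 \;\le\; \OPT(\ci)+\chi(G_{\ci})+\tfrac43 s(S_{\ci}) \;\le\; 2\,\OPT(\ci)+\tfrac43\cdot\tfrac13\OPT(\ci) \;=\; \tfrac{22}{9}\OPT(\ci).
\]
If instead $s(S_{\ci}) > \tfrac13\OPT(\ci)$, then (ii) and (iii) give $s(M_{\ci}) \le \OPT(\ci)-s(S_{\ci}) < \tfrac23\OPT(\ci)$, and~\eqref{eq:A2clean} yields
\[
\#\cA_2 \;<\; \tfrac73\OPT(\ci)+\tfrac16\cdot\tfrac23\OPT(\ci) \;=\; \tfrac73\OPT(\ci)+\tfrac19\OPT(\ci) \;=\; \tfrac{22}{9}\OPT(\ci).
\]
Since the two cases are exhaustive, $\min\{\#\cA_1,\#\cA_2,\#\cA_3\} \le \tfrac{22}{9}\OPT(\ci) = \bigl(2+\tfrac49\bigr)\OPT(\ci)$, which is what Algorithm~\ref{alg:BPC} returns.

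I do not expect a real obstacle here -- it is an accounting argument layered on the three subroutine guarantees. The two points worth care are: (a) the elimination of the $|L_{\ci}|$ term from the \textsf{MaxSolve} bound, which is exactly where the choice $\eps=0.0001$ enters (any $\eps < \tfrac12-\tfrac1e \approx 0.13$ would suffice); and (b) splitting $\tfrac32 x + \tfrac43 y$ as $\tfrac43(x+y)+\tfrac16 x$ rather than bounding it crudely by $\tfrac32(x+y)$, so that the residual term is governed by $s(M_{\ci})$ -- which is automatically small precisely in the regime (large $s(S_{\ci})$) where the \textsf{Matching} guarantee is weakest. Note that $\cA_1$ plays no role in this particular bound; it is needed only for the sharper estimates of Theorem~\ref{thm:BPC}.
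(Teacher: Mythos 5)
Your proof is correct and follows essentially the same route as the paper's: the same split of $\tfrac32 x + \tfrac43 y$ into $\tfrac43(x+y)+\tfrac16 x$, the same observation that the coefficient of $|L_{\ci}|$ becomes non-positive for $\eps < \tfrac12 - \tfrac1e$, and the same dichotomy on $s(S_{\ci})$ (the paper thresholds at $\tfrac13 s(I)$ while you threshold at $\tfrac13\OPT(\ci)$, but since $s(I)\le\OPT(\ci)$ these are interchangeable here). The only difference is presentational -- you first distill the \textsf{MaxSolve} bound into the clean inequality $\#\cA_2 \le \tfrac73\OPT(\ci)+\tfrac16 s(M_{\ci})$ before casing, whereas the paper runs the whole chain of inequalities inside the case -- which is a modest readability improvement but not a different argument.
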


\begin{proof}

				We split the proof into two complementary cases. \begin{enumerate}
					\item $s(S_{\ci}) \leq \frac{s(I)}{3}$. Then, \begin{equation*}
						\label{eq:sB2}
						\begin{aligned}
							\#\textsf{Matching}(\ci) \leq{} & \OPT(\ci)+ \chi(G_{\ci})+\frac{4}{3} \cdot s(S_{\ci}) \\
							\leq{} & 2 \cdot  \OPT(\ci)+\frac{4}{3} \cdot  \frac{s(I)}{3} \\
							={} &  \left(2+\frac{4}{9}\right) \cdot \OPT(\ci).
						\end{aligned}
					\end{equation*} The first inequality holds by by Lemma~\ref{lem:Matching}. The second inequality holds since   $s(S_{\ci}) \leq \frac{s(I)}{3}$. Hence, the proof follows by the above and by Step~\ref{step:BPCA} and Step~\ref{step:BPCReturn} of Algorithm~\ref{alg:BPC}. 
					
					\item $s(S_{\ci}) > \frac{s(I)}{3}$. Then, by Lemma~\ref{lem:MaxSolve}, there are $0 \leq x\leq s(M_{\ci})$ and $0\leq y \leq s(S_{\ci})$ such that the following holds.  \begin{enumerate}
						\item $x+y \leq \OPT(\ci)-|L_{\ci}|+\left(\frac{1}{e}+\eps\right)\cdot\frac{|L_{\ci}|}{2}$.
						\item $\#\textsf{MaxSolve}(\ci,\eps) \leq \chi(G_{\ci})+|L_{\ci}|+\frac{3}{2} \cdot x+\frac{4}{3} \cdot y$.
					\end{enumerate} In this case, \begin{equation*}
						\begin{aligned}
							\#\textsf{MaxSolve}(\ci,\eps)  \leq{} &  \chi(G_{\ci})+|L_{\ci}|+\frac{3}{2} \cdot x+\frac{4}{3} \cdot y\\
							\leq{} & \chi(G_{\ci})+|L_{\ci}|+\frac{1}{6} \cdot x+\frac{4}{3} \cdot (x+y) \\
							\leq{} & \chi(G_{\ci})+|L_{\ci}|+\frac{1}{6} \cdot x+\frac{4}{3} \cdot \left(\OPT(\ci)-|L_{\ci}|+\left(\frac{1}{e}+\eps\right)\cdot\frac{|L_{\ci}|}{2} \right) \\
							\leq{} &  \chi(G_{\ci})+\frac{x}{6} + \frac{4}{3} \cdot \OPT(\ci) \\
							\leq{} &   \chi(G_{\ci})+\frac{s(M_{\ci})}{6} + \frac{4}{3} \cdot \OPT(\ci)\\
							\leq{} &   \chi(G_{\ci})+\frac{s(I)-s(S_{\ci}))}{6} + \frac{4}{3} \cdot \OPT(\ci)\\
							\leq{} &   \chi(G_{\ci})+\frac{\frac{2 \cdot \OPT(\ci)}{3}}{6} + \frac{4}{3} \cdot \OPT(\ci)\\
							={} &  \left(2+\frac{4}{9}\right) \cdot \OPT(\ci)\\
								\leq{} &  2.445 \cdot \OPT(\ci)
						\end{aligned}
					\end{equation*} The first inequality holds by (b). The third inequality holds by (a). The fourth inequality holds for all $0<\eps<0.1$. The second inequality from the end holds since $s(S_{\ci}) > \frac{s(I)}{3}$. Hence, the proof follows by the above and by Step~\ref{step:BPCA} and Step~\ref{step:BPCReturn} of Algorithm~\ref{alg:BPC}. $\blacksquare$
					
				\end{enumerate}
				
					\end{proof}

				\begin{lemma}
					\label{lem:asym}
					Given a \textnormal{BPB} instance $\ci = (I,s,E)$, Algorithm \textnormal{\textsf{ApproxBPC}} returns in polynomial time in $|\ci|$ a packing $\cA$ of $\ci$ such that  $\#\cA \leq 1.391 \cdot \OPT(\ci)+o(\OPT(\ci))$. 
				\end{lemma}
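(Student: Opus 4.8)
The plan is to combine the three subroutines invoked by \textsf{ApproxBPC} exactly as in the proof of Lemma~\ref{lem:2+4/9}, but now exploiting the \emph{asymptotic} bound for bipartite graphs from Lemma~\ref{lem:Generic} (the second part), namely $\#\textsf{\Generic}(\ci) \le \frac{3}{2}|L_\ci| + \frac{4}{3}(\OPT(\ci)-|L_\ci|) + o(\OPT(\ci))$. Write $\ell = |L_\ci|/\OPT(\ci) \in [0,1]$ and analyze the performance of each of $\cA_1 = \textsf{\Generic}(\ci)$, $\cA_2 = \textsf{MaxSolve}(\ci,\eps)$, $\cA_3 = \textsf{Matching}(\ci)$ as a function of $\ell$ (and, where needed, of $s(S_\ci)/\OPT(\ci)$). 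Since Algorithm~\ref{alg:BPC} returns the best of the three, it suffices to show that for every value of $\ell$ at least one of the three beats $1.391\cdot\OPT(\ci)+o(\OPT(\ci))$; the worst case will occur at the $\ell$ where the relevant upper-bound curves cross, and $1.391$ should be (a rounding of) that crossing value.

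First I would record the three bounds in asymptotic form. From Lemma~\ref{lem:Generic}, $\#\cA_1 \le \frac{3}{2}\ell\,\OPT + \frac{4}{3}(1-\ell)\OPT + o(\OPT) = \left(\frac{4}{3} + \frac{\ell}{6}\right)\OPT + o(\OPT)$, which is good when $\ell$ is small. From Lemma~\ref{lem:Matching}, $\#\cA_3 \le \OPT + \chi(G_\ci) + \frac{4}{3}s(S_\ci) \le 2\OPT + \frac{4}{3}s(S_\ci)$ using $\chi(G_\ci)\le 2 \le 2\OPT$ (bipartite), so when $s(S_\ci)$ is small this is close to $2\OPT$, and more usefully, $s(S_\ci) \le s(I) - \frac12|L_\ci| \le \OPT - \frac{\ell}{2}\OPT$ gives an $\ell$-dependent bound. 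From Lemma~\ref{lem:MaxSolve} together with the computation in Lemma~\ref{lem:2+4/9}, $\#\cA_2 \le \chi(G_\ci) + \frac{x}{6} + \frac{4}{3}\OPT \le 2\OPT + \frac{4}{3}\OPT + \frac{1}{6}s(M_\ci)$, and since $s(M_\ci) \le s(I) - s(S_\ci) - \frac12|L_\ci|$, a small-items assumption (as in the second case of Lemma~\ref{lem:2+4/9}) turns this into roughly $(\frac{4}{3} + \text{something}\cdot(1-\ell))\OPT$. The key point is that $\cA_2$ does well precisely when $|L_\ci|$ is large, because then most of $\OPT$ is ``already packed'' by \textsf{MaxSize}, and the term $\frac{4}{3}\OPT$ together with $\chi\le 2$ — wait, $2+\frac43 > 2.445$, so for the asymptotic bound I must instead use that in the bipartite case the $\chi(G_\ci)\le 2$ contributes only $o(\OPT)$ relative to large $\OPT$, i.e. it is absorbed; the genuine leading term for $\cA_2$ is $\frac{4}{3}\OPT + \frac16 s(M_\ci)+o(\OPT)$, which for $\ell$ near $1$ is near $\frac43\OPT$.

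Then I would set up the case split on $\ell$. For $\ell$ below the crossing threshold, bound $\#\cA \le \#\cA_1 \le (\frac43 + \frac{\ell}{6})\OPT + o(\OPT) \le 1.391\,\OPT + o(\OPT)$, which holds as long as $\frac{\ell}{6} \le 1.391 - \frac43 = \frac{1.391\cdot 3 - 4}{3}$, i.e. $\ell \le 0.346$ roughly. For $\ell$ above that, I use $\cA_2$ and $\cA_3$: when $s(S_\ci)$ is small relative to the part of $\OPT$ not occupied by large items, $\cA_3$ is better; when $s(S_\ci)$ is large, $\cA_2$ is better (mirroring the dichotomy of Lemma~\ref{lem:2+4/9}). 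Concretely, with $\ell$ large, $s(S_\ci) \le (1-\frac{\ell}{2})\OPT$ so $\#\cA_3 \le (2 + \frac43(1-\frac{\ell}{2}))\OPT + o(\OPT) = (\frac{10}{3} - \frac{2\ell}{3})\OPT$ — this decreases in $\ell$ and at $\ell = 1$ gives $\frac{8}{3}\OPT$, still too big, so $\cA_3$ alone is not enough and the genuine workhorse for large $\ell$ must be $\cA_2$, whose bound $\frac43\OPT + \frac16 s(M_\ci) + o(\OPT)$ with $s(M_\ci)\le (1 - \frac{\ell}{2})\OPT$ gives $(\frac43 + \frac16(1-\frac{\ell}{2}))\OPT + o(\OPT) = (\frac32 - \frac{\ell}{12})\OPT + o(\OPT)$, which at $\ell\approx 0.35$ is about $1.47\OPT$ — still above $1.391$. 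So the real argument must interleave all three more carefully, using the full statement of Lemma~\ref{lem:MaxSolve} (the joint constraint $x + y \le \OPT - |L_\ci| + (\frac1e+\eps)\frac{|L_\ci|}{2}$ with $\#\cC \le \chi + |L_\ci| + \frac32 x + \frac43 y$) rather than the coarsened version, and optimizing the linear program over $x,y,\ell$ while taking the minimum against $\cA_1$ and $\cA_3$.

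\textbf{The main obstacle.} The hard part will be the case analysis / LP at the ``balance point'': for intermediate $\ell$ I must show that $\min\{\#\cA_1,\#\cA_2,\#\cA_3\}$ never exceeds $1.391\,\OPT+o(\OPT)$, which amounts to verifying that the upper envelope of three piecewise-linear functions of $(\ell, x, y, s(S_\ci))$ has maximum $\le 1.391$. In particular one must use the constraint from Lemma~\ref{lem:MaxSolve}(1) to trade the ``bad'' growth of $\cA_2$ in $x$ against the ``good'' shrinkage of $\cA_1$ and against the $s(S_\ci)$-dependence of $\cA_3$; identifying the tight sub-case and checking the resulting arithmetic (including the $\eps=0.0001$ slack and the $\frac{1}{e}$ factor) is the crux. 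I expect the extremal configuration to be the one where $\#\cA_1 = \#\cA_3$ (or $\#\cA_1 = \#\cA_2$) with all small/medium-size slacks pushed to their constraint boundaries, and $1.391$ is the rounded-up value of the objective there.
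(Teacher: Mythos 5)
Your plan correctly identifies the right lemmas (second part of Lemma~\ref{lem:Generic} for the bipartite asymptotic bound, Lemma~\ref{lem:MaxSolve} for the maximization subroutine) and the right organizing parameter $\ell = |L_\ci|/\OPT(\ci)$. However, the plan goes astray on the treatment of \textsf{MaxSolve}, and this is where you conclude — incorrectly — that a joint LP over $(\ell,x,y,s(S_\ci))$ involving all three subroutines is needed. The paper's actual argument is much simpler and does \emph{not} use $\cA_3 = \textsf{Matching}$ at all: only $\cA_1 = \textsf{\Generic}$ and $\cA_2 = \textsf{MaxSolve}$ appear.

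The missing idea is the coarse but sufficient bound $\frac{3}{2}x + \frac{4}{3}y \leq \frac{3}{2}(x+y)$. You instead imported the decomposition $\frac{3}{2}x + \frac{4}{3}y = \frac{1}{6}x + \frac{4}{3}(x+y)$ from the proof of Lemma~\ref{lem:2+4/9}, which forces you to bound $x \leq s(M_\ci)$ separately, and your bound $s(M_\ci) \leq (1-\ell/2)\OPT$ is too weak to close the gap — hence your pessimistic conclusion. With the simpler inequality, condition~(1) of Lemma~\ref{lem:MaxSolve} plugs in directly:
\begin{equation*}
\#\cA_2 \;\leq\; o(\OPT) + |L_\ci| + \tfrac{3}{2}(x+y) \;\leq\; o(\OPT) + |L_\ci| + \tfrac{3}{2}\left(\OPT - |L_\ci| + \left(\tfrac{1}{e}+\eps\right)\tfrac{|L_\ci|}{2}\right) \;=\; \tfrac{3}{2}\OPT - c\,|L_\ci| + o(\OPT),
\end{equation*}
where $c = \tfrac{1}{2} - \tfrac{3}{4}\left(\tfrac{1}{e}+\eps\right) > 0$. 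This is a bound that \emph{decreases} linearly in $|L_\ci|$, while the bound $\#\cA_1 \leq \tfrac{4}{3}\OPT + \tfrac{1}{6}|L_\ci| + o(\OPT)$ increases linearly in $|L_\ci|$; taking the minimum of two opposing linear functions of a single variable $\ell$ and splitting on the unique crossover threshold finishes the proof with no further case analysis. (You should also sanity-check the arithmetic of the crossover: with $c$ as above I get $c \approx 0.224$ rather than the $0.316$ printed in the paper's proof, which would move the crossover to $\ell \approx 0.427$ and the ratio to roughly $1.405$. Either way the \emph{structure} of the correct argument is the two-bound crossover, not the three-way LP you were steering toward.)
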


		\begin{proof}

				First, by Lemma~\ref{lem:Generic}, Step~\ref{step:BPCA}, and Step~\ref{step:BPCReturn} of Algorithm~\ref{alg:BPC}, it holds that \begin{equation}
					\label{eq:asym1f}
					\#\cA \leq \frac{3}{2} \cdot |L_{\ci}|+\frac{4}{3} \cdot \left(\OPT(\ci)-|L_{\ci}|\right)+o(\OPT(\ci)).
				\end{equation} Second, by Lemma~\ref{lem:MaxSolve}, there are $0 \leq x\leq s(M_{\ci})$ and $0\leq y \leq s(S_{\ci})$ such that the following holds.  \begin{enumerate}
					\item $x+y \leq \OPT(\ci)-|L_{\ci}|+\left(\frac{1}{e}+\eps\right)\cdot\frac{|L_{\ci}|}{2}$.\label{cond:1}
					\item $\#\textsf{MaxSolve}(\ci,\eps) \leq |L_{\ci}|+\frac{3}{2} \cdot x+\frac{4}{3} \cdot y+o(\OPT(\ci))$.\label{cond:2}
				\end{enumerate} Using the above,

					\begin{equation}
						\label{eq:maxsolvef}
						\begin{aligned}
							\#\textnormal{\textsf{ApproxBPC}}(\ci) 
							&	\leq{} 	\#\textsf{MaxSolve}(\ci,\eps)  \\ 
							&	\leq{}   o(\OPT(\ci))+|L_{\ci}|+\frac{3}{2} \cdot x+\frac{4}{3} \cdot y\\
							&	\leq{}   o(\OPT(\ci))+|L_{\ci}|+\frac{3}{2} \cdot (x+y)\\
							&	\leq{}  o(\OPT(\ci))+|L_{\ci}|+\frac{3}{2} \cdot \left(\OPT(\ci)-|L_{\ci}|+\left(\frac{1}{e}+\eps\right) \cdot\frac{|L_{\ci}|}{2} \right) \\ 
							&	\leq{}  o(\OPT(\ci))-0.316\cdot|L_{\ci}|+\frac{3}{2} \cdot \OPT(\ci). 		
						\end{aligned}
					\end{equation} The first inequality holds by by Step~\ref{step:BPCA} and Step~\ref{step:BPCReturn} of Algorithm~\ref{alg:BPC}. The second inequality holds by Condition~\ref{cond:2}. The third inequality holds by Condition~\ref{cond:1}. The fourth inequality holds for all $0<\eps<0.0001$.% The last inequality holds since $s(S_{\ci}) > \frac{s(I)}{3}$. 
					To conclude, we split the proof into two cases. \begin{itemize}
						
						\item $|L_{\ci}| < 0.345296 \cdot \OPT(\ci)+o(\OPT(\ci))$.Thus, it holds that $\#\textnormal{\textsf{ApproxBPC}}(\ci) \leq 1.391 \cdot \OPT(\ci)+o(\OPT(\ci))$ by \eqref{eq:asym1f}. 
						
						\item $|L_{\ci}| \geq 0.345296 \cdot \OPT(\ci)$. Thus, it holds that $\#\textnormal{\textsf{ApproxBPC}}(\ci) \leq 1.391 \cdot \OPT(\ci)$ by \eqref{eq:maxsolvef}. $\blacksquare$
						
					\end{itemize}
						\end{proof}
					\noindent
						{\bf Proof of Theorem~\ref{thm:BPC}:} The proof follows by Lemma~\ref{lem:2+4/9} and Lemma~\ref{lem:asym}. $\blacksquare$

\section{Omitted Proofs from Section~\ref{sec:split}}
\label{sec:proofsSplit}

Towards proving Lemma~\ref{lem:KP2}, we now present an FPTAS for BIS on a split graph. A simple observation is that at most one vertex from $K_{G}$ can be in the solution. Therefore, an FPTAS is obtained by iterating over all vertices in $K_{G}$, as well as the case where the solution contains no vertex from this set.
 Consider the iteration in which the bin contains
 a single vertex (or, no vertex) from $K_{G}$ as in an optimal solution. Then, additional vertices not adjacent to this vertex are added using an FPTAS for the classic {\em knapsack} problem. An input for knapsack is $\ci = (I,p,c,\beta)$, where $I$ is a set of items, and $p,c: I \rightarrow \mathbb{R}_{\geq 0}$ are a profit function and a cost function, respectively; also, $\beta \in \mathbb{R}_{\geq 0}$ is a budget. A solution for $\ci$ is $S \subseteq I$ such that $c(S) \leq \beta$; the goal is to find a solution of $\ci$ of maximum profit. We use the next well known result for knapsack (see, e.g., \cite{vazirani,deng2023approximating}).  
 \begin{lemma}
	\label{lem:KP}
	There is an algorithm \textnormal{\textsf{KP}} that is an \textnormal{FPTAS} for the $0/1$-knapsack problem. 
\end{lemma}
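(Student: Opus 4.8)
The plan is to use the classical profit-scaling dynamic program. Fix a knapsack instance $\ci = (I,p,c,\beta)$ with $|I| = n$. First I would preprocess: discard every item $j$ with $c(j) > \beta$, since no such item can lie in any feasible solution. After this step every singleton is feasible, so the optimum $\OPT = \OPT(\ci)$ satisfies $\OPT \geq P := \max_{j \in I} p(j)$. (If $I$ becomes empty the instance is trivial.)

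Next I would recall the exact pseudo-polynomial dynamic program, indexing over \emph{profit} rather than cost. Assume, by scaling, that profits are nonnegative integers. Order the items $1,\dots,n$ and let $D[i][v]$ denote the minimum total cost of a subset of $\{1,\dots,i\}$ whose total profit equals exactly $v$, for $v \in \{0,1,\dots,\sum_{j} p(j)\}$ (with $D[i][v] = +\infty$ if no such subset exists). The recurrence is
\[
D[i][v] = \min\bigl\{\, D[i-1][v],\ D[i-1][v-p(i)] + c(i) \,\bigr\},
\]
with the obvious base cases, and the returned solution corresponds to the largest $v$ with $D[n][v] \leq \beta$. This is exact and runs in time $O\!\left(n \cdot \sum_j p(j)\right)$, which is only pseudo-polynomial.

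To obtain an FPTAS, given $\eps > 0$ I would set $K = \eps P / n$ and round profits down: $\tilde p(j) = \lfloor p(j)/K \rfloor$. Running the above DP on $(I,\tilde p, c, \beta)$ has total profit at most $n \cdot \tilde p_{\max} \le n \cdot (P/K) = n^2/\eps$, so the running time is $O(n^3/\eps)$, polynomial in $n$ and $1/\eps$. Let $S$ be the returned set and $S^\star$ an optimum of the original instance. Since the cost function is untouched, $S$ is feasible for $\ci$; and since the DP is exact on the scaled instance and $S^\star$ is feasible there, $\tilde p(S) \ge \tilde p(S^\star)$. Hence
\[
p(S) \;\ge\; K\,\tilde p(S) \;\ge\; K\,\tilde p(S^\star) \;\ge\; \sum_{j \in S^\star}\bigl(p(j) - K\bigr) \;=\; \OPT - K|S^\star| \;\ge\; \OPT - Kn \;=\; \OPT - \eps P \;\ge\; (1-\eps)\,\OPT,
\]
using $|S^\star| \le n$ and $\OPT \ge P$. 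Thus \textsf{KP} outputs a feasible $(1-\eps)$-approximation in time $\mathrm{poly}(n,1/\eps)$, i.e.\ it is an FPTAS.

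\textbf{Main obstacle.} There is no deep difficulty here — the result is textbook — but the one point demanding care is the reduction of the additive rounding error $\eps P$ to a multiplicative $(1-\eps)$ guarantee, which hinges entirely on the preprocessing step that guarantees $\OPT \ge P$; without removing items that individually exceed the budget, the bound $p(S) \ge \OPT - \eps P$ would not imply a multiplicative approximation.
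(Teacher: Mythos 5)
Your proof is correct and complete, but note that the paper does not actually prove Lemma~\ref{lem:KP}; it simply cites it as a well-known result (referring to Vazirani's textbook and one other source). What you have written is precisely the standard profit-scaling FPTAS argument that those references contain, including the one subtle point you correctly flag: discarding items whose cost exceeds $\beta$ so that $\OPT \ge \max_j p(j)$, which is what turns the additive error $\eps P$ into a multiplicative $(1-\eps)$ guarantee.
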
 We use $N(v) = \{u \in I~|~ (u,v) \in G_{\ci}\}$ to denote the set of {\em neighbors} of some $v \in I$. 
Assume w.l.o.g. that $I = \{1,\ldots, n\}$ for some $n \in \mathbb{N}$; then it follows that $0 \notin I$. The pseudocode of our FPTAS for BIS on split graphs is given in Algorithm~\ref{alg:FPTAS}.

\begin{algorithm}[h]
	\caption{$\textsf{FPTAS-BIS}(G = (V,E), w, \beta, \eps)$}
	\label{alg:FPTAS}
	
		\begin{algorithmic}[1]
	
	\For{$v \in K_{G}$}
		
		\State{$\cA_v \leftarrow \textnormal{\textsf{KP}}((S_G \setminus N(v),w,w,\beta-w(v)),\eps)$.\label{step:kp:v}}
		
	\EndFor
	
	\State{$\cA_0 \leftarrow \textnormal{\textsf{KP}}((S_{G},w,w,\beta),\eps)$. \label{step:kp:0}}
	
	\State{Return $\argmax_{\cA \in \{\cA_v | v \in K_{G} \cup \{0\}} s(\cA)$.\label{step:FPTAS:Return}}
		\end{algorithmic}
\end{algorithm}

\noindent {\bf Proof of Lemma~\ref{lem:KP2}:} Consider a BIS instance $\ci$ composed of a split graph $G = (V,E)$, a weight function $w:V \rightarrow \mathbb{R}_{\geq 0}$, a budget $\beta \in \mathbb{R}_{\geq 0}$, and an error parameter $\eps>0$. Also, let $\OPT$ be an optimal solution for $\ci$. By Definition~\ref{def:boundedIS}, $\OPT$ is an independent set in $G$; thus, $|\OPT \cap K_{G}| \leq 1$. If $|\OPT \cap K_{G}| = 0$ then $\cA_0$ is a $(1-\eps)$-approximation for $\ci$ by Lemma~\ref{lem:KP}, since $\OPT \subseteq S_{G}$ (Algorithm \textsf{KP} in particular must return a $(1-\eps)$-approximation for $\OPT$). Otherwise, $|\OPT \cap K_{G}| = 1$; then, there is a single $v \in K_G$ such that $\OPT \cap K_G = \{v\}$. Therefore, $\OPT \setminus \{v\} \subseteq S_G \setminus N(v)$ (as $\OPT$ is an independent set in $G$). Then, $\cA_v$ is a $(1-\eps)$-approximation for $\ci$ by Lemma~\ref{lem:KP} (Algorithm \textsf{KP} in particular must return a $(1-\eps)$-approximation for $\OPT \setminus \{v\}$). Overall, Algorithm \textsf{FPTAS-BIS} is a $(1-\eps)$-approximation for $\ci$. It follows that Algorithm \textsf{FPTAS-BIS} is an FPTAS (for split graphs) since the running time is $|K_G| \cdot \textnormal{poly}\left(|\ci|, \frac{1}{\eps}\right) =  \textnormal{poly}\left(|\ci|, \frac{1}{\eps}\right)$ by Lemma~\ref{lem:KP}. $\blacksquare$

~\\
\noindent {\bf Proof of Theorem~\ref{thm:split}:} Let $\ci = (I,s,E)$ be a BPS instance. Trivially, the algorithm returns a packing of $\ci$; we are left to show the approximation guarantee. Note that $K_{G_{\ci}}$ is a clique in $G_{\ci}$; thus, the items in $K_{G_{\ci}}$ must be packed in different bins and it follows that $\OPT(\ci) \geq |K_{G_{\ci}}|$. Moreover, it holds that

 \begin{equation}
	\label{eq:boundings}
	2 \leq \OPT(\ci) \leq \textsf{\Generic}(\ci) \leq \chi(G_{\ci})+\ceil{2 \cdot s(I)} \leq |K_{G_{\ci}}|+1+\ceil{2 \cdot s(I)}. 
\end{equation}  

The third inequality holds using Lemma~\ref{lem:Generic}. Therefore, by \eqref{eq:boundings} there is  $\alpha^* \in \left\{0,1,\ldots, \ceil{2 \cdot s(I)} +1\right\}$ such that $\alpha^*+|K_{G_{\ci}}| = \OPT(\ci)$. Then, by Lemma~\ref{lem:KP2} and Lemma~\ref{lem:SAP}, Algorithm \textsf{MaxSize} is a $\left( 1-\frac{1}{e}\right)$-approximation for the maximization problem of $\cI$ and the initial partial packing $\cB_{\alpha^*}$. Thus, 

\begin{equation*}
	\label{eq:sB2}
	\begin{aligned}
		\#\cA^*_{\alpha^*} ={} & \#\cA_{\alpha^*}+\#\textsf{FFD}(\ci \setminus \textsf{items}(\cA_{\alpha^*})) \\
		\leq{} & \OPT(\ci)+\frac{2 \cdot \OPT(\ci)}{e} \\
		={} & \left(1+\frac{2}{e}\right) \cdot \OPT(\ci). 
	\end{aligned}
\end{equation*} The inequality holds by the selection of $\alpha^*$ and since $s(\textsf{items}(\cA_{\alpha^*})) \geq \left( 1-\frac{1}{e}\right)s(I)$; therefore, it follows that the total size of items in the instance $\ci \setminus \textsf{items}(\cA_{\alpha^*})$ is bounded by $\frac{s(I)}{e}$. If $\frac{s(I)}{e} \leq 1$ then $\#\textsf{FFD}(\ci \setminus \textsf{items}(\cA_{\alpha^*})) = 1$ and otherwise it holds that $\#\textsf{FFD}(\ci \setminus \textsf{items}(\cA_{\alpha^*})) \leq 2 \cdot \frac{s(I)}{e}$ by a simple bound on the well known approximation guarantee of \textsf{FFD}. Thus, the residual instance $\ci \setminus \textsf{items}(\cA_{\alpha^*})$ is packed via \textsf{FFD} in at most $\max\{2 \cdot \frac{s(I)}{e}, 1\} \leq \frac{2 \cdot \OPT(\ci)}{e}$ bins (recall that $\OPT(\ci) \geq 2$).  $\blacksquare$
\section{Omitted Proofs from Section~\ref{sec:HardnessBP}}
\label{sec:proofsBPHardness}

\begin{lemma}
	\label{1}
	Given some $k > 4$ and a $k$-\textnormal{restricted B3DM} instance $\cj = (X,Y,Z,T)$, it holds that $\OPT(\cj) \geq \frac{|X|+|Y|+|Z|+|T|}{4 \cdot c^3}$. 
\end{lemma}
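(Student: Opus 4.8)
The plan is to sandwich the quantities: lower-bound $\OPT(\cj)$ by the size of a greedily constructed maximal matching, upper-bound the total input size $|X|+|Y|+|Z|+|T|$ in terms of that same matching size using only the degree bound $c$, and then observe that the resulting constant is at most $4c^3$ because $c>2$.

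First I would build a set $M\subseteq T$ greedily: starting from $M=\emptyset$, repeatedly add any triple of $T$ that is element-disjoint from every triple already in $M$, stopping when no such triple remains; $M$ is then a feasible solution for $\cj$, so $\OPT(\cj)\ge|M|$. Next I would exploit maximality of $M$: every triple $t\in T$ shares an element with some $t'\in M$ (trivially if $t\in M$; otherwise $M\cup\{t\}$ would be feasible, a contradiction). Since each of the three elements of a fixed $t'=(x,y,z)\in M$ lies in at most $c$ triples of $T$, at most $3c$ triples of $T$ share an element with $t'$, whence $|T|\le 3c\,|M|$. I would then bound the ground set: assuming --- as is standard for B3DM, and as can be enforced by deleting isolated elements without affecting $\OPT(\cj)$ --- that every element of $X\cup Y\cup Z$ lies in at least one triple, the incidence identity $\sum_{x\in X}|\{t\in T:x\in t\}|=|T|$ forces $|X|\le|T|$, and symmetrically $|Y|\le|T|$ and $|Z|\le|T|$, so $|X|+|Y|+|Z|\le 3|T|$. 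Combining, $|X|+|Y|+|Z|+|T|\le 4|T|\le 12c\,|M|\le 12c\cdot\OPT(\cj)$; and since $c>2$ gives $c^2>3$, we have $4c^3\ge 12c$, hence $\OPT(\cj)\ge\frac{|X|+|Y|+|Z|+|T|}{12c}\ge\frac{|X|+|Y|+|Z|+|T|}{4c^3}$, which is the claim.

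There is no deep obstacle here --- the argument is elementary greedy-plus-counting --- but two points need care. The first is the treatment of isolated elements in the ground-set bound: without the convention that every element appears in some triple the inequality is actually false, so this must be read as part of the setup (or ensured by preprocessing in the reduction that invokes the lemma). The second is pure constant-chasing: the natural bound is the sharper $12c$, and the stated $4c^3$ is merely the slack that $c>2$ provides. The hypothesis $k>4$ plays no role in the inequality itself; it only guarantees non-degeneracy (in particular $T\ne\emptyset$ and $|M|\ge1$), which is why the lemma is phrased for $k$-restricted instances.
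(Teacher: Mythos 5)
Your proof is correct and rests on the same two counting facts as the paper's: every triple of $T$ meets the chosen solution, so the degree bound $c$ caps $|T|$ in terms of the solution size; and the convention that every ground element appears in at least one triple gives $|X|,|Y|,|Z|\le|T|$. (That convention is indeed part of the paper's setup --- it is stated explicitly just before the lemma is invoked in the reduction, so your flag about isolated elements is well placed.) The difference is in how the first fact is established. The paper argues by contradiction: assuming $\OPT(\cj)<|T|/c^3$, it counts the triples of $T$ that conflict with $\OPT$, finds this count strictly less than $|T|-\OPT(\cj)$, and concludes that $\OPT$ could be extended --- contradicting optimality. You instead take a greedy maximal element-disjoint family $M$ and bound $|T|\le 3c\,|M|\le 3c\cdot\OPT(\cj)$ directly, without any hypothesis on $\OPT$. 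Your route is cleaner and yields the sharper intermediate bound $\OPT(\cj)\ge|T|/(3c)$ (the paper only gets $|T|/c^3$), after which both proofs pad to $4c^3$ to absorb the $|X|+|Y|+|Z|$ contribution. Your constructive version also sidesteps some loose arithmetic in the paper's contradiction step, where the claim that the number of extendable triples is at least $4\cdot\tfrac{2}{3}$ only holds under a reading of the constants more generous than what is written. The role you assign to $k>4$ (ruling out degeneracy, not driving the inequality) is the same role it plays in the paper's proof, where it is used only to make $|T|$ large enough for the final strict inequality.
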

\begin{proof}
	
	Assume towards contradiction that $\OPT(\cj) < p$. Let $\OPT \subseteq T$ be some optimal solution for $\cj$. Also, for $t  \in \OPT$ such that $t = (x,y,z)$, we say that $x,y,z \in t$ for simplicity.  
	
	\begin{equation}
		\label{eq:pwp}
		\begin{aligned}
			&	|\{(x,y,z) \in T \setminus \OPT ~|~ \exists t \in \OPT \text{ s.t. } x \in t \text{ or } y \in t \text{ or } z \in t\}|\\ \leq{} & \sum_{(x,y,z) \in \OPT~} \sum_{w \in \{x,y,z\}} 	|\{t' \in T \setminus \OPT ~|~ w \in t', \exists t \in \OPT \text{ s.t. } w \in t\}| \\
			\leq{} & 3 \cdot \OPT(\cj) \cdot c < 3 \cdot \frac{|T|}{c^3} \cdot c \leq \frac{|T|}{c}
		\end{aligned}
	\end{equation} 

The first inequality holds by the union bound. The second inequality holds since $\cj$ is a Bounded-3DM instance. 

\begin{equation}
		\label{eq:pwp2}
		\begin{aligned}
			& |\{ t \in T \setminus \OPT~|~\OPT \cup \{t\} \text{ is a feasible solution for } \cj\}| \\
			\geq{} &	|T|-\OPT(\cj)-|\{(x,y,z) \in T \setminus \OPT ~|~ \exists t \in \OPT \text{ s.t. } x \in t \text{ or } y \in t \text{ or } z \in t\}|\\ 
			\geq{} & |T|-\OPT(\cj) - \frac{|T|}{c} \\
			\geq{} & |T| \cdot \left(1- \frac{2}{c}\right) \\
			\geq{} & 4 \cdot  \frac{2}{3}\\
			\geq{} & 1.\\
		\end{aligned}
	\end{equation} 

The first inequality holds by the definition of solution of a B3DM instance. The second inequality holds by \eqref{eq:pwp}. The third inequality holds because $\OPT(\cj) < \frac{|T|}{c^3}$ by our assumption. The fourth inequality holds because $\cj$ is $k$-restricted for $k > 4$. By \eqref{eq:pwp2}, there is an elements $t \in T$ such that $\OPT \cup \{t\}$ is a solution of $\cj$. This is a contradiction to the optimality of $\OPT$. Thus, $\OPT(\cj) \geq \frac{|T|}{c^3}$. Therefore, 

$$\OPT(\cj)  \geq \frac{|T|}{c^3}  \geq \frac{|T|}{4 \cdot c^3}+\frac{|T|}{4 \cdot c^3}+\frac{|T|}{4 \cdot c^3}+\frac{|T|}{4 \cdot c^3} \geq \frac{|T|}{4 \cdot c^3}+\frac{|X|}{4 \cdot c^3}+\frac{|Y|}{4 \cdot c^3}+\frac{|Z|}{4 \cdot c^3}.$$ 

The last inequality holds because each $u \in X \cup Y \cup Z$ appears in at least one element of $T$. $\blacksquare$
\end{proof}

\noindent {\bf Proof of Lemma~\ref{1-rest}:}
Assume towards contradiction that for all $\alpha>1$ there is an algorithm \textsf{Restricted-Solver} that is an $\alpha$-approximation for $k$-restricted \textnormal{B3DM} problem, for some fixed $k \in \mathbb{N}$. Let $\textnormal{\textsf{Enum}}_k$ be the algorithm that given a B3DM instance $\cj = (X,Y,Z,T)$ iterates over all subsets of $T$ of at most $k$ elements and chooses the best feasible solution found in the enumeration.  
\begin{observation}
		\label{ob:restricted}
		Given a non-k-restricted \textnormal{B3DM} instance $\cj$, Algorithm $\textnormal{\textsf{Enum}}_k$ returns an optimal solution for $\cj$. 
\end{observation}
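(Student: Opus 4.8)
The plan is simply to unwind the definition of ``$k$-restricted.'' By Definition~\ref{def:restrictedInstance}, saying that $\cj = (X,Y,Z,T)$ is \emph{not} $k$-restricted means exactly that $\OPT(\cj) < k$, equivalently $\OPT(\cj) \le k-1$. So first I would fix an optimal solution $M^\star \subseteq T$ for $\cj$ and record that $|M^\star| = \OPT(\cj) \le k-1 < k$.

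Next I would observe that, because $|M^\star| \le k$, the set $M^\star$ is one of the subsets of $T$ that $\textnormal{\textsf{Enum}}_k$ enumerates, and it is a feasible solution for $\cj$ (by definition of $\OPT(\cj)$). Since $\textnormal{\textsf{Enum}}_k$ outputs the feasible enumerated set of maximum cardinality, its output $M$ satisfies $|M| \ge |M^\star| = \OPT(\cj)$; as $\OPT(\cj)$ upper-bounds every feasible solution, $|M| = \OPT(\cj)$, so $M$ is optimal. I would also note in passing that the empty set is always among the enumerated subsets and is always feasible, so the enumeration never comes up empty, and that the number of enumerated subsets is $\sum_{j=0}^{k}\binom{|T|}{j} = O(|T|^{k})$, which is polynomial for fixed $k$ and matches the ``polynomially solvable'' remark used just before Lemma~\ref{1-rest}.

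There is essentially no obstacle here: the content of the observation lies entirely in correctly reading the (slightly counterintuitively phrased) definition, namely that the \emph{non}-$k$-restricted case is the small-optimum case $\OPT(\cj) < k$, which is precisely what makes the optimum fit inside the size-$\le k$ search window of $\textnormal{\textsf{Enum}}_k$; everything else is a one-line argument. This observation is then meant to be combined, inside the proof of Lemma~\ref{1-rest}, with the hypothesized $\alpha$-approximation for $k$-restricted B3DM so as to cover both the $\OPT(\cj) \ge k$ regime and the $\OPT(\cj) < k$ regime, yielding an approximation algorithm for unrestricted B3DM of ratio arbitrarily close to $1$ and thus contradicting Kann's MAX SNP-hardness~\cite{kann1991maximum}.
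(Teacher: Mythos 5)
Your proof is correct and matches the paper's (implicit) reasoning: the paper states this as an unproved observation, and the only content is exactly what you identify, namely that ``non-$k$-restricted'' means $\OPT(\cj) < k$, so an optimal $M^\star$ has $|M^\star| \le k-1 < k$ and is therefore among the subsets $\textnormal{\textsf{Enum}}_k$ enumerates, forcing the algorithm's output to match $\OPT(\cj)$. Your side remarks on the empty set and the $O(|T|^k)$ bound are consistent with the paper's surrounding discussion.
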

	
	Now, given a B3DM instance $\cj = (X,Y,Z,T)$, we compute $\textnormal{\textsf{Enum}}_k(\cj)$ and $\textnormal{\textsf{Restricted-Solver}}_k(\cj)$ and return the better solution among the solutions found by the two algorithms. Note that the running time is polynomial by our assumption for \textnormal{\textsf{Restricted-Solver}} and since $k$ is a constant. Indeed, the number of subsets of $T$ with at most $k$ elements is $O(|T|^k)$. If $\cj$ is a restricted B3DM instance then we return a solution of cardinality at least $\frac{\OPT(\cj)}{\alpha}$; otherwise, by Observation~\ref{ob:restricted} we return an optimal solution. Overall, we conclude that the above is an $\alpha$-approximation for B3DM for any $\alpha>1$. A contradiction aince B3DM does not admit a PTAS \cite{kann1991maximum}. $\blacksquare$

~\\
	
\noindent {\bf Proof of Theorem~\ref{thm:BPH}:}	We give the proof for BPB. At the end of the proof, we explain how a minor modification in the reduction can be used to obtain a similar result also for BPS. Assume towards contradiction that there is a polynomial-time algorithm \textsf{ALG} such that \textsf{ALG} is an asymptotic $d$-approximation for BPB for all $d > 1$. By Lemma~\ref{1-rest} there is $\alpha>1$ such that there is no $\alpha$-approximation for the $k$-restricted \textnormal{B3DM} problem, for all $k \in \mathbb{N}$. Therefore, there is $n \in \mathbb{N}$ such that for any BPB instance $\mathcal{I}$ with $\OPT(\mathcal{I}) \geq n$ it holds that $\#\text{\textsf{ALG}}(\mathcal{I}) < d \cdot  \OPT(\mathcal{I})$ for 

\begin{equation}
		\label{eq:d}
		d = \frac{(1-\frac{1}{\alpha})}{80 \cdot c^3} 
	\end{equation} 

The selection of $d$ becomes clear towards the end of the proof. We show that using \textsf{ALG} we can approximate  $n$-restricted B3DM within a constant ratio of less than  $\alpha$; this is a contradiction to Lemma~\ref{1-rest}. For completeness of this section, we repeat some of the notations given in Section~\ref{sec:HardnessBP}. Let $\cj = (X,Y,Z,T)$ be an $n$-restricted instance of B3DM and let

\begin{equation}
	\label{eq:U}
	\begin{aligned}
	U ={} & X \cup Y \cup Z, 
	\\
	X ={} & \{x_1, \ldots, x_{\tilde{x}}\}, Y = \{y_1, \ldots, y_{\tilde{y}}\}, Z = \{z_1, \ldots, z_{\tilde{z}}\}, T = \{t_1, \ldots, t_{\tilde{t}}\}. 
	\end{aligned}
\end{equation}
  We assume w.l.o.g. that each $u \in X \cup Y \cup Z$ appears in at least one triple of $T$ (otherwise this element can be omitted from the instance without changing the set of solutions).
We define additional {\em filler} elements; these elements are packed in the optimum of our constructed BPB instance together with elements not taken to the solution for $\cj$. Since we do not know the exact value of the optimum of $\cj$, except that it is a number between $n$ and $|T|$, we define a family of instances with different number of filler items. Specifically, for all $i \in \{n,n+1, \ldots, |T|\}$, let $P_i,Q_i$ be a set of $\tilde{t}-i$ elements and a set of $\tilde{x}+\tilde{y}+\tilde{z}-3 \cdot i$ elements, respectively, such that $P_i \cap U = \emptyset$ and $Q_i \cap U = \emptyset$. Let $I_i = U \cup P_i \cup T \cup Q$ and define the following bipartite graph: $G_i = (I_i,E_i)$, where $E = E_X \cup E_Y \cup E_Z$ such that the following holds.

	\begin{equation}
		\label{eq:edges}
		\begin{aligned}
			E_X ={} & \{ (x,t)~|~ x \in X, t = (x',y,z) \in T, x \neq x' \}\\
			E_Y ={} & \{ (y,t)~|~ y \in Y, t = (x,y',z) \in T, y \neq y' \}
			\\
			E_Z ={} & \{ (z,t)~|~ z \in Z, t = (x,y,z') \in T, z \neq z' \}
			\\
		\end{aligned}
	\end{equation}

	In words, we define the bipartite graph between the items in $U$ and the triples in $T$ and connect edges between pairs of an item and  a triple such that the item does not belong to the triple. 
	Now, define the following BPB instances (for all $i \in \{n,n+1, \ldots, |T|\}$): 
	
	\begin{equation}
		\label{eq:insta}
		\begin{aligned}
			{} & \mathcal{I}_i = (I_i,s,E) \text{ s.t. }\\
			{} & \forall u \in U, p \in P_i, q \in Q_i, t \in T: \\
			{} & s(u) = 0.15, s(p) = 0.45, s(q) = 0.85, s(t) = 0.55.
		\end{aligned}
	\end{equation}

	By \eqref{eq:insta} and that $\cj$ is $n$-restricted,then in particular, there are at least $|T| =  \tilde{t} \geq n$ number of items with sizes strictly larger than $\frac{1}{2}$ in $\ci_i$. Thus, $\OPT(\ci_i) \geq n$. Therefore, we conclude that \textsf{ALG} can find a feasible packing of $\mathcal{I}_i$ in at most $d  \cdot \OPT(\mathcal{I}_i)$ bins. Let $ \textsf{ALG}(\ci_i) = (B_1, \ldots,B_{R_i})$ for $R_i \leq d \cdot \OPT(\ci_i)$; when understood from the context, we use $R = R_i$. Now, define the following solution for $\cj$: 
	
	\begin{equation}
		\label{eq:solJ}
		S_i = \{(x,y,z) \in T~|~\exists r \in [R] \text{ s.t. } B_r = \{x,y,z,(x,y,z)\}\}. 
	\end{equation} 
	\begin{myclaim}
		\label{claim:solJ}
		$S_i$ is a solution of $\cj$. 
	\end{myclaim}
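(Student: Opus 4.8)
The plan is to exploit the only structural fact that is actually needed: the output $\textsf{ALG}(\ci_i) = (B_1,\ldots,B_R)$ is a \emph{packing} of $\ci_i$, hence (by the definition of a packing in Section~\ref{sec:preliminaries}) a partition of the item set $I_i$, so every item of $\ci_i$ lies in exactly one bin. From \eqref{eq:solJ} we immediately have $S_i \subseteq T$, so by the definition of a feasible solution of a B3DM instance it remains only to check that every $u \in X \cup Y \cup Z$ occurs in at most one triple of $S_i$.

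First I would fix $u \in X \cup Y \cup Z$; since the construction \eqref{eq:U}--\eqref{eq:insta} is symmetric in the three coordinates, it suffices to treat the case $u = x \in X$. Assume toward a contradiction that $x$ belongs to two distinct triples of $S_i$, say $t = (x,y,z)$ and $t' = (x,y',z')$ with $t \neq t'$. By \eqref{eq:solJ} there are indices $r,r' \in [R]$ with $B_r = \{x,y,z,t\}$ and $B_{r'} = \{x,y',z',t'\}$. Viewing $x$ as a vertex of $G_i$, it then lies in both $B_r$ and $B_{r'}$; since $(B_1,\ldots,B_R)$ partitions $I_i$, an item cannot appear in two distinct bins, so $r = r'$ and hence $B_r = B_{r'}$ as sets. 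Because $U \cap T = \emptyset$ in the construction, the triple-item of a useful bin is recovered as the unique element of $B_r \cap T$; therefore $t = t'$, contradicting $t \neq t'$. Running the same argument with $Y$ or $Z$ in place of $X$ finishes the verification, so $S_i$ is a solution of $\cj$.

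I do not expect a real obstacle here: all the substance is already in the construction and in the edge set \eqref{eq:edges}, which guarantee that a bin containing an element $u \in U$ together with a triple $t \in T$ can only occur when $u \in t$; given that, the claim is pure bookkeeping about a partition. The one point to state carefully is that the decomposition $B_r = \{x,y,z,(x,y,z)\}$ of a useful bin is unambiguous — which is why one notes $U \cap T = \emptyset$ — and that the map sending a useful bin to its triple is injective even at the level of the single underlying item $x$, which is exactly the partition property of the packing.
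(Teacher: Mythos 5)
Your proof is correct and follows essentially the same route as the paper: $S_i \subseteq T$ is immediate from \eqref{eq:solJ}, and the key observation is that $(B_1,\ldots,B_R)$ is a \emph{partition} of $I_i$, so no element $u \in U$ can appear in two distinct bins and hence cannot be associated with two distinct triples of $S_i$. Your write-up is more explicit than the paper's one-line version (you spell out the contradiction $r=r'$ and note $U \cap T = \emptyset$ to recover the triple from the bin unambiguously), but the underlying argument is identical.
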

	\begin{proof}
		First, by \eqref{eq:solJ} it follows that $S_i \subseteq T$. Now, let $(x,y,z) \in S_i$. Since $(B_1, \ldots,B_R)$ is a partition of $I_i$ by the definition of packing, $x,y,z$ cannot appear in any other element in $S_i$. 
	\qed \end{proof}
	
	\begin{myclaim}
		\label{claim:siOPT}
		For $i = \OPT(\cj)$ it holds that $\OPT(\ci_i) = s(I_i) = \tilde{t}+\tilde{x}+\tilde{y}+\tilde{z}-3 \cdot i$. 
	\end{myclaim}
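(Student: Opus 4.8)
The plan is to prove Claim~\ref{claim:siOPT} in two directions, establishing that for the correct guess $i = \OPT(\cj)$ every bin in the optimal packing of $\ci_i$ is filled exactly to capacity $1$. First I would verify the lower bound $\OPT(\ci_i) \le \tilde t + \tilde x + \tilde y + \tilde z - 3i$ by exhibiting an explicit packing. Let $M = \{t^{(1)}, \ldots, t^{(i)}\} \subseteq T$ be an optimal solution for $\cj$, where $t^{(j)} = (x^{(j)}, y^{(j)}, z^{(j)})$. For each $j \in [i]$ form the bin $\{x^{(j)}, y^{(j)}, z^{(j)}, t^{(j)}\}$; by the size definition \eqref{eq:insta} its total size is $3 \cdot 0.15 + 0.55 = 1$, and by the edge definition \eqref{eq:edges} it is an independent set in $G_i$ since each element of the triple does belong to the triple. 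This uses $3i$ elements of $U$, $i$ triples of $T$, and $i$ bins. The remaining $\tilde t - i$ triples each have size $0.55$ and each can be paired with one filler item from $P_i$ of size $0.45$ (there are exactly $\tilde t - i$ of them), giving bins of size exactly $1$; note a triple is not adjacent to any filler item, so these are independent sets. The remaining $\tilde x + \tilde y + \tilde z - 3i$ elements of $U$ (those not covered by $M$) each have size $0.15$ and each is paired with one filler item from $Q_i$ of size $0.85$ (there are exactly $\tilde x+\tilde y+\tilde z-3i$ of them), giving bins of size exactly $1$; again $U$-items are not adjacent to filler items so these are valid bins. In total the packing uses $i + (\tilde t - i) + (\tilde x + \tilde y + \tilde z - 3i) = \tilde t + \tilde x + \tilde y + \tilde z - 3i$ bins, which bounds $\OPT(\ci_i)$ from above.

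Next I would establish the matching lower bound $\OPT(\ci_i) \ge s(I_i)$, which holds trivially for any BPC instance since each bin has total size at most $1$, so the number of bins is at least the total size of all items. Computing $s(I_i)$ from \eqref{eq:insta}: the $\tilde x + \tilde y + \tilde z$ items of $U$ contribute $0.15(\tilde x + \tilde y + \tilde z)$, the $\tilde t - i$ items of $P_i$ contribute $0.45(\tilde t - i)$, the $\tilde t$ triples contribute $0.55 \tilde t$, and the $\tilde x + \tilde y + \tilde z - 3i$ items of $Q_i$ contribute $0.85(\tilde x + \tilde y + \tilde z - 3i)$. Summing, the coefficient of $\tilde x + \tilde y + \tilde z$ is $0.15 + 0.85 = 1$, the coefficient of $\tilde t$ is $0.45 + 0.55 = 1$, and the coefficient of $i$ is $-0.45 - 3 \cdot 0.85 = -0.45 - 2.55 = -3$. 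Hence $s(I_i) = \tilde t + \tilde x + \tilde y + \tilde z - 3i$, exactly matching the upper bound from the explicit packing.

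Combining the two directions yields $\OPT(\ci_i) = s(I_i) = \tilde t + \tilde x + \tilde y + \tilde z - 3i$, as claimed, and moreover the optimal packing consists entirely of bins of total size exactly $1$. The only mild subtlety — the ``main obstacle,'' though it is not deep — is confirming that the cardinalities of the filler sets $P_i$ and $Q_i$ are exactly right so that the explicit packing uses up all the filler items and all the $U$-items without any leftovers; this is precisely why $P_i$ was defined to have $\tilde t - i$ elements and $Q_i$ to have $\tilde x + \tilde y + \tilde z - 3i$ elements, and it relies on the fact that an optimal solution $M$ for $\cj$ covers exactly $3i$ distinct elements of $U$ and exactly $i$ triples. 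One should also double-check that $\tilde x + \tilde y + \tilde z - 3i \ge 0$, which holds because $M$ covers $3i$ distinct elements so $\tilde x + \tilde y + \tilde z \ge 3i$; likewise $\tilde t - i \ge 0$ since $M \subseteq T$. With these observations the verification is a routine arithmetic check and the claim follows.
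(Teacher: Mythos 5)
Your proposal is correct and follows essentially the same approach as the paper: construct the explicit packing (useful bins from an optimal $M$, remaining triples paired with $P_i$-fillers, leftover $U$-elements paired with $Q_i$-fillers), observe every bin is exactly full, and invoke the trivial bound $\OPT(\ci_i)\ge s(I_i)$. The only cosmetic difference is that you verify $s(I_i)=\tilde t+\tilde x+\tilde y+\tilde z-3i$ by a direct arithmetic computation, whereas the paper reads off this count from the bins of the explicit packing; the two are interchangeable.
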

	\begin{proof}
		Let $\OPT$ be an optimal solution for $\cj$. Let $P_i = \{p_1, \ldots, p_{\tilde{t}-i}\}$ and $Q_i = \{q_1, \ldots, q_{\tilde{x}+\tilde{y}+\tilde{z}-3 \cdot i}\}$. Also, let $X_i = \{x'_{1}, \ldots, x_{\tilde{x}-i}\}$, $Y_i = \{y'_{1}, \ldots, y'_{\tilde{y}-i}\}$, and $Z_i = \{z'_{1}, \ldots, z'_{\tilde{z}-i}\}$ be all elements in $X,Y,Z$ that do not belong to any element in $\OPT$, respectively; finally, let $T \setminus \OPT = \{t_1, \ldots, t_{\tilde{t}-i}\}$. Now, define the following packing for $\ci_i$. 
		
		\begin{equation}
			\label{eq:packk}
			\begin{aligned}
				{} & W_i = A\oplus B \oplus C \text{ s.t. }\\
				{} & A = \left(\{x,y,z,(x,y,z)\}~|~(x,y,z) \in \OPT\right)\\
				{} & B = \left(\{t_k,p_k\}~|~k \in [\tilde{t}-i]\right)\\
				{} & C = \left(\{x'_k,q_k\}~|~k \in \{1,\ldots,\tilde{x}-i\}\right) \\
				\oplus{} &
				\left(\{y'_k,q_k\}~|~k \in \{\tilde{x}-i+1, \ldots, \tilde{x}+\tilde{y}-2 \cdot i\}\right) \\
				\oplus{} &  \left(\{z'_k,q_k\}~|~k \in \{\tilde{x}+\tilde{y}-2 \cdot i+1, \ldots, \tilde{x}+\tilde{y}+\tilde{z}-3 \cdot i\} \right) 
			\end{aligned}
		\end{equation} 
	
	By \eqref{eq:edges}, \eqref{eq:insta}, and \eqref{eq:packk} it holds that $W_i$ is a packing of $\ci_i$ such that $W_i = (R_1, \ldots,R_r)$ and for all $k \in [r]$ it holds that $s(R_k) = 1$ and therefore $\# W_i = s(I_i)$. The proof follows since $\OPT(\ci_i) \geq s(I_i)$ and thus $W_i$ is an optimal packing for $\ci_i$.  Moreover, by \eqref{eq:packk} it holds that $r = \tilde{t}+\tilde{x}+\tilde{y}+\tilde{z}-3 \cdot i$. 
	\qed \end{proof}
	
	\begin{myclaim}
		\label{claim:Sfinal}
		$|S_i| \geq \frac{\OPT(\cj)}{\alpha}$. 
	\end{myclaim}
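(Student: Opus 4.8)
The plan is to analyse the packing $\textsf{ALG}(\ci_i)=(B_1,\dots,B_R)$ on the instance with the correct guess $i=\OPT(\cj)$ and to show that many of its bins are \emph{useful}, i.e.\ of the form $\{x,y,z,(x,y,z)\}$ with $(x,y,z)\in T$; since distinct useful bins are disjoint and each one places its triple into $S_i$, a lower bound on the number of useful bins is a lower bound on $|S_i|$. First I would check applicability: by Claim~\ref{claim:siOPT}, $\OPT(\ci_i)=\tilde t+\tilde x+\tilde y+\tilde z-3i$, and an optimal matching of $\cj$ consists of $i$ pairwise disjoint triples covering $3i$ distinct elements, so $\tilde x+\tilde y+\tilde z\ge 3i$ and hence $\OPT(\ci_i)\ge\tilde t=|T|\ge\OPT(\cj)\ge n$. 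Thus \textsf{ALG} applies, and by \eqref{eq:d} its output satisfies $R<\OPT(\ci_i)+d\cdot\OPT(\ci_i)$; since the packing partitions $I_i$, the total slack is $\sum_{r\in[R]}\bigl(1-s(B_r)\bigr)=R-s(I_i)=R-\OPT(\ci_i)<d\cdot\OPT(\ci_i)$.

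Next I would classify the bins that contain a triple, which is where the conflict graph is used. All item sizes lie in $\{0.15,0.45,0.55,0.85\}$, so a bin holding a triple item $t=(x,y,z)$ (size $0.55$, residual capacity $0.45$) must be exactly one of $\{t\}$, $\{t,u\}$, $\{t,u_1,u_2\}$ (with $U$-items), $\{t,p\}$ (with a $P_i$-filler), or $\{t,u_1,u_2,u_3\}$; and no bin holds two triples since $2\cdot 0.55>1$. By \eqref{eq:edges} the only $U$-items non-adjacent to $t$ are $x,y,z$, so in the last case $\{u_1,u_2,u_3\}=\{x,y,z\}$ and that bin is useful. Writing $a$, $b$, $w$ for the numbers of bins that are, respectively, useful, of the form $\{t,p\}$, and of one of the first three types, each of the $\tilde t$ triples lands in exactly one such bin, so $a+b+w=\tilde t$, and the useful bins are in bijection with $S_i$, hence $|S_i|=a$.

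Then I would bound $b$ and $w$. Each $\{t,p\}$-bin uses a distinct element of $P_i$, so $b\le|P_i|=\tilde t-i=\tilde t-\OPT(\cj)$. Each of the $w$ bins has size at most $0.85$, hence slack at least $0.15$, so $0.15\,w\le R-\OPT(\ci_i)<d\cdot\OPT(\ci_i)$, i.e.\ $w<\frac{d}{0.15}\OPT(\ci_i)$. By Claim~\ref{claim:siOPT} and Lemma~\ref{1}, $\OPT(\ci_i)\le\tilde t+\tilde x+\tilde y+\tilde z\le 4c^3\OPT(\cj)$. Combining, $|S_i|=a=\tilde t-b-w\ge\OPT(\cj)-w>\OPT(\cj)\bigl(1-\frac{4c^3d}{0.15}\bigr)$, and substituting $d=\frac{1-1/\alpha}{80c^3}$ from \eqref{eq:d} makes $\frac{4c^3d}{0.15}=\frac{1-1/\alpha}{3}$, so $|S_i|>\OPT(\cj)\bigl(1-\frac{1-1/\alpha}{3}\bigr)\ge\frac{\OPT(\cj)}{\alpha}$, the last step because $1-\frac{1-1/\alpha}{3}-\frac1\alpha=\frac23\bigl(1-\frac1\alpha\bigr)\ge 0$.

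I expect the main obstacle to be the bin classification: one has to enumerate the possible bin contents precisely and invoke the edges of \eqref{eq:edges} to force every size-$1$ bin holding a triple together with three size-$0.15$ items to be useful, and to rule out two triples per bin. Once this structural fact is in hand, the remainder is routine counting with the filler items and the arithmetic that pins down the constant in \eqref{eq:d}.
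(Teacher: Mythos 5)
Your proof is correct and rests on the same core mechanism as the paper's: the slack $R-\OPT(\ci_i)$ available to $\textsf{ALG}$ is bounded by (roughly) $d\cdot\OPT(\ci_i)$, so there cannot be many wasteful bins, and the filler items absorb the rest. Where the paper argues by contradiction and counts \emph{all} bins with total size at most $0.9$ (its set $F$, each contributing slack at least $1/10$), you count directly and restrict attention to triple-containing bins that are neither useful nor of the form $\{t,p\}$ (your set $w$, each contributing slack at least $0.15$), then use the disjoint-partition identity $a+b+w=\tilde t$ to isolate $|S_i|=a$. Both routes need the same structural input: by \eqref{eq:edges} the only size-$1$ bin that can contain a triple without a filler item is a useful bin, and each $P_i$-/$Q_i$-item accounts for at most one high-occupancy bin. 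Your version is arguably cleaner bookkeeping, since it never needs to reason about bins containing no triple at all; you also silently repaired the paper's misstated bound $R<d\cdot\OPT(\ci_i)$ (impossible for $d<1$) to the evidently intended $R<(1+d)\cdot\OPT(\ci_i)$, which is what makes both arguments go through. The only point I would flag as worth making explicit in a final write-up is that $s(I_i)=\OPT(\ci_i)$ (Claim~\ref{claim:siOPT}) is used to equate total slack with $R-\OPT(\ci_i)$ — you cite it, but the identity deserves a sentence since it is what turns the approximation guarantee into a slack budget.
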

	\begin{proof}
		Below we give a lower bound for the number of bins in $ \textsf{ALG}(\ci_i) = (B_1, \ldots,B_R)$ that are packed with total size at most $0.9$. 
		
		\begin{equation}
			\label{eq:iSi1}
			\begin{aligned}
				|\{r \in [R]~|~s(B_r) \leq 0.9\}| = {} & R - |\{r \in [R]~|~s(B_r) > 0.9\}| \\
				\geq{} &R - |S_i| - |Q_i|-|P_i|\\
				\geq{} &\tilde{t}+\tilde{x}+\tilde{y}+\tilde{z}-3 \cdot i - |S_i| -(\tilde{x}+\tilde{y}+\tilde{z}-3 \cdot i)-(\tilde{t}-i) \\ 			
				={} & i-|S_i|
			\end{aligned}
		\end{equation} 
	
	The first inequality holds by the sizes in \eqref{eq:insta} and \eqref{eq:solJ}; that is, by \eqref{eq:insta} all bins in $ \textsf{ALG}(\ci_i)$ that do not contain an item from $Q_i$, an item from $P_i$, or belong to $S_i$ cannot be of total size more than $0.9$. The second inequality holds by Claim~\ref{claim:siOPT}. 
		We use the following inequality, 
		
		\begin{equation}
			\label{eq:iSi2}
			\begin{aligned}
				\OPT(\cj) \geq \frac{\tilde{t}+\tilde{x}+\tilde{y}+\tilde{z}}{4 \cdot c^3} \geq \frac{\tilde{t}}{4 \cdot c^3} \geq \frac{\OPT(\ci_i)}{4 \cdot c^3}. 
			\end{aligned}
		\end{equation} 
	
	The first inequality holds by Lemma~\ref{1}. The last inequality holds by \eqref{eq:insta}. Let $F = 	|\{r \in [R]~|~s(B_r) \leq 0.9\}|$ be the number of {\em bad} bins in $\textsf{ALG}(\ci_i)$. Therefore,

		\begin{equation}
			\label{eq:iSi3}
			\begin{aligned}
				F \geq i -|S_i| \geq \OPT(\cj)-\frac{\OPT(\cj)}{\alpha} \geq \frac{(1-\frac{1}{\alpha})}{4 \cdot c^3} \cdot \OPT(\ci_i). 
			\end{aligned}
		\end{equation} 
	
	The first inequality holds by \eqref{eq:iSi1}. The second inequality holds by the assumption on $|S_i|$. The last inequality holds by ~\eqref{eq:iSi2}. Thus, 
	
	\begin{equation}
			\label{eq:f14}
			R \geq F+(s(I_i)-\frac{9}{10} \cdot F) = \frac{F}{10}+\OPT(\ci) \geq \frac{(1-\frac{1}{\alpha})}{10 \cdot 4 \cdot c^3}  \cdot \OPT(\ci)+\OPT(\ci).
		\end{equation} 
	
	The first inequality holds because the $R$ bins in $\textsf{ALG}(\ci_i)$ must contain the total size of $I_i$; thus, since each bin in the bad bins is at most $\frac{9}{10}$ full, the number of non-bad bins is at least the total size remaining by deducting the upper bound of total size $\frac{9}{10} \cdot F$ on the total size packed in the bad bins. The first equality holds by Claim~\ref{claim:siOPT}. The last inequality holds by \eqref{eq:iSi3}. Finally, by \eqref{eq:f14} we reach a contradiction since $R \leq d \cdot \OPT(\ci_i) < \frac{(1-\frac{1}{\alpha})}{40 \cdot c^3}  \cdot \OPT(\ci)+\OPT(\ci)$ by \eqref{eq:d}. 
	\qed \end{proof}
	
	Finally, by Claim~\ref{claim:Sfinal} we reach a contradiction to the existence of $\textsf{ALG}$. By iterating over all $i' \in \{n, \ldots, |T|\}$, constructing $\ci_{i'}$, and returning the best solution $S_{i^*}$, we can in particular find $i = \OPT(\cj)$ and return $S_i$; by Claim~\ref{claim:Sfinal} it holds that the returned solution is an $\alpha$-approximation. The running time of constructing the reduction is polynomial, since $|T|$ is polynomial, the constructions in \eqref{eq:insta} and \eqref{eq:solJ} are polynomial, and $\textsf{ALG}$ is polynomial by the assumption. This is a contradiction that B3DM has an $\alpha$ approximation. 
	
	We now give a minor modification to adjust the result for BPS instead of BPB. The only necessary change is in \eqref{eq:insta}, where for BPS we define  \begin{equation}
		\label{eq:newIn}
		\mathcal{I}_i = (I_i,s,E \cup (T \times T))
	\end{equation} (instead of $ \mathcal{I}_i = (I_i,s,E)$ for BPB). That is, it holds that  $\cI_i$ as defined in \eqref{eq:newIn} is a BPS instance with a partition $T, I_i \setminus T$ into a clique $T$ and independent set $I_i \setminus T$ (The definition of $E$ is given in \eqref{eq:edges}). From here, the proof for BPS follows by symmetric argument to the proof for BPB. $\blacksquare$

\section{A $\frac{5}{3}$-approximation for \textnormal{BPB}} 
\label{sec:53}
 In this section, we continue to use maximization subroutines, this time for an absolute approximation algorithm for bin packing with bipartite conflicts. Note that Algorithm \textsf{ApproxBPC} (given in Section~\ref{sec:BP}) does not improve the absolute $\frac{7}{4}$-approximation for BPB of Epstein and Levin \cite{epstein2008bin}. Specifically, it fails to find a packing with better approximation ratio for instances with constant optimum. In these cases, the exact packing of relatively large items is crucial for the approximation guarantee; as an example, note that the absolute hardness of classic BP comes from instances with optimum $2$ \cite{garey1979computers}. 
 
  To improve Algorithm \textsf{ApproxBPC}, we rely on additional linear program for assigning items to partially packed bins; this slightly resembles Algorithm \textsf{MaxSolve}. Although, here we exploit the constant size of the optimum and that the graph is bipartite to obtain a nearly optimal assignment of items to the bins. 

For the following, fix a BPB instance $\ci = (I,s,E)$ and an error parameter $\eps = 0.0001$; let $T_{\ci} = \{v \in I~|~s(v) \leq \eps\}$ be the set of {\em tiny} items and let $B_{\ci} = I \setminus T_{\ci}$ be the set of {\em big} items. Such a classification of the items is useful as an optimal packing of the big items can be easily found using enumeration for fixed size optimum. We use a linear program to add the tiny items from one side of the conflict graph to a packing of the big items. We require that assigned items form an independent set with the already packed big items; however, since we add tiny items only from one side of the graph, no constraint is needed to guarantee that the assigned items form an independent set. 
Formally, 

\begin{definition}
	\label{def:LP}
	Given a \textnormal{BPB} instance $\ci = (I,s,E)$, a packing $\cA = (A_1 \ldots,A_t)$ of the big items of $I$, and $W \in \{X_I \cap T_{\ci},Y_I \cap T_{\ci}\}$ the tiny items from one of the sides of the bipartition of $G_{\ci}$. For all $i \in [t]$, let $Q_i = \{v \in W~|~A_i \cup \{v\} \in \textnormal{\textsf{IS}}(G_{\ci})\}$. Then, define the {\em assignment} of $\ci,\cA,W$ as the following linear program. 
	
	\begin{equation}
		\label{eq:LP53}
		\begin{aligned}
			\textnormal{\textsf{assignment}}(\ci,\cA,W):~~~~~~&	\max \sum_{i \in [t]} \bar{x}_{i,v}\\
			&	\textnormal{s.t.}\\
			&	 \bar{x}_{i,v} = 0 ~~~~~~~~~~~~~~~~~~~~~~~~~~~~~~~\forall i \in [t], v \in W \setminus Q_i\\
			&	 s(A_i)+\sum_{v \in W} \bar{x}_{i,v} \cdot s(v) \leq 1 ~~~~~~~~~\forall i \in [t]\\
			&	 \sum_{i \in [t]} \bar{x}_{i,v} \leq 1 ~~~~~~~~~~~~~~~~~~~~~~~~~~\forall v \in W\\
			&	 \bar{x}_{i,v} \in [0,1] ~~~~~~~~~~~~~~~~~~~~~~~~~~~\forall i \in [t], v \in W\\
		\end{aligned}
	\end{equation}
\end{definition}

For a basic feasible solution $\bar{x}$ of $\textnormal{\textsf{assignment}}(\ci,\cA,W)$, we use $\textsf{fractional}(\bar{x}) = \{v \in W~|~\exists i \in [t] \text{ s.t. } \bar{\lambda}_{i,v} \in (0,1)\}$ to denote the set of all items assigned fractionally, and by $\textsf{integral}(\bar{x}) = W \setminus \textsf{fractional}(\bar{x})$. As we do not use constraints for most edges in the conflict graph, in the next result we have some nice integrality properties of \eqref{eq:LP53}, based on the ratio between the number of variables to the number of independent constraints in \eqref{eq:LP53}.  The analysis of the next lemma uses the results of \cite{lenstra1990approximation,shmoys1993approximation}. 

\begin{lemma}
	\label{lem:integralityLP}
	There is a polynomial time algorithm \textnormal{\textsf{Round}} that given a \textnormal{BPC} instance $\ci = (I,s,E)$ such that $\OPT(\ci) \leq 100$, a packing $\cA = (A_1 \ldots,A_t)$ of $\ci \cap B_{\ci}$, and $W \in \{X_I \cap T_{\ci},Y_i \cap T_{\ci}\}$, returns a packing $\cB$ of $\ci \cap (B_{\ci} \cup W)$ such that $\#\cB = \#\cA$ and $|\textnormal{\textsf{items}}(\cB) \setminus B_{\ci}| \geq \OPT(	\textnormal{\textsf{assignment}}(\ci,\cA,W))-t$. 
\end{lemma}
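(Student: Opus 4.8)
The plan is to solve the linear program $\textsf{assignment}(\ci,\cA,W)$ to optimality, obtain a \emph{basic} optimal solution $\bar x$, and round it into an integral assignment that packs almost all of the fractionally chosen mass. Write $t=\#\cA$ and $F=\textsf{fractional}(\bar x)$, and let $H$ be the bipartite graph on the bin set $[t]$ and $F$ with an edge $(i,v)$ whenever $\bar x_{i,v}\in(0,1)$. The first step is the structural observation behind the LP roundings of \cite{lenstra1990approximation,shmoys1993approximation}: $H$ is a \emph{forest}. Indeed, a cycle $i_1,v_1,i_2,\dots,i_\ell,v_\ell,i_1$ in $H$ would admit the alternating perturbation $d$ with $d_{i_k,v_k}=\theta\cdot s(v_1)/s(v_k)$ and $d_{i_{k+1},v_k}=-\theta\cdot s(v_1)/s(v_k)$ (indices cyclic), which leaves every bin-load constraint and every multiplicity constraint of \eqref{eq:LP53} unchanged and keeps all involved coordinates inside $(0,1)$ for small $\theta>0$; thus $\bar x\pm\theta d$ are both feasible, contradicting that $\bar x$ is a vertex. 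Since $H$ is bipartite we have $|E(H)|=\sum_{v\in F}\deg_H(v)$, and combining with the forest bound $|E(H)|\le t+|F|$ gives that at most $t$ items of $F$ meet two or more bins.

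Next I would root each tree of $H$ at a bin, assign every integrally packed tiny item to its bin, assign every fractional item $v$ to its parent bin $\pi(v)$, and then fill each bin greedily with its \emph{committed} items in nondecreasing order of size, stopping once the next item no longer fits. This uses a one-bin claim proved by a short comparison of prefix sums: if a bin has residual capacity $\rho$ and is offered items of sizes at most $\eps$ with LP-values $\lambda_v\le 1$ satisfying $\sum_v\lambda_v s(v)\le\rho$, then the greedy prefix packs at least $\sum_v\lambda_v-1$ of them while leaving a gap of at most $\eps$. The resulting packing $\cB$ leaves $A_1,\dots,A_t$ untouched, so $\#\cB=\#\cA$; it respects capacities by construction; and it respects the conflict graph because $W$ lies entirely on one side of the bipartition (so committed items are mutually independent) and every committed item $v$ lay in the corresponding $Q_i$ (so $A_i\cup\{v\}\in\textsf{IS}(G_{\ci})$).

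It then remains to add up the losses. Using $\OPT(\textsf{assignment}(\ci,\cA,W))=|\textsf{integral}(\bar x)|+\sum_{v\in F}\sum_i\bar x_{i,v}$, the number of tiny items placed by $\cB$ is at least $|\textsf{integral}(\bar x)|+\sum_{v\in F}\bar x_{\pi(v),v}-t$, where the $-t$ collects one unit per bin from the greedy claim; the shortfall against $\OPT(\textsf{assignment}(\ci,\cA,W))-t$ is exactly $\sum_{v\in F}\sum_{i\ne\pi(v)}\bar x_{i,v}$, the mass each multiply-met item placed outside its parent, and every such pair $(i,v)$ corresponds to a distinct \emph{non-root} bin $i$ (a child of $v$), each carrying value $<1$. \textbf{The main obstacle is making this last bookkeeping land on $-t$ rather than a naive $-2t$}: the greedy deficit is a priori charged to every bin while the lost-mass deficit is charged to the non-root bins, so one must argue the two charges can be routed to disjoint bins — e.g.\ a leaf bin pays no greedy deficit since it only receives integral items, and a root bin carries no lost mass — so that their total is at most $t=\#\cB$. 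Pinning this down, together with the $\eps$-gap accounting that keeps every bin within capacity $1$, is the delicate part; the forest structure, the bound $|E(H)|\le t+|F|$, and the tininess of $W$ are precisely the ingredients needed, and everything else is routine.
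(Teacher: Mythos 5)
Your starting point is the same as the paper's — solve the LP \eqref{eq:LP53}, take a basic optimal solution $\bar{x}$, observe that the bipartite graph $H$ of strictly fractional coordinates is a forest — and both the cycle-perturbation argument for the forest property and the one-bin greedy inequality are correct. The gap, which you flag yourself, is the final charging step, and the ``route the charges to disjoint bins'' sketch does not close it: a bin $i$ that has a parent item $p(i)$ \emph{and} at least one fractional child is charged twice, once for the greedy drop (up to $1$ unit) and once for the discarded parent mass $\bar{x}_{i,p(i)}$. Tracking the parent term through your greedy derivation gives $\nu_i - |P_i| < 1 + \bar{x}_{i,p(i)}\bigl(1 - s(p(i))/s_{g+1}\bigr)$, where $\nu_i = \sum_v \bar{x}_{i,v}$, $P_i$ is what greedy packs into bin $i$, and $s_{g+1}$ is the size of the item greedy first rejects; when the parent item $p(i)$ is much smaller than the rejected children, the right-hand side can approach $2$, and summed along a long path of internal bins the total deficit can approach $2t$ rather than $t$. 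So ``commit each fractional item to its parent bin and fill each bin greedily'' is not, as stated, a rounding that provably delivers $\OPT(\textnormal{\textsf{assignment}}(\ci,\cA,W)) - t$, and the step you acknowledge as delicate is in fact unresolved.

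The paper sidesteps this by not committing an item to a single bin up front. It invokes the Lenstra--Shmoys--Tardos / Shmoys--Tardos rounding (in the Chekuri--Khanna multiple-knapsack form) as a black box: from a basic $\bar{x}$ it produces $\bar{y}$, agreeing with $\bar{x}$ on every integral coordinate and with objective at least that of $\bar{x}$, such that each bin $i$ has \emph{at most one} fractional coordinate $u_i$ and the integral part of bin $i$ (i.e., $A_i$ together with its integrally assigned tiny items) respects capacity. Taking $B_i = A_i \cup \{v : \bar{y}_{i,v} = 1\}$ then drops at most one item per bin, hence at most $t$ items total, which gives $|\textnormal{\textsf{items}}(\cB)\setminus B_{\ci}| \ge \sum_{i,v}\bar{y}_{i,v} - t \ge \OPT(\textnormal{\textsf{assignment}}(\ci,\cA,W)) - t$ directly. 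That global structural conclusion — one fractional coordinate per bin, obtained by an alternating argument over the whole forest rather than a per-bin greedy on a parent-commitment — is exactly what your proof needs but does not establish, and it is why the paper cites those results rather than re-deriving them.
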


By Lemma~\ref{lem:integralityLP}, if the packing of the big items has a small number $t$ of bins, and $\eps$ is small enough w.r.t. $t$, then we can assign all fractional items to one extra bin. Also, the tiny items from the opposite side of $W$ in the bipartition are packed in extra bins as well. This gives us the \textsf{Assign} subroutine, which enumerates over packings of the big items and for each packing performs the above assignment approach. We limit the enumeration only for packings with at most $100$ bins, so the running time remains polynomial. The pseudocode is given in Algorithm~\ref{alg:Assign}.

\begin{algorithm}[h]
	\caption{$\textsf{Assign}(\ci = (I,s,E), W)$}
	\label{alg:Assign}
		\begin{algorithmic}[1]
		\State{Initialize $\cB \leftarrow \textnormal{\textsf{\Generic}}(\ci)$.\label{step:AS:init}}
	\For{\textnormal{all packings $\cA$ of $\ci \cap B_{\ci}$ such that $\#\cB \leq 100$\label{step:AS:for}}}
			\State{Compute $ \cC \leftarrow \textnormal{\textsf{Round}}(\ci,\cA,W)$.\label{step:AS:compute}}
			\State{Let $\cB_{\cA} \leftarrow \cC \oplus \textnormal{\textsf{\Generic}}(\cI \setminus ( \textnormal{\textsf{items}}(\cC) \cup B_{\ci}))$.\label{step:AS:concate}}
		\If{$\#\cB> \#\cB_{\cA}$\label{step:AS:if}}
				\State{$\cB \leftarrow \cB_{\cA}$.\label{step:AS:AB}}
			\EndIf
	\EndFor
		\State{Return $\cB$.\label{step:AS:return}}
			\end{algorithmic}
\end{algorithm}

In the analysis of the next lemma, we focus on an iteration in which the big items are packed as in some optimal solution (there is such an iteration for $\OPT(\ci) \leq 100$). In this iteration, by Lemma~\ref{lem:integralityLP}, all items from $W$ can be packed by $\textnormal{\textsf{Round}}(\ci,\cA,W)$ by adding at most one bin. Other extra bins are added for the tiny items not in $W$ by \textsf{FFD}. 
\begin{lemma}
	\label{lem:assign}
	Given a \textnormal{BPP} instance $\ci = (I,s,E)$ and $W \in \{X_I \cap T_{\ci},Y_i \cap T_{\ci}\}$, Algorithm~\ref{alg:Assign} returns a packing $\cB$ of $\ci$, such that if $\OPT(\ci) \leq 100$ it holds that $\#\cB \leq \OPT(\ci)+1+\#\textnormal{\textsf{FFD}}(\ci \cap (T_{\ci} \setminus W))$. 
\end{lemma}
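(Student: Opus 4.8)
The plan is to first verify that Algorithm~\ref{alg:Assign} always outputs a valid packing of $\ci$, and then to establish the bound on $\#\cB$ under the assumption $\OPT(\ci) \le 100$. For feasibility: the initial value $\cB \leftarrow \textsf{\Generic}(\ci)$ is a packing of $\ci$ by Lemma~\ref{lem:Generic}; and each candidate $\cB_{\cA} = \cC \oplus \textsf{\Generic}(\ci \setminus (\textsf{items}(\cC) \cup B_{\ci}))$ is a packing of $\ci$ because $\cC = \textsf{Round}(\ci,\cA,W)$ is a packing of $\ci \cap (B_{\ci} \cup W)$ by Lemma~\ref{lem:integralityLP}, and $\textsf{\Generic}$ applied to the remaining items (which are exactly the tiny items not assigned by $\cC$) packs those in fresh bins; the concatenation of two packings of disjoint item sets whose union is $I$ is a packing of $I$. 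Since the algorithm only ever overwrites $\cB$ with such a candidate, the returned $\cB$ is a packing of $\ci$.

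For the cardinality bound, assume $\OPT(\ci) \le 100$ and fix an optimal packing $\OPT = (O_1,\dots,O_{\OPT(\ci)})$. The key point is that the restriction of $\OPT$ to the big items, namely $\cA^\star = (O_1 \cap B_{\ci}, \dots, O_{\OPT(\ci)} \cap B_{\ci})$, is a packing of $\ci \cap B_{\ci}$ with $\#\cA^\star = \OPT(\ci) \le 100$, so it is enumerated in the {\bf for} loop of Step~\ref{step:AS:for}. In this iteration set $t = \OPT(\ci)$ and consider $\cC = \textsf{Round}(\ci,\cA^\star,W)$. By Lemma~\ref{lem:integralityLP}, $\#\cC = \#\cA^\star = t$ and $|\textsf{items}(\cC) \setminus B_{\ci}| \ge \OPT(\textsf{assignment}(\ci,\cA^\star,W)) - t$. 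Now observe that the ``integral'' assignment induced by $\OPT$ itself — namely $\bar x_{i,v} = 1$ iff $v \in O_i \cap W$ — is a feasible solution of $\textsf{assignment}(\ci,\cA^\star,W)$ (the capacity constraint holds since $s(O_i) \le 1$, the covering constraint holds since each $v$ lies in exactly one $O_i$, and $\bar x_{i,v} = 0$ for $v \notin Q_i$ because $O_i \cup \{v\}$ is an independent set); hence $\OPT(\textsf{assignment}(\ci,\cA^\star,W)) \ge |W|$, which forces $|\textsf{items}(\cC) \setminus B_{\ci}| \ge |W| - t$. Since $\textsf{items}(\cC) \setminus B_{\ci} \subseteq W$, at most $t \le 100$ items of $W$ are left unpacked by $\cC$; each has size at most $\eps = 0.0001$, so their total size is at most $100\eps < 1$ and \textsf{FFD} (invoked within $\textsf{\Generic}$ in Step~\ref{step:AS:concate}) packs them in at most one bin.

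Putting this together for this specific iteration: $\cB_{\cA^\star} = \cC \oplus \textsf{\Generic}(\ci \setminus (\textsf{items}(\cC) \cup B_{\ci}))$, where $\ci \setminus (\textsf{items}(\cC) \cup B_{\ci})$ consists of the (at most $t$) unpacked items of $W$ together with all tiny items not in $W$, i.e. the items of $\ci \cap (T_{\ci}\setminus W)$. Applying $\textsf{\Generic}$ to this residual instance: by Lemma~\ref{lem:Generic} it uses at most $\#\textsf{\Generic}(\ci \cap (T_{\ci} \setminus W))$ bins for the $T_{\ci}\setminus W$ part plus at most one more bin for the leftover tiny items of $W$ (formally, one packs the $W$-leftovers separately in one bin and concatenates; since $\textsf{\Generic}$ is at least as good as this, or one simply bounds $\#\textsf{\Generic}$ of the whole residual instance by $\#\textsf{\Generic}(\ci\cap(T_\ci\setminus W)) + 1$ using monotonicity of the coloring-plus-FFD bound and $s(\text{leftover}) < 1$). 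Hence $\#\cB_{\cA^\star} \le t + 1 + \#\textsf{\Generic}(\ci \cap (T_{\ci}\setminus W)) = \OPT(\ci) + 1 + \#\textsf{FFD}(\ci \cap (T_{\ci}\setminus W))$ — here we use that for an instance consisting only of tiny items, $\textsf{\Generic}$'s output coincides with (or is bounded by) $\textsf{FFD}$ since the conflict graph on $T_\ci\setminus W$ restricted to one side is edgeless (being a subset of $X_I$ or $Y_I$), giving $\chi = 1$. Since Algorithm~\ref{alg:Assign} returns the best $\cB$ over all iterations (Step~\ref{step:AS:if}), $\#\cB \le \#\cB_{\cA^\star} \le \OPT(\ci)+1+\#\textsf{FFD}(\ci \cap (T_{\ci}\setminus W))$, as claimed. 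The main obstacle I expect is the bookkeeping in the last step: carefully arguing that the residual call to $\textsf{\Generic}$ costs at most $\#\textsf{FFD}(\ci\cap(T_\ci\setminus W))+1$ rather than something larger — this hinges on the leftover $W$-items having total size below $1$ (so one extra bin suffices) and on the fact that $W$ lies entirely on one side of the bipartition, so restricting to it introduces no chromatic overhead.
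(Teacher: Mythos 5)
Your argument is correct and follows essentially the same path as the paper's proof: fix the iteration where the enumerated packing is the restriction of an optimal packing to $B_{\ci}$, observe that the optimal packing itself induces a feasible solution of $\textsf{assignment}(\ci,\cA^\star,W)$ with value $|W|$, apply Lemma~\ref{lem:integralityLP} to conclude at most $t\le100$ items of $W$ are left out with total size $\le 100\eps<1$, and then account for the residual $\textsf{\Generic}$ call. On the one step you flagged as delicate: the cleanest justification (and the one the paper uses implicitly via Steps~\ref{step:GenericA}--\ref{step:GenericB}) is not monotonicity of $\textsf{\Generic}$ nor that $\textsf{\Generic}$ ``is at least as good as'' an arbitrary packing — neither holds for a fixed algorithm — but rather that the residual bipartite instance is $2$-colored by $\textsf{\Generic}$ into the two sides of the bipartition, so its output is at most $\#\textsf{FFD}(\text{residual}\cap W)+\#\textsf{FFD}(\text{residual}\cap(T_{\ci}\setminus W)) \le 1 + \#\textsf{FFD}(\ci\cap(T_{\ci}\setminus W))$, the first summand being $\le 1$ because the leftover $W$-items have total size below $1$ and the second being an equality of instances since $\textsf{items}(\cC)\subseteq B_{\ci}\cup W$.
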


	We use the algorithm of \cite{epstein2008bin} for BPB as a subroutine; this handles well instances with optimum bounded by $3$. Specifically,   \begin{lemma}
	\label{lem:7/4}
	There is a polynomial time algorithm \textnormal{\textsf{App-BPB}} that is a $\frac{7}{4}$-approximation for \textnormal{BPB}. 
\end{lemma}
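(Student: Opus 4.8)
The plan is to invoke the $\frac{7}{4}$-approximation algorithm of Epstein and Levin~\cite{epstein2008bin} for bipartite conflict graphs, so the task reduces to recalling why their construction achieves this ratio. Since a bipartite graph $G_{\ci}$ satisfies $\chi(G_{\ci}) \le 2$, the plain coloring-based scheme of~\cite{JO97} packs $\ci$ in at most $2 + |L_{\ci}| + \frac{3}{2} s(M_{\ci}) + \frac{4}{3} s(S_{\ci})$ bins (cf.\ Lemma~\ref{lem:Generic}), which is too weak precisely when $|L_{\ci}|$ is large. The key observation exploited in~\cite{epstein2008bin} is that a bin holds at most one large item, so one should first decide how the large (and medium) items are grouped, and only then fill in the rest.

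First I would form an auxiliary graph whose vertices are the items of $L_{\ci} \cup M_{\ci}$, with an edge between two items iff they are non-adjacent in $G_{\ci}$ and their sizes sum to at most $1$ (this is essentially $H_{\ci}$ of Definition~\ref{def:GraphMatching}), and compute a maximum matching $\cm$ in it. Each matched pair becomes one bin and each unmatched item of $L_{\ci} \cup M_{\ci}$ becomes a singleton bin; as in the proof of Lemma~\ref{lem:Matching}, any optimal packing induces a matching of the same type, so this uses at most $\OPT(\ci)$ bins. The small items are then distributed greedily (e.g., by \textsf{FFD}) onto these bins and into extra bins.

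Second --- and this is where bipartiteness is used --- I would introduce an item weight function, setting $w(v) = 1$ for $v \in L_{\ci}$ and $w(v)$ slightly larger than $s(v)$ for medium and small items, with the surcharges tuned as in~\cite{epstein2008bin}, and then establish two inequalities: every bin of the returned packing carries weight at least a fixed constant (so that $\#\cB \le w(I) + O(1)$), and $w(I) \le \frac{7}{4}\,\OPT(\ci) + O(1)$ by a case analysis over the possible contents of a single bin in an optimal solution. The latter bound is the crux: bipartiteness restricts which large and medium items may coexist in one optimal bin (items on opposite sides of the bipartition, subject to the size budget), which is what keeps the per-bin weight from exceeding $\frac{7}{4}$.

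The main obstacle is exactly this weight analysis --- designing the surcharges and checking, across all bin types, that the weight of an optimal bin never exceeds $\frac{7}{4}$ while every produced bin has weight at least the threshold. This is the technically delicate part of~\cite{epstein2008bin}; since we only invoke \textsf{App-BPB} as a black box on instances with $\OPT(\ci) \le 3$, it suffices to quote their result rather than reprove it.
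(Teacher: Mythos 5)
Your proposal matches the paper's treatment: Lemma~\ref{lem:7/4} is not proved in the paper at all but is simply stated as a direct citation of the result of Epstein and Levin~\cite{epstein2008bin}, and you correctly conclude at the end that quoting their $\frac{7}{4}$-approximation for bipartite conflict graphs as a black box suffices. Your sketch of their matching-and-weights argument is extra context beyond what the paper provides, but the operative content --- invoke \textsf{App-BPB} via the citation --- is exactly what the paper does.
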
 Using the above, we have all subroutines for the algorithm. The algorithm computes $ \textsf{\Generic}$, $ \textsf{App-BPB}$ on the instance; in addition, the algorithm computes $\textsf{Assign}$  on both sides of the bipartition. Finally, the algorithm returns the best packing resulting from the above four attempts. The pseudocode is given in Algorithm~\ref{alg:abs}.

\begin{algorithm}[h]
	\caption{$\textsf{Abs-BPB}(\ci)$}
	\label{alg:abs}
		\begin{algorithmic}[1]
		\State{$\cA_1 \leftarrow \textsf{\Generic}(\ci)$, $\cA_2 \leftarrow \textsf{App-BPB}(\ci)$, $\cA_3 \leftarrow \textsf{Assign}(\ci,X_I \cap T_{\ci})$, $\cA_4 \leftarrow \textsf{Assign}(\ci,Y_I \cap T_{\ci})$.\label{step:absComp}}
		\State{Return $\argmin_{\cA \in \{\cA_1, \cA_2, \cA_3, \cA_4\}} \#\cA$.\label{step:absRet}}
		\end{algorithmic}
\end{algorithm}

The next theorem follows by a case analysis for different sizes of the optimum. For $\OPT(\ci) \leq 3$ we use Lemma~\ref{lem:7/4}. For $3<\OPT(\ci) \leq 100$, if there are many tiny items from both sides of the bipartition, the preferred packing is by Algorithm \textsf{\Generic}; otherwise, Algorithm \textsf{Assign} for the side of the bipartition dominating the number of tiny items gives the improve bound. Finally, for $\OPT(\ci)>100$ an absolute $\frac{5}{3}$-approximation can be easily obtained by the asymptotic guarantee of Algorithm \textsf{\Generic}. 
\begin{theorem}
	\label{lem:5/3}
	Algorithm~\ref{alg:abs} is a $\frac{5}{3}$-approximation for \textnormal{BPB}. 
\end{theorem}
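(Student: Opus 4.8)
The plan is to prove Theorem~\ref{lem:5/3} by a case analysis on the size of $\OPT(\ci)$, using the three subroutines bundled in Algorithm~\ref{alg:abs}. Let $\cB$ be the packing returned by \textsf{Abs-BPB}; since the algorithm takes the minimum over $\cA_1,\dots,\cA_4$, it suffices to exhibit, for each range of $\OPT(\ci)$, one of these four packings whose size is at most $\frac{5}{3}\OPT(\ci)$. First I would dispose of the small-optimum case: if $\OPT(\ci) \le 3$, then $\cA_2 = \textsf{App-BPB}(\ci)$ is a $\frac{7}{4}$-approximation by Lemma~\ref{lem:7/4}, and for $\OPT(\ci) \in \{1,2,3\}$ one checks that $\frac{7}{4}\OPT(\ci)$ rounded down is at most $\frac{5}{3}\OPT(\ci)$ rounded down — e.g. for $\OPT=3$ we get $\#\cA_2 \le \lfloor 21/4\rfloor = 5 = \lfloor 5\rfloor = \frac{5}{3}\cdot 3$; the cases $\OPT\in\{1,2\}$ are immediate. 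Next, the large-optimum case: if $\OPT(\ci) > 100$, then by Lemma~\ref{lem:Generic} (the bipartite bound) $\#\cA_1 \le \frac{3}{2}|L_{\ci}| + \frac{4}{3}(\OPT(\ci)-|L_{\ci}|) + o(\OPT(\ci))$, and since $|L_{\ci}| \le \OPT(\ci)$ the right-hand side is at most $\frac{3}{2}\OPT(\ci) + o(\OPT(\ci)) < \frac{5}{3}\OPT(\ci)$ for $\OPT(\ci)$ large enough; one should make the threshold $100$ consistent with the $o(\cdot)$ term (possibly raising it, but the statement is about a fixed constant so this is harmless, or one invokes Lemma~\ref{lem:rothvos} directly to get the $1.02$ factor inside each color class).

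The heart of the argument is the middle range $3 < \OPT(\ci) \le 100$. Here I would split on whether the tiny items are concentrated on one side of the bipartition. Write $W_X = X_I \cap T_{\ci}$ and $W_Y = Y_I \cap T_{\ci}$, and without loss of generality suppose $s(W_Y) \le s(W_X)$, so that the tiny items in $W_Y$ have total size at most $\frac{1}{2}s(T_{\ci})$. Consider $\cA_3 = \textsf{Assign}(\ci, W_X)$: since $\OPT(\ci) \le 100$, some iteration of the for-loop packs the big items exactly as in an optimal solution, so by Lemma~\ref{lem:assign} we obtain a packing of size at most $\OPT(\ci) + 1 + \#\textsf{FFD}(\ci \cap (T_{\ci}\setminus W_X)) = \OPT(\ci) + 1 + \#\textsf{FFD}(\ci \cap W_Y)$. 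By Lemma~\ref{lem:FF}, $\#\textsf{FFD}(\ci \cap W_Y) \le (1 + 2\eps)s(W_Y) + 1 \le (1+2\eps)\cdot\frac{1}{2}s(T_{\ci}) + 1$. Since $s(T_{\ci}) \le s(I) \le \OPT(\ci)$, this gives $\#\cA_3 \le \OPT(\ci) + \frac{1}{2}(1+2\eps)\OPT(\ci) + 2 \le \frac{3}{2}\OPT(\ci) + 2 + \eps\OPT(\ci)$, which is at most $\frac{5}{3}\OPT(\ci)$ provided $\frac{1}{6}\OPT(\ci) \ge 2 + \eps\OPT(\ci)$, i.e. for $\OPT(\ci) \ge 13$ or so (with $\eps = 0.0001$). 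The remaining sliver $3 < \OPT(\ci) < 13$ is handled by combining $\cA_2$ (the $\frac{7}{4}$-approximation) with $\cA_1$: for these constantly many values one verifies $\min\{\lfloor\frac{7}{4}k\rfloor, \#\cA_1\} \le \lfloor\frac{5}{3}k\rfloor$, or alternatively one notes that the \textsf{Assign} bound $\frac{3}{2}\OPT(\ci)+2$ already beats $\frac{5}{3}\OPT(\ci)$ once $\OPT(\ci)\ge 12$ and checks $\OPT\in\{4,\dots,11\}$ directly against the $\frac{7}{4}$ bound (e.g. $\OPT=4$: $\lfloor 7\rfloor=7>\lfloor 20/3\rfloor=6$, which fails, so for these truly small cases one genuinely needs the \textsf{Assign} bound or a sharper argument).

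The main obstacle I anticipate is making the boundary cases airtight: the clean inequalities $\frac{3}{2}\OPT + O(1) \le \frac{5}{3}\OPT$ and $\frac{7}{4}\OPT \le \frac{5}{3}\OPT + O(1)$ each only cover part of the range $3 < \OPT(\ci) \le 100$, and the overlap where neither bound alone suffices (roughly $\OPT(\ci)$ between $4$ and $12$, where the additive constants in the \textsf{Assign} bound matter and the $\frac{7}{4}$ ratio is too weak) must be closed — either by a more careful accounting inside \textsf{Assign} that shaves the additive $+1$'s using the structure of bins with big items, or by observing that when $\OPT(\ci)$ is this small the total size $s(I)$ is small and \textsf{FFD} on the off-side tiny items actually fits into zero or one extra bins rather than $\frac{1}{2}s(T_{\ci})+1$. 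I would handle this by stating the key inequality as: for $3 < \OPT(\ci) \le 100$, $\#\cB \le \min\{\#\cA_2, \OPT(\ci) + 1 + \#\textsf{FFD}(\ci\cap W_Y)\}$ where $W_Y$ is the lighter side, and then checking the at-most-$97$ integer values of $\OPT(\ci)$ against whichever of the two bounds is smaller, noting that the \textsf{FFD} term is itself bounded in terms of $\OPT(\ci)$ and $s(W_Y) \le \frac{1}{2}s(I) \le \frac{1}{2}\OPT(\ci)$. This reduces the theorem to a finite verification plus the two asymptotic estimates already supplied by Lemmas~\ref{lem:Generic},~\ref{lem:assign}, and~\ref{lem:7/4}.
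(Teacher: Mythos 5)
Your overall plan---case analysis on the size of $\OPT(\ci)$, using \textsf{App-BPB} for tiny optima, \textsf{Assign} for the mid-range, and \textsf{\Generic} asymptotically---matches the paper's structure, and the pieces you have are correct: the $\OPT\le 3$ check against $\frac{7}{4}$, the $\OPT\ge 13$ bound via $\OPT+1+\#\textsf{FFD}(\ci\cap W_Y)$ with $s(W_Y)\le\frac12 s(T_\ci)\le\frac12\OPT(\ci)$, and the large-$\OPT$ estimate.

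However, there is a genuine gap that you yourself flag but do not close: the range roughly $4\le\OPT(\ci)\le 12$, where $\frac{7}{4}\OPT>\lfloor\frac{5}{3}\OPT\rfloor$ and the naive \textsf{Assign} bound $\OPT+2+\frac12(1+2\eps)\OPT$ exceeds $\frac{5}{3}\OPT$. Your proposed remedy (``checking the at-most-$97$ integer values'') does not work as stated because your bound on $\#\textsf{FFD}(\ci\cap W_Y)$ is itself too weak for small $\OPT$; the finite check has nothing sharper to verify against. The paper closes this range by a second case split that you do not have: fix a threshold (for $\OPT=4$, ask whether $s(X_I\cap T_\ci)\le 1$ or $s(Y_I\cap T_\ci)\le 1$; for $5\le\OPT\le 100$, whether $\#\textsf{FFD}$ on one side is $\le t-1$ with $t=\lfloor\frac{2\OPT}{3}\rfloor$). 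If some side is light, \textsf{Assign} on the opposite side gives $\OPT+1+(t-1)\le\frac{5}{3}\OPT$. If both sides are heavy, then the big items have total size $s(B_\ci)<2$, and here the paper invokes a refined FFD analysis (Claim~\ref{claim:4}: $(1-\eps)(\#\textsf{FFD}(\ci)-1)\le s(I)$ whenever $s(T_\ci)>1$ and $s(B_\ci)<2$) to show that \textsf{\Generic} alone, not \textsf{Assign}, achieves the $\frac{5}{3}$ bound in this complementary case. This ``both sides heavy $\Rightarrow$ $s(B_\ci)<2$ $\Rightarrow$ sharp FFD guarantee'' chain is the missing idea; without it, your middle-range argument does not go through. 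Your direction (b) (``when $\OPT$ is small, $s(I)$ is small, so FFD fits the off-side tiny items into few bins'') is the right instinct, but it needs to be converted into the explicit dichotomy above, and the case where \emph{neither} side has small tiny size must be handed to \textsf{\Generic} with the sharper FFD bound rather than to \textsf{Assign}.

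A smaller point on the $\OPT(\ci)>100$ case: your asymptotic bound via Lemma~\ref{lem:Generic} is correct in spirit but has an unspecified $o(\OPT)$ term, so it does not attach cleanly to the fixed threshold $100$. The paper instead uses the concrete bound $\OPT(\ci\cap X_I)+\OPT(\ci\cap Y_I)\le\frac{3}{2}\OPT(\ci)+1$ (Claim~\ref{claim:epsh2}) combined with the $1.02$-factor guarantee of \textsf{AsymptoticBP} from Lemma~\ref{lem:rothvos}, giving $1.02\cdot(1.5\OPT+1)\le 1.55\OPT<\frac{5}{3}\OPT$ once $\OPT\ge 100$. You gesture at the $1.02$ route, but applying it per color class requires the $\frac{3}{2}$-sum claim, which you should cite explicitly; trivially bounding each color-class optimum by $\OPT(\ci)$ gives $2.04\OPT$, which is useless.
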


\subsection{Deferred Proofs from Section~\ref{sec:53}}
\label{sec:proofs53}

	{\bf Proof of Lemma~\ref{lem:integralityLP}:}
	By the results of \cite{lenstra1990approximation,shmoys1993approximation} and also demonstrated in \cite{chekuri2005polynomial}, there is a polynomial time algorithm that given a basic feasible solution $\bar{x}$ for \eqref{eq:LP53}, finds a non feasible solution $\bar{y}$ for \eqref{eq:LP53} such that the following holds. \begin{enumerate}
		\item If $\bar{x}_{i,v} = 0$ , then $\bar{y}_{i,v} = 0$, and if $\bar{x}_{i,v} = 1$, then $\bar{y}_{i,v} = 1$.\label{cond:y}
		
		\item For each $i \in [t]$ there is at most one item $u_i \in W$ such that $\bar{y}_{i,u_i} \in (0,1)$ and  $s(A_i)+\sum_{u \in W \setminus \{u_i\}} \bar{x}_{i,v} \cdot s(v) \leq 1$.\label{cond:y0}
	\end{enumerate} Using $\bar{y}$, we define the following packing. For all $i \in [t]$ define $B_i = A_i \cup \{v \in W~|~\bar{y}_{i,v} = 1\}$. Observe that $\cB = (B_1,\ldots, B_t)$ is indeed a packing by \eqref{eq:LP53}, Condition~\ref{cond:y}, and Condition~\ref{cond:y0}. Finally, we note that $|\textnormal{\textsf{items}}(\cB) \setminus B_{\ci}| \geq \OPT(	\textnormal{\textsf{assignment}}(\ci,\cA,W))-t$ by Condition~\ref{cond:y} and Condition~\ref{cond:y0}. $\blacksquare$

\hfill \break

\noindent	{\bf Proof of Lemma~\ref{lem:assign}:}
	We use two auxiliary claims. \begin{myclaim}
		\label{claim:Assign:running}
		The running time of algorithm~\ref{alg:ptas} on $\ci,\eps$ is ${\textnormal{poly}(\ci)}$. 
	\end{myclaim}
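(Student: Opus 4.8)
The plan is to show that the running time of Algorithm~\ref{alg:Assign} on an input $\ci$ is $\textnormal{poly}(|\ci|)$, treating $\eps = 0.0001$ as a fixed constant. Everything outside the \textbf{for} loop of Step~\ref{step:AS:for} is clearly polynomial: the initial call $\textnormal{\textsf{\Generic}}(\ci)$ in Step~\ref{step:AS:init} runs in polynomial time by Lemma~\ref{lem:Generic}, and inside each iteration the call $\textnormal{\textsf{Round}}(\ci,\cA,W)$ in Step~\ref{step:AS:compute} is polynomial by Lemma~\ref{lem:integralityLP}, while the call to $\textnormal{\textsf{\Generic}}$ in Step~\ref{step:AS:concate} is again polynomial by Lemma~\ref{lem:Generic}. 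So it suffices to bound the number of loop iterations and the cost of enumerating its index set, i.e., the set of packings $\cA$ of $\ci \cap B_{\ci}$ with $\#\cA \le 100$.

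The key observation is that each big item has size strictly larger than $\eps$, so any bin holds fewer than $1/\eps$ big items, and hence any packing of $\ci \cap B_{\ci}$ into at most $100$ bins contains fewer than $100/\eps = 10^6$ big items in total. This gives a dichotomy. If $|B_{\ci}| \ge 10^6$, then there is no packing of $\ci \cap B_{\ci}$ into at most $100$ bins, so the loop body is never executed. Otherwise $|B_{\ci}| < 10^6$, which is a constant because $\eps$ is fixed; I would then enumerate all maps $f\colon B_{\ci} \to [100]$ (there are at most $100^{|B_{\ci}|} \le 100^{10^6}$ of them, a constant), and for each one test in polynomial time whether every class $f^{-1}(j)$ is an independent set of $G_{\ci}$ of total size at most $1$, keeping only the valid ones. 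This enumerates exactly the packings the loop ranges over, in $O(1)\cdot\textnormal{poly}(|\ci|)$ time, and the loop runs $O(1)$ times.

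Putting the two cases together, the loop contributes at most a constant number of iterations, each of polynomial cost, so the whole algorithm runs in $\textnormal{poly}(|\ci|)$ time. I do not anticipate a real obstacle here; the only point that needs to be stated with care is the dichotomy above — it is what prevents the naive $100^{|B_{\ci}|}$ count from being super-polynomial, since a large $B_{\ci}$ forces $\OPT(\ci \cap B_{\ci}) > 100$ and makes the loop vacuous.
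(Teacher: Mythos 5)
Your proof is correct and follows essentially the same approach as the paper's: bound each iteration by Lemmas~\ref{lem:integralityLP} and~\ref{lem:Generic}, then argue that the enumeration over packings of $\ci\cap B_{\ci}$ into at most $100$ bins is over a constant-size index set because $\eps$ is fixed. The only cosmetic difference is that the paper phrases the dichotomy as $s(B_{\ci})>100$ versus $s(B_{\ci})\le 100$ (the latter forcing $|B_{\ci}|\le 100/\eps$ since each big item has size $>\eps$), while you phrase it directly as $|B_{\ci}|\ge 100/\eps$ versus $|B_{\ci}|<100/\eps$; these are equivalent given the definition of big items, and both yield the same $100^{100/\eps}$ constant bound on the number of candidate packings.
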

	\begin{proof}
		By Lemma~\ref{lem:integralityLP} and Lemma~\ref{lem:Generic} the running time of each iteration of  the {\bf for} loop in Step~\ref{step:AS:for} is polynomial. Thus, to conclude that the overall running time is polynomail we bound the number of iterations of the loop. If $s(B_{\ci})> 100$, then there are no packings of $\ci \cap B_{\ci}$ with at most $100$ bins; thus, in this case the running time is polynomial by Step~\ref{step:AS:for} . Otherwise, by the definition of big items it holds that $|B_{\ci}| \leq \frac{100}{\eps}$. Thus, in this case the number of packings of $\ci \cap B_{\ci}$ with at most $100$ bins is bounded by $100^{{\frac{100}{\eps}}}$; since $\eps = 0.0001$, the number of such packins is a constant. Therefore,  the running time is polynomial by Step~\ref{step:AS:for}. 
	\qed \end{proof}

	\begin{myclaim}
		\label{claim:Assign:2}
		If $\OPT(\ci) \leq 100$, $\cB$ is a packing for $\ci$ such that $\#\cB \leq \OPT(\ci)+1+\#\textnormal{\textsf{FFD}}(\ci \cap (T_{\ci} \setminus W))$
	\end{myclaim}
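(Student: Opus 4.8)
\textbf{Proof plan for Claim~\ref{claim:Assign:2}.}

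The plan is to exhibit one iteration of the \textbf{for} loop in Algorithm~\ref{alg:Assign} in which the big items are packed exactly as in an optimal solution, and then show that this iteration already produces a packing with the claimed bound; since Algorithm~\ref{alg:Assign} returns the best packing encountered (Steps~\ref{step:AS:if}--\ref{step:AS:AB}), this suffices. First I would fix an optimal packing $\OPT = (O_1, \ldots, O_t)$ of $\ci$ with $t = \OPT(\ci) \leq 100$. Restricting each bin to its big items, $\cA = (O_1 \cap B_{\ci}, \ldots, O_t \cap B_{\ci})$ is a packing of $\ci \cap B_{\ci}$ with $\#\cA = t \leq 100$, so this $\cA$ is considered in Step~\ref{step:AS:for} (possibly after discarding empty bins, which only helps). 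In that iteration the algorithm computes $\cC \leftarrow \textsf{Round}(\ci,\cA,W)$.

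Next I would lower-bound $|\textsf{items}(\cC) \setminus B_{\ci}|$. The key point is that $\OPT$ itself witnesses a feasible (integral, hence fractional) solution to $\textsf{assignment}(\ci,\cA,W)$ of value $|W|$: simply set $\bar{x}_{i,v} = 1$ whenever $v \in W$ is placed in bin $O_i$ by $\OPT$. This is feasible because each such $v$ forms an independent set with $O_i \supseteq O_i \cap B_{\ci} = A_i$ (so $v \in Q_i$), the size constraint $s(A_i) + \sum_{v} \bar x_{i,v} s(v) \le s(O_i) \le 1$ holds, and each $v$ is assigned to at most one bin. Hence $\OPT(\textsf{assignment}(\ci,\cA,W)) \geq |W|$, and Lemma~\ref{lem:integralityLP} gives $|\textsf{items}(\cC) \setminus B_{\ci}| \geq |W| - t$, with $\#\cC = \#\cA = t$. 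So $\cC$ is a packing of $\ci \cap (B_{\ci} \cup W')$ for some $W' \subseteq W$ with $|W \setminus W'| \leq t \leq 100$, using exactly $t$ bins.

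Then in Step~\ref{step:AS:concate} the remaining items $\ci \setminus (\textsf{items}(\cC) \cup B_{\ci})$ — namely $(T_{\ci} \setminus W) \cup (W \setminus W')$ — are packed by \textsf{\Generic}. The leftover items of $W$ have total size at most $100 \cdot \eps = 0.01$ (each is tiny), so they fit in a single extra bin; more precisely, I would split $\ci \setminus (\textsf{items}(\cC) \cup B_{\ci})$ and argue $\#\textsf{\Generic}(\cdot)$ on this instance is at most $\#\textsf{FFD}(\ci \cap (T_{\ci}\setminus W)) + 1$, using that \textsf{\Generic} runs \textsf{FFD} on each color class, that the conflict graph of tiny items from $W$ together with $T_{\ci}\setminus W$ has chromatic number at most $2$ (it is bipartite), and that the $W$-side tiny leftovers of total size $\le 0.01$ add at most one bin — alternatively, pack these leftovers by first-fit into the bins produced for $T_{\ci}\setminus W$ or open one new bin. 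Combining, $\#\cB_{\cA} = \#\cC + \#\textsf{\Generic}(\cdots) \leq t + \#\textsf{FFD}(\ci \cap (T_{\ci}\setminus W)) + 1 = \OPT(\ci) + 1 + \#\textsf{FFD}(\ci \cap (T_{\ci}\setminus W))$, and since $\cB$ is the minimum over all iterations it satisfies the same bound. Feasibility of $\cB$ for $\ci$ is immediate since every $\cB_{\cA}$ computed in Step~\ref{step:AS:concate} is a packing of all of $I$, and the initial $\cB = \textsf{\Generic}(\ci)$ is as well. The main obstacle is the bookkeeping in the last step: being careful that the $W$-side items dropped by the rounding are genuinely of negligible total size (bounded by $t\eps$, not by $|W|\eps$) so that they truly cost at most one additional bin on top of $\#\textsf{FFD}(\ci \cap (T_{\ci}\setminus W))$, and handling the trivial edge case $\OPT(\ci) \le 1$.
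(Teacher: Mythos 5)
Your proposal is correct and follows essentially the same argument as the paper's proof: both fix an optimal packing $\OPT$, take $\cA = (O_1 \cap B_\ci, \ldots, O_t \cap B_\ci)$ as the considered iteration, observe that $\OPT$ witnesses $\OPT(\textsf{assignment}(\ci,\cA,W)) = |W|$ so Lemma~\ref{lem:integralityLP} leaves at most $t \leq 100$ items of $W$ unpacked, bound their total size by $100\eps \leq 1$ so they add one bin, and charge the rest to $\#\textsf{FFD}(\ci \cap (T_\ci \setminus W))$. The caveats you flag (leftover size $t\eps$ not $|W|\eps$, cleanly separating the two sides inside \textsf{\Generic}) are exactly the points the paper handles via equation~\eqref{eq:s(W)} and~\eqref{eq:assign:last}.
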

	
	\begin{proof}
		Let $\OPT = (A_1, \ldots, A_n)$ be an optimal packing for $\ci$. Because $n = \OPT(\ci) \leq 100$, then there is an iteration of the {\bf for} loop of Step~\ref{step:AS:for} such that the considered packing $\cA = \cA(\OPT)$ of $\cI \cap B_{\ci}$ satisfies: 
		
		\begin{equation}
			\label{eq:A}
			\cA(\OPT) = (A_1 \cap B_{\ci}, \ldots, A_n \cap B_{\ci}).
		\end{equation} 
	
	Let $\cC(\OPT) = \textnormal{\textsf{Round}}(\ci,\cA(\OPT),W)$. We use the following inequality.
	
	 \begin{equation}
			\label{eq:s(W)}
			\begin{aligned}
				s\left(W \setminus (\textsf{items}(\cC(\OPT)) \cup B_{\ci})\right) \leq{} & \eps \cdot \left| W \setminus (\textsf{items}(\cC) \cup B_{\ci})  \right| \\
				\leq{} & \eps \cdot (|W|-(|W|-\OPT(\ci))) \\
				 \leq{} & \eps \cdot 100 \\
				 \leq{} & 1.
			\end{aligned}	
		\end{equation}

		The first inequality holds because $I \setminus (\textsf{items}(\cC) \cup B_{\ci}) \subseteq T_{\ci}$. The second inequality holds since by \eqref{eq:LP53} it holds that the optimum of $\textnormal{\textsf{assignment}}(\ci,\cA(\OPT),W)$ is $|W|$. Therefore, by Lemma~\ref{lem:integralityLP} it holds that $ \textnormal{\textsf{Round}}(\ci,\cA,W)$ is a packing in $\#\cA(\OPT) \leq \OPT(\ci)$ bins of $B_{\ci}$ and also at least $|W|-\OPT(\ci)$ items from $W$. The third inequality holds because $\OPT(\ci) \leq 100$. The last inequality holds since $\eps < 0.01$. 
		
		\begin{equation}
			\label{eq:assign:last}
			\begin{aligned}
				\#\cB \leq{} & \cB_{\cA(\OPT)}\\
				={} & \# \cC(\OPT) \oplus \textnormal{\textsf{\Generic}}(\cI \setminus ( \textnormal{\textsf{items}}(\cC(\OPT)) \cup B_{\ci}))\\
				={} & \# \cA(\OPT) \oplus \textnormal{\textsf{\Generic}}(\cI \setminus ( \textnormal{\textsf{items}}(\cC(\OPT)) \cup B_{\ci}))\\
				\leq{} & \OPT(\ci) + \#\textnormal{\textsf{FFD}}\left(  (\cI \setminus ( \textnormal{\textsf{items}}(\cC(\OPT)) \cup B_{\ci}))) \cap W \right) \\ +{} &  \#\textnormal{\textsf{FFD}}\left(  (\cI \setminus ( \textnormal{\textsf{items}}(\cC(\OPT)) \cup B_{\ci}))) \cap (T_{\ci} \setminus W) \right)\\
				\leq{} & \OPT(\ci)+1+\#\textnormal{\textsf{FFD}}(\ci \cap (T_{\ci} \setminus W)).
			\end{aligned}
		\end{equation} 
	
	The first inequality holds by Step~\ref{step:AS:if}. The first equality holds by Step~\ref{step:AS:compute} and Step~\ref{step:AS:concate}. The second equality holds by Lemma~\ref{lem:integralityLP}. The second inequality holds by \eqref{eq:A}, Step~\ref{step:GenericA}, and Step~\ref{step:GenericB} of Algorithm~\ref{alg:generic}. The last inequality holds because $	s\left(W \setminus (\textsf{items}(\cC(\OPT)) \cup B_{\ci})\right) \leq 1$ by \eqref{eq:s(W)}, thus the algorithm assign the items in $W \setminus (\textsf{items}(\cC(\OPT)) \cup B_{\ci})$ to a single bin. 
\qed	\end{proof}

	Observe that Algorithm~\ref{alg:Assign} returns a packing of $\ci$ also if $\OPT(\ci) \geq 100$ by Step~\ref{step:AS:init}, Step~\ref{step:AS:compute}, Step~\ref{step:AS:concate}, and Step~\ref{step:AS:AB}. Therefore, the proof of Lemma~\ref{lem:assign} follows by Claim~\ref{claim:Assign:running}, Claim~\ref{claim:Assign:2}. $\blacksquare$

		{\bf Proof of Lemma~\ref{lem:5/3}:} Let $\ci = (I,s,E)$ be a BPB instance. First, observe that the algorithm returns a packing for $\ci$ in polynomial time by Lemma~\ref{lem:7/4}, Lemma~\ref{lem:Generic}, Lemma~\ref{lem:assign}, Step~\ref{step:absComp}, and Step~\ref{step:absComp} of Algorithm~\ref{alg:abs}. For the approximation guarantee, we split the proof into several complementary cases by the optimum value of $\ci$. Let $\cB = \textsf{Abs-BPB}(\ci)$.

	\begin{myclaim}
		\label{claim:OPT=3}
		If $\OPT(\ci) \leq 3$ then Algorithm~\ref{alg:abs} returns a packing of at most $\frac{5}{3} \cdot \OPT(\ci)$ bins. 
	\end{myclaim}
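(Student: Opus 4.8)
The plan is to observe that, for instances with optimum at most $3$, the $\frac{7}{4}$-approximation subroutine $\textsf{App-BPB}$ invoked in Step~\ref{step:absComp} of Algorithm~\ref{alg:abs} already achieves the desired bound, so no analysis of the structure of $\ci$ (tiny items, the bipartition, and so on) is required in this case. Since Algorithm~\ref{alg:abs} returns, in Step~\ref{step:absRet}, a packing of minimum cardinality among $\cA_1,\cA_2,\cA_3,\cA_4$, and $\cA_2=\textsf{App-BPB}(\ci)$, it suffices to show that $\#\cA_2\leq\frac{5}{3}\OPT(\ci)$; then $\#\cB\leq\#\cA_2$ gives the claim.

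First I would invoke Lemma~\ref{lem:7/4}: $\textsf{App-BPB}$ returns a feasible packing of $\ci$ with $\#\cA_2\leq\frac{7}{4}\OPT(\ci)$. Since $\#\cA_2$ is a non-negative integer, this yields $\#\cA_2\leq\big\lfloor\frac{7}{4}\OPT(\ci)\big\rfloor$. It then remains to verify the elementary inequality $\big\lfloor\frac{7}{4}k\big\rfloor\leq\frac{5}{3}k$ for every integer $k\in\{0,1,2,3\}$: for $k=0$ the instance is empty and both sides equal $0$; for $k=1$ we have $1\leq\frac{5}{3}$; for $k=2$ we have $\big\lfloor\frac{7}{2}\big\rfloor=3\leq\frac{10}{3}$; and for $k=3$ we have $\big\lfloor\frac{21}{4}\big\rfloor=5\leq 5$. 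Instantiating with $k=\OPT(\ci)$ gives $\#\cB\leq\#\cA_2\leq\big\lfloor\frac{7}{4}\OPT(\ci)\big\rfloor\leq\frac{5}{3}\OPT(\ci)$, which proves the claim.

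I do not expect a genuine obstacle here; the only delicate point is that the bound is tight at $\OPT(\ci)=3$, where $5=\frac{5}{3}\cdot 3$, so the $\frac{7}{4}$-guarantee of $\textsf{App-BPB}$ is used with no slack. Note also that this argument is specific to $\OPT(\ci)\leq 3$ and already fails for $\OPT(\ci)=4$ (there $\big\lfloor\frac{7}{4}\cdot 4\big\rfloor=7>\frac{20}{3}$), which is exactly why the remaining cases in the proof of Theorem~\ref{lem:5/3} must instead exploit the bipartite structure together with Algorithms \textsf{\Generic} and \textsf{Assign} applied to the two sides of the bipartition.
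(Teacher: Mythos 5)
Your proof is correct and takes essentially the same approach as the paper: the paper likewise invokes Lemma~\ref{lem:7/4} and then uses integrality to deduce that $\#\textsf{App-BPB}(\ci)$ is at most $1$, $3$, or $5$ when $\OPT(\ci)$ is $1$, $2$, or $3$, respectively, and that the algorithm returns the minimum over the four candidate packings. Your write-up is slightly cleaner in that it states the floor step and the $k=0$ case explicitly, whereas the paper phrases the per-case bounds as equalities when it really means upper bounds.
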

	\begin{proof}
		By Lemma~\ref{lem:7/4}, it holds that  $\#\textsf{App-BPB}(\ci) \leq \frac{7}{4} \cdot \OPT(\ci)$. Observe that in this case, for $\OPT(\ci) = 1$ it holds that $\#\textsf{App-BPB}(\ci) = 1$; for $\OPT(\ci) = 2$ it holds that $\#\textsf{App-BPB}(\ci) = 3$; and for $\OPT(\ci) = 3$ it holds that $\#\textsf{App-BPB}(\ci) = 5$. Therefore, the proof follows by Step~\ref{step:absComp} and Step~\ref{step:absRet} of Algorithm~\ref{alg:abs}. \qed
	\end{proof}

	\begin{myclaim}
		\label{claim:OPT=4aux}
		If $\OPT(\ci) = 4$, $s(Y_{I} \cap T_{\ci}) > 1$, and  $s(X_{I} \cap T_{\ci}) > 1$, then $ \#\cB \leq \frac{5}{3} \cdot \OPT(\ci)$. 
	\end{myclaim}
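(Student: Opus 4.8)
The plan is to bound $\#\cB$ by $\#\cA_1=\#\textsf{\Generic}(\ci)$ and to show $\#\cA_1\le 6$; since $\OPT(\ci)=4$ this already gives $\#\cB\le 6<\tfrac{20}{3}=\tfrac53\,\OPT(\ci)$, so the subroutines producing $\cA_2,\cA_3,\cA_4$ are not needed for this case. First I would record two elementary size bounds. From $\OPT(\ci)=4$ we have $s(I)\le 4$. From the hypothesis, $s(T_{\ci})=s(X_I\cap T_{\ci})+s(Y_I\cap T_{\ci})>2$, hence $s(B_{\ci})=s(I)-s(T_{\ci})<2$.

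Next I would examine the colouring computed in Algorithm~\ref{alg:generic}. Since $G_{\ci}$ is bipartite, $\textsf{\Generic}$ uses a minimal colouring $\cC=(C_1,\dots,C_k)$ of $G_{\ci}$ with $k\le 2$, which we take to be the bipartition $(X_I,Y_I)$ when $k=2$ and the single class $I$ when $G_{\ci}$ is edgeless. In either case every colour class $C_i$ satisfies $s(C_i\cap T_{\ci})>1$ — by the hypotheses $s(X_I\cap T_{\ci})>1$, $s(Y_I\cap T_{\ci})>1$ when $k=2$, and because $s(T_{\ci})>2>1$ when $k=1$ — and $s(C_i\cap B_{\ci})\le s(B_{\ci})<2$. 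Since $\eps=0.0001<0.1$, Lemma~\ref{claim:4} applies to the \textnormal{BP} instance $(C_i,s)$ and yields $(1-\eps)\bigl(\#\textsf{FFD}((C_i,s))-1\bigr)\le s(C_i)$, i.e.\ $\#\textsf{FFD}((C_i,s))\le \tfrac{s(C_i)}{1-\eps}+1$.

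Finally, because $\textsf{\Generic}$ keeps for each colour class the smaller of its \textsf{FFD} and \textsf{AsymptoticBP} packings and then concatenates, and $\cC$ partitions $I$, I would conclude $\#\cA_1\le \sum_{i=1}^{k}\#\textsf{FFD}((C_i,s))\le \sum_{i=1}^{k}\bigl(\tfrac{s(C_i)}{1-\eps}+1\bigr)=\tfrac{s(I)}{1-\eps}+k\le \tfrac{4}{0.9999}+2<6.01$, so $\#\cA_1\le 6$ (as it is an integer), whence $\#\cB\le\#\cA_1\le 6<\tfrac{20}{3}=\tfrac53\,\OPT(\ci)$. The only delicate point — and the place where the hypothesis is essential — is that Lemma~\ref{claim:4} requires \emph{each} colour class to carry more than one unit of tiny mass; this is exactly what $s(X_I\cap T_{\ci})>1$ and $s(Y_I\cap T_{\ci})>1$ supply, using that $\textsf{\Generic}$ colours a bipartite instance along its bipartition. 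Without this balance (e.g.\ if one class held all the tiny items) the only available per-class estimate would be the one from Lemma~\ref{lem:FF}, of order $2\,s(I)+1$ bins for that class, which need not stay below $\tfrac{20}{3}$.
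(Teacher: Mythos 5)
Your argument is correct, and it is a cleaner, more direct route than the paper's. The paper proves the same bound $\#\cB\le 6$ by contradiction: it assumes $\#\textsf{\Generic}(\ci)\ge 7$, deduces that one of $\#\textnormal{\textsf{FFD}}(\ci\cap X_I)$, $\#\textnormal{\textsf{FFD}}(\ci\cap Y_I)$ is at least $4$, and then splits into cases according to whether that count is $\ge 5$ or exactly $4$, using Lemma~\ref{claim:4} in the first case and Lemma~\ref{claim:4} together with Observation~\ref{lem:FFD:one} in the second. You instead apply Lemma~\ref{claim:4} to \emph{each} colour class $(C_i,s)$ at once (the hypotheses $s(C_i\cap T_{\ci})>1$ and $s(C_i\cap B_{\ci})\le s(B_{\ci})<2$ hold for both), sum the per-class bounds $\#\textsf{FFD}((C_i,s))\le\tfrac{s(C_i)}{1-\eps}+1$ to get $\#\cA_1\le\tfrac{s(I)}{1-\eps}+k\le\tfrac{4}{0.9999}+2<7$, and conclude $\#\cA_1\le 6$ by integrality. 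This eliminates the contradiction framing and the $5$-vs-$4$ case split entirely, and it makes transparent where each hypothesis is used. Both proofs implicitly rely on the same mild assumption (shared with the paper's proof) that for a BPB instance \textsf{\Generic} colours along the given bipartition $(X_I,Y_I)$; you are the one who flags the degenerate one-colour case explicitly, which the paper glosses over. The only thing worth stating a touch more carefully is that $s(I)\le\OPT(\ci)=4$ because each bin has capacity $1$, but that is standard and you clearly intend it.
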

	\begin{proof} We use the following inequality
		\begin{equation}
			\label{eq:sB}
			s(B_{\ci}) \leq s(I)-s(T_{\ci}) \leq \OPT(\ci)-s(Y_{I} \cap T_{\ci})-s(X_{I} \cap T_{\ci}) < 4-1-1 = 2. 
		\end{equation} The second inequality holds since $s(I) \leq \OPT(\ci)$.
		Assume towards a contradiction that $\#\textsf{\Generic}(\ci) \geq 7$. Therefore, by Step~\ref{step:GenericA}, Step~\ref{step:GenericB}, and Step~\ref{step:GenericReturn} of Algorithm~\ref{alg:generic} it holds that $\#\textnormal{\textsf{FFD}}(\ci \cap X_I) \geq 4$ or $\#\textnormal{\textsf{FFD}}(\ci \cap Y_I) \geq 4$. We split the proof into several complementary cases. \begin{enumerate}
			\item $\#\textnormal{\textsf{FFD}}(\ci \cap X_I) \geq 5$. Then, by Claim~\ref{claim:4} it holds that $s(X_I) \geq 4-4 \cdot \eps > 3$ (recall that $\eps<0.1$); this is a contradiction that $s(X_{I} \cap T_{\ci}) > 1$.\label{case:one}
			
			\item $\#\textnormal{\textsf{FFD}}(\ci \cap Y_I) \geq 5$. Symmetric argument as in Case~\ref{case:one}.

			\item $\#\textnormal{\textsf{FFD}}(\ci \cap X_I) = 4$.\label{case:two} Then, \begin{equation}
				\label{eq:101}
				s(Y_I) \leq s(I)-s(X_I) \leq \OPT(\ci)- (3-3 \cdot \eps) \leq 4-2.99 = 1.01.
			\end{equation} The second inequality holds by Claim~\ref{claim:4}. By \eqref{eq:101} and Observation~\ref{lem:FFD:one} it holds that $\#\textnormal{\textsf{FFD}}(\ci \cap Y_I) \leq 2$.
			Hence, we reach a contradiction, since by Step~\ref{step:GenericA}, Step~\ref{step:GenericB}, and Step~\ref{step:GenericReturn} of Algorithm~\ref{alg:generic} the returned packing is of at most $\#\textnormal{\textsf{FFD}}(\ci \cap X_I)+\#\textnormal{\textsf{FFD}}(\ci \cap Y_I) = 4+2 = 6$ bins. 
			
			\item $\#\textnormal{\textsf{FFD}}(\ci \cap Y_I) \geq 5$. Symmetric argument as in Case~\ref{case:one}.
		\end{enumerate} 
		
		Therefore, we reach a contradiction that $\#\textsf{\Generic}(\ci) \geq 7$. Thus, by Step~\ref{step:absComp} and Step~\ref{step:absRet} it holds that $\#\cB \leq 6$. \qed
	\end{proof}
	\begin{myclaim}
		\label{claim:OPT=4}
		If $\OPT(\ci) = 4$, then Algorithm~\ref{alg:abs} returns a packing of at most $\frac{5}{3} \cdot \OPT(\ci)$ bins. 
	\end{myclaim}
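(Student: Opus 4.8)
The plan is to reduce everything to showing that Algorithm~\ref{alg:abs} outputs a packing with at most $6$ bins; this suffices because $6 \le \frac{20}{3} = \frac{5}{3}\cdot\OPT(\ci)$ and the number of bins is an integer. I would split into cases according to the total size of the tiny items on each side of the bipartition. If both $s(X_I \cap T_{\ci}) > 1$ and $s(Y_I \cap T_{\ci}) > 1$, then Claim~\ref{claim:OPT=4aux} already gives $\#\cB \le 6$, so that case is settled.

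It remains to treat the case in which at least one of $s(X_I \cap T_{\ci})$, $s(Y_I \cap T_{\ci})$ is at most $1$; by the symmetry between $X_I$ and $Y_I$ in Algorithm~\ref{alg:abs} (which runs \textsf{Assign} on both sides) assume $s(X_I \cap T_{\ci}) \le 1$. I would then apply Lemma~\ref{lem:assign} to the call $\cA_4 = \textsf{Assign}(\ci, Y_I \cap T_{\ci})$: since $\OPT(\ci) = 4 \le 100$, Lemma~\ref{lem:assign} gives
$$
\#\cA_4 \le \OPT(\ci) + 1 + \#\textsf{FFD}\!\left(\ci \cap \big(T_{\ci} \setminus (Y_I \cap T_{\ci})\big)\right).
$$
Because $G_{\ci}$ is bipartite, $T_{\ci}$ is partitioned into $X_I \cap T_{\ci}$ and $Y_I \cap T_{\ci}$, so $T_{\ci}\setminus(Y_I\cap T_{\ci}) = X_I\cap T_{\ci}$, an instance of total item size $s(X_I \cap T_{\ci}) \le 1$; hence \textsf{FFD} packs it in at most one bin, i.e.\ $\#\textsf{FFD}(\ci\cap(X_I\cap T_{\ci}))\le 1$. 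Thus $\#\cA_4 \le 4 + 1 + 1 = 6$, and by Step~\ref{step:absComp} and Step~\ref{step:absRet} of Algorithm~\ref{alg:abs} the returned packing $\cB$ satisfies $\#\cB \le \#\cA_4 \le 6$. The case $s(Y_I \cap T_{\ci}) \le 1$ is symmetric, using $\cA_3 = \textsf{Assign}(\ci, X_I \cap T_{\ci})$.

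There is no real obstacle here: the substantive work sits in Claim~\ref{claim:OPT=4aux}, and what remains is a clean case split. The only points requiring attention are verifying that $\OPT(\ci)\le 100$ so that Lemma~\ref{lem:assign} (and the bounded enumeration inside \textsf{Assign}) is applicable, which is immediate, and the bookkeeping that $T_{\ci}$ splits along the bipartition so that the residual \textsf{FFD} instance in Lemma~\ref{lem:assign} is exactly the tiny items on the opposite side, whose total size is at most $1$ in the remaining case and therefore fits into a single bin.
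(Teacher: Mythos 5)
Your proof is correct and matches the paper's argument essentially step for step: the same three-way case split on $s(X_I\cap T_{\ci})\le 1$, $s(Y_I\cap T_{\ci})\le 1$, or both exceeding $1$, invoking Lemma~\ref{lem:assign} with the opposite side as $W$ in the first two cases and Claim~\ref{claim:OPT=4aux} in the third. The only cosmetic difference is that you make explicit the integrality observation that $\#\cB\le 6\le\frac{20}{3}$ suffices, which the paper leaves implicit.
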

	\begin{proof}
		We split the proof into several sub-cases. \begin{enumerate}
			\item $s(X_{I} \cap T_{\ci}) \leq 1$. Then, $$\#\cB \leq \OPT(\ci)+1+\#\textnormal{\textsf{FFD}}(\ci \cap (T_{\ci} \setminus W)) \leq \OPT(\ci)+2 = 6.$$ The first inequality holds by Lemma~\ref{lem:assign}, Step~\ref{step:absComp}, and Step~\ref{step:absRet} of Algorithm~\ref{alg:abs}. The second inequality holds by Step~\ref{step:FFDif} and Step~\ref{step:FFDnotnew} of Algorithm $\textnormal{\textsf{FFD}}$ and that $s(X_{I} \cap T_{\ci}) \leq 1$ (i.e., no second bin is opened in the course of \textsf{FFD} as all items fit in a single bin). The third inequality holds since $\OPT(\ci) = 4$.\label{case:1}
			
			\item $s(Y_{I} \cap T_{\ci}) \leq 1$. Then, $\#\cB \leq  6$ by symmetric arguments to Case~\ref{case:1}.  
			
			\item $s(Y_{I} \cap T_{\ci}) > 1$ and  $s(X_{I} \cap T_{\ci}) > 1$. The proof follows by Claim~\ref{claim:OPT=4aux}. \qed
		\end{enumerate}  
	\end{proof}

	We define $t = \floor{\frac{2 \cdot \OPT(\ci)}{3}}$ as the extra number of bins allowed to reach our approximation guarantee. 	\begin{myclaim}
		\label{claim:OPT=5aux}
		If $5 \leq \OPT(\ci) \leq 100$, $\#\textnormal{\textsf{FFD}}(X_{I} \cap T_{\ci}) > t-1$, and  $\#\textnormal{\textsf{FFD}}(Y_{I} \cap T_{\ci}) > t-1$,	then $(\#\textnormal{\textsf{FFD}}(X_{I})-1) \cdot (1-\eps) \leq s(X_{I})$ and $(\#\textnormal{\textsf{FFD}}(Y_{I})-1) \cdot (1-\eps) \leq s(Y_{I})$.
	\end{myclaim}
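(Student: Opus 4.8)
The plan is to deduce the claim from Lemma~\ref{claim:4}, applied separately to the two \textnormal{BP} instances $\ci \cap X_I$ and $\ci \cap Y_I$; these are legitimate \textnormal{BP} instances because each side of the bipartition is an independent set in $G_{\ci}$, so running \textsf{FFD} on them carries its usual meaning. Fix $W \in \{X_I, Y_I\}$. The tiny items of $\ci \cap W$ (with respect to the fixed $\eps = 0.0001$) are exactly $W \cap T_{\ci}$ and its big items are exactly $W \cap B_{\ci}$, and the conclusion of Lemma~\ref{claim:4} for $\ci \cap W$ reads $(1-\eps)(\#\textsf{FFD}(W)-1) \le s(W)$ --- precisely the inequality we want. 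So it remains to verify the two hypotheses of that lemma for both $W = X_I$ and $W = Y_I$: that $s(W \cap T_{\ci}) > 1$ and that $s(W \cap B_{\ci}) < 2$.

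First I would extract a quantitative lower bound on the mass of the tiny items from the \textsf{FFD} assumptions. Since $\OPT(\ci) \ge 5$ we have $t = \floor{2\OPT(\ci)/3} \ge 3$, so the hypotheses give $\#\textsf{FFD}(W \cap T_{\ci}) \ge t \ge 3$ for each side $W$. As every item of $W \cap T_{\ci}$ has size at most $\eps$, whenever \textsf{FFD} opens a new bin for an item $v$ each already-open bin $B$ must satisfy $s(B)+s(v) > 1$ (Step~\ref{step:FFDif} of Algorithm~\ref{alg:FFD}), hence $s(B) > 1-\eps$, and this persists as \textsf{FFD} proceeds. Therefore all but at most the last-opened bin of $\textsf{FFD}(W \cap T_{\ci})$ are loaded above $1-\eps$, so
$$ s(W \cap T_{\ci}) > (1-\eps)\bigl(\#\textsf{FFD}(W \cap T_{\ci}) - 1\bigr) \geq (1-\eps)(t-1) \geq 2(1-\eps) > 1, $$
which already establishes the first hypothesis for both sides.

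Next I would bound the big items. Writing $s(I) = s(X_I \cap T_{\ci}) + s(X_I \cap B_{\ci}) + s(Y_I \cap T_{\ci}) + s(Y_I \cap B_{\ci})$ and using $s(I) \le \OPT(\ci)$ together with the bound just obtained for both sides,
$$ s(X_I \cap B_{\ci}) \leq \OPT(\ci) - s(X_I \cap T_{\ci}) - s(Y_I \cap T_{\ci}) < \OPT(\ci) - 2(1-\eps)(t-1). $$
So it suffices to check the elementary inequality $2(1-\eps)(t-1) > \OPT(\ci) - 2$, which then forces $s(X_I \cap B_{\ci}) < 2$, and symmetrically $s(Y_I \cap B_{\ci}) < 2$. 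Here I would use $\floor{2\OPT(\ci)/3} \ge (2\OPT(\ci)-2)/3$, so that $t-1 \ge (2\OPT(\ci)-5)/3$; substituting this reduces the target inequality to $\OPT(\ci) - 4 > 2\eps\,(2\OPT(\ci)-5)$, which holds throughout the range $5 \le \OPT(\ci) \le 100$ since its left side is at least $1$ while its right side is at most $2\eps \cdot 195 < 0.04$.

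With both hypotheses of Lemma~\ref{claim:4} verified for $\ci \cap X_I$ and for $\ci \cap Y_I$, the lemma yields $(\#\textsf{FFD}(X_I)-1)(1-\eps) \le s(X_I)$ and $(\#\textsf{FFD}(Y_I)-1)(1-\eps) \le s(Y_I)$, which is the statement. I expect the only mildly delicate point to be the bookkeeping around the floor function in the third step --- in particular making $t-1 \ge (2\OPT(\ci)-5)/3$ tight enough to cover the boundary case $\OPT(\ci) = 5$, where $t = 3$ and $2(1-\eps)(t-1) = 3.9996$ only just exceeds $\OPT(\ci) - 2 = 3$; everything else is a routine computation.
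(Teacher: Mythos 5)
Your proof is correct and follows essentially the same route as the paper's: use the hypotheses on $\#\textnormal{\textsf{FFD}}(W\cap T_{\ci})$ to lower-bound the mass of tiny items on each side, combine with $s(I)\leq\OPT(\ci)$ to upper-bound the mass of big items, and then invoke Lemma~\ref{claim:4} on the two induced \textnormal{BP} instances $\ci\cap X_I$ and $\ci\cap Y_I$. Two small ways your writeup is a bit tidier than the paper's: you derive the tiny-items estimate $s(W\cap T_{\ci})>(1-\eps)\bigl(\#\textsf{FFD}(W\cap T_{\ci})-1\bigr)$ directly from the \textsf{FFD} mechanics (which for a tiny-only subinstance is exactly the content of Claim~\ref{claim:r2}), rather than re-invoking Lemma~\ref{claim:4} as the paper implicitly does; and you track a strict per-side bound $s(W\cap B_{\ci})<2$, whereas the paper's displayed chain ends at $s(B_{\ci})\leq 2$ even though Lemma~\ref{claim:4} requires strict inequality --- the paper's intermediate step $3+\tfrac13+0.1-\tfrac{\OPT(\ci)}{3}<2$ does in fact give strictness for $\OPT(\ci)\geq5$, so there is no real gap, but your bookkeeping makes it explicit.
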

	\begin{proof}
		We use the following inequality. \begin{equation}
			\label{eq:sB2}
			\begin{aligned}
				s(B_{\ci}) \leq{} & s(I)-s(T_{\ci}) \\
				\leq{} & \OPT(\ci)-s(Y_{I} \cap T_{\ci})-s(X_{I} \cap T_{\ci}) \\ 
				\leq{} & \OPT(\ci)-2 \cdot \left(t-1\right) \cdot (1-\eps) \\ 
				\leq{} & \OPT(\ci)-2 \cdot t+2+2\eps \cdot (t-1)\\ 
				\leq{} & \OPT(\ci)-2 \cdot \left(\frac{2}{3} \cdot \OPT(\ci)-\frac{2}{3}\right)+2+200 \cdot \eps\\ 
				\leq{} &3+\frac{1}{3}+0.1-\frac{\OPT(\ci)}{3}\\ 
				\leq{} &3+\frac{2}{3} - \frac{\OPT(\ci)}{3}\\ 
				\leq{} &2\\ 
			\end{aligned}
		\end{equation} The third inequality holds because $\#\textnormal{\textsf{FFD}}(X_{I} \cap T_{\ci}) > t-1$,  $\#\textnormal{\textsf{FFD}}(Y_{I} \cap T_{\ci}) > t-1$, and by Claim~\ref{claim:4}. The fifth inequality holds since $\floor{\frac{2 \cdot \OPT(\ci)}{3}} \geq \frac{2 \cdot \OPT(\ci)}{3}-\frac{2}{3}$. The sixth inequality holds since $\eps < 0.0001$. The last inequality holds since $\OPT(\ci) \geq 5$. Because $\#\textnormal{\textsf{FFD}}(X_{I} \cap T_{\ci}) > t-1$ and  $\#\textnormal{\textsf{FFD}}(Y_{I} \cap T_{\ci}) > t-1$, then for $\OPT(\ci) \geq 5$ it holds that $s(T_{\ci} \cap X_I) > 1$ and $s(T_{\ci} \cap Y_I) > 1$ ($\textsf{FFD}$ returns a packing with a single bin for instances with total size at most one). Thus, by \eqref{eq:sB2} and that $s(T_{\ci} \cap X_I) > 1$ and $s(T_{\ci} \cap Y_I) > 1$, by Claim~\ref{claim:4} the proof follows. 
	\qed \end{proof}
	
	\begin{myclaim}
		\label{claim:OPT=5}
		If $5 \leq  \OPT(\ci) \leq 100$, then Algorithm~\ref{alg:abs} returns a packing of at most $\frac{5}{3} \cdot \OPT(\ci)$ bins. 
	\end{myclaim}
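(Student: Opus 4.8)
The plan is to mimic the three-way case split already used for smaller optima in Claims~\ref{claim:OPT=4aux}--\ref{claim:OPT=5aux}, now organised around whether the tiny items on each side of the bipartition fit into at most $t-1$ bins under \textsf{FFD}, where $t = \floor{\frac{2\cdot\OPT(\ci)}{3}}$. Since Algorithm~\ref{alg:abs} returns the packing with the fewest bins among $\cA_1 = \textsf{\Generic}(\ci)$, $\cA_2 = \textsf{App-BPB}(\ci)$, $\cA_3 = \textsf{Assign}(\ci,X_I\cap T_{\ci})$ and $\cA_4 = \textsf{Assign}(\ci,Y_I\cap T_{\ci})$, it is enough to exhibit, in each case, one of these four packings with at most $\OPT(\ci)+t \leq \OPT(\ci)+\frac{2\cdot\OPT(\ci)}{3} = \frac{5}{3}\OPT(\ci)$ bins.

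Suppose first that $\#\textsf{FFD}(\ci\cap(X_I\cap T_{\ci}))\leq t-1$. I would then apply Lemma~\ref{lem:assign} with $W = Y_I\cap T_{\ci}$; because the bipartition partitions $T_{\ci}$ we have $T_{\ci}\setminus W = X_I\cap T_{\ci}$, and since $5\leq\OPT(\ci)\leq 100$ the lemma gives
\[
\#\cA_4 \;\leq\; \OPT(\ci)+1+\#\textsf{FFD}\bigl(\ci\cap(X_I\cap T_{\ci})\bigr) \;\leq\; \OPT(\ci)+t .
\]
The case $\#\textsf{FFD}(\ci\cap(Y_I\cap T_{\ci}))\leq t-1$ is symmetric, now using $\cA_3 = \textsf{Assign}(\ci,X_I\cap T_{\ci})$ and $W = X_I\cap T_{\ci}$.

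In the only remaining case both of these FFD counts exceed $t-1$, so Claim~\ref{claim:OPT=5aux} applies and yields $(\#\textsf{FFD}(\ci\cap X_I)-1)(1-\eps)\leq s(X_I)$ and $(\#\textsf{FFD}(\ci\cap Y_I)-1)(1-\eps)\leq s(Y_I)$. As in the proof of Claim~\ref{claim:OPT=4aux}, for the BPB instance $\ci$ Algorithm~\textsf{\Generic} colours $G_{\ci}$ with the given bipartition, hence
\[
\#\cA_1 \;\leq\; \#\textsf{FFD}(\ci\cap X_I)+\#\textsf{FFD}(\ci\cap Y_I) \;\leq\; \frac{s(X_I)+s(Y_I)}{1-\eps}+2 \;\leq\; \frac{\OPT(\ci)}{1-\eps}+2 ,
\]
using $s(X_I)+s(Y_I)=s(I)\leq\OPT(\ci)$. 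It then remains only to verify the elementary inequality $\frac{\OPT(\ci)}{1-\eps}+2\leq\frac{5}{3}\OPT(\ci)$ for $\eps=0.0001$ and every integer $\OPT(\ci)\geq 5$: since $\frac{5}{3}-\frac{1}{1-\eps}>0.66$, this reduces to $\OPT(\ci)\geq 2/0.66$, which holds. Combining the three cases with Steps~\ref{step:absComp}--\ref{step:absRet} of Algorithm~\ref{alg:abs} proves the claim.

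The only real difficulty I anticipate is bookkeeping: one must run \textsf{Assign} with the side whose tiny items are \emph{large} in total size (so that the leftover handed to \textsf{FFD} is exactly the provably small side), and check that the additive constant ``$+1$'' from Lemma~\ref{lem:assign}, the ``$+2$'' from the two FFD packings, and the floor hidden in $t$ all stay within the slack $\frac{2}{3}\OPT(\ci)$. This slack is comfortably large precisely when $\OPT(\ci)\geq 5$, which is why the claim is confined to that range; and $\OPT(\ci)\leq 100$ is needed so that Lemma~\ref{lem:assign} and the bounded enumeration inside \textsf{Assign} are available.
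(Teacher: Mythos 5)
Your proposal is correct and follows essentially the same three-way case split as the paper: it appeals to Lemma~\ref{lem:assign} via $\cA_3$ or $\cA_4$ when one of the two FFD counts on tiny items is at most $t-1$, and falls back to Claim~\ref{claim:OPT=5aux} plus the $\textsf{\Generic}$ bound in the remaining case. The only cosmetic difference is that you keep $\frac{1}{1-\eps}$ intact rather than loosening to $1+2\eps$ as the paper does before bounding by $s(I)\leq\OPT(\ci)$; the final arithmetic lands in the same place.
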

	\begin{proof}
		The proof follows by the several sub-cases below. \begin{enumerate}
			\item $\#\textnormal{\textsf{FFD}}(X_{I} \cap T_{\ci}) \leq t-1$. Then, $$\#\cB \leq \OPT(\ci)+1+\#\textnormal{\textsf{FFD}}(\ci \cap (T_{\ci} \setminus W)) \leq \OPT(\ci)+1+t-1 = \OPT(\ci)+t \leq \frac{5}{3} \cdot \OPT(\ci).$$ The first inequality holds by Lemma~\ref{lem:assign}, Step~\ref{step:absComp}, and Step~\ref{step:absRet} of Algorithm~\ref{alg:abs}. The second inequality holds by the definition of Algorithm $\textnormal{\textsf{FFD}}$ and that $\#\textnormal{\textsf{FFD}}(X_{I} \cap T_{\ci}) \leq t-1$ (i.e., no second bin is opened in the course of FFD).\label{case:11}
			
			\item $s(Y_{I} \cap T_{\ci}) \leq 1$. Then, $\#\cB \leq \frac{5}{3} \cdot \OPT(\ci)$ by symmetric arguments to Case~\ref{case:11}.  
			
			\item $\#\textnormal{\textsf{FFD}}(X_{I} \cap T_{\ci}) > t-1$ and  $\#\textnormal{\textsf{FFD}}(Y_{I} \cap T_{\ci}) > t-1$.  Therefore,

			\begin{equation*}
				\begin{aligned}
					\#\cB \leq{} & \textsf{\Generic}(\ci) \\
					\leq{} & \#\textnormal{\textsf{FFD}}(X_{I}) \oplus \#\textnormal{\textsf{FFD}}(Y_{I})\\
					\leq{} & \frac{s(X_I)}{1-\eps}+1+\frac{s(Y_I)}{1-\eps}+1\\
					\leq{} & (s(X_I)+s(Y_I)) \cdot (1+2 \cdot \eps)+2\\
					\leq{} & s(I) \cdot 1.0002+2\\
					\leq{} & \OPT(\ci) +0.0002 \cdot \OPT(\ci)+2\\
					\leq{} & \OPT(\ci) +3\\
					\leq{} & \frac{5}{3} \cdot \OPT(\ci).\\
				\end{aligned}
			\end{equation*} The first inequality holds by Step~\ref{step:absRet} of Algorithm~\ref{alg:abs}. The second inequality holds by Step~\ref{step:GenericA}, Step~\ref{step:GenericB}, and Step~\ref{step:GenericReturn} of Algorithm~\ref{alg:generic}. The third inequality holds by Claim~\ref{claim:OPT=5aux}. The sixth inequality holds since $\eps = 0.0001$. The last inequalities hold since $5 \leq \OPT(\ci) \leq 100$. 
			
		\end{enumerate}  
	\qed \end{proof}

	In the following proof, we use the next result of \cite{epstein2008bin}.

	\begin{myclaim}
		\label{claim:epsh2}
		For all \textnormal{BPP} instance $\ci = (I,s,E)$, it holds that $\OPT(\ci \cap X_I)+\OPT(\ci \cap Y_I) \leq \frac{3}{2} \cdot \OPT(\ci)+1$. 
	\end{myclaim}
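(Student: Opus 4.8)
The plan is to take an optimal packing $\OPT(\ci) = (O_1,\ldots,O_m)$ of $\ci$, where $m = \OPT(\ci)$, and to build good packings of the restricted instances $\ci \cap X_I$ and $\ci \cap Y_I$ by regrouping the ``halves'' $O_j \cap X_I$ and $O_j \cap Y_I$. Since $X_I$ and $Y_I$ are independent sets in $G_{\ci}$, the instances $\ci \cap X_I$ and $\ci \cap Y_I$ carry no conflicts, so it suffices to exhibit packings of these item sets that respect only the size bound; in particular any subset $O_j \cap X_I \subseteq X_I$ is a legal bin for $\ci \cap X_I$ because $s(O_j \cap X_I) \leq s(O_j) \leq 1$, and likewise on the $Y$ side.

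The key observation is that for every $j \in [m]$ at least one of $s(O_j \cap X_I)$, $s(O_j \cap Y_I)$ is at most $\frac{1}{2}$, since $s(O_j \cap X_I) + s(O_j \cap Y_I) = s(O_j) \leq 1$. Define
\[
B = \{\, j \in [m] ~:~ s(O_j \cap X_I) > \tfrac{1}{2} \,\}, \qquad B' = \{\, j \in [m] ~:~ s(O_j \cap Y_I) > \tfrac{1}{2} \,\}.
\]
The observation gives $B \cap B' = \emptyset$, hence $|B| + |B'| \leq m$.

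To pack $\ci \cap X_I$: for each $j \in B$ put $O_j \cap X_I$ in a bin of its own, using $|B|$ bins; for each $j \in [m] \setminus B$ we have $s(O_j \cap X_I) \leq \frac{1}{2}$, so these $m - |B|$ sets can be packed two per bin, using $\lceil (m-|B|)/2 \rceil \leq \tfrac{m-|B|}{2} + \tfrac{1}{2}$ bins. Therefore
\[
\OPT(\ci \cap X_I) \;\leq\; |B| + \frac{m-|B|}{2} + \frac{1}{2} \;=\; \frac{m}{2} + \frac{|B|}{2} + \frac{1}{2},
\]
and by the symmetric construction on $Y_I$, $\OPT(\ci \cap Y_I) \leq \frac{m}{2} + \frac{|B'|}{2} + \frac{1}{2}$. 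Adding these two inequalities and using $|B| + |B'| \leq m$ yields $\OPT(\ci \cap X_I) + \OPT(\ci \cap Y_I) \leq m + \tfrac{|B|+|B'|}{2} + 1 \leq \tfrac{3}{2}m + 1 = \tfrac{3}{2}\OPT(\ci) + 1$, which is the claim. There is no genuine obstacle; the only points requiring a little care are the rounding loss in the two pairing steps (which is exactly the ``$+1$'' in the bound) and the disjointness of $B$ and $B'$, both of which are immediate from the half-bin observation.
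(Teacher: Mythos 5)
Your argument is correct. Note, however, that the paper does not prove Claim~\ref{claim:epsh2} at all; it cites the bound as a result of Epstein and Levin~\cite{epstein2008bin} and uses it as a black box, so there is no in-paper proof to compare against. What you have written is therefore a self-contained elementary derivation. The logic is sound: for an optimal packing $(O_1,\ldots,O_m)$ each trace $O_j\cap X_I$ is a legal bin for $\ci\cap X_I$ (it is a subset of the independent side $X_I$ and has size at most $s(O_j)\le 1$); the sets $B=\{j: s(O_j\cap X_I)>\tfrac12\}$ and $B'=\{j: s(O_j\cap Y_I)>\tfrac12\}$ are disjoint because $s(O_j\cap X_I)+s(O_j\cap Y_I)=s(O_j)\le1$, giving $|B|+|B'|\le m$; pairing up the at-most-half pieces two per bin and absorbing the ceiling loss into the additive term yields $\OPT(\ci\cap X_I)\le \tfrac m2+\tfrac{|B|}2+\tfrac12$ and symmetrically for $Y_I$, and summing gives $\tfrac32 m+1$. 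One detail worth stating when two pieces $O_j\cap X_I$ and $O_{j'}\cap X_I$ are merged into a single bin is that the merged set is still independent precisely because it lies inside $X_I$; you do flag this at the start of the proof, so the argument is complete. This is a clean, short replacement for the external citation.
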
 
	Now, 
	\begin{myclaim}
		\label{claim:OPT=6}
		If $ \OPT(\ci) \geq 100$, Algorithm~\ref{alg:abs} returns a packing of at most $\frac{5}{3} \cdot \OPT(\ci)$ bins. 
	\end{myclaim}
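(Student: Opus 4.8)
The plan is to show that already the first candidate $\cA_1 \leftarrow \textsf{\Generic}(\ci)$ computed in Step~\ref{step:absComp} of Algorithm~\ref{alg:abs} uses at most $\frac{5}{3}\OPT(\ci)$ bins whenever $\OPT(\ci) \geq 100$; then by Step~\ref{step:absRet} the packing returned by \textsf{Abs-BPB} has at most $\#\cA_1 \leq \frac{5}{3}\OPT(\ci)$ bins. To bound $\#\cA_1$, I would invoke the asymptotic guarantee of Algorithm \textsf{\Generic} on bipartite instances stated in Lemma~\ref{lem:Generic}: $\#\cA_1 = \#\textsf{\Generic}(\ci) \leq \frac{3}{2}|L_{\ci}| + \frac{4}{3}\bigl(\OPT(\ci) - |L_{\ci}|\bigr) + o(\OPT(\ci))$.

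Next I would use the elementary fact that each bin of an optimal packing holds at most one item of $L_{\ci}$ (two large items have total size more than $1$), hence $|L_{\ci}| \leq \OPT(\ci)$. Substituting this and rearranging, $\frac{3}{2}|L_{\ci}| + \frac{4}{3}(\OPT(\ci) - |L_{\ci}|) = \frac{4}{3}\OPT(\ci) + \frac{1}{6}|L_{\ci}| \leq \frac{4}{3}\OPT(\ci) + \frac{1}{6}\OPT(\ci) = \frac{3}{2}\OPT(\ci)$, so that $\#\cA_1 \leq \frac{3}{2}\OPT(\ci) + o(\OPT(\ci))$. Since $\frac{5}{3} - \frac{3}{2} = \frac{1}{6} > 0$, it remains only to check that the sublinear error term is at most $\frac{1}{6}\OPT(\ci)$ once $\OPT(\ci) \geq 100$.

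For that last step I would unwind the origin of the $o(\OPT(\ci))$ term: inside \textsf{\Generic}, a minimal coloring of the bipartite graph $G_{\ci}$ has at most two color classes $X_I, Y_I$, and each class $C \in \{X_I,Y_I\}$ is packed using the better of \textsf{FFD} and \textsf{AsymptoticBP}. If $\OPT(\ci \cap C) \geq 100$, then by Lemma~\ref{lem:rothvos} the class is packed in at most $1.02\,\OPT(\ci \cap C)$ bins, an excess of at most $0.02\,\OPT(\ci \cap C)$; if $\OPT(\ci \cap C) < 100$, then by Lemma~\ref{lem:FF} the \textsf{FFD} packing of $\ci \cap C$ uses only a bounded number of bins. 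Combining these two bounds with $\OPT(\ci \cap X_I) + \OPT(\ci \cap Y_I) \leq \frac{3}{2}\OPT(\ci) + 1$ (Claim~\ref{claim:epsh2}) and $\OPT(\ci \cap X_I),\OPT(\ci \cap Y_I) \leq \OPT(\ci)$ yields $\#\cA_1 \leq \frac{3}{2}\OPT(\ci) + o(\OPT(\ci))$ with an explicit, small error term, which is at most $\frac{1}{6}\OPT(\ci)$ in the regime $\OPT(\ci) \geq 100$; the claim then follows from Step~\ref{step:absComp} and Step~\ref{step:absRet} of Algorithm~\ref{alg:abs}. The first two steps are routine; the point needing care — and the main obstacle — is the last one: carrying out the bookkeeping of the additive and multiplicative losses of the (at most two) \textsf{FFD}/\textsf{AsymptoticBP} packings of the color classes tightly enough that their sum falls below $\frac{1}{6}\OPT(\ci)$ already at $\OPT(\ci)=100$, and in particular correctly handling a color class of small optimum, where \textsf{FFD} rather than \textsf{AsymptoticBP} is the selected packing.
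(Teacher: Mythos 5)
Your step 3 — bounding $\#\textsf{\Generic}(\ci)$ directly through the two color classes with $\OPT(\ci\cap X_I)+\OPT(\ci\cap Y_I)\le\frac{3}{2}\OPT(\ci)+1$ (Claim~\ref{claim:epsh2}) and the $1.02$ factor from Lemma~\ref{lem:rothvos} — is exactly what the paper does, and is the whole proof; the paper never touches the asymptotic bound of Lemma~\ref{lem:Generic}. Concretely, the paper computes
\begin{equation*}
\#\cB \le \#\textsf{\Generic}(\ci) \le \#\textsf{AsymptoticBP}(X_I)+\#\textsf{AsymptoticBP}(Y_I) \le 1.02\bigl(\OPT(\ci\cap X_I)+\OPT(\ci\cap Y_I)\bigr) \le 1.02\Bigl(\tfrac{3}{2}\OPT(\ci)+1\Bigr) \le 1.55\cdot\OPT(\ci) \le \tfrac{5}{3}\OPT(\ci),
\end{equation*}
using $\OPT(\ci)\ge 100$ in the penultimate step to absorb the $+1.02$.

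Your steps 1 and 2, however, are a detour that cannot be completed. The bound $\frac{3}{2}|L_{\ci}|+\frac{4}{3}(\OPT(\ci)-|L_{\ci}|)+o(\OPT(\ci))$ of Lemma~\ref{lem:Generic} (really Lemma~\ref{lem:Generic2}) carries an \emph{uncontrolled} sublinear term: it is obtained from the packing-by-types argument in the proof of Lemma~\ref{lem:Generic2}, where the inequality is explicitly stated to hold only ``for a sufficiently large $\OPT(\ci)$'' and the hidden constants (the $+O(1)$ contributions of each of the roughly a dozen sub-packings, multiplied through by the $1.02$ of \textsf{AsymptoticBP}) are never made explicit. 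You cannot conclude from that lemma that the error is $\le\frac{1}{6}\OPT(\ci)$ once $\OPT(\ci)\ge 100$; you would have to reprove the lemma with explicit constants, which is precisely the work step 3 is doing from scratch. So steps 1--2 add nothing and should be dropped; the argument should start from the coloring decomposition inside \textsf{\Generic}.

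On the ``main obstacle'' you flag — a color class with $\OPT(\ci\cap C)<100$, where Lemma~\ref{lem:rothvos} does not formally grant the $1.02$ factor — you are right that this is a real gap, and it is worth noting that the paper's own write-up simply applies the $1.02$ bound to both classes without checking the hypothesis $\OPT\ge 100$ for each class separately. To close it one either bounds the additive excess of \textsf{AsymptoticBP} explicitly (e.g., if the guarantee is $\OPT+c\log\OPT$ then the constant forced by the ``$\le 1.02\OPT$ when $\OPT\ge 100$'' clause makes the excess small in absolute terms even when $\OPT<100$), or, as you suggest, falls back on \textsf{FFD} for the small class and checks that its $\frac{3}{2}\OPT(\ci\cap C)+1$ cost still fits within the slack $\frac{5}{3}\OPT(\ci)-1.55\,\OPT(\ci)$. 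Either way the numbers work out, but the paper does not carry this out, and neither do you — so as written both proofs elide the same case.
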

	\begin{proof}

		\begin{equation*}
			\begin{aligned}
				\#\cB \leq{} & \textsf{\Generic}(\ci) \\
				\leq{} & \#\textnormal{\textsf{AsymptoticBP}}(X_{I}) \oplus \#\textnormal{\textsf{AsymptoticBP}}(Y_{I})\\
				\leq{} & 1.02 \cdot \OPT(\ci \cap X_I)+1.02 \cdot \OPT(\ci \cap Y_I)\\
				={} &1.02 \cdot \left(  \OPT(\ci \cap X_I)+ \OPT(\ci \cap Y_I)\right)\\
				\leq{} &1.02 \cdot (1.5  \cdot \OPT(\ci) +1)\\
				\leq{} & 1.53  \cdot \OPT(\ci) +1.02\\
				\leq{} & 1.53  \cdot \OPT(\ci) +0.02 \cdot \OPT(\ci)\\
				\leq{} & 1.55  \cdot \OPT(\ci)\\
				\leq{} & \frac{5}{3} \cdot \OPT(\ci).\\
			\end{aligned}
		\end{equation*} The first inequality holds by Step~\ref{step:absComp} and Step~\ref{step:absRet} of Algorithm~\ref{alg:abs}. The second inequality holds by Step~\ref{step:GenericA}, Step~\ref{step:GenericB}, and Step~\ref{step:GenericReturn} of Algorithm~\ref{alg:generic}. The third inequality holds by Lemma~\ref{lem:rothvos}. The fourth inequality holds by Claim~\ref{claim:epsh2}. The sixth inequality uses $ 100 \leq \OPT(\ci)$.

	\qed \end{proof}
	Finally, the proof of Theorem~\ref{lem:5/3} follows by Claim~\ref{claim:OPT=3}, Claim~\ref{claim:OPT=4}, Claim~\ref{claim:OPT=5}, and Claim~\ref{claim:OPT=6}. $\blacksquare$
\mysection{Complete Multi-partite Conflict Graphs}
\label{sec:partite}

In this section we give an asymptotic lower bound for {\em Bin Packing with multi-partite conflict graph (BPM)}, and show that it matches the (absolute) approximation guarantee of Algorithm \textsf{\Generic}.  
The main idea is to use the $\frac{3}{2}$-hardness of approximation result for classic bin packing. Specifically, we {\em duplicate} a BP instance $\ci$ with $\OPT(\ci) = 2$ to generate a BPM instance. Interestingly, 
we derive an asymptotic hardness result for the BPM instance using the absolute hardness of BP. The next results resolve the complexity status of BPM.   
\begin{lemma}
	\label{bpm:1}
	For any $\alpha<\frac{3}{2}$, there is no asymptotic $\alpha$-approximation for \textnormal{BPM}, unless \textnormal{P=NP}. 
\end{lemma}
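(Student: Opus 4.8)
The plan is to reduce from classic bin packing in the regime where deciding $\OPT=2$ versus $\OPT\ge 3$ is NP-hard; this is precisely the source of the $\tfrac32$ absolute inapproximability of bin packing~\cite{garey1979computers}, via the standard reduction from \textsc{Partition} (scale a \textsc{Partition} instance so that the total item size equals $2$; it is a yes-instance iff its items pack into $2$ bins, and every no-instance needs at least $3$ bins, since two bins of capacity $1$ with total load $2$ would have to be exactly full).

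Given such a bin packing instance $\ci=(I,s)$ with $s(I)=2$, so that $\OPT(\ci)=2$ in the yes-case and $\OPT(\ci)\ge 3$ in the no-case and it is NP-hard to tell which, and given a parameter $N\in\mathbb{N}$ to be fixed later, I would build the BPM instance $\ci^{N}$ consisting of $N$ disjoint copies $I^{(1)},\dots,I^{(N)}$ of $I$ with the inherited sizes, whose conflict graph is the complete $N$-partite graph with parts $I^{(1)},\dots,I^{(N)}$ — equivalently the complement of the cluster graph whose cliques are the $I^{(j)}$, hence a legal BPM instance. The only structural fact I need is that an independent set of a complete multi-partite graph is contained in a single part, so every packing of $\ci^{N}$ packs each part separately and uses at least $\OPT(\ci)$ bins per part; conversely, packing each part optimally is feasible. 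Therefore $\OPT(\ci^{N})=N\cdot\OPT(\ci)$, which equals $2N$ in the yes-case and is at least $3N$ in the no-case.

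Now suppose, for contradiction, that for some $\alpha<\tfrac32$ there is a polynomial-time asymptotic $\alpha$-approximation $\cA$ for BPM, so $\cA(\II)\le\alpha\,\OPT(\II)+g(\OPT(\II))$ for a fixed function $g$ with $g(m)/m\to 0$. Fix $\alpha'$ with $\alpha<\alpha'<\tfrac32$ and choose a \emph{constant} $N$ large enough that $g(m)\le(\alpha'-\alpha)m$ for all $m\ge 2N$; then $\cA(\II)\le\alpha'\,\OPT(\II)$ whenever $\OPT(\II)\ge 2N$. Running $\cA$ on $\ci^{N}$: in the yes-case $\OPT(\ci^{N})=2N$, so $\cA(\ci^{N})\le 2\alpha' N<3N$, while in the no-case $\cA(\ci^{N})\ge\OPT(\ci^{N})\ge 3N$. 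Hence comparing $\cA(\ci^{N})$ against $3N$ decides whether $\OPT(\ci)=2$; since $N$ is a fixed constant, $\ci^N$ is constructible in polynomial time and the whole procedure runs in polynomial time, contradicting $\mathrm{P}\neq\mathrm{NP}$.

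The only point requiring care is the quantifier management around the additive $o(\OPT)$ term: $N$ must be chosen as a constant depending only on $\alpha$ and on the fixed rate $g$ (not on the input instance), so that the reduction remains polynomial. Everything else — computing $\OPT(\ci^{N})=N\cdot\OPT(\ci)$ and the feasibility of the construction — is immediate from the fact that independent sets of a complete multi-partite graph live inside a single part. I do not expect a genuine obstacle here; the subtlety is purely in phrasing the asymptotic guarantee so that the threshold comparison at $3N$ goes through.
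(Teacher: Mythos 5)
Your proof is correct, and the heart of it --- duplicating a hard bin packing instance $N$ times into a complete multi-partite conflict graph and using $\OPT(\ci^{N}) = N \cdot \OPT(\ci)$ --- is exactly the paper's reduction (the paper states this multiplicativity as its Lemma~\ref{lem:BP}). Where you part ways with the paper is in how the contradiction is extracted. The paper goes through a \emph{search}-type argument: it proves a pigeonhole lemma (Lemma~\ref{lem:asymptotic}) showing that any packing in fewer than $\frac{3}{2}\OPT$ bins must pack some part in at most two bins, then a lemma (Lemma~\ref{lem:dup}) that transfers a packing of a duplicate back to the original items, and finally contradicts the NP-hardness of \emph{finding} a $2$-bin packing for an instance known to have $\OPT=2$ (Lemma~\ref{lem:NP}, which additionally pads the instance with zero-size items to force $|I|\ge n$). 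You instead close the argument with a pure gap/threshold comparison: the yes-case gives $\cA(\ci^{N}) \le 2\alpha' N < 3N$, the no-case gives $\cA(\ci^{N}) \ge \OPT(\ci^{N}) \ge 3N$, and comparing against $3N$ directly decides \textsc{Partition}. This is genuinely more elementary: it removes the three auxiliary lemmas and the padding step entirely, needing only the decision-version hardness (which is where the paper's chain ultimately bottoms out anyway). Your handling of the quantifiers --- picking $\alpha' \in (\alpha, \tfrac{3}{2})$, then a constant $N$ with $g(m) \le (\alpha'-\alpha)m$ for $m \ge 2N$, so that $N$ depends only on $\alpha$ and the assumed algorithm and the construction stays polynomial --- is the one place that needed care, and it is handled correctly. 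Both proofs are sound; yours is the tighter one. The one thing the paper's longer route ``buys'' is an explicit constructive statement (you can actually recover a $2$-bin packing), but that is not needed for the lemma as stated.
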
 
\begin{lemma}
	\label{bpm:2}
	Algorithm~\ref{alg:generic} is a $\frac{3}{2}$-approximation for \textnormal{BPM}. 
\end{lemma}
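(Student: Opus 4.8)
The plan is to exploit the rigid structure of complete multi-partite conflict graphs. Write $G_{\ci}$ as the complement of a cluster graph, so that $I$ is partitioned into the cliques $V_1,\dots,V_k$ of that cluster graph (the ``parts''). The crucial observation is that a subset of $I$ is an independent set in $G_{\ci}$ precisely when it is contained in a single part $V_i$. Hence every bin of every feasible packing lies inside one part, and the problem decomposes across parts: writing $\ci \cap V_i$ for the (classic) bin packing instance induced on $V_i$ (its conflict graph has no edges, since $V_i$ is a part), we get
\begin{equation}
	\label{eq:bpmProof1}
	\OPT(\ci) = \sum_{i=1}^{k} \OPT(\ci \cap V_i).
\end{equation}
The bound ``$\le$'' follows by concatenating optimal packings of the $\ci \cap V_i$, and ``$\ge$'' by restricting an optimal packing of $\ci$ to each part.

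Next I would determine what Algorithm~\ref{alg:generic} does on such an instance. A maximum clique of $G_{\ci}$ consists of exactly one vertex per part (and no two vertices of the same part are adjacent), so $\omega(G_{\ci}) = k$; together with the trivial coloring assigning color $i$ to all of $V_i$, this gives $\chi(G_{\ci}) = k$. Therefore the minimal coloring $\cC = (C_1,\dots,C_k)$ computed in Step~\ref{step:GenericCol} has exactly $k$ classes; each $C_j$, being independent, lies inside a single part, and since each of the $k$ (nonempty) parts needs at least one class while classes do not cross parts, a counting argument forces $\cC$ to equal $(V_1,\dots,V_k)$ up to reindexing. Consequently, for each $i$ the algorithm packs $V_i$ using the better of \textsf{FFD} and \textsf{AsymptoticBP} on $\ci \cap V_i$, and in particular uses at most $\#\textsf{FFD}(\ci \cap V_i)$ bins for part $i$.

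It then remains to invoke the absolute $\tfrac{3}{2}$-approximation guarantee of \textsf{FFD} for classic bin packing~\cite{simchi1994new}, namely $\#\textsf{FFD}(\ci \cap V_i) \le \tfrac{3}{2}\,\OPT(\ci \cap V_i)$ for every $i$, and to sum over the parts:
\[
	\#\cB \;=\; \sum_{i=1}^{k} \#\bigl(\text{packing of } V_i\bigr) \;\le\; \sum_{i=1}^{k} \#\textsf{FFD}(\ci \cap V_i) \;\le\; \tfrac{3}{2} \sum_{i=1}^{k}\OPT(\ci \cap V_i) \;=\; \tfrac{3}{2}\,\OPT(\ci),
\]
where the last equality is \eqref{eq:bpmProof1}. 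This is essentially the entire argument; the only delicate point is the identification of a minimum coloring of $G_{\ci}$ with its partition into parts, which is what makes the grouping of the color classes match the decomposition \eqref{eq:bpmProof1} with an \emph{equality}, rather than with an inequality in the useless direction. Everything else is routine bookkeeping.
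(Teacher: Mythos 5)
Your proof is correct and follows essentially the same route as the paper: decompose $\OPT(\ci)$ as the sum $\sum_i \OPT(\ci\cap V_i)$ over the parts (this is the paper's Lemma~\ref{lem:BP}), note that the minimum coloring computed by Algorithm~\ref{alg:generic} on a complete multi-partite graph is exactly the partition into parts, and then apply the absolute $\frac{3}{2}$-bound for \textsf{FFD} part by part. If anything, you are a bit more careful than the paper at two points the paper glosses over: you observe explicitly why the $\chi(G_{\ci})=k$ color classes must coincide with the parts (the paper just asserts the identification implicitly), and you correctly state the algorithm uses \emph{at most} $\#\textsf{FFD}(\ci\cap V_i)$ bins per part because of the $\argmin$ over \textsf{FFD} and \textsf{AsymptoticBP}, whereas the paper writes it as an equality.
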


We now prove Lemmas~\ref{bpm:1} and~\ref{bpm:2}.  Recall that a multipartite graph $G = (V,E)$ has a partition $V_1, \ldots, V_k$ of $V$ such that for all $i \in [k]$ it holds that $V_i$ is an independent set and for all $u \in V_i,v \in V_j, i,j \in [k], i \neq j$ it holds that $(u,v) \in E$. Therefore, for simplicity we use $\mathcal{I} = \big(I, \{G_1, \ldots,G_n\}, s \big)$ to denote a BPM instance, where $I$ is a set of items, $\{G_1, \ldots,G_n\}$ is the unique partition of $I$ into independent sets, and $s$ is the size function. Also, let $\ci_{j} = (G_i,s)$ be the corresponding BP instance to the $j$-th set $G_j$. Finally, given a packing $\cA$ w.r.t. $\ci$ let $\cA(G_j)$ be the packing w.r.t. $\ci_j$ containing all bins in $\cA$ that contain items from $G_j$ (observe that each bin in $\cA$ contains item from only one $G_j$ by the definition of complete multipartite graph). 
\begin{lemma}
	\label{lem:BP}
	For any \textnormal{BPM} instance $\mathcal{I} = \big(I, \{G_1, \ldots,G_n\}, s \big)$ it holds that $$\OPT(\mathcal{I}) = \sum_{j \in [n]} \OPT(\ci_j).$$
\end{lemma}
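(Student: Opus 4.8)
The plan is to exploit the defining property of a complete multipartite conflict graph: any two items from distinct parts $G_i$ and $G_j$ are adjacent, so a bin --- being an independent set --- contains items of \emph{at most one} part. This makes the instance decompose completely across the parts, and the equality will follow from two matching inequalities.

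First I would establish $\OPT(\mathcal{I}) \le \sum_{j \in [n]} \OPT(\ci_j)$. For each $j \in [n]$ I pick an optimal packing $\cB_j$ of the bin packing instance $\ci_j = (G_j, s)$, so that $\#\cB_j = \OPT(\ci_j)$, and form the concatenation $\cB = \cB_1 \oplus \cdots \oplus \cB_n$. Every bin of $\cB$ is contained in a single part $G_j$, hence is an independent set in the conflict graph of $\mathcal{I}$ (each part is an independent set) and has total size at most $1$; moreover the bins of $\cB$ partition $\bigcup_{j \in [n]} G_j = I$. Thus $\cB$ is a feasible packing of $\mathcal{I}$ using $\sum_{j \in [n]} \OPT(\ci_j)$ bins, which gives the bound.

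Next I would establish the reverse inequality $\OPT(\mathcal{I}) \ge \sum_{j \in [n]} \OPT(\ci_j)$. Let $\cA$ be an optimal packing of $\mathcal{I}$, which we may take to have no empty bins (if $I = \emptyset$ both sides are $0$ and there is nothing to prove). By the opening observation each bin of $\cA$ lies inside one part, so $\cA$ splits into the sub-packings $\cA(G_1), \ldots, \cA(G_n)$, where $\cA(G_j)$ collects the bins of $\cA$ holding items of $G_j$. Each $\cA(G_j)$ is a packing of $\ci_j$ (its bins remain independent sets of total size at most $1$ and together they cover $G_j$), so $\#\cA(G_j) \ge \OPT(\ci_j)$; summing over $j$ yields $\OPT(\mathcal{I}) = \#\cA = \sum_{j \in [n]} \#\cA(G_j) \ge \sum_{j \in [n]} \OPT(\ci_j)$. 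Combining the two inequalities proves the lemma. There is no real obstacle here; the only care needed is to make explicit that the complete multipartite structure forbids a bin from mixing items of two different parts, and to dispose of the trivial empty-instance case.
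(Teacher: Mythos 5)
Your proof is correct and follows the same line of reasoning as the paper's: both rest on the observation that the complete multipartite structure forces each bin to lie in a single part, and both conclude by matching a lower bound (splitting an optimal packing of $\mathcal{I}$ into per-part sub-packings) with an upper bound (concatenating optimal per-part packings). You simply spell out the two inequalities more explicitly than the paper's rather terse version.
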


\begin{proof}
	Let $\cA = (A_1, \ldots, A_m)$ be an optimal packing for $\mathcal{I}$. Since for each $j \in [n]$ the items of $G_j$ must be packed in bins that contain only items from $G_j$, there must be at least $\OPT(\ci_j)$ such bins in the packing $\cA$. If there are more than $\OPT(\ci_j)$ such bins it follows that $\cA$ is not optimal for $\mathcal{I}$. $\blacksquare$ 
\end{proof}

\begin{lemma}
	\label{lem:asymptotic}
	For any \textnormal{BPM} instance $\mathcal{I} = \big(I, \{G_1, \ldots,G_n\}, s \big)$ such that for all $j \in [n]$ it holds that $\OPT(\ci_j) = 2$, and a packing $\cA = (A_1, \ldots, A_m)$ of $\mathcal{I}$ with $\#\cA < 1.5  \cdot \OPT(\mathcal{I})$, there is $j \in [n]$ such that $|\cA(G_j)| \leq 2$ and $\cA(G_j)$ is a packing for $G_j$. 
\end{lemma}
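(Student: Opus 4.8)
The plan is a short counting argument built on Lemma~\ref{lem:BP}. First I would use that lemma together with the hypothesis $\OPT(\ci_j)=2$ for all $j\in[n]$ to fix the optimum of the BPM instance: $\OPT(\mathcal{I})=\sum_{j\in[n]}\OPT(\ci_j)=2n$.

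Next I would record the structural fact already noted in the definition of $\cA(G_j)$: since $G_{\mathcal{I}}$ is complete multipartite, every bin of the packing $\cA$ contains items from at most one part $G_j$. Hence the nonempty bins of $\cA$ are distributed among the families $\{\cA(G_j)\}_{j\in[n]}$; each family $\cA(G_j)$ consists of bins that are subsets of $G_j$ (so independent sets) of total size at most $1$ and that together partition $G_j$, i.e., $\cA(G_j)$ is automatically a feasible packing of $\ci_j$. In particular $\#\cA\ \ge\ \sum_{j\in[n]}|\cA(G_j)|$ (with equality if $\cA$ has no empty bins), and feasibility of $\cA(G_j)$ for $\ci_j$ gives $|\cA(G_j)|\ \ge\ \OPT(\ci_j)=2$ for every $j$.

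Then I would argue by contradiction. Because $\cA(G_j)$ is always a packing of $G_j$, the conclusion of the lemma can fail only if $|\cA(G_j)|\ge 3$ for every $j\in[n]$. Summing over $j$ yields $\#\cA\ \ge\ \sum_{j\in[n]}|\cA(G_j)|\ \ge\ 3n\ =\ \frac{3}{2}\cdot 2n\ =\ \frac{3}{2}\,\OPT(\mathcal{I})$, contradicting the assumption $\#\cA<1.5\,\OPT(\mathcal{I})$. Therefore there is some $j$ with $|\cA(G_j)|\le 2$, and for that $j$ the set $\cA(G_j)$ is, as observed above, a valid packing of $G_j$, which is exactly the claim.

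There is essentially no obstacle here beyond two bookkeeping points I would make explicit: (i) the only genuine content is the cardinality bound $|\cA(G_j)|\le 2$, since feasibility of $\cA(G_j)$ for $G_j$ is automatic from the complete-multipartite structure; and (ii) empty bins of $\cA$ (if the notion of packing allows them) only increase $\#\cA$, so the inequality $\#\cA\ge\sum_j|\cA(G_j)|$ is all that is needed, and it points in the right direction for the contradiction.
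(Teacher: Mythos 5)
Your proposal is correct and takes essentially the same route as the paper: negate the cardinality conclusion, sum $|\cA(G_j)|$ over $j$, and use Lemma~\ref{lem:BP} with $\OPT(\ci_j)=2$ to reach $\#\cA\ge \tfrac{3}{2}\OPT(\mathcal{I})$, contradicting the hypothesis. The two small points you make explicit --- that $\cA(G_j)$ is automatically a feasible packing of $\ci_j$ by the complete multipartite structure, and that $\#\cA\ge\sum_j|\cA(G_j)|$ suffices even if $\cA$ has empty bins --- are left implicit in the paper's proof, so your version is a slightly more careful write-up of the same argument.
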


\begin{proof}
	Assume towards a contradiction that for all $j \in [n]$ it holds that $\cA(G_j) \geq 3$. Therefore, 
	
	\begin{equation}
		\label{2}
		m = \sum_{j \in [n]} \cA(G_j) \geq 3n = \frac{3}{2} \sum_{j \in [n]} OPT(G_j) = \frac{3}{2} OPT(\mathcal{I}).
	\end{equation}
	
	The first equality is because in BPM each bin contains items only from one group. The first inequality is by the assumption that $\cA(G_j) \geq 3$. The second equality is because for all $j \in [n]$ it holds that $OPT(G_j) = 2$. The last equality is by Lemma~\ref{lem:BP}. By \eqref{2} it holds that $m \geq 1.5 \cdot \OPT(\mathcal{I})$ in contradiction.  $\blacksquare$\end{proof}

We define a {\em duplication} of a BP instance, which is simply duplicate the set of items and giving each item the original size. Formally, 

\begin{definition}
	\label{def:1}
	Let $G=(I,s)$ be a BP instance. A {\em duplication of $G$} is a BP instance $G'=(I',s')$ where there is a bijection $\sigma:I' \rightarrow I$ such that for all $i \in I'$ it holds that $s(i) = s\left(\sigma(i)\right)$.
\end{definition}

\begin{lemma}
	\label{lem:dup}
	Let $G=(I,s)$ be a BP instance and $G'=(I',s')$ be a duplication of $G$. Then, there is a polynomial time algorithm that given a packing of $G'$ with size $m$ finds a packing of $G$ with size $m$. 
\end{lemma}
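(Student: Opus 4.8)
The plan is to observe that, by Definition~\ref{def:1}, a duplication $G'=(I',s')$ of $G=(I,s)$ is nothing but the same bin packing instance with its ground set relabelled through a size-preserving bijection $\sigma\colon I'\to I$; hence any packing of $G'$ transports to a packing of $G$ of the same cardinality by applying $\sigma$ bin by bin.

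Concretely, given the packing $\cB=(B_1,\dots,B_m)$ of $G'$, the algorithm outputs $\cA=(A_1,\dots,A_m)$ with $A_j=\sigma(B_j)=\{\sigma(v)\mid v\in B_j\}$ for each $j\in[m]$. The argument then has three short steps: (i) since $\sigma$ is a bijection and $(B_1,\dots,B_m)$ partitions $I'$, the images $A_1,\dots,A_m$ partition $I$; (ii) for each $j\in[m]$, $s(A_j)=\sum_{v\in B_j}s(\sigma(v))=\sum_{v\in B_j}s'(v)=s'(B_j)\le 1$ by the defining property $s'=s\circ\sigma$ of the duplication together with the feasibility of $\cB$, so $\cA$ is a feasible packing of $G$ using exactly $m$ bins; (iii) the transformation $\cB\mapsto\cA$ is a single pass over the items applying $\sigma$, and if $\sigma$ is not supplied explicitly with the instance it can be recovered in polynomial time by sorting the items of $I$ and of $I'$ by size and pairing them up, so the overall running time is polynomial.

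I do not expect any genuine obstacle here: the only point requiring justification is the capacity constraint in step (ii), and it follows immediately from $s'=s\circ\sigma$. The lemma is a bookkeeping device whose role is to pull a good packing of a single complete-multipartite component back to the underlying \textnormal{BP} instance in the hardness argument for \textnormal{BPM} (Lemma~\ref{bpm:1}).
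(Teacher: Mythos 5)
Your proof is correct and follows essentially the same route as the paper's: apply the size-preserving bijection $\sigma$ bin by bin, observe that the images partition $I$ and satisfy the same capacity bounds since $s'=s\circ\sigma$, and note that $\sigma$ can be recovered in polynomial time by sorting. No gaps.
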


\begin{proof}
	Let $(A'_1,\ldots, A'_m)$ be a packing of $G'$ and let $\sigma:I' \rightarrow I$ be the bijection promised by Definition~\ref{def:1}. Observe that $\sigma$ can be found in polynomial time by sorting the items according to their sizes. For all $a \in [m]$ define $A_a = \{\sigma(i) ~|~ i \in A'_a\}$. Note that for all $i \in I$ there is $a \in [m]$ such that $i \in A_a$ since there is $i' \in I'$ such that $\sigma(i') = i$ because $\sigma$ is a bijection. In addition, for all $a \in [m]$ it holds that $s(A_a) = s(A'_a) \leq 1$. The first equality is by the definition of $\sigma$.  Hence, it follows that $(A_1, \ldots, A_m)$ is a packing of $G$.  $\blacksquare$
\end{proof}

\begin{lemma}
	\label{lem:NP}
	For any $n \in \mathbb{N}$, it is \textnormal{NP}-hard to find a packing with size $2$ of a \textnormal{BP} instance $G = (I,s)$ with $|I| \geq n$ and $\OPT(G)=2$.
\end{lemma}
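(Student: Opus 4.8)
The plan is to reduce from the \textnormal{Partition} problem — given positive integers $c_1,\dots,c_k$ with even total $C=\sum_i c_i$, decide whether some $S\subseteq[k]$ has $\sum_{i\in S}c_i=C/2$ — which is NP-complete \cite{garey1979computers}. (If $C$ is odd or some $c_i>C/2$ the instance is a trivial NO, and I will simply map it to a fixed BP instance with at least $n$ items and $\OPT\ge 3$, say $n$ items of size $1$, on which no size-$2$ packing exists.) So from now on assume $C=2B$ is even and $c_i\le B$ for all $i$.

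Given such a Partition instance, I would build the BP instance $G=(I,s)$ consisting of items of sizes $c_1/B,\dots,c_k/B\in(0,1]$ together with $n$ additional items of size $0$, so that $|I|=k+n\ge n$ and $s(I)=2$. If Partition has a solution $S$, then placing $\{c_i/B : i\in S\}$ in one bin, $\{c_i/B : i\notin S\}$ in another, and all zero-size items anywhere, gives a packing into two bins, so $\OPT(G)=2$ (it is $\ge 2$ because $s(I)>1$). Conversely, if $\OPT(G)=2$, then in an optimal packing the two bin loads are each $\le 1$ and sum to $s(I)=2$, hence both equal $1$; discarding the zero-size items, the positive items of one bin form a subset summing to $B$, i.e.\ a Partition solution. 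Thus $\OPT(G)=2$ \emph{iff} the Partition instance is a YES-instance, and in that case any size-$2$ packing of $G$ directly exhibits a Partition solution. (If one prefers strictly positive sizes, one can instead first augment the Partition instance with $n$ copies of the oversized element $H=C+1$; since $H$ exceeds the total mass of all remaining elements, an easy exchange argument shows this preserves the YES/NO status while pushing the element count past $n$.)

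Finally I would derive the hardness. Suppose some polynomial-time algorithm $\cA$ outputs a packing of size $2$ on every BP instance with at least $n$ items and optimum $2$, say within time $p(\cdot)$. Given a Partition instance, construct $G$ in polynomial time and simulate $\cA(G)$ for $p(|G|)$ steps: if Partition is a YES-instance the promise holds, so $\cA$ halts with a valid size-$2$ packing, from which we read off the subset $S$; if Partition is a NO-instance then $\OPT(G)\ge 3$ and no size-$2$ packing of $G$ exists, so whatever (at most one) object $\cA$ outputs within $p(|G|)$ steps fails the feasibility-and-bin-count check. This decides Partition in polynomial time, forcing $\textnormal{P}=\textnormal{NP}$.

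The construction is routine; the only point that needs a moment's care is the padding — verifying that enlarging $|I|$ to at least $n$ (via zero-size items, or via large dummy elements inside the Partition instance) does not disturb the equivalence $\OPT(G)=2\iff$ Partition is solvable.
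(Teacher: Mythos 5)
Your proof is correct and matches the paper's approach: both pad a hard BP instance (here, one arising from \textnormal{Partition}) with enough size-$0$ items to bring $|I|$ up to $n$, using that such items can be placed in any bin without affecting feasibility or the bin count, and you are if anything more explicit than the paper about the search-with-promise nature of the reduction (simulating the hypothetical algorithm for a polynomial number of steps and then verifying its output). One small caveat on the parenthetical alternative you sketch (padding with $n$ copies of an oversized element $H=C+1$ instead of zero-size items): since $C$ is even, $C+1$ is odd, so $C+n(C+1)$ is odd whenever $n$ is odd, which flips the parity of the total and turns every instance into a trivial NO; your main construction with zero-size items is unaffected.
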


\begin{proof}
	We use a reduction from BP, which is known to be NP-hard even for instances $G$ with $OPT(G) = 2$ \cite{garey1979computers}. The reduction is defined as follows given a BP instance $G = (I,s)$ with $OPT(G) = 2$. If $|I| \geq n$, then return $G' = G$. Otherwise, let $G' = (I \cup I',s')$ where $|I'| = n-|I|$, for all $i \in I$ it holds that $s'(i) = s(i)$, and for all $i \in I'$ it holds that $s'(i) = 0$. Given a packing of $G'$ with size $m$ we can find a packing to $G$ with size $m$, by removing the items from $I'$. Moreover, given a packing of $G$ with size $m$ we can find a packing to $G'$ with size $m$ by adding each $i \in I'$ to one of the bins in the packing arbitrarily. This results in a feasible packing since the size of the items in $I'$ is $0$. The claim follows.    $\blacksquare$
\end{proof}

\noindent{\bf Proof of Lemma~\ref{bpm:1}:} Assume towards a contradiction that there is $\alpha<1.5$ such that there is an asymptotic $\alpha$-approximation for BPM, and denote by $X$ the algorithm achieving the above approximation. Denote by $X(\mathcal{I})$ the packing returned by $X$ on some instance $\mathcal{I}$ and let $c = \frac{1.5-\alpha}{2}$. By the definition of asymptotic $\alpha$-approximation, there is $n \in \mathbb{N}$ such that for all $n' \geq n$ and a BPM instance $\mathcal{I}$ with $\OPT(\mathcal{I}) = n'$ it holds that 
	
	\begin{equation}
		\label{1}
		\#X(\mathcal{I}) \leq \alpha\cdot \OPT(\mathcal{I})+c\cdot \OPT(\mathcal{I})= (\alpha+c)\cdot \OPT(\mathcal{I}) < 1.5 \cdot \OPT(\mathcal{I}). 
	\end{equation}
	
	The first inequality is because $X$ is an asymptotic $\alpha$-approximation for BPM. The second inequality is because $\alpha<1.5$. Let $\cj = (I,s)$ be a BP instance such that $|I| \geq n$. Let $(G_1,s), \ldots,(G_{n},s)$ be $n$ distinct duplications of $\ci$ and let $\mathcal{I}_n = \big(I_{n}, \{G_1, \ldots,G_{n}\}, s \big)$ 
	where $I_{n} = \bigcup_{j \in [n]} G_j$. Note that the construction of $\mathcal{I}_n$ can be done in polynomial time because $|I| \geq n$. Let $\cA = X(\mathcal{I}_n)$ and let $\cA = (A_1, \ldots, A_m)$. By Lemma~\ref{lem:BP} it follows that $\OPT(\mathcal{I}_n) \geq n$ and by Equation~\ref{1} it holds that $|A|<1.5 \cdot \OPT(\mathcal{I})$. Therefore, by Lemma~\ref{lem:asymptotic} there is $j \in [n]$ such that $|\cA(G_j)| \leq 2$ and that $\cA(G_j)$ is a packing for $\ci_j$.  Then, using $\cA(G_j)$, we can find in polynomial time a packing for $\cj$ with size at most $2$ by Lemma~\ref{lem:dup}. Unless P=NP, this is a contradiction by Lemma~\ref{lem:NP}.  $\blacksquare$

~\\
\noindent{\bf Proof of Lemma~\ref{bpm:2}:}
	Let $\mathcal{I} = (I, \{G_1, \ldots,G_n\}, s )$ be a BPM instance. Then,
	
	$$\#\textsf{\Generic}(\ci) = \sum_{j \in [n]} \#\textsf{FFD}(\ci_j) \leq \sum_{j \in [n]} \frac{3}{2} \OPT(\ci_j)= \frac{3}{2} \sum_{j \in [n]}  \OPT(\ci_j) = \frac{3}{2} \OPT(\mathcal{I}).$$
	
	The first equality is by Algorithm~\ref{alg:generic}. The first inequality is is since \textsf{FFD} is a $\frac{3}{2}$-approximation for BP. The last equality is by Lemma~\ref{lem:BP}.  $\blacksquare$
\tableofcontents
\end{document}